\def\draft{0}  
\newcommand{\Knote}[1]{{[\bf Kai-Min's Note: #1]}}
\newcommand{\Wnote}[1]{{[\bf Xiaodi's Note: #1]}}
\newcommand{\Lnote}[1]{{[\bf Xin's Note: #1]}}
\newcommand{\cmt}[1]{{\color{red} #1}} 
\newcommand{\Knote}[1]{{}}
\newcommand{\Wnote}[1]{{}}
\newcommand{\Lnote}[1]{{}}
\newcommand{\cmt}[1]{{#1}} 
\newtheorem{theorem}{Theorem}
\newtheorem{definition}[theorem]{Definition}
\newtheorem{lemma}[theorem]{Lemma}
\newtheorem{proposition}[theorem]{Proposition}
\newtheorem{corollary}[theorem]{Corollary}
\newtheorem{fact}[theorem]{Fact}
\newtheorem{remk}[theorem]{Remark}
\newenvironment{remark}{\begin{remk} \begin{normalfont}}{\end{normalfont}
\end{remk}}
\newenvironment{proof}{\noindent{\bf Proof. }}{\qed}
\newcommand{\SRExt}{\mathrm{SRExt}}
\newcommand{\AND}{\mathrm{AND}}
\newcommand{\Raz}{\mathrm{Raz}}
\newcommand{\IP}{\mathrm{IP}}
\newcommand{\QExt}{\mathrm{QExt}}
\def\FullBox{\hbox{\vrule width 8pt height 8pt depth 0pt}}
\def\qed{\ifmmode\qquad\FullBox\else{\unskip\nobreak\hfil
\penalty50\hskip1em\null\nobreak\hfil\FullBox
\parfillskip=0pt\finalhyphendemerits=0\endgraf}\fi}
\def\qedsketch{\ifmmode\Box\else{\unskip\nobreak\hfil
\penalty50\hskip1em\null\nobreak\hfil$\Box$
\parfillskip=0pt\finalhyphendemerits=0\endgraf}\fi}
\newcommand{\N}{{\mathbb{N}}}
\newcommand{\Z}{{\mathbb Z}}
\newcommand{\C}{\mathbb{C}}
\newcommand{\poly}{{\mathrm{poly}}}
\newcommand{\polylog}{{\mathrm{polylog}}}
\newcommand{\zo}{\{0,1\}}
\newcommand{\E}{\mathop{\mathrm E}\displaylimits}
\newcommand{\remove}[1]{}
\newcommand{\Exp}{\mathop{\mbox{\sc E}}\nolimits}
\newcommand{\eps}{\varepsilon}
\def\01{\{0,1\}}
\def\eps{\epsilon}
\newcommand{\Prob}{{\mathbf{Pr}}}
\newcommand{\tinyspace}{\mspace{1mu}}
\newcommand{\microspace}{\mspace{0.5mu}}
\newcommand{\norm}[1]{\left\lVert\tinyspace#1\tinyspace\right\rVert}
\newcommand{\defeq}{\stackrel{\smash{\text{\tiny def}}}{=}}
\newcommand{\tr}{\operatorname{tr}}
\newcommand{\ip}[2]{\left\langle #1 , #2\right\rangle}
\def\({\left(}
\def\){\right)}
\def\I{\mathsf{id}}
\newcommand{\setft}[1]{\mathrm{#1}}
\newcommand{\lin}[1]{\setft{L}\left(#1\right)}
\newcommand{\density}[1]{\setft{Dens}\left(#1\right)}
\newcommand{\ot}{\otimes}
\def\complex{\mathbb{C}}
\def\<{\langle}
\def\>{\rangle}
\def \lket {\left|}
\def \rket {\right\rangle}
\def \rbra {\right|}
\newcommand{\ket}[1]{\lket\microspace #1 \microspace\rket}
\newcommand{\ketbra}[1]{\lket\microspace #1 \rangle \langle #1 \microspace\rbra}
\def\X{\mathcal{X}}
\def\Y{\mathcal{Y}}
\def\Z{\mathcal{Z}}
\def\W{\mathcal{W}}
\def\A{\mathcal{A}}
\def\B{\mathcal{B}}
\def\C{\mathcal{C}}
\def\E{\mathcal{E}}
\newcommand{\trnorm}[1]{\norm{#1}_{\tr}}
\newcommand{\trdist}[1]{ \left | #1 \right |_{\rm tr}}
\newcommand{\uniform}[1]{\mathcal{U}_{#1}}
\def\defeq{\stackrel{\small \textrm{def}}{=}}
\newcommand{\commentout}[1]{}
\numberwithin{theorem}{section}
\numberwithin{equation}{section}
\newenvironment{step}
  {
    \begin{enumerate}

  }
  {\end{enumerate}}
\newenvironment{protocol*}[1]
  {
    \begin{center}
      \hrulefill\\
      \textbf{#1}
  }
  {
    \vspace{-1\baselineskip}
    \hrulefill
    \end{center}
  }
\newcommand{\Ext}{{\mathrm{Ext}}}
\newcommand{\Hmin}{H_{\mathrm{min}}}
\def \IExt {\mathrm{IExt}}
\def \TExt {\mathrm{2Ext}}
\def \QMExt {\mathrm{QMExt}}
\def \QTExt {\mathrm{QTExt}}
\def \Adv {\mathrm{Adv}}
\def \polylog {\mathrm{polylog}}
\def \QBSAExt {\mathrm{QBExt}}
\def \BExt {\mathrm{BExt}}
\def \Faulty {\mathsf{Faulty}}
\def \Good {\mathsf{Good}}
\def \Bad {\mathsf{Bad}}
\def \PubExt {\Ext_\mathrm{Pub}}
\def \NetExt {\Ext_\mathrm{Net}}
\def \PriExt {\Ext_\mathrm{Pri}}
\def \OAExt {\mathrm{OAExt}}
\def \AdvSI {\Adv_{\mathrm{SI}}}
\def \AdvNet {\Adv_{\mathrm{Net}}}
\def \NSA {NSA }
\title{Multi-Source Randomness Extractors Against Quantum Side Information, and their Applications}
\author{%
 Kai-Min Chung\footnote{Institute of Information Science, Academia Sinica, Taiwan.}  $\qquad$
 Xin Li\footnote{Department of Computer Science, Johns Hopkins University. } $\qquad$
 Xiaodi Wu\footnote{Center for Theoretical Physics,
Massachusetts Institute of Technology, Cambridge, MA 02139, USA. Part of research was conducted while the author was a Research Fellow at the Simons Institute for the Theory of Computing, University of California, Berkeley, CA 94720, USA. XW was funded by ARO contract W911NF-12-1-0486 and by the NSF Waterman Award of Scott Aaronson.}
}
\date{}                                           
\begin{document}

\begin{titlepage}
\maketitle

\begin{abstract}
We study the problem of constructing multi-source extractors in the quantum setting, which extract almost uniform random bits against an adversary who collects quantum side information from several initially independent classical random sources. This is a natural generalization of the two much studied problems of seeded randomness extraction against quantum side information, and classical independent source extractors. With new challenges such as potential entanglement in the side information, it is not a prior clear under what conditions do quantum multi-source extractors exist; the only previous work in this setting is \cite{KK12}, where the classical inner-product two-source extractors of \cite{CG88} and \cite{DEOR04} are shown to be quantum secure in the restricted \emph{Independent Adversary (IA) Model} and \emph{entangled Bounded Storage (BS) Model}.

In this paper we propose a new model called \emph{General Entangled (GE) Adversary Model}, which allows arbitrary entanglement in the side information and subsumes both the IA model and the BS model. We proceed to show how to construct GE-secure quantum multi-source extractors. 
To that end, we propose another model called \emph{One-sided Adversary (OA) Model}, which is weaker than all the above models. Somewhat surprisingly, we establish an equivalence between strong OA-security and strong GE-security. As a result,  all classical multi-source extractors can either directly work, or be modified to work in the GE model at the cost of one extra random source. Thus, our constructions essentially match the best known constructions of classical multi-source extractors. This answers several open questions in \cite{KK12,DPVR12}.  

We also apply our techniques to two important problems in cryptography and distributed computing --- \emph{privacy amplification} and \emph{network extractor}. Both problems deal with converting secret weak random sources into secret uniform random bits in a communicating environment, with the presence of a passive adversary who has unlimited computational power and can see every message transmitted. We show that as long as the sources have certain amounts of conditional min-entropy in our GE model (even with entangled quantum side information), we can design very efficient privacy amplification protocols and network extractors. 
\end{abstract}

\vfill
\textbf{Keywords:} extractor, multi-source, privacy, network, quantum side information
\thispagestyle{empty}
\end{titlepage}

\tableofcontents
\newpage


\section{Introduction} \label{sec:intro}
The enormous benefit of using randomness in computation has been witnessed by the vast number of applications in algorithms, distributed computing, cryptography and many more.  
However, often the random sources in nature are imperfect with various biases and dependence. In many applications these imperfect random sources need to be distilled before they can be used. Randomness extractors are tools for this distilling process --- they convert imperfect random sources into nearly uniform random bits.


A random source can be imperfect for two reasons. First, it can have natural biases. This occurs in for example thermal noises or computer mouse movements. Second, and more importantly in applications related to security and privacy, it becomes imperfect because an adversary manages to learn some side information about the source. Here, naturally we also require the output of the randomness extractor to be (almost) independent of the side information. In the classical setting, dealing with these two cases can often be unified by requiring the output of the extractor to be close to uniform whenever the imperfect random source has a certain amount of min-entropy:

\begin{definition}[Min-entropy] 
The \emph{min-entropy} of a random variable $X$ is given by 
\[
   \Hmin(X)=\min_{x \in \X} \log_2 (1/\Prob[X=x]).
\]
For $X\in \01^n$, we call $X$ an $(n, \Hmin(X))$-source with \emph{entropy rate} $\Hmin(X)/n$. 
\end{definition}

\begin{definition}[informal]
A (deterministic or randomized) function $\Ext: \{0,1\}^n \to \{0,1\}^m$ is an error $\eps$ extractor for a class $\cal C$ of sources with min-entropy $k$, if for any source $X \in \cal C$, we have   
\[
   \trdist{\Ext(X)-\uniform{m}} \leq \eps.
\] 
\end{definition}

The reason is that in most classical cases, we can fix the side information, and argue that conditioned on this fixing, the source still has enough min-entropy (as long as the adversary does not learn all information of the source).  Thus, the output of the extractor will be close to uniform even given the side information. This unified approach makes extractors the single tool to solve the above two different problems. The remaining question is to decide for what classes of sources we can construct extractors. For this purpose, it is not hard to show that no deterministic extractor can exist for general $(n, k)$ sources even when $k$ is as large as  $n-1$. Therefore, the study of randomness extractors has been pursued in two directions. One is to allow an extractor to use a short independent uniform random seed (i.e., $\Ext$ becomes a randomized function), and these extractors are known as \emph{seeded extractors}. The other is to construct extractors without seeds for random sources with special structures, where an important case is to extract random bits from multiple (independent) random sources. Both kinds of extractors have been studied extensively in the classical setting.

In many important problems related to cryptography and security, true (close to) uniform randomness is provably necessary. For example, Dodis et. al \cite{DodisOPS04} showed that many important cryptographic tasks, such as bit-commitment, encryption and zero-knowledge would become impossible even if the random bits used have entropy rate $0.99$.
Thus, it is important to use multi-source extractors to generate true (close to) uniform random bits for these applications. We note that in the classical setting, one can use the probabilistic method to show that very good extractors exist for just two independent weak sources with logarithmic min-entropy. This is a strict generalization of seeded extractors (where one can view the seed as another independent source) and only needs weaker requirements on the randomness used in applications. In fact, one natural and important question is what are the minimum requirements on randomness used in various applications; and in the classical setting, multi-source extractors provide an answer to this question in the case where independent weak sources can be obtained. This paper, on the other hand, can be viewed as a step towards answering the above question in the quantum setting.

Indeed, since our world is inherently non-classical, a more powerful adversary can use quantum processes to obtain the side information; and we need to define \emph{quantum conditional min-entropy} and \emph{quantum extractors} as follows.

\begin{definition}[Quantum conditional Min-entropy] 
Let $\rho_{XE} \in \density{\X\ot \E}$ be a classical-quantum state. The \emph{min-entropy} of $X$ conditioned on $E$ is defined~\footnote{This definition has a simple operational interpretation shown in~\cite{KRS09} that $\Hmin(X|E)_\rho = - \log(p_{\mathrm{guess}} (X|E)_\rho)$,
where $p_{\mathrm{guess}} (X|E)_\rho$ is the maximum probability of guessing $X$ by making arbitrary measurements on $E$ system. } as
  \begin{equation*}
    \Hmin({X|E})_\rho \defeq \max \{\lambda \geq 0 :  \exists \sigma_E \in \density{\E}, \mathrm{s.t.}\,\, 2^{-\lambda} \I_X \ot \sigma_E \geq \rho_{XE}\}.
  \end{equation*}
\end{definition}

\begin{definition}[informal]
A (deterministic or randomized) function $\Ext: \{0,1\}^n \to \{0,1\}^m$ is an error $\eps$ quantum extractor for a class $\cal C$ of sources with conditional min-entropy $k$, if for all cq states $\rho_{XE} \in \cal C$, we have
\[
   \trdist{\rho_{\Ext(X)E} - \uniform{m} \ot \rho_E} \leq \eps.
\]
\end{definition}

Quantum side information presents much more challenge than classical side information, since we do not know how to apply the technique of ``conditioning" on side information. Therefore a classical extractor is not necessarily an extractor secure against quantum side information. Indeed, Gavinsky et al. \cite{GKK+08} gave an example of a classical seeded extractor that is not secure even against a very small amount of quantum side information. As it turns out, to construct quantum seeded extractors is a non-trivial task; and today we only  have a few constructions of such extractors, with parameters much worse than the best known classical seeded extractors. For example, K$\ddot{o}$nig, Maurer, and Renner \cite{KMR05, Renner05, RK05} showed that seeded extractors based on the leftover hash lemma \cite{hill, ILL89} are quantum secure, and K$\ddot{o}$nig and Terhal \cite{KT08} showed that any one-bit output extractor is also quantum secure, with roughly the same parameters. Ta-Shma \cite{TaS11}, De and Vidick \cite{DV10}, and later De, Portmann, Vidick and Renner \cite{DPVR12} gave quantum seeded extractors with short seeds that can extract almost all of the min-entropy\footnote{Although the seed length is still much longer compared to the best known classical seeded extractor.}. All of these three constructions are based on Trevisan's extractor \cite{Tre01}. It remains an open problem to construct quantum seeded extractors that match the parameters of the best known classical seeded extractors.

In the multi-source case, the situation is even worse. This is because measuring each source's quantum side information might break the independence of the sources --- a condition that is needed in classical multi-source extractors. Moreover, the quantum side information of each source can have \emph{entanglement} --- a phenomenon that does not exist in the classical setting. Quantum entanglement yields several surprising effects that cannot happen in the classical world, such as non-local correlation \cite{Bell64} and superdense coding \cite{BW92}. These issues apparently make the task of constructing quantum multi-source extractors much harder than constructing classical multi-source extractors. Indeed, it is a prior not clear under what conditions do quantum multi-source extractors exist (this is in sharp contrast to the classical setting, where the existence of very good two-source extractors is guaranteed by the probabilistic method); and it was only very recently that \cite{KK12} gave a construction of two-source extractors in the independent adversary model (which roughly corresponds to independent sources in the classical setting), and the very restricted \emph{entangled bounded storage} model. 

However, the results of \cite{KK12} are still very limited and do not give us a clear picture of quantum multi-source extractors. The main reason is that in the case of \emph{independent adversary} model, it does not allow entangled side information; while in the case of \emph{entangled bounded storage} model, it uses a very special method to show that a particular function (namely the inner product function) is a two-source extractor. For this function to be a two-source extractor, we need to require that the two sources have large min-entropy (i.e., roughly have min-entropy rate $>1/2)$. 
On the other hand, this technique of showing a two-source extractor in the entangled bounded storage model seems hard to generalize to other functions (e.g., other classical two-source extractor constructions). Thus, given these results it is still not clear if two-source (or multi-source) extractors can exist for smaller min-entropy, in the entangled bounded storage model. 

\subsection{Sketch of Our Results}
In this paper we significantly improve the situation in the case of multi-source extractors. We show, somewhat surprisingly, that in a more general model, we can actually construct quantum multi-source extractors that essentially match the best constructions of classical multi-source extractors, even in the presence of entangled quantum side information. Our model is so general that it subsumes both the independent adversary model and the bounded storage model, and parallels what can be achieved in the classical setting. Indeed, our model is a strict generalization of the independent sources model in the classical setting, and we actually show that any classical multi-source extractor can either directly work, or be modified to work in our general model with roughly the same parameters. This not only establishes the existence of multi-source extractors (e.g., two-source extractors for logarithmic min-entropy) in the presence of (even entangled) quantum side information, but also gives us a general way to construct them. In particular, we answer several open questions in \cite{KK12,DPVR12} and give stronger results and simplified proofs. We view this new model as one of our main conceptual contributions.\ We then apply our techniques to two important problems in cryptography and distributed computing.


\paragraph{Privacy Amplification.} The most important application of seeded quantum extractors is privacy amplification with quantum side information. The setting is that two parties (Alice and Bob) share a secret weak random source $X$. They each also has local private random bits. The goal is to convert the shared weak source $X$ into a nearly uniform random string by having the two parties communicating with each other. However, the communication channel is watched by a (passive) adversary Eve, and we want to make sure that eventually the shared uniform random bits remain secret to Eve. In the quantum setting, Eve may also have quantum side information about the shared source $X$.

One can use strong (classical or quantum) seeded extractors to solve this problem in one round by having one party (say Alice) send a seed to Bob and they each apply the extractor to the shared source using the seed. The strong property of the extractor guarantees that even if seeing the seed, Eve has no information about the extracted uniform key. One advantage of this method is that if we have good strong seeded extractors, then we can just use a short seed to extract a long shared key. 

However, as mentioned before, it is not clear that we can simply assume that the two parties have local uniform random bits. They may well only have weak random seeds which may also be subject to (entangled) quantum side information. In this paper we show that as long as the two parties' local random seeds have arbitrarily constant min-entropy rate as measured in our general model, we can still achieve privacy amplification with asymptotically the same parameters. In particular, this keeps the nice property that we can use a short seed to extract a long uniform key. Note that in our model, the two parties' local random bits may be subject to entangled quantum side information with the shared weak source, and we show that even in this case privacy amplification can be achieved. 

As a by-product, we also give a general transformation that can convert any (classical or quantum) strong seeded extractor into another (classical or quantum) strong seeded extractor with roughly the same output size and error, and a constant factor larger size of seed, with the property that the new strong seeded extractor works as long as the entropy rate of the seed is at least $1/2+\delta$ for any constant $\delta>0$.\footnote{\cite{Raz05} also has a similar transformation that can convert any classical seeded extractor into another classical seeded extractor that works as long as the seed has entropy rate $1/2+\delta$. However, that transformation may not keep the property of strong extractors.} Other known constructions of strong quantum seeded extractors that can work with a weak random seed, such as that in \cite{DPVR12} requires the seed to have entropy rate at least $0.9$.

\paragraph{Network Extractor.} One of the main applications of multi-source extractors in the classical setting is in distributed computing and secure multi-party computation problems where multiple players each has an imperfect random source. The players then need to communicate with each other to convert their random sources into nearly uniform and private random bits. Therefore, we need to design a protocol, known as \emph{network extractor protocol}, as defined in \cite{KLRZ}. Here, the setting is that part of the players are corrupted by an adversary, who then manipulates these players to try to collapse the protocol. As in \cite{KLRZ}, we allow the adversary to have unlimited computational power, see every message transmitted in the protocol, and wait to transmit the faulty players' messages after seeing all the other players' messages (this is called \emph{rushing}). When each player leaks some side information, we require that a set of honest players end up with (almost) private and uniform random bits even given all the side information and the whole transcript of the protocol; and the goal is to sacrifice as few honest players as possible. We note that this problem can be viewed as a generalization of the multi-source extractor problem to the distributed and adversarial setting. A multi-source extractor can be thought as a network extractor with no faulty players. It is the existence of the network adversary that makes the construction of network extractors more challenging. 

Another important thing to notice here is that in the network extractor model, we essentially have \emph{two} adversaries. One adversary, which we call $\AdvSI$, obtains side information from the players' sources; while the other adversary,  which we call $\AdvNet$, controls the faulty players to try to collapse the protocol. These two adversaries may or may not collaborate. If they do not collaborate, then $\AdvNet$ only makes rushing choices based on the public messages. We call this strategy \emph{independent rushing}. 
On the other hand, if they do collaborate, then the adversary becomes more powerful --- he can use the quantum side information (in addition to the messages) to make the rushing choices. By doing this, the adversary can generate complicated correlations between different parts of the network source system, even if originally the side information is obtained in the \emph{independent adversary} model. This phenomenon is special in the quantum setting and we call this strategy \emph{quantum rushing}. It is conceivable that quantum rushing is much more difficult to handle than independent rushing, because of the potential entanglement the adversary can create. Nevertheless, we give network extractors in the presence of quantum side information (even entangled) in the case of both independent rushing and quantum rushing. In the former case, we can essentially match the performance of classical network extractors (in fact, our construction improves and simplifies existing construction of~\cite{KLRZ}); while in the latter case, we need to sacrifice a constant factor more of honest players.



\subsection{Our New Model}
Traditionally, extractors are designed to work whenever the class of sources satisfy a certain requirement on min-entropy (or quantum conditional min-entropy). An example in \cite{KK12} showed that the ``min-entropy requirement" may be problematic and this motivates \cite{KK12} to consider the more restricted bounded storage model. In this paper we rectify this problem and go back to the standard min-entropy requirement. To describe our new model, let us first revisit the example in \cite{KK12}.

First recall the following process where the adversary obtains quantum side information. Initially we have  $t$ non-communicating parties, each of which has a classical independent random source $X_i$.  The adversary $\Adv$ then prepares a quantum state $\rho_0$ on registers $A_1, \cdots A_t$  (independent of the $X_i$s, but could be arbitrary entangled) and sends each register $A_i$ to the $i$'th party who holds $X_i$. The $i$'th party then applies some operation on $X_i$ and $A_i$ to produce the leakage $E_i$. 
Finally, the adversary collects all $E_i$s as the side information of the sources $X_1, \cdots, X_t$.

The example in \cite{KK12} is classical but demonstrates the kind of problems that one may face when presented with entangled side information. Suppose Alice and Bob have two classically independent uniform $n$-bit sources $X$ and $Y$, and the adversary Eve prepares two identical copies of another uniform $n$-bit random string $R$, which is independent of $(X, Y)$. Eve then sends the two copies of $R$ to Alice and Bob, and obtains side information $E_a=X \oplus R$ and $E_b=Y \oplus R$ respectively. Note that conditioned on $(E_a, E_b)$, both $X$ and $Y$ have full entropy. 
Now suppose further that Eve  obtains $|X| \text{ mod } 4$ from Alice and $|Y|  \text{ mod } 4$ from Bob\footnote{$|X|$ and $|Y|$ are the hamming weights of $X$ and $Y$.}, which reduces the conditional min-entropy of $X$ and $Y$ by at most a constant.
However the classical inner-product two-source extractor $X \cdot Y$, which works if $X$ and $Y$ are two independent sources with min-entropy $>n/2$, completely fails in this case since one can compute $X \cdot Y=\frac{1}{2} ((|X|+|Y|-|X \oplus Y|)  \text{ mod } 4)$.  
\cite{KK12} thus argues that this model (requiring that each source has enough min-entropy given \emph{all} side information) may be problematic. 

Our crucial observation is that what this example tells us is not that the conditional min-entropy requirement is problematic, but that \emph{the way the conditional min-entropy is measured} is problematic. More specifically, once the adversary learns $E_a=X \oplus R$ and $E_b=Y \oplus R$, there is a bijection between $X$ and $R$, i.e., $X = R \oplus E_a$; and similarly, there is a bijection between $Y$ and $R$, i.e., $Y = R \oplus E_b$. Thus, given the side information $(E_a, E_b)$, there is a bijection between $X$ and $Y$, i.e., $X =Y \oplus (E_a \oplus E_b)$. This means that, although both $X$ and $Y$ have high conditional min-entropy, $X$'s entropy now comes from $Y$ and vice versa. In other words, this way of measuring conditional min-entropy creates \emph{interference} between the entropies of different sources, and causes \emph{double counting} of entropies. Thus the result that traditional extractors such as $X \cdot Y$ may fail should come as no surprise.

This problem is actually quite general in the case of entangled side information. Whenever one tries to measure a source's conditional min-entropy given \emph{all} side information, it is likely to create interference among the sources. To rectify this problem, we  choose an alternative way to measure the conditional min-entropy: for any source $X$, we imagine that the adversary first obtains some side information from $X$ without obtaining any side information from the other sources. We propose to measure $X$'s conditional min-entropy \emph{immediately after} this step, and \emph{right before} the adversary obtains any side information from the other sources. In this way we can ensure that the measured conditional min-entropy is specific to this particular source, and does not interfere with any other source. Our model now requires each source to have sufficient conditional min-entropy according to this way of measurement. We call this model the general entangled (GE for short) model. A formal definition is given in Section~\ref{sec:model}.

Going back to the above example, if we measure conditional min-entropy in our GE model, then we see that immediately after Eve obtains $E_a=X \oplus R$, Eve still has a copy of $R$ (which he has not sent to Bob yet). Thus, at this moment $X$'s conditional min-entropy is $0$ (since $X=E_a \oplus R$). Therefore, this example is not a counterexample in our model. 


We remark that our proposed GE model has a few nice and important properties. 
First, it is not hard to see that our GE model is a strict generalization of the no-side-information case, no matter in the way the side information is generated or the entropy is measured.
Second, the GE-entropy measure, similar to the classical min-entropy measure, captures the amount of uniform randomness that can be extracted from the source in the presence of GE-side information.
This is because all of the GE-entropy can be extracted and there exists sources with certain GE-side information, in which the GE-entropy also upper bounds the amount of uniform randomness that can be extracted. 
Finally, we argue that the one-round side-information-generating process in our GE model might be also appealing due to practical reasons. 
For example, if the side information is generated simultaneously at distant parties each holding one of the sources, then it can effectively be characterized by the one-round process. We refer curious readers to Section~\ref{sec:model} for details. 

\vspace{-1mm} \paragraph{Special cases.} We now briefly discuss some other models and their relations. In particular, \cite{KK12} considered the following two models: the \emph{Independent Adversarial (IA) Model} and the \emph{Bounded Storage (BS) Model}. The IA model poses one additional constraint on the GE model: that is the initial state $\rho_0$ is a product state over $A_1, \cdots, A_t$, i.e., $\rho_{A_1, \cdots, A_t}= \rho_{A_1} \ot \cdots \ot \rho_{A_t}$. Thus, by definition, $\rho_{E_1, \cdots E_t}$ is also a product state. The measurement of conditional min-entropy in our general GE model reduces exactly to $\Hmin(X_i|E_i)_\rho$ for each $X_i$.

The BS model poses a different constraint on the GE model: that is to bound the dimension of each register $E_i$ by $2^{b_i}, \forall i \in [t]$.  In this case,  the quality of the source $X_i$ is measured by its marginal min-entropy $k'_i=\Hmin(X_i)$ and the size bound $b_i$ on each register $E_i$. However, we can show that our measurement of conditional min-entropy in the GE model here is at least $k'_i-2b_i$, in which the factor two is due to the possibility of super-dense coding. Therefore, it should also be clear that our model subsumes both the IA model and the BS model. 

We now define another model, the \emph{One-sided Adversary (OA) Model}. Here the adversary is restricted to collect leakage information from only one source $X_i$ but has the freedom to choose which $i\in [t]$. Namely, only one $A_{i^*}$ is nonempty among all $A_i$s for some $i^*$.  
This is the weakest model of all.


\subsection{OA-GE Security Equivalence}

Somewhat surprisingly, we show an equivalence between strong security in the OA model (which is the weakest) and strong security in the GE model (which is the strongest). We then use this equivalence to give simple constructions of quantum multi-source extractors and network extractors in the GE model. This equivalence is one of our major results and another conceptual contribution of this paper.

Our security equivalence is established by a simulation argument, which we now illustrate in the context of strong two-source extractors. Consider a OA-secure \emph{$Y$-strong} two-source extractor $\Ext(X,Y)$ for min-entropy $k$ sources with error $\eps$. That is, for every sources $(X,Y)$ where both $X, Y$ have min-entropy $k$ in OA model, $\Ext(X,Y)$ is $\eps$-close to uniform given $Y$ and the side information. Consider a source $(X,Y)$ that both $X, Y$ have min-entropy $k$ w.r.t. GE side information adversary $\Adv_{GE}$, who sends registers $A_1$ and $A_2$ to $X$ and $Y$ respectively to collect side information $E_1$ and $E_2$. Consider a hybrid adversary $\Adv'$ who only sends $A_1$ to $X$ but keeps $A_2$ inside itself.\footnote{Technically, in our formal model, we do not allow $\Adv'$ to keep local register, so we instead let $\Adv'$ sends $A_2$ to $X$, and have $X$ send $A_2$ back.} Note that $\Adv'$ is a OA side information adversary, and $X$ has \emph{the same} amount of min-entropy w.r.t. $\Adv_{GE}$ and $\Adv'$ (since the entropy is measured \emph{immediately after} the adversary obtain the side information $E_1$ from $X$). Thus, $\Ext(X,Y)$ is $\eps$-close to uniform given $Y$ and the side information $(E_1,A_2)$ collected  by $\Adv'$. Now, note that given $Y$ and the side information collected by $\Adv'$, we can \emph{simulate} the side information of $\Adv_{GE}$ by internally applying leaking operation on $Y$ and $A_2$ to produce $E_2$, which can only decrease the trace distance. Therefore, $\Ext(X,Y)$ is also $\eps$-close to uniform given $Y$ and the side information $(E_1,E_2)$ collected  by $\Adv_{GE}$. Note that this simulation argument crucially relies on the \emph{strong} property of the extractors. 

The above simple yet powerful argument can be generalized to the setting of multi-source extractors that are strong on all-but-one sources (formally stated in Theorem~\ref{thm:SOA_GE} in Section~\ref{sec:OA_to_GEA}). Furthermore, it also extends to establishing equivalence of strong OA and GE security for honest players in network extractors with independent rushing (formally stated in Theorem~\ref{thm:CR_QR}), where strong security requires the player's output remains (close to) uniform even given all other players' inputs (and the transcript). The equivalence allows us to reduce the goal of achieving strong GE security in these settings to strong OA security, which is much simpler to achieve in general. We are able to develop several techniques for obtaining strong OA security, and thus provides strong GE-secure multi-source/network extractors that essentially match the best known parameters (without side information) for these settings.


\subsection{Multi-source Extractors with Quantum Side Information} \label{sec:intro:multi}
In the classical setting, using the probabilistic method one can show that an extractor exists for two independent $(n, k)$ sources with $k$ as small as $\log n+O(1)$. However constructing such extractors turn out to be a very hard problem. Historically, Chor and Goldreich \cite{CG88} were the first to formally study multi-source extractors, where they constructed explicit extractors for two independent $(n, k)$ sources with $k \geq (1/2+\delta)n$ for any constant $\delta>0$. After that there had been essentially no progress for two decades until Barak, Impagliazzo and Wigderson \cite{BarakIW04} showed how to extract from a constant number ($\poly(1/\delta))$ of independent $(n, \delta n)$ sources, for any constant $\delta>0$. Their work used advanced techniques from additive combinatorics. Since then, new techniques for this problem have emerged, resulting in a long line of research \cite{BarakKSSW05, Raz05, Bour05, Rao06, BarakRSW06, Li11, Li13a, Li13b} and culminating in Li's extractor for a constant number of independent $(n, k)$ sources with $k=\polylog(n)$ \cite{Li13b}. In the two-source setting, Bourgain's extractor \cite{Bour05} works for two independent $(n, k)$ sources with $k \geq (1/2-\delta)n$ for some universal constant $\delta>0$, which is the state of art.

In the quantum setting, the formal study of quantum multi-source extractors started with \cite{KK12}, who focused on analyzing a two-source extractor of Dodis, Elbaz, Oliveira, Raz~\cite{DEOR04} (which in turn based on the construction of Chor and Goldreich~\cite{CG88}) in the aforementioned independent adversary model and entangled bounded storage model.  \cite{KK12} showed that the DEOR extractor is secure in the BS model by a connection to communication complexity and establishing a communication complexity lower bound, and showed the security of the DEOR extractor in the IA model by first establishing security of its one-bit version (following~\cite{KT08}) and then appealing to a quantum version of XOR lemma. In both models, they established the (strong) security of the DEOR extractor with slightly degraded parameters; and this is currently the only known work about quantum multi-source extractors.

\paragraph{One-bit Argument.} We observe that, the argument of~\cite{KK12} for establishing IA security is in fact general, and can be used to establish strong OA security of best known two-source extractors~\cite{Raz05,Bour05,DEOR04}, or the existential two-source extractors for logarithmic min-entropy guaranteed by the probabilistic method with essentially matching parameters. Armed with our security equivalence result, it immediately implies that all known two-source extractors~\cite{Raz05,Bour05,DEOR04} are in fact strongly GE-secure.


\begin{theorem}[informal] There exist two-source extractors for logarithmic min-entropy that are strongly GE-secure.
\end{theorem}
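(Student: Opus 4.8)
The plan is to chain together three ingredients: (i) the classical probabilistic method, which yields a \emph{strong} two-source extractor for logarithmic min-entropy; (ii) the ``one-bit argument'' of \cite{KK12} (suitably generalized), which upgrades any strong classical two-source extractor to a strong \emph{OA-secure} quantum two-source extractor with essentially the same parameters; and (iii) Theorem~\ref{thm:SOA_GE}, the strong-OA $\Leftrightarrow$ strong-GE equivalence, which immediately converts strong OA-security into strong GE-security. Concretely, fix $n$ and $k=\log n+O(1)$ (a larger constant slack if we want more than two output bits), and let $\eps$ be an inverse-polynomial error. A standard counting/Chernoff argument shows that a uniformly random function $\Ext\colon\{0,1\}^n\times\{0,1\}^n\to\{0,1\}^m$ is, with high probability, a $Y$-strong two-source extractor for $(n,k)$-sources with error $\eps$ (i.e.\ $(\Ext(X,Y),Y)$ is $\eps$-close to $(\uniform{m},Y)$ for all independent $(n,k)$-sources $X,Y$); if one prefers a fully symmetric guarantee one can take a random function that is strong on both coordinates, or apply a generic strengthening transformation. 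This settles the existence of the object we will quantize.

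Next I would carry out the one-bit argument. For each nonzero $r\in\{0,1\}^m$ set $\Ext^{(r)}(x,y)=\langle\Ext(x,y),r\rangle\in\{0,1\}$. Since statistical distance controls the bias of every nontrivial parity, $Y$-strongness of $\Ext$ with error $\eps$ implies each $\Ext^{(r)}$ is a $Y$-strong \emph{one-bit} two-source extractor with error $O(\eps)$ — this direction is elementary. The substantive step is to show that a one-bit-output $Y$-strong two-source extractor remains secure in the OA model: when the adversary collects quantum side information $E$ from the attacked source while the other source stays intact, the output is still $\eps'$-close to uniform given that source and $E$, with $\eps'$ polynomially related to $\eps$. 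When the adversary attacks $Y$ this follows directly from $Y$-strongness (the $X$-register is untouched and independent, so revealing $E$ changes nothing), and when it attacks $X$ one follows the K\"onig--Terhal style analysis \cite{KT08} as deployed in \cite{KK12}: for one output bit the distinguishing advantage is captured by the operator norm of a single Hermitian operator built from the cq-state, and this norm is bounded in terms of the source's conditional min-entropy together with the extractor's classical bias \emph{averaged over the intact source}. Finally, a quantum XOR lemma (as used in \cite{KK12}) recombines the per-$r$ bounds into a bound on the full $m$-bit output, giving a strong OA-secure quantum two-source extractor for $(n,k)$-sources with error $\poly(\eps)\cdot 2^{O(m)}$ or similar — negligible after adjusting constants, so still logarithmic min-entropy.

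With strong OA-security in hand, Theorem~\ref{thm:SOA_GE} applies verbatim to the all-but-one-strong two-source extractor, yielding the claimed strongly GE-secure two-source extractor for logarithmic min-entropy. \textbf{The main obstacle} is the quantum one-bit security step in the \emph{two-source, one-sided} setting: one cannot simply condition on a fixed value of the intact source, since a deterministic one-bit function $\Ext(\cdot,y)$ is not an extractor for all min-entropy-$k$ sources — the argument must treat the joint object and exploit the \emph{averaged} classical bias guaranteed by strongness, not a worst-case-over-$y$ guarantee. The second point requiring care is bookkeeping on error: the reduction to parities, the K\"onig--Terhal-type loss (typically a square root), and the XOR-lemma loss (exponential in $m$) must together leave the final error small enough while keeping the min-entropy requirement $O(\log n)$ and the output length $m$ nontrivial; choosing $k$ a suitable constant multiple of $\log n$ and $m$ a small constant (or $\Theta(\log n)$) absorbs all of these.
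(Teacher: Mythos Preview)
Your proposal is correct and follows essentially the same approach as the paper: take the probabilistic-method two-source extractor (strong on one or both coordinates), apply the one-bit argument of \cite{KT08,KK12} together with the quantum XOR lemma (the paper's Lemma~\ref{lem:M_multi_one}, Lemma~\ref{lem:bOA:KK_single_bit}, and Theorem~\ref{thm:one_bit_OA}) to obtain strong OA-security, and then invoke Theorem~\ref{thm:SOA_GE} for strong GE-security. The paper treats this theorem as a remark after instantiating the same pipeline for Raz, Bourgain, and DEOR, and your identification of the main technical point (that the K\"onig--Terhal step must use the \emph{averaged} bias over the intact source rather than a per-value guarantee) and of the parameter bookkeeping (the $2^m\sqrt{\eps}$ loss absorbed by taking $k$ a sufficiently large multiple of $\log n$) are both accurate.
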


\begin{theorem}[informal, refer to Theorem~\ref{thm:Raz-GE-secure}, \ref{thm:Bou-GE-secure}, and \ref{thm:DEOR-GE-secure}] The two-source extractors of Bourgain~\cite{Bour05}, Raz~\cite{Raz05}, and DEOR~\cite{DEOR04} are strongly GE-secure.
\end{theorem}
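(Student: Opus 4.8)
The plan is to derive all three claims from two ingredients already established in the paper: the strong-OA $\Leftrightarrow$ strong-GE equivalence (Theorem~\ref{thm:SOA_GE}), and a quantum-secure ``one-bit argument'' in the spirit of \cite{KT08} as carried out for the IA model in \cite{KK12}. By Theorem~\ref{thm:SOA_GE}, it suffices to show that each of the two-source extractors of \cite{Bour05}, \cite{Raz05}, and \cite{DEOR04} is \emph{strongly OA-secure}, i.e.\ strong on the source the OA adversary leaves untouched, with conditional min-entropy parameters matching the classical analyses up to a lower-order (say $\polylog$) degradation in the error. Since OA is the weakest model, the remaining work is entirely ``one quantum source'' in flavour: one source $X$ carries arbitrary quantum side information $E$ while the other source $Y$ is an ordinary classical weak source independent of $(X,E)$, and we must argue $\Ext(X,Y)$ is close to $\uniform{m}$ given $(Y,E)$.

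The core step is the reduction to one output bit. For a fixed value $y$ of the clean source, each of these constructions has enough structure that any nonzero $\mathbb{F}_2$-linear combination $\ip{a}{\Ext(X,y)}$ of the output bits is a ``simple'' function of $X$: for the inner-product-based constructions of \cite{CG88,DEOR04} it is literally $\mathbb{F}_2$-linear in $X$, while for \cite{Raz05} and \cite{Bour05} its bias is controlled by the exponential/character sums appearing in their classical analyses. Then, following the KT08-style argument used by \cite{KK12}, for a \emph{one-bit} output the quantum security error against $E$ is governed by the corresponding classical error; concretely, the quantity that must be small is the distance of $\ip{a}{\Ext(X,Y)}$ from a uniform bit given the clean source, which is precisely the ``strong on the clean source'' guarantee provided (up to lower-order terms) by the classical analysis. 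This yields that $\ip{a}{\Ext(X,Y)}$ is $\eps_1$-close to uniform given $(Y,E)$ for every nonzero $a$, with $\eps_1$ essentially the classical one-bit error. One then passes from one bit to $m$ bits via the quantum XOR lemma used in \cite{KK12}: if every nonzero parity of the output bits is $\eps_1$-close to uniform given the side information, the full output is $\eps$-close to $\uniform{m}$ with $\eps \lesssim 2^{m/2}\eps_1$. Choosing $\eps_1$ small enough to absorb the $2^{m/2}$ factor (affordable since the classical analyses give exponentially small error in the relevant regime) gives strong OA-security, and Theorem~\ref{thm:SOA_GE} upgrades this to strong GE-security, establishing Theorems~\ref{thm:Raz-GE-secure}, \ref{thm:Bou-GE-secure}, and \ref{thm:DEOR-GE-secure} with parameters matching the classical constructions up to lower-order terms.

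The main obstacle is pushing the one-bit argument through with a \emph{weak} clean source rather than a uniform seed: \cite{KT08} is phrased for seeded extractors, so one must re-derive each classical one-bit bound in strong, two-source form. For \cite{Raz05} and \cite{DEOR04} this is relatively direct, but for Bourgain's extractor \cite{Bour05} it means revisiting its Fourier-analytic core and carefully tracking constants, which is the most technical part of the argument; it is compounded by the fact that Bourgain's output length is linear in $n$, so the $2^{m/2}$ loss in the XOR step forces an exponentially small one-bit error and hence a tight accounting of the underlying character-sum estimates. A secondary subtlety is lining up the ``strong'' side of each extractor with the side the OA adversary is permitted to attack, so that Theorem~\ref{thm:SOA_GE} applies without loss: for the one-sided constructions \cite{Raz05,DEOR04} one checks that strongness holds exactly on the source left clean in the relevant OA instance, while for \cite{Bour05} strongness (hence the argument) is available on either side.
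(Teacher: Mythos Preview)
Your overall plan is the same as the paper's: reduce to strong OA-security via Theorem~\ref{thm:SOA_GE}, establish the one-bit case through the K\"onig--Terhal/Kasher--Kempe argument, and then lift to $m$ output bits with the quantum XOR lemma. Where you diverge is in how much work you think the one-bit step requires. You propose opening up each construction and ``re-deriving each classical one-bit bound in strong, two-source form,'' calling the Fourier-analytic internals of Bourgain's extractor ``the most technical part of the argument.'' In the paper this is entirely unnecessary: the argument is black-box in the extractor.

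Concretely, the paper observes (Lemma~\ref{lem:M_multi_one}) that any classical $m$-bit $X_1$-strong (resp.\ $X_2$-strong) two-source extractor $\TExt$ is automatically a classical one-bit strong two-source extractor for every parity $Z_{\oplus S}$ of its output, with the \emph{same} error $\eps$ --- this is just monotonicity of trace distance under post-processing and needs nothing about the structure of $\TExt$. Then Lemma~\ref{lem:bOA:KK_single_bit} (which is exactly the \cite{KK12} adaptation of \cite{KT08} to strong two-source extractors with a weak second source, not a uniform seed) upgrades each such one-bit extractor to OA-secure with error $\sqrt{\eps}$ at the cost of $\log(1/\eps)$ extra min-entropy. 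Combining with the XOR lemma yields OA-security with error $2^m\sqrt{\eps}$ (Theorem~\ref{thm:one_bit_OA}), and Theorem~\ref{thm:SOA_GE} finishes. The instantiations for Raz, Bourgain, and DEOR are then pure parameter-pushing: one starts from the stated classical $(n_1,k_1,n_2,k_2,m,\eps)$ bounds and checks that $2^m\sqrt{\eps}$ is still small after shifting $k_1,k_2$ by $\log(1/\eps)$.

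So your proposal is not wrong, but the ``main obstacle'' you identify --- adapting the one-bit quantum argument to a weak clean source, construction by construction --- does not exist: that adaptation is already Lemma~\ref{lem:bOA:KK_single_bit}, and it applies to any strong two-source extractor without reference to character sums or the specific form of $\TExt$. (One minor quantitative slip: the XOR lemma here gives error $\lesssim 2^m\eps_1$, not $2^{m/2}\eps_1$, since the side-information dimension is not being bounded; this is harmless for the informal statement.)
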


\paragraph{One-extra-source Argument.} In the multi-source setting, it turns out we could avoid the parameter loss in the quantum XOR lemma at the cost of an extra independent source. Our crucial observation is that for any marginally close-to-uniform distribution, one can add an independent quantum min-entropy source and make use of a quantum strong seeded extractor to lift its security from marginal to strong OA.
This observation is so powerful that it suffices to lift the security at the last step of the construction and work only with marginal security in all previous steps. 
Again, with our security equivalence result,  we can construct a strong GE-secure multi-source extractor from \emph{any} known classical independent source extractors.

\begin{theorem}[informal, refer to Theorem~\ref{thm:Ind_QIExt}]
From any independent source extractor $\IExt$ with $t$ sources, one can explicitly construct a GE-secure strong extractor  $\QMExt$ 
with $t+1$ sources. 
\end{theorem}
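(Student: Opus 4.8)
The plan is to sidestep the difficulties of entangled side information entirely by (i) invoking the strong OA--GE equivalence (Theorem~\ref{thm:SOA_GE}), so that it suffices to build a \emph{strong OA-secure} $(t{+}1)$-source extractor, and (ii) obtaining such an object by a single post-processing step: use the output of the classical extractor $\IExt$ as the \emph{seed} of a quantum-proof strong seeded extractor applied to the extra source. Concretely, fix a quantum-proof strong seeded extractor $\QExt$ for $(n',k')$-sources with some seed length $d$ (e.g.\ the leftover-hash-lemma extractor, or the Trevisan-based construction of~\cite{DPVR12}, both of which are quantum-proof and strong), assume for simplicity that $\IExt$ outputs at least $d$ bits (truncate its output, or instantiate $\QExt$ with a shorter seed, otherwise), and define
\[
  \QMExt(X_1,\dots,X_t,X_{t+1}) \;\defeq\; \QExt\bigl(X_{t+1},\,\IExt(X_1,\dots,X_t)\bigr),
\]
with $X_{t+1}$ as the source and $Z\defeq\IExt(X_1,\dots,X_t)$ as the seed. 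The claim will be that $\QMExt$ is GE-secure and strong in the sense relevant to Theorem~\ref{thm:SOA_GE} --- its output stays close to uniform given $X_1,\dots,X_t$ and the side information --- whenever $X_1,\dots,X_t$ meet the entropy requirements of $\IExt$ and $X_{t+1}$ has GE-conditional min-entropy at least $k'$.

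By Theorem~\ref{thm:SOA_GE}, it is enough to establish strong OA-security, so I would fix an OA adversary that leaks side information $E$ from a single source $X_{i^*}$ and split into two cases. In the case $i^*=t+1$, the sources $X_1,\dots,X_t$ are untouched and hence still independent with their original min-entropies, so $Z$ is $\eps_{\IExt}$-close to $\uniform{d}$ and, crucially, independent of the pair $(X_{t+1},E)$ (both $E$ and the adversary's register are produced from $X_{t+1}$ and a state prepared independently of $X_1,\dots,X_t$). Replacing $Z$ by a genuinely uniform independent seed costs $\eps_{\IExt}$ in trace distance; then, since $X_{t+1}$ has conditional min-entropy $\ge k'$ given $E$ and $\QExt$ is a quantum-proof strong seeded extractor, $\QMExt$ is $\eps_{\QExt}$-close to uniform given the seed and $E$. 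Finally, because $Z$ is a deterministic function of $X_1,\dots,X_t$ and, conditioned on $Z$, the remainder of $X_1,\dots,X_t$ is independent of $(\QMExt,E)$, one may condition on all of $X_1,\dots,X_t$ at no extra cost, giving error $O(\eps_{\IExt}+\eps_{\QExt})$.

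In the complementary case $i^*\le t$, the extra source $X_{t+1}$ is untouched and hence independent of $(X_1,\dots,X_t,E)$. Conditioning on all of $X_1,\dots,X_t$ fixes $Z$ to a value $z$, and conditioned on that, $\QMExt=\QExt(X_{t+1},z)$ is a function of $X_{t+1}$ alone and therefore independent of $E$ (which has become a function only of the adversary's register). Averaging over the value of $X_1,\dots,X_t$ --- equivalently, over $z$ drawn from the distribution of $Z$, which is $\eps_{\IExt}$-close to $\uniform{d}$ --- and using that $\QExt$ is a \emph{strong} seeded extractor for $(n',k')$-sources (so $\mathbb{E}_{z\sim\uniform{d}}\trdist{\QExt(X_{t+1},z)-\uniform{m}}\le\eps_{\QExt}$, and $X_{t+1}$ has marginal min-entropy $\ge k'$) again yields error $O(\eps_{\IExt}+\eps_{\QExt})$. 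Combining the two cases gives strong OA-security, and Theorem~\ref{thm:SOA_GE} upgrades this to strong GE-security; plugging in any of the best classical multi-source extractors for $\IExt$ then produces the quantitative statement.

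The conceptual crux --- and the only genuinely delicate point --- is the observation that $\IExt$ need only be \emph{marginally} close to uniform: it need not be strong, nor secure against any side information at all. This works precisely because the single closing application of $\QExt$ does double duty: when the extra source leaks it converts $Z$ into a bona fide independent uniform seed (case $i^*=t+1$), and when one of the first $t$ sources leaks it is merely an ordinary strong-seeded-extractor call on the clean source $X_{t+1}$, with the leakage $E$ independent of the output (case $i^*\le t$). The work, such as it is, lies in verifying the independence relations under the OA model --- in particular that conditioning on all of $X_1,\dots,X_t$ (and not just on $Z$) is harmless --- and in tracking the two $\eps$ contributions; everything else is routine, and no explicit reasoning about entanglement is ever needed thanks to the OA--GE reduction.
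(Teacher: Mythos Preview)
Your proposal is correct and follows essentially the same approach as the paper: the construction $\QMExt(X_1,\dots,X_{t+1})=\Ext_q(X_{t+1},\IExt(X_1,\dots,X_t))$ is identical to the paper's (Figure~\ref{fig:QIndExt}), and your two-case OA analysis followed by the invocation of Theorem~\ref{thm:SOA_GE} mirrors the paper's proof of Theorem~\ref{thm:Ind_QIExt}. The only cosmetic difference is that the paper packages both OA cases into a single technical lemma (Lemma~\ref{lem:ind_lift}), whereas you argue each case directly; the underlying computation and final error bound $\eps_{\IExt}+\eps_{\QExt}$ are the same.
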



\begin{corollary}[informal, refer to Theorem~\ref{thm:Li_GE} and \ref{thm:BRSW_GE}] There exist explicit multi-source extractors based on the one of Li~\cite{Li13b}, or  BRSW~\cite{BarakRSW06,Rao06} that are strongly GE-secure. 
\end{corollary}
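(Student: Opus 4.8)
The plan is to obtain this corollary as a direct instantiation of the generic ``one-extra-source'' transformation of Theorem~\ref{thm:Ind_QIExt}, feeding it the two explicit classical independent-source extractors named in the statement. Since Theorem~\ref{thm:Ind_QIExt} already turns \emph{any} explicit $t$-source classical extractor $\IExt$ into an explicit strongly GE-secure $(t{+}1)$-source quantum extractor $\QMExt$, the remaining work is only to (i) check that Li's extractor \cite{Li13b} and the BRSW extractor \cite{BarakRSW06,Rao06} satisfy the hypotheses of that theorem, and (ii) record the resulting parameters, which is what Theorems~\ref{thm:Li_GE} and \ref{thm:BRSW_GE} spell out. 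Note that the ``one-bit argument'' route used for two-source extractors is not the natural choice here, because applying a quantum XOR lemma source-by-source would incur a parameter loss that grows with the output length; the one-extra-source route avoids this entirely, at the cost of a single additional source.

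First I would recall the guarantees of the two classical building blocks. Li's extractor \cite{Li13b} is explicit, takes a constant number $t=O(1)$ of independent $(n,k)$-sources with $k=\polylog(n)$, has polynomially small (indeed, in some regimes negligible) error, and outputs a string of length close to $k$. The BRSW extractor \cite{BarakRSW06,Rao06} is explicit, takes $O(1/\delta)$ independent $(n,\delta n)$-sources for any constant $\delta>0$ (and in the Rao variant a larger constant number of sources of poly-logarithmic min-entropy), again with polynomially small error. Both are exactly of the form $\IExt$ required by Theorem~\ref{thm:Ind_QIExt}, so that theorem applies verbatim with either choice.

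Next I would invoke Theorem~\ref{thm:Ind_QIExt}. Its construction uses the classical extractor $\IExt$ to first produce a string that is only \emph{marginally} close to uniform, but crucially is marginally close to uniform \emph{even given the GE side information}: this is precisely where the GE entropy measure is used, since in the GE model each source retains its full conditional min-entropy ``immediately after'' its own leakage, so the classical independence-based analysis of $\IExt$ goes through on the marginal distribution. The transformation then folds in one extra independent quantum min-entropy source together with a quantum strong seeded extractor (e.g.\ the leftover-hash-lemma extractor of \cite{KMR05,RK05} or the Trevisan-based extractor of \cite{DPVR12}) to lift marginal security to strong OA security, and finally applies the OA--GE equivalence (Theorem~\ref{thm:SOA_GE}) to upgrade strong OA security to strong GE security. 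The only parameter condition to verify is that the extra source has enough conditional min-entropy (in the GE sense) for the chosen quantum seeded extractor to run with the (marginally near-uniform) output of $\IExt$ playing the role of its seed; this is already built into the hypotheses of Theorem~\ref{thm:Ind_QIExt} and is satisfied, for instance, whenever the extra source has polylogarithmic min-entropy (or constant rate, if one wants a longer key).

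I do not expect a genuine obstacle here, since the hard work is encapsulated in Theorems~\ref{thm:Ind_QIExt} and \ref{thm:SOA_GE}; the only real subtlety is bookkeeping, namely confirming that $\QMExt$'s output length and error degrade only mildly relative to $\IExt$ and that the number of sources increases by exactly one. Once that is checked for each of Li's and BRSW's extractors, the corollary follows by citing Theorems~\ref{thm:Li_GE} and \ref{thm:BRSW_GE}, where the explicit parameters (a constant number of sources, $\polylog(n)$ or $\delta n$ min-entropy per source, polynomially small error, and near-optimal output length) are stated.
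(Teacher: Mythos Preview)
Your proposal is correct and follows the same route as the paper: instantiate Theorem~\ref{thm:Ind_QIExt} with Li's extractor (resp.\ the BRSW extractor) as $\IExt$ and the quantum strong seeded extractor of Theorem~\ref{thm:Ext1} as $\Ext_q$, then apply the OA--GE equivalence of Theorem~\ref{thm:SOA_GE}. One small clarification: Theorem~\ref{thm:Ind_QIExt} itself yields only strong \emph{OA}-security (not GE directly), and the reason $\IExt$'s output is marginally close to uniform there is simply that in the OA model the first $t$ sources remain classically independent with marginal min-entropy $\geq k$; the GE entropy measure plays no role until Theorem~\ref{thm:SOA_GE} is invoked.
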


\paragraph{One-extra-block Argument.} In the context of block+general source extractors (e.g., \cite{BarakRSW06}), one can use an extra block to the existing block source and make use of one classical and one quantum strong seeded extractor to lift its security. 
Comparing to the one-extra-source technique, we only require to add one extra block that is not independent of existing sources.
Conceivably, this is a strictly more difficulty task, which is resolved by the technique called \emph{alternating extraction}. 

\begin{theorem}[informal, refer to Theorem~\ref{thm:one_extrac_block}]
From any strong block+general source extractor $\BExt$ with $C$ blocks, one can explicitly construct a GE-secure strong block+general extractor $\QBSAExt$ 
with $C+1$ blocks. 
\end{theorem}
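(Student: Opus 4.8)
The plan is to lift a strong block+general source extractor $\BExt$ with $C$ blocks to a GE-secure strong extractor $\QBSAExt$ with $C+1$ blocks by appending one fresh block to the existing block source and running an alternating-extraction protocol to ``launder'' the correlations, and then invoking the OA--GE equivalence (the network/extractor version of Theorem~\ref{thm:SOA_GE}) so that it suffices to prove strong OA-security. Concretely, write the new block source as $(B_1,\dots,B_C,B_{C+1})$ together with a general source $W$; the first $C$ blocks are the block source on which $\BExt$ already works. Since $\BExt$ is strong, $Z_0 = \BExt(B_1,\dots,B_C,W)$ is close to uniform even given $W$ (and, in the OA model, given the one-sided side information). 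However $Z_0$ is \emph{not} necessarily independent of $B_{C+1}$, so we cannot simply treat $Z_0$ and $B_{C+1}$ as two independent sources. The fix is alternating extraction: use $Z_0$ as a seed into a strong seeded extractor applied to $B_{C+1}$ to get $Z_1$, then use $Z_1$ as a seed back into $B_C$ (or into $W$, whichever has reserved entropy) to get $Z_2$, and iterate a constant number of rounds. Each application of a strong seeded extractor with a close-to-uniform seed produces an output that is close to uniform and, crucially, is close to uniform \emph{given the source it was extracted from}; alternating the roles removes the dependence that was baked into $Z_0$.

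The first key step is to set up the entropy bookkeeping: fix parameters so that $B_{C+1}$ retains enough min-entropy after conditioning on the side information (in the OA model only one side-information register is nonempty, which keeps this clean), and so that each of the $C$ original blocks retains a little slack entropy to be consumed by the later rounds of alternating extraction. The second step is to prove, round by round, that the intermediate string $Z_j$ is $O(\eps)$-close to uniform conditioned on the appropriate ``other'' source; here one uses that a strong seeded extractor remains secure when the seed is only close to uniform (triangle inequality, absorbing the seed error), and one uses the strong property of $\BExt$ for the base case $j=0$. The third step is to apply one classical and one quantum strong seeded extractor at the last two rounds: the quantum strong seeded extractor is what handles the quantum side information in the OA model, converting the marginally-almost-uniform intermediate string into something strongly OA-secure against the quantum register. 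The fourth and final step is to invoke the OA--GE security equivalence to upgrade strong OA-security of $\QBSAExt$ to strong GE-security, noting that $\QBSAExt$ is strong on all-but-one of its inputs (it is strong given the general source $W$ and given the first $C$ blocks), which is exactly the hypothesis the equivalence theorem needs.

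The main obstacle I expect is controlling the dependencies introduced by the alternating-extraction chain in the presence of quantum side information: in the classical analysis of alternating extraction one argues about independence after conditioning on a sequence of previously-revealed strings, but with quantum side information ``conditioning'' is exactly the operation we cannot freely perform. The way around this is precisely the architecture the paper advocates — do all the alternating-extraction rounds tracking only \emph{marginal} closeness-to-uniform (which behaves well), reserve the quantum-security upgrade for a single final application of a quantum strong seeded extractor, and then hand off to the OA--GE equivalence. So the delicate part is verifying that marginal security is genuinely preserved along the chain and that the single quantum strong-extractor step at the end suffices to reach strong OA-security; the entanglement in the GE adversary is then handled entirely by the black-box equivalence theorem rather than by any direct manipulation. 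A secondary, more routine obstacle is keeping the entropy loss per round small enough that a constant number of rounds (matching the $C$-to-$C+1$ block count) leaves positive entropy — this is the standard alternating-extraction parameter chase and should go through as in \cite{BarakRSW06}.
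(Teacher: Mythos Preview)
Your overall architecture matches the paper's: run $\BExt$ on the original $C$ blocks together with the general source, perform an alternating extraction involving the fresh block, finish with one classical and one quantum strong seeded extractor, then invoke the OA--GE equivalence. But you have the strongness direction backwards in two places, and this is not cosmetic---it is exactly what makes the alternating-extraction step go through.

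First, you write that $Z_0 = \BExt(B_1,\dots,B_C,W)$ is close to uniform \emph{given $W$}. The paper assumes (and needs) $\BExt$ strong in the \emph{block-source} side: the output is close to uniform given $B_1,\dots,B_C$. With that direction, after fixing $B_1,\dots,B_C$ the short seed $R$ (your $Z_0$) becomes a deterministic function of $W$ and is therefore \emph{independent} of $B_{C+1}$, which still has $k$ bits of conditional entropy by the block-source property; so $T=\Ext_c(B_{C+1},R)$ works. With your direction, fixing $W$ makes $Z_0$ a function of $B_1,\dots,B_C$ and hence correlated with $B_{C+1}$; there is no conditioning that simultaneously keeps $Z_0$ close to uniform and independent of $B_{C+1}$, so ``use $Z_0$ as a seed on $B_{C+1}$'' fails as stated. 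Second, your claimed final strongness (``given $W$ and the first $C$ blocks'') is the wrong shape for the equivalence: there are only two independent sources here, the block source $(B_1,\dots,B_{C+1})$ and the general source $W$, and the paper proves strongness given the \emph{entire} block source. This is exactly what falls out of the construction, because the \emph{last} step is $\Ext_q(W,T)$, extracting from $W$; the two OA cases are then handled separately (side information on the block source is deferred by strongness, side information on $W$ is absorbed by the quantum extractor). In particular only two alternating steps are needed, not a constant number of rounds.
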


\begin{corollary}[informal, refer to Theorem~\ref{thm:2-block-general-ext1} and Theorem~\ref{thm:2-block-general-ext}]
The block+general source extractors based on the one of BRSW~\cite{BarakRSW06}, or Raz~\cite{Raz05}  are strongly GE-secure. 
\end{corollary}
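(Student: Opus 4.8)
The plan is to obtain both extractors as instances of the one-extra-block transformation of Theorem~\ref{thm:one_extrac_block}, fed with the \emph{classical} block-plus-general source extractors of \cite{BarakRSW06} and \cite{Raz05}. So the first step is to check that these classical constructions are (or can be rewritten as) \emph{strong} block+general source extractors with some number $C$ of blocks, in exactly the sense the hypothesis of Theorem~\ref{thm:one_extrac_block} demands. In both cases the final output is produced by applying a strong seeded extractor to the general source with a seed computed from the block part, so the output stays close to uniform conditioned on the general source (and, in the BRSW case, also on the relevant prefixes of the block source); I would quote the corresponding classical lemmas of \cite{BarakRSW06,Raz05} and verify that the strongness is on the coordinate(s) the transformation needs.

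Second, invoke Theorem~\ref{thm:one_extrac_block} with $\BExt$ set to each of these two extractors. This yields $\QBSAExt$ with $C+1$ blocks that is strongly GE-secure, and the two resulting constructions are precisely those named in Theorems~\ref{thm:2-block-general-ext1} and~\ref{thm:2-block-general-ext}. The remaining work is bookkeeping: the min-entropy required of the extra block is governed by the seed length of the quantum strong seeded extractor used inside the transformation (e.g.\ that of \cite{DPVR12}), the error degrades by an additive term from that seeded extractor plus the alternating-extraction rounds, and the output length shrinks by the corresponding entropy loss. Since a high-min-entropy block of a block source can always be split into two blocks each keeping a constant fraction of its entropy, passing from $C$ to $C+1$ blocks does not change the class of sources handled up to constants, so the resulting GE-secure extractors match the classical ones up to constants.

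I expect the main obstacle to be the first step together with the internal analysis of Theorem~\ref{thm:one_extrac_block} in this instantiation, namely handling the fact that the extra block is \emph{not} independent of the existing blocks — as flagged in the one-extra-block discussion, this is strictly harder than the one-extra-source case. Concretely, one must argue that, conditioned on the side information in the OA model (and then lift to GE via the equivalence of Theorem~\ref{thm:SOA_GE}), the extra block still carries enough conditional min-entropy given the already-leaked block, and that the alternating extraction between a classical and a quantum strong seeded extractor goes through despite the weaker parameters of quantum seeded extractors. Once that alternating-extraction analysis is in hand, the corollary is a direct substitution into Theorem~\ref{thm:one_extrac_block}, and the rest is routine parameter chasing.
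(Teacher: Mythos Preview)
Your plan matches the paper's approach: both instantiations are obtained by plugging the classical BRSW block+general extractor (resp.\ Raz's two-source extractor, viewed as a $1$-block+general extractor) into Theorem~\ref{thm:one_extrac_block} with the quantum seeded extractor of Theorem~\ref{thm:Ext1}, and then invoking the OA--GE equivalence of Theorem~\ref{thm:SOA_GE}.

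Two small corrections. First, you have the direction of strongness backwards: Theorem~\ref{thm:one_extrac_block} requires $\BExt$ to be strong in the \emph{block-source} side (the output is close to uniform given $X_1$, not given the general source $X_3$); this is what drives the first step $\trdist{(X_1,R)-X_1\ot\uniform{}}\le\eps_1$ in its proof. Both BRSW and Raz happen to be strong on the needed side, so your verification step just has to check the right coordinate. Second, you are over-anticipating work in your third paragraph: the alternating-extraction analysis and the handling of the correlated extra block are already packaged inside the statement and proof of Theorem~\ref{thm:one_extrac_block}, which applies as a black box once its hypotheses are met. The corollary really is just ``check hypotheses, apply the theorem, then apply Theorem~\ref{thm:SOA_GE}, then chase parameters''; there is no instantiation-specific alternating-extraction argument to redo. (Also note that Theorem~\ref{thm:one_extrac_block} yields strong OA security, not GE security directly --- the lift via Theorem~\ref{thm:SOA_GE} is a separate, explicit step.)
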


\subsection{Privacy Amplification with Weak Sources}
To show how to achieve privacy amplification with local weak random bits, 
we first give an extractor for a source $X=(X_1, X_2)$ and an independent $(n_3, k_3)$ source $X_3$, where $X_1$ is an $(n_1, k_1=\delta n_1)$ source for any constant $\delta>0$ and conditioned on $X_1$, $X_2$ is an $(n_2, k_2)$ source (i.e., $X$ is a block source). Our construction is simple. We first use the sum-product theorem based condenser in \cite{BarakKSSW05, Zuc07} to convert $X_1$ into a matrix of $D=O(1)$ rows such that one row is $2^{-\Omega(n_1)}$-close to having entropy rate $0.9$.\footnote{Strictly speaking, it is a convex combination of such matrices, but it does not make a difference to our analysis.} Then we use each row in this matrix and the strong two-source extractor $\Raz$ in \cite{Raz05} to extract an output from $X_3$ and concatenate the outputs to obtain a somewhere random source $W$. This step works because $\Raz$ works if one of the inputs has entropy rate $>0.5$ and indeed one row in our matrix has entropy rate $0.9$. Since $\Raz$ is strong,  even conditioned on $X_1$, $W$ is still somewhere random. We can also limit the size of each output in $W$ so that conditioned on $W$, $Y$ still has a lot of min-entropy. Now since $W$ only has a constant number of rows, we can use $W$ and a strong extractor from \cite{Rao06, BarakRSW06} to extract a uniform random string $V$ from $X_2$.  Conditioned on the fixing of $X_1$ and $W$, $V$ and $X_3$ are independent. We can take a classical strong seeded extractor $\Ext_c$ and use $V$ as a seed to extract a uniform random string from $X_3$, which gives us a classical $X$-strong extractor. 

The above argument naturally extends to the OA model, where we replace $\Ext_c$ by  a quantum strong seeded extractor $\Ext_q$ at the last step. The analysis turns out to be a special case of the ``one-extra-block" argument mentioned in the last section. Our OA-GE equivalence will then establish that the resulted extractor is a GE-secure $X$-strong extractor. 
See Section~\ref{sec:three_Ext} for details. 


We further observe that the above extractor gives us a general way to transform any classical or quantum strong seeded extractor into another strong seeded extractor that works as long as the seed has entropy rate $\geq 1/2+\delta$.\footnote{It is easy to see that one can divide the seed into two equal blocks and they form a block source with each block having entropy rate at least $\delta/2$}
In privacy amplification, if either party's local random source has entropy rate $1/2+\delta$, then we can just use this strong extractor. Otherwise, if the parties both have local sources with entropy rate $\delta$, then we can have both parties send their sources to each other and they just apply the original strong extractor (notice that the sources of the two parties form a block source $X=(X_1, X_2)$). Note that we can output a constant fraction of the entropy of $X$ in $V$, thus the size of $X$ only needs to be a constant factor larger than what is needed in the case when we have uniform random seeds. See Section~\ref{sec:PA} for details. 


\subsection{Network Extractor with Quantum Side Information} \label{sec:intro:network}
In the classical setting, network extractors are motivated by the problem of using imperfect randomness in distributed computing, a problem first studied by \cite{GoldwasserSV05}. Kalai, Rao, Li, and Zuckerman formally defined network extractors in \cite{KLRZ}, and gave several efficient constructions for both synchronous networks and asynchronous networks, and both the information-theoretic setting and the computational setting. For simplicity and to better illustrate our ideas, in this paper we will focus on synchronous networks and the information-theoretic setting. 

Following \cite{KLRZ}, we gave an informal definition of network extractors with quantum side information here. A formal definition is given in Section~\ref{sec:network}. We assume a set of $p$ players such that $t$ of them are corrupted by an adversary $\AdvNet$. Each (honest) player has an independent source $X_i$, and a side information adversary $\AdvSI$ collects side information $\rho$ from the sources $X = (X_1,\dots, X_p)$. We assume each $X_i$ has length $n$ and conditional min-entropy at least $k$ measured in our GE model. Depending on the case of independent rushing (IR) or quantum rushing (QR), $\AdvNet$ and $\AdvSI$ may or may not collaborate.

At the conclusion of protocol execution, let $T$ denote the transcript of protocol messages that are public, and $Z_i$ be the private output of (honest) player $i$.

\begin{definition} A protocol $\NetExt$ is a $(t, g ,\eps)$ network extractor for adversary $\Adv=(\AdvSI, \AdvNet)$ if at the end of the protocol, there exists a subset of honest players $S$ with $|S| \geq g$ such that
$$ \trdist{\rho_{Z_{S} Z_{-{S}} T \Adv}-  U \ot \rho_{Z_{-S} T \Adv}}\leq \eps,$$ where $Z_{S}$ and $Z_{-{S}}$ denote the outputs of the players in $S$ and the outputs of the players outside of $S$ respectively.
\end{definition}


We can now informally state our results.  For the case of independent rushing, we are able to tolerate close to $1/3$-fraction of faulty players, scarify only roughly $t$ honest players, and extract almost all entropy out even for low entropy $k = \polylog(n)$.

\begin{theorem}[IR-secure Network Extractor] \label{thm:CRnext} For every constants $\alpha < \gamma \in (0,1)$, $c > 0$, and sufficiently large $p,t,n,k$ s.t. $p \geq (3+\gamma) t$ and $k \geq \log^{10}n$, there exists a 3-round $(t, p - (2+\alpha) t, n^{-c})$ network extractor $\NetExt$ with output length $m =k - o(k)$ in the independent rushing case.
\end{theorem}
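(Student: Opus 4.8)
The plan is to reduce the construction of a strong GE-secure network extractor with independent rushing to the construction of a strong \emph{OA}-secure one, and then invoke the OA--GE security equivalence for honest players in the independent-rushing setting (Theorem~\ref{thm:CR_QR}). In the OA model the side-information adversary may leak from only a single source, and (in the IR case) the network adversary acts independently of it; so a strong OA-secure protocol is essentially a \emph{classical} network extractor whose per-player output is ``strong on all-but-one input'' (close to uniform given all other players' inputs and the public transcript), with the single per-player extraction step performed by a \emph{quantum} strong seeded extractor rather than a classical one (to handle the one source carrying quantum side information, and to push the output to $k-o(k)$). Thus the bulk of the proof is to build such a 3-round protocol, following and simplifying Kalai--Li--Rao--Zuckerman~\cite{KLRZ}, and then to check that it literally fits the template of Theorem~\ref{thm:CR_QR}.

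\textbf{The protocol.} I would designate a committee of $\Theta(t)$ players (size roughly $2t$, so that at least $t$ committee members are honest since $p\geq(3+\gamma)t$). In Round~1 each committee member treats its source as a block source $X_j=(X_j^{(1)},X_j^{(2)})$, condenses $X_j^{(1)}$ into a constant number of rows of rate close to $1$ via the sum--product condenser of~\cite{BarakKSSW05, Zuc07}, and broadcasts the result; the honest committee members thereby contribute genuinely \emph{independent} high-entropy public strings, while the at most $t$ faulty members contribute arbitrary strings that may depend on the honest broadcasts --- which is exactly where rushing is absorbed. Using the committee broadcasts one then robustly produces a public string $W$ that is close to uniform \emph{no matter how the faulty committee members behaved}, via a multi-source/somewhere-random combiner in the spirit of~\cite{BarakRSW06, Rao06, Li13b}; Rounds~2--3 implement the short refinement/alternating-extraction phase needed for this robustness (in the spirit of Theorem~\ref{thm:one_extrac_block}), and in the same rounds (a) committee members recover randomness from $W$ using their withheld blocks $X_j^{(2)}$ via a strong seeded extractor, and (b) each remaining honest player $i$ extracts its output from its own source $X_i$ using $W$ as a seed, where the final per-player step is a \emph{quantum} strong seeded extractor (e.g.\ \cite{DPVR12}) set to extract $k-o(k)$ bits. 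Since every extraction step is strong and $W$ is public and robustly uniform, each honest player's output is close to uniform conditioned on the transcript and all other players' inputs; i.e.\ the protocol is a strong OA-secure network extractor, and Theorem~\ref{thm:CR_QR} then upgrades it to a strong GE-secure one with the same parameters (and, where convenient, the internal steps may be routed through the ``one-extra-source''/``one-extra-block'' machinery of Theorems~\ref{thm:Ind_QIExt} and~\ref{thm:one_extrac_block}).

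\textbf{Parameters.} With $p\geq(3+\gamma)t$, a committee of size $\leq 2t$ leaves $\geq t$ honest committee members and $g=p-(2+\alpha)t$ honest players overall who obtain output; the extra $\alpha t$ beyond the $2t$ committee slots is slack for the error analysis. The hypothesis $k\geq\log^{10}n$ is what makes the polylogarithmic-min-entropy multi-source extractors of~\cite{Li13b, BarakRSW06, Rao06} and the quantum strong seeded extractor of~\cite{DPVR12} apply while keeping the total error below $n^{-c}$, and choosing the block split of each source and the condenser parameters so that each extraction loses only $o(k)$ entropy gives output length $m=k-o(k)$.

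\textbf{Main obstacle.} I expect two intertwined difficulties. The first is producing the public string $W$ that is uniform \emph{for every} choice of the $\leq t$ faulty committee broadcasts given only that the honest ones are independent and high-entropy --- i.e.\ a genuinely robust (``adversarial-source'') multi-source extraction, possibly realized as a short tournament that drives the somewhere-random error down to a single good output --- while sacrificing only $O(t)$ players and using only $3$ rounds; this is the crux and is where the improvement over~\cite{KLRZ} lives. The second is the bookkeeping: verifying that the ``strong on all-but-one input'' invariant survives all three rounds against a rushing $\AdvNet$ whose round-$r$ messages may depend on all honest round-$(<r)$ messages, and that the resulting protocol matches the independent-rushing hypothesis of Theorem~\ref{thm:CR_QR} exactly --- only then is the passage from OA to GE security free. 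The genuinely quantum content is confined to the single final per-player seeded extraction and to this last invocation of the equivalence.
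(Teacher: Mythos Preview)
Your high-level reduction is exactly the paper's: construct a network extractor that is \emph{strongly} OA-IR secure for each target player (output uniform even given all other inputs and the transcript), then invoke the OA--GE equivalence for the independent-rushing setting. One small but concrete error: the equivalence you want is Theorem~\ref{thm:OA-GE-eq-for-net-ext}, not Theorem~\ref{thm:CR_QR}; the latter is the IR$\to$QR security-lifting lemma and plays no role here.

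Where your proposal departs from the paper is in the protocol itself, and there is a real gap. You have each committee member condense the first block of its source via the sum--product condenser of~\cite{BarakKSSW05,Zuc07}, but that condenser (Theorem~\ref{thm:swcondenser}) requires \emph{constant entropy rate} $\delta$, whereas the hypothesis here is $k\geq\log^{10}n$, i.e.\ polylogarithmic entropy. So the very first step of your construction does not apply in the regime of the theorem. Relatedly, your ``robustly produce a public string $W$ that is close to uniform no matter how the faulty committee members behaved'' is precisely the hard part, and the paper does not solve it the way you sketch. Instead it uses a three-set partition $A\cup B\cup C$ with $|A|=(1+\alpha)t$ and $|B|\approx 2(1+2\delta)t$: in Round~1 the players in $A$ simply broadcast their raw sources; then an $\AND$-disperser (Lemma~\ref{lem:ANDdisperser}) composed with a bipartite expander (Theorem~\ref{thm:expander}) routes these broadcasts through Li's constant-source extractor (Theorem~\ref{thm:Li_IExt}) so that most $j\in B$ obtain a somewhere-random matrix $S^j$; each such $j$ then applies $\SRExt$ (Theorem~\ref{thm:srgeneral}) to $(X_j,S^j)$ and publishes two short slices $y_j^1,y_j^2$ in Rounds~2 and~3. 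The point of splitting into two rounds is to defeat rushing inside $B$: the faulty $y^1_{\bar B}$ can depend on $y^1_{B_{\Good}}$ but not on $y^2_{B_{\Good}}$, so the concatenations $(Y^1,Y^2)$ form a two-\emph{block} source of rate $>1/2$ given $T_1$ (Lemma~\ref{lem:step-1-net-ext-conclusion}). The public output is therefore a block source, not a uniform string, and the final per-player step is the $Y$-strong OA-secure two-block+general extractor of Theorem~\ref{thm:2-block-general-ext} (which already contains the alternating-extraction and the final quantum seeded extraction you mention), applied to $X_i$ for $i\in C$ and to $(X_j,Y_{-j})$ for $j\in B$. The set $S$ in the formal statement is $B\cup C$ of size $p-(1+\alpha)t$; after removing at most $t$ faulty players you get $g=p-(2+\alpha)t$.

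So: keep your outer strategy, but replace the committee+condenser scheme by the $A/B/C$ + $\AND$-disperser + expander + $\IExt$ + $\SRExt$ pipeline, and aim for a public two-block source rather than a uniform $W$. That is what makes the polylog-entropy regime go through and what absorbs rushing in three rounds.
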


We note that even in the classical setting (with no side information), Theorem~\ref{thm:CRnext} is the best known. Essentially, this result matches the best known network extractor in the classical setting and improves the results in \cite{KLRZ}. The reasons are that (i) at the time of \cite{KLRZ}, they did not have Li's extractor for a constant number of weak sources with min-entropy $k=\polylog(n)$ \cite{Li13b}, and (ii) we use alternating extraction to extract almost all min-entropy out.

For the case of quantum rushing, we obtain slightly worse parameters, where we can tolerate a constant fraction of faulty players, and scarifice $O(t)$ honest players. Here we require the min-entropy $k$ to be sufficiently larger then $t$. We discuss at the end of the section how to relax this requirement.

\begin{theorem}[QR-secure Network Extractor] \label{thm:QRnext}  There exists a constant $\gamma \in (0,1)$ such that for every constant $c > 0$, and sufficiently large $p, t, n, k$ with $p > t / \gamma$ and $k \geq \max \{ \log^{10}n, t/\gamma \}$, there exists a 1-round $(t, p - t/\gamma, n^{-c})$ network extractor $\NetExt$ with output length $m = \Omega(k)$ in the quantum rushing case.
\end{theorem}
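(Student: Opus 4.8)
The plan is to reduce the quantum-rushing case to a GE-secure multi-source extractor, paying for the entanglement the adversary can inject with a bounded ($O(t)$) loss of entropy and a bounded number of extra sacrificed players. Concretely, I would use a one-round protocol in which a designated set $H$ of $\Theta(t)$ \emph{helper} players broadcast (a condensed version of) their sources and are declared sacrificed, while the remaining $p-\Theta(t)$ \emph{output} players send nothing; at the end an honest output player $i$ computes $Z_i$ by feeding $X_i$ together with the helpers' broadcasts into a GE-secure multi-source extractor that is (i) strong on \emph{all} the helper inputs (so $Z_i$ stays close to uniform given the public transcript) and (ii) resilient to up to $t$ of the helper inputs being adversarially controlled. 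The underlying object is built from the best known classical resilient independent-source extractors (\cite{Li13b,BarakRSW06,Rao06}), lifted to strong GE-security by the one-extra-source construction of Theorem~\ref{thm:Ind_QIExt}, with alternating extraction used to push the output to $m=\Omega(k)$.

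The first step is to check that, even under quantum rushing, the source $X_i$ of an honest output player retains min-entropy $\Omega(k)$ conditioned on the whole transcript $T$ and the quantum side information. This follows from the definition of GE-entropy: $X_i$ has entropy at least $k$ conditioned on $E_i$ and \emph{all} the raw adversary registers $A_{-i}$, and an honest output player never puts $X_i$ into any message, so everything the network adversary ever sees --- the leakages $(E_\ell)_\ell$, the faulty players' own sources, the helper broadcasts, and hence every rushed faulty message --- is a function of $(X_{-i},(A_\ell)_\ell,E_i)$, against all of which $X_i$ is guaranteed to have min-entropy $\geq k$. So the only real work is in the helper inputs: at least $\Theta(t)-t=\Theta(t)$ of them are genuine weak sources from honest helpers, but conditioned on $T$ and the side information they are in general \emph{not} independent of each other or of $X_i$ --- by the same mechanism as the \cite{KK12} counterexample, quantum rushing lets the adversary use entanglement to correlate them, so the OA--GE equivalence for network extractors (Theorem~\ref{thm:CR_QR}), available only for independent rushing, cannot be invoked.

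The heart of the argument is therefore to show that this induced correlation can be ``paid for.'' I would treat the $\leq t$ adversarial helper inputs as a form of bounded side information and invoke the bounded-storage-model bound (losing the factor-two to superdense coding), so that freezing the quantum rushing costs only $O(t)$ bits of entropy per source --- this is exactly where the hypotheses $k\geq t/\gamma$ and the $\Theta(t)$ extra sacrifices enter, and it is crucial that the protocol is one round so that this loss does not compound. Having absorbed the damage, one is left with an honest output player holding a source of conditional min-entropy $\Omega(k)=\Omega(\polylog n)$, together with $\Theta(t)-t-1$ good, essentially-independent helper sources and one further good helper source set aside as the ``extra source'' of Theorem~\ref{thm:Ind_QIExt}; the classical resilient multi-source extractor produces a merely \emph{marginally} near-uniform string, and the final quantum strong seeded extractor step turns this into strong GE-security with error $n^{-c}$ and output $\Omega(k)$. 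A union bound over the $\geq p-t/\gamma$ honest output players yields the claimed $(t,\,p-t/\gamma,\,n^{-c})$ network extractor.

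The main obstacle I expect is precisely this reduction: since one cannot pretend the side information is one-sided, one must argue \emph{directly} that the entanglement created by $t$ rushing messages behaves no worse than $O(t)$ bits of (possibly superdense-coding-assisted) bounded side information, and only then feed the result into the one-extra-source machinery. Getting the bookkeeping right --- in particular, ensuring that the adversarial helper inputs genuinely act like bounded side information rather than unboundedly correlated sources, and that the ``extra source'' stays independent enough after all the conditioning on $T$, the faulty messages, and the other players' outputs --- is the technical crux.
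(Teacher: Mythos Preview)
Your proposal has a genuine gap at exactly the point you flag as the crux, and it also misidentifies the paper's key tool.

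First, a labeling error that matters: Theorem~\ref{thm:CR_QR} is \emph{not} the OA--GE equivalence for network extractors (that is Theorem~\ref{thm:OA-GE-eq-for-net-ext}); it is precisely the IR-to-QR lifting lemma, and it is the heart of the paper's proof. The paper's protocol partitions $\Theta(t)$ helpers into $s=\Theta(t)$ groups of constant size, has each group publish a short slice of length $k/s$, and then each output player applies the GE-secure Raz two-source extractor to $(X_i,y)$. The rushing part $y_{\Bad}$ has length at most $t\cdot(k/s)=O(\alpha k)$ for a small constant $\alpha$. One first proves GE--IR security (via the OA--GE equivalence), then invokes Theorem~\ref{thm:CR_QR}: guess every possible value of the rushing message and union-bound, paying a factor $2^{|y_{\Bad}|}=2^{O(\alpha k)}$ in error. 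Since $\QTExt_{\Raz}$ has error $2^{-\Omega(k)}$, this blow-up is affordable; the large $\IExt$ error is handled separately by first switching to a hybrid where honest-group seeds are perfectly uniform. The hypothesis $k\geq t/\gamma$ is used to make $|y_{\Bad}|$ a small fraction of $k$, not to absorb an entropy loss.

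Your alternative route --- treat the $\leq t$ adversarial helper broadcasts as bounded side information and lose ``only $O(t)$ bits of entropy per source'' --- does not go through as written. The adversarial broadcasts are classical strings of total length $t\ell$, where $\ell$ is whatever each helper sends; there is no reason this is $O(t)$ unless $\ell=O(1)$, in which case the honest helpers contribute far too little to drive any of the cited extractors. The superdense-coding factor of two is a red herring: that bound concerns \emph{quantum} storage, whereas the rushing messages here are classical; the damage they do is not ``leaking $2\cdot(\text{their length})$ bits about $X_i$'' but rather correlating the extractor's \emph{inputs} with $X_i$. Conditioning on the adversarial broadcasts does not restore independence either: those broadcasts depend jointly on the honest broadcasts and on the entangled side information, so after conditioning the state no longer fits the GE model and you cannot invoke a GE-secure multi-source extractor. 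Finally, the extractors you cite (\cite{Li13b,BarakRSW06,Rao06}) use $O(1)$ or $O(\log n/\log k)$ independent sources; none of them is a resilient extractor for $\Theta(t)$ sources with $t$ adversarially chosen, so the object you need does not come from that toolbox.

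In short: your first paragraph (that $X_i$ retains GE-entropy against everything public) is correct, and your diagnosis that the crux is the rushing-induced correlation is right. But the way to pay for that correlation is the guessing/union-bound of Theorem~\ref{thm:CR_QR} combined with a protocol that keeps the total rushing length $o(k)$, not a bounded-storage argument.
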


\begin{remark} Like in the classical setting, our network extractors can also be applied to distributed computing problems. For example, Theorem~\ref{thm:CRnext} implies that in the independent  rushing case, even with min-entropy as small as $k=\polylog(n)$, we can achieve synchronous Byzantine agreement while tolerating roughly $1/4$ fraction of faulty players. This is almost optimal since the optimal tolerance is roughly $1/3$. Similarly, Theorem~\ref{thm:QRnext} implies that in the quantum rushing case, even with min-entropy as small as $k=\polylog(n)$, we can achieve synchronous Byzantine agreement while still tolerating a constant fraction of faulty players.
\end{remark}

Our network extractor for the independent rushing case follows the same approach as our multi-source extractors. We establish OA-GE security equivalence, and use a simple security-lifting transformation to obtain OA security.

In contrast, achieving QR security is much more difficulty to handle. We first note that our simulation argument for OA-GE security equivalence breaks down in this setting, since we can no longer defer the collection of side information, which is used by $\AdvNet$ during the protocol execution. Also, even getting OA-QR security seems already challenging. To see why, consider that at some point of protocol execution, some public source $Y$ is used to extract private randomness from some honest player's source $X_i$. Suppose that $Y$ depends on some rushing information, which in turn can correlate with $X_i$ through side information. As such, it is hard to ensure that the extraction works.

To address the issue, we develop a security lifting technique from IR to QR security. Very informally, the idea is to break the correlation by guessing, which reduces QR attacks to IR ones, but at the cost of $2^{\mathsf{rushing-length}}$ blow-up in error (along with other limitations). We thus carefully design the protocol to restrict the (effective) length of rushing attacks, and this is the reason that we require that $k>t/\gamma$ in Theorem~\ref{thm:QRnext}. However, we also sketch an approach for the case of $k<t$ for quantum rushing setting towards the end of Section~\ref{sec:network}.


\subsection{Open Problems and Future Work} \label{sec:future}
Our results leave several open problems. First, although our GE model is quite general, it may not be the most general model. Thus, one can ask whether there is a more general model that also allows the construction of quantum multi-source extractors in the presence of even entangled quantum side information. Second, in our network extractor, we deal with quantum rushing using a naive ``guessing" technique, which results in a $2^{\mathsf{rushing-length}}$ blow-up in error. Is there a better way to tackle this problem?

For future work, it would be nice to see if our techniques can be applied to other related problems with quantum side information, such as privacy amplification with an active adversary.

\subsection*{Organization}
The rest of the paper is organized as follows: in Section~\ref{sec:prelim}, we summarize necessary background knowledge on quantum information, classical and quantum single/multi/block-source extractors. 
We then formally introduce our GE model in Section~\ref{sec:model} with detailed discussions. 
The strong OA-GE security equivalence is established in Section~\ref{sec:OA_to_GEA}.   Three arguments for obtaining strong OA-security are demonstrated in Section~\ref{sec:obtain_OA}. 
A new construction of a three-source extractor is illustrated in Section~\ref{sec:three_Ext} with its application to privacy amplification in Section~\ref{sec:PA}. We conclude with results about network extractors in Section~\ref{sec:network}.


%
%
%
%
%
%
%


\section{Preliminary} \label{sec:prelim}
We assume familiarity with the standard concepts from quantum
information and summarize our notation and useful facts in Section~\ref{sec:prelim_qi}. 
We also summarize necessary background about classical independent source extractors in Section~\ref{sec:prelim_indep} and quantum extractors in Section~\ref{sec:prelim_q_ext}.

\subsection{Quantum Information}  \label{sec:prelim_qi}
\begin{trivlist}

\item \textbf{Quantum States.} We only consider finite dimensional Hilbert spaces as quantum states in infinite dimensions can be truncated to be within a finite dimensional space with an arbitrarily small error. The state space $\A$ of  $m$-qubit is the complex Euclidean space $\complex^{2^m}$. An $m$-qubit quantum state is represented by a density operator $\rho$, i.e., a positive semidefinite operator over $\A$ with trace $1$. The set of all quantum states in $\A$ is denoted by $\density{\A}$. 

Let $\lin{\A}$ denote the set of all linear operators on space $\A$. The Hilbert-Schmidt inner product on $\lin{\A}$ is defined by $\ip{X}{Y}=\tr (X^*Y)$,  for all $X,Y \in \lin{\A}$, where $X^*$ is the adjoint conjugate of $X$. 
Let $\I_\X$ denote the identity operator over $\X$, which might be omitted
from the subscript if it is clear in the context. 

For a multi-partite state, e.g. $\rho_{ABC} \in \density{\A \ot \B \ot \C}$, its reduced state on some subsystem(s) is represented by the same state with the corresponding subscript(s). For example, the reduced state on $\A$ system of $\rho_{ABC}$
is $\rho_A=\tr_{\B\C}(\rho_{ABC})$, and $\rho_{AB}=\tr_{\C}(\rho_{ABC})$.  When all subscript letters are omitted, the notation
represents the original state (e.g., $\rho=\rho_{ABE}$).

A \emph{classical-quantum-}, or cq-state $\rho \in \density{\A \ot \B}$ indicates that the $\A$ subsystem is classical and $\B$ is quantum. Likewise for ccq-, etc., states. We use \emph{lower case}     letters to denote specific values assignment to the classical part of a state. For example, any cq-state $\rho_{AB}=\sum_a p_a \ketbra{a} \ot \rho^a_B$ in which $p_a=\Prob[A=a]$ and $\rho^a_B$ is a normalized state. 

\item \textbf{Distance Measures.} For any $X \in \lin{\A}$ with singular values $\sigma_1,\cdots, \sigma_d$, where $d=\dim(\A)$, the trace norm of $\A$ is $\trnorm{X}=\sum_{i=1}^d \sigma_i$.
The \emph{trace distance} between two quantum states $\rho_0$ and $\rho_1$ is defined to be \[\trdist{\rho_0- \rho_1} \defeq \frac{1}{2}\trnorm{\rho_0-\rho_1}.\] 
When $\rho_0$ and $\rho_1$ are \emph{classical} states,  the trace distance $\trdist{\rho_0-\rho_1}$ is equivalent to the \emph{statistical} distance between $\rho_0$ and $\rho_1$. 
It is also a well known fact that for two distributions $X_1, X_2$ over $\X$, let $p_x=\Prob[X_1=x]$ and $q_x=\Prob[X_2=x]$ and their statistical distance satisfies
\begin{equation} \label{eqn:fact:c_dist}
\trdist{X_1-X_2}= \frac{1}{2}\sum_{x} |p_x-q_x|=\sum_{x: p_x>q_x} (p_x-q_x).
\end{equation}
For simplicity, when both states are classical, we use $(X_1) \approx_\eps (X_2)$ to denote $\trdist{X_1-X_2}\leq \eps$. 

Moreover, the trace distance admits the following two simple facts. 

\begin{fact} \label{fact:trdist:add_prod}
For any state $\rho_1, \rho_2 \in \density{\A}$ and $\sigma \in \density{\B}$, we have 
\[
  \trdist{\rho_1-\rho_2}=\trdist{\rho_1\ot \sigma-\rho_2\ot \sigma}.
\]
\end{fact}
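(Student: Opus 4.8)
The plan is to peel off the definition $\trdist{\rho_0-\rho_1} \defeq \frac12\trnorm{\rho_0-\rho_1}$ and reduce the stated identity to the purely linear-algebraic claim that $\trnorm{(\rho_1-\rho_2)\ot\sigma} = \trnorm{\rho_1-\rho_2}$. Write $M \defeq \rho_1-\rho_2 \in \lin{\A}$; note $M$ is Hermitian but in general not positive, so we must argue with singular values rather than eigenvalues. The crux is the multiplicativity of the trace norm under tensor products, $\trnorm{M\ot\sigma} = \trnorm{M}\cdot\trnorm{\sigma}$, valid for arbitrary $M\in\lin{\A}$ and $\sigma\in\lin{\B}$.

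To establish this sub-step I would take singular value decompositions $M = \sum_i s_i\, u_i v_i^*$ and $\sigma = \sum_j t_j\, w_j x_j^*$, where $s_i, t_j \ge 0$ and $\{u_i\},\{v_i\},\{w_j\},\{x_j\}$ are orthonormal families. Then
\[
   M\ot\sigma \;=\; \sum_{i,j} s_i t_j\,(u_i\ot w_j)(v_i\ot x_j)^*,
\]
and since $\{u_i\ot w_j\}_{i,j}$ and $\{v_i\ot x_j\}_{i,j}$ are again orthonormal, this is a singular value decomposition of $M\ot\sigma$ with singular values exactly $\{s_i t_j\}_{i,j}$. Summing gives $\trnorm{M\ot\sigma} = \sum_{i,j} s_i t_j = \big(\sum_i s_i\big)\big(\sum_j t_j\big) = \trnorm{M}\trnorm{\sigma}$. (One could alternatively invoke the dual description $\trnorm{N}=\max\{|\tr(U^* N)| : U \text{ unitary}\}$ to get the lower bound via a product test operator $U\ot V$, together with a H\"older-type bound for the reverse inequality; the SVD route above is the most direct.)

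Finally, since $\sigma$ is a density operator it is positive semidefinite with $\tr\sigma = 1$, so its singular values coincide with its eigenvalues and $\trnorm{\sigma} = \tr\sigma = 1$. Substituting back, $\trnorm{(\rho_1-\rho_2)\ot\sigma} = \trnorm{\rho_1-\rho_2}\cdot 1$, and multiplying by $\tfrac12$ yields $\trdist{\rho_1\ot\sigma - \rho_2\ot\sigma} = \trdist{(\rho_1-\rho_2)\ot\sigma} = \trdist{\rho_1-\rho_2}$, which is the claim. There is no genuine obstacle here: the fact is a one-line consequence of tensor multiplicativity of the trace norm and $\trnorm{\sigma}=1$, and the only point meriting a sentence of care — included purely for completeness — is the routine verification that the tensored singular vectors form valid orthonormal systems so that $\{s_i t_j\}$ really are the singular values of $M\ot\sigma$.
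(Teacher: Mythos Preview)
Your proof is correct. The paper does not actually prove this statement --- it is recorded as a standard fact without argument --- so there is nothing to compare; your SVD-based derivation of $\trnorm{M\ot\sigma}=\trnorm{M}\,\trnorm{\sigma}$ together with $\trnorm{\sigma}=1$ is a clean and complete justification.
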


\begin{fact} \label{fact:trdist:c_decomp}
Let $\rho,\sigma \in \density{\A \ot \B}$ be any two cq-states where $\A$ is the classical part. Moreover, $\rho=\sum_a p_a \ketbra{a} \ot \rho^a_B$ and $\sigma=\sum_a q_a \ketbra{a} \ot \sigma^a_B$. Then we have
\[
 \trdist{\rho-\sigma}=\sum_a \trdist{p_a\rho^a_B -q_a \sigma^a_B}.
\]
\end{fact}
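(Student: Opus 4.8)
The plan is to reduce Fact~\ref{fact:trdist:c_decomp} to the elementary observation that a classical register makes the operator $\rho-\sigma$ \emph{block-diagonal}, together with the fact that the trace norm of a block-diagonal operator is the sum of the trace norms of its blocks.

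Concretely, first I would write
\[
  \rho - \sigma \;=\; \sum_a \ketbra{a} \ot M_a, \qquad M_a \defeq p_a \rho^a_B - q_a \sigma^a_B ,
\]
a Hermitian operator on $\A \ot \B$, where the $\ket{a}$ are the mutually orthogonal basis vectors of the classical register $\A$ and each $M_a$ is Hermitian (since $\rho^a_B,\sigma^a_B$ are density operators and $p_a,q_a\in\real$). Because $\langle a | a'\rangle = \delta_{a,a'}$, multiplication respects the blocks, so $(\rho-\sigma)^2 = \sum_a \ketbra{a}\ot M_a^2$. The operator $\sum_a \ketbra{a}\ot |M_a|$ is positive semidefinite and squares to $(\rho-\sigma)^2$, hence by uniqueness of the positive square root $|\rho-\sigma| = \sum_a \ketbra{a}\ot|M_a|$. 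Taking traces gives $\trnorm{\rho-\sigma} = \sum_a \trnorm{M_a}$, and dividing by $2$ (recalling $\trdist{\cdot}=\tfrac12\trnorm{\cdot}$) is exactly the statement. Equivalently, one could invoke the variational characterization $\trdist{\rho-\sigma} = \max_{0\le P\le\I}\tr\!\big(P(\rho-\sigma)\big)$ and note that the objective depends only on the diagonal blocks $\bra{a}P\ket{a}$ of $P$; replacing $P$ by its block-diagonal part preserves $0\le P\le \I$, so the maximization decouples over $a$ and each summand contributes $\trdist{M_a}$.

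There is essentially no obstacle here: this is a routine structural fact recorded for later use, where it will let us peel off a classical register one value at a time. The only point worth stating with care is that the absolute value (equivalently, the optimal measurement operator) preserves the block structure, and this is immediate from orthogonality of the $\ket{a}$.
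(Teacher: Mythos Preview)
Your argument is correct: the block-diagonal structure of $\rho-\sigma$ in the orthonormal basis $\{\ket{a}\}$ immediately gives $|\rho-\sigma| = \sum_a \ketbra{a}\ot |M_a|$, and taking traces yields the claim. The paper itself does not supply a proof of this fact---it is listed in the preliminaries as a standard identity about cq-states---so there is nothing to compare against; your write-up would serve perfectly well as the omitted justification.
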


%

\item \textbf{The XOR-Lemma}. Vazirani's XOR-Lemma~\cite{Vaz87} relates the non-uniformity of a distribution to the non-uniformity of the XOR of certain bit positions. For our application, we need the following more general XOR-Lemma~\cite{KK12} which takes into account \emph{quantum} side information. 

\begin{lemma}[\cite{KK12}, Lemma 2.6]  \label{lem:KK_XOR}
Let $\rho_{ZE}$ be an arbitrary cq-state where $Z \in \01^m$ and the register $E$ is of dimension $2^d$.  Then we have 
\[
   \trdist{\rho_{ZE} -\uniform{m} \ot \rho_E}^2 \leq 2^{\min (d,m)} \sum_{\emptyset \neq S \subseteq \01^m} \trdist{ \rho_{Z_{\oplus S}E}-\uniform{1} \ot \rho_E}^2,
\]
where $Z_{\oplus S}= \bigoplus_{i \in S} z_i$. 
\end{lemma}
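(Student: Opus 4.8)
The plan is to run a Fourier (Walsh--Hadamard) analysis on the cq-state, viewing the subnormalized operators $\{\rho_z\}_{z\in\01^m}$ on $E$ as an operator-valued ``distribution'' and expanding them in $\pm 1$ characters. Write $\rho_{ZE}=\sum_{z\in\01^m}\ketbra{z}\otimes\rho_z$ and, for each $S\in\01^m$, set $\hat\rho_S\defeq\sum_z(-1)^{\ip{S}{z}}\rho_z$, a Hermitian operator supported on the $2^d$-dimensional space $E$; then $\hat\rho_\emptyset=\rho_E$ and Fourier inversion gives $\rho_z=2^{-m}\sum_S(-1)^{\ip{S}{z}}\hat\rho_S$. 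First I would record the one-bit identity: for $S\ne\emptyset$ the state $\rho_{Z_{\oplus S}E}$ has classical bit $\ip{S}{z}$ with subnormalized conditional operators $\tfrac12(\rho_E\pm\hat\rho_S)$, so by Fact~\ref{fact:trdist:c_decomp}, $\trdist{\rho_{Z_{\oplus S}E}-\uniform{1}\otimes\rho_E}=\tfrac12\trnorm{\hat\rho_S}$. Writing $M\defeq\rho_{ZE}-\uniform{m}\otimes\rho_E$, the $S=\emptyset$ term cancels, leaving $M=\sum_z\ketbra{z}\otimes M_z$ with $M_z=2^{-m}\sum_{S\ne\emptyset}(-1)^{\ip{S}{z}}\hat\rho_S$.

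It then suffices to establish \emph{two} independent bounds on $\trdist{M}^2$ --- one with prefactor $2^m$, one with prefactor $2^d$ --- after which the claimed $2^{\min(d,m)}$ follows by keeping whichever is smaller. For the $2^m$ bound: since $M$ is block-diagonal over the classical register, $\trnorm{M}=\sum_z\trnorm{M_z}$; the triangle inequality applied inside each $M_z$ and then averaging over the $2^m$ values of $z$ gives $\trnorm{M}\le\sum_{S\ne\emptyset}\trnorm{\hat\rho_S}$, hence $\trdist{M}\le\sum_{S\ne\emptyset}\trdist{\rho_{Z_{\oplus S}E}-\uniform{1}\otimes\rho_E}$, and Cauchy--Schwarz over the $2^m-1$ summands finishes it. For the $2^d$ bound: a short Parseval computation (using $\sum_z(-1)^{\ip{S\oplus S'}{z}}=2^m[S=S']$) gives $\|M\|_2^2=2^{-m}\sum_{S\ne\emptyset}\|\hat\rho_S\|_2^2$; applying $\trnorm{X}\le\sqrt{\rank X}\,\|X\|_2\le 2^{d/2}\|X\|_2$ to each $M_z$ and Cauchy--Schwarz over $z$ yields $\trnorm{M}\le 2^{(d+m)/2}\|M\|_2$, and combining with Parseval and the elementary comparison $\|\hat\rho_S\|_2\le\trnorm{\hat\rho_S}$ turns this into $\trdist{M}^2\le 2^d\sum_{S\ne\emptyset}\big(\tfrac12\trnorm{\hat\rho_S}\big)^2=2^d\sum_{S\ne\emptyset}\trdist{\rho_{Z_{\oplus S}E}-\uniform{1}\otimes\rho_E}^2$.

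The routine parts are the Fourier bookkeeping and the standard Schatten-norm inequalities ($\|X\|_2\le\|X\|_1\le\sqrt{\rank X}\,\|X\|_2$, the triangle inequality, Cauchy--Schwarz); nothing needs to be pushed hard, so there is no serious obstacle. The one point worth isolating --- and the sole source of the $\min(d,m)$ in the statement --- is the realization that the problem admits two separate handles: the number of nonempty characters is $2^m-1$, while each operator-valued Fourier coefficient has rank at most $2^d$. Neither handle alone gives the right bound in both regimes, but each yields one of the two prefactors, so one simply takes the minimum. The only place to be careful is matching the $2^{-m}$ normalizations through the Parseval step and applying Cauchy--Schwarz over the correct index set ($z$ in one bound, $S$ in the other).
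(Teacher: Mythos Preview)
The paper does not prove this lemma; it is stated in the preliminaries as a cited result from \cite{KK12} (their Lemma~2.6), so there is no ``paper's own proof'' to compare against. Your argument is correct and is essentially the standard Fourier/Walsh--Hadamard proof of the quantum XOR lemma: the one-bit identity $\trdist{\rho_{Z_{\oplus S}E}-\uniform{1}\otimes\rho_E}=\tfrac12\trnorm{\hat\rho_S}$ is right, the $2^m$ bound via triangle inequality plus Cauchy--Schwarz over $S$ is clean, and the $2^d$ bound via Parseval, the rank-based $\ell_1$--$\ell_2$ comparison on each $M_z$, and Cauchy--Schwarz over $z$ all check out (including the normalization bookkeeping and the use of $\|\cdot\|_2\le\|\cdot\|_1$ for Schatten norms). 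This is also the approach taken in \cite{KK12}, so your proof matches the original source.
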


\item \textbf{Quantum Operations.} Let $\X$ and $\Y$ be state spaces.
A {\em super-operator} from $\X$ to $\Y$ is a linear map
\begin{equation}
  \Psi : \lin{\X} \rightarrow \lin{\Y}.
\end{equation}
Physically realizable \emph{quantum operations} are represented by \emph{admissible} super-operators,
which are completely positive and trace-preserving.
Thus any classical operation (such as extractors) can be viewed as an admissible super-operator. 
We shall use this abstraction in our analysis and make use of the following observation.

\begin{fact}[Monotonicity of trace distances] \label{prelim:fact:monotone_trace}
For any admissible super-operator $\Psi: \lin{\X}\rightarrow \lin{\Y}$ and $\rho_0,\rho_1\in \density{\X}$, we have
\begin{equation}
 \trdist{\Psi(\rho_0)-\Psi(\rho_1)}\leq \trdist{\rho_0-\rho_1}.
\end{equation}
\end{fact}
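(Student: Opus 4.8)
The plan is to prove this as a standard data-processing inequality, using only that $\Psi$ is positive and trace-preserving (completeness of positivity is not needed). First I would reduce to bounding $\trnorm{\Psi(\Delta)}$ where $\Delta \defeq \rho_0 - \rho_1$ is Hermitian with $\tr(\Delta) = 0$. Writing the Jordan--Hahn decomposition $\Delta = P - Q$ with $P, Q \in \pos{\X}$ having orthogonal supports, we get $\trnorm{\Delta} = \tr(P) + \tr(Q)$, and $\tr(\Delta)=0$ gives $\tr(P)=\tr(Q)$. By linearity $\Psi(\Delta) = \Psi(P) - \Psi(Q)$, and since $\Psi$ is positive, both $\Psi(P), \Psi(Q) \in \pos{\Y}$.

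Next I would invoke the variational characterization of the trace norm of a Hermitian operator: for Hermitian $Y$ one has $\trnorm{Y} = \tr((2\Pi - I)Y)$ where $\Pi$ is the spectral projector onto the non-negative eigenspace of $Y$. Applying this to $Y = \Psi(P) - \Psi(Q)$ with its optimal $\Pi$ and expanding,
\[
\trnorm{\Psi(\Delta)} = \tr(\Pi\Psi(P)) - \tr((I-\Pi)\Psi(P)) + \tr((I-\Pi)\Psi(Q)) - \tr(\Pi\Psi(Q)).
\]
The two subtracted terms $\tr((I-\Pi)\Psi(P))$ and $\tr(\Pi\Psi(Q))$ are non-negative, being traces of products of two positive semidefinite operators, so dropping them only increases the right-hand side; and $\tr(\Pi\Psi(P)) \le \tr(\Psi(P))$, $\tr((I-\Pi)\Psi(Q)) \le \tr(\Psi(Q))$ since $\Pi, I-\Pi \le I$ and the arguments are positive. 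Hence $\trnorm{\Psi(\Delta)} \le \tr(\Psi(P)) + \tr(\Psi(Q)) = \tr(P) + \tr(Q) = \trnorm{\Delta}$, using that $\Psi$ is trace-preserving. Dividing by $2$ gives $\trdist{\Psi(\rho_0) - \Psi(\rho_1)} \le \trdist{\rho_0 - \rho_1}$.

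An equivalent route I might take instead goes through the dual map $\Psi^*$: from $\tr(M\Psi(\rho)) = \tr(\Psi^*(M)\rho)$, the fact that $\Psi$ is trace-preserving makes $\Psi^*$ unital, and $\Psi$ positive makes $\Psi^*$ positive, so $0 \le M \le I$ implies $0 \le \Psi^*(M) \le I$; combined with $\trdist{\rho_0-\rho_1} = \max_{0 \le M \le I}\tr(M(\rho_0-\rho_1))$, the feasible effects on the output side pull back into feasible effects on the input side, and the bound follows at once. There is essentially no real obstacle here; the only point worth stating carefully is that the argument needs only positivity and trace preservation, which "admissible" more than supplies, and the computation above is arranged to make that explicit.
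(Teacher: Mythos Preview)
Your argument is correct. The paper, however, does not actually prove this statement: it is listed as a \textbf{Fact} in the preliminaries and invoked without proof, as is standard for this data-processing inequality. So there is no ``paper's own proof'' to compare against; you have supplied a complete proof where the paper simply appeals to background.

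Both routes you sketch are standard and valid. A minor remark: the observation $\tr(P)=\tr(Q)$ is true but unused in your chain of inequalities, since the bound $\trnorm{\Psi(\Delta)} \le \tr(\Psi(P)) + \tr(\Psi(Q)) = \tr(P) + \tr(Q) = \trnorm{\Delta}$ goes through for any Hermitian $\Delta$, not just traceless ones. You are also right to point out that only positivity and trace preservation are needed, not complete positivity.
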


Moreover, we adopt the convention that when $\Psi$ is applied on a part of the quantum system,  we omit the identity operation applied on the rest part of the system when it is clear from the context.

Let $\{ \ket{i} : 1\le i\le \dim(\X)\}$ be the computational basis for $\X$.
An $\X$-controlled quantum operation on $\Y$ is an admissible operation 
 $\Phi: \lin{\X \ot \Y} \rightarrow \lin{\X \ot \Y'} $ such that 
for some admissible $\Phi_i :\lin{\Y} \rightarrow \lin{\Y'}$, $1\le i\le \dim(\X)$,
\begin{equation}
  \Phi =\sum_{1\le i\le\dim(\X)} \langle i|\cdot |i\rangle|i\rangle\langle i|\otimes \Phi_i(\cdot).
\end{equation}


\end{trivlist}

\subsection{Independent Source Extractors} \label{sec:prelim_indep}

\begin{trivlist}

\item \textbf{Random variables and min-entropy sources.} We use \emph{upper case} letters to denote random variables which take values over $\01^n$ for some $n$. 
Usually, the \emph{calligraphy} letter denotes the set of all possible values that this random variable can take. 
\emph{Lower case} letters are used to denote specific values of the random variables, such as random variable $A=a$ for some value $a\in \A$. This is consistent with our notation of quantum states when reduced to the classical cases. Moreover, if the whole system is classical, we will treat it as a classical random variable only and thus omit notation such as $\rho$ for clarity. 
For convenience, we denote the set $\{1, \cdots, t\}$ by $[t]$ for any positive integer $t$.

\begin{definition}[Min-entropy] \label{def:min-entropy}
The \emph{min-entropy} of a random variable $X$ is given by 
\[
   \Hmin(X)=\min_{x \in \X} \log_2 (1/\Prob[X=x]).
\]
For $X\in \01^n$, we call $X$ an $(n, \Hmin(X))$-source (or $\Hmin(X)$-source) with \emph{entropy rate} $\Hmin(X)/n$. 
\end{definition}

One useful property about min-entropy is the following lemma: 

\begin{lemma}[\cite{MW97}] \label{lem:condition} 
Let $X$ and $Y$ be random variables and let ${\cal Y}$ denote the
range of $Y$. Then for all $\epsilon>0$
\[\Pr_Y \left [ \Hmin(X|Y=y) \geq \Hmin(X)-\log|\Y|-\log \left( \frac{1}{\eps} \right )\right ] \geq 1-\epsilon\]
\end{lemma}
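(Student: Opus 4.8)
The plan is to prove the standard ``conditioning on side information'' lemma of Maurer and Wolf, following the usual averaging argument over the values of $Y$. First I would fix a threshold value $k := \Hmin(X) - \log|\Y| - \log(1/\eps)$ and define the \emph{bad} set $\mathcal{B} := \{ y \in \Y : \Hmin(X \mid Y=y) < k \}$; the goal is then precisely to show that $\Pr[Y \in \mathcal{B}] < \eps$. For each $y \in \mathcal{B}$, unfolding the definition of min-entropy gives some $x_y \in \X$ with $\Pr[X = x_y \mid Y = y] > 2^{-k}$, hence $\Pr[X = x_y, Y = y] = \Pr[Y=y] \cdot \Pr[X=x_y \mid Y=y] > \Pr[Y=y] \cdot 2^{-k}$.

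Next I would sum this inequality over all $y \in \mathcal{B}$. The left-hand side, $\sum_{y \in \mathcal{B}} \Pr[X = x_y, Y = y]$, is a sum of probabilities of disjoint events (the pairs $(x_y, y)$ are distinct across distinct $y$), so it is at most $1$; but more usefully, each term $\Pr[X = x_y, Y=y]$ is bounded above by $\Pr[X = x_y] \le 2^{-\Hmin(X)}$ by the definition of min-entropy of $X$. Therefore $\sum_{y \in \mathcal B}\Pr[X=x_y,Y=y] \le |\mathcal{B}| \cdot 2^{-\Hmin(X)} \le |\Y| \cdot 2^{-\Hmin(X)}$. Combining this with the per-$y$ lower bound yields
\[
   \Pr[Y \in \mathcal{B}] \cdot 2^{-k} \;=\; \sum_{y \in \mathcal{B}} \Pr[Y = y] \cdot 2^{-k} \;<\; \sum_{y \in \mathcal{B}} \Pr[X = x_y, Y = y] \;\le\; |\Y| \cdot 2^{-\Hmin(X)}.
\]
Rearranging gives $\Pr[Y \in \mathcal{B}] < |\Y| \cdot 2^{-\Hmin(X)} \cdot 2^{k} = |\Y| \cdot 2^{-\log|\Y| - \log(1/\eps)} = \eps$, which is what we want; the complementary event therefore has probability at least $1 - \eps$.

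This argument is entirely elementary and I do not anticipate a genuine obstacle; the only point requiring a little care is the choice of which upper bound to apply to $\Pr[X = x_y, Y = y]$ — one should bound it by the \emph{marginal} $\Pr[X = x_y] \le 2^{-\Hmin(X)}$ rather than trying to use disjointness directly, since it is the marginal min-entropy of $X$ (not merely the normalization of the joint distribution) that is hypothesized. A secondary cosmetic issue is handling the case where $k$ is negative or the bad set is empty, but then the statement is vacuous or trivial, so nothing more is needed. One could alternatively phrase the whole thing via $\Pr[X \in \{x_y : y\in\mathcal B\}]$, but the displayed chain of inequalities above is the cleanest route.
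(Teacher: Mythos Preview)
Your proof is correct and is precisely the standard averaging argument due to Maurer and Wolf. The paper itself does not supply a proof of this lemma; it simply cites it from~\cite{MW97} as a known tool, so there is nothing to compare against beyond noting that your argument is the canonical one.
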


\begin{definition}[Block-source] \label{def:block_source}
A distribution $X=X^1 \circ X^2 \circ \cdots \circ X^C$ is called a $(k_1, k_2, \cdots, k_C)$ block-source if for any $i\in [C]$, we have that for any $x_1 \in \X^1, \cdots, x_{i-1} \in \X^{i-1}$, $\Hmin(X^i|X^1=x_1, \cdots, X^{i-1}=x_{i-1})\geq k_i$, i.e., each block contains high min-entropy even conditioned on every possible value of previous blocks. 
If $k_1=k_2=\cdots=k_C$, then $X$ is called a $k$-block-source. 
\end{definition}

\item \textbf{Two-source and Independent Sources Extractors.} Here we review two (or multi) independent source extractors, which turn two (or multi) independent min-entropy sources to a close-to-uniform distribution. 
At this moment, we don't consider the existence of adversaries and only look at the marginal distribution of the output of the extractors. Therefore, we refer this as the \emph{marginal} security throughout this paper. 

Let $\uniform{A}$ denote the completely mixed state on a space $\A$, i.e., $\uniform{A}=\frac{1}{\dim(\A)} \I_\A$. Let $\uniform{n}$ denote $\uniform{A}$ when $\A=\{0,1\}^n$. Moreover, for any given subset $S \subseteq \{1,\cdots, t\}$ and let $X_{ S}=\circ_{i \in S}X_i$.
 
\begin{definition}[Independent Source Extractor] \label{def:classical_IExt}
A function $\IExt :  (\01^n)^t \rightarrow \01^m$ is a $(t,n,k,m,\eps)$ 
 independent source extractor that uses $t$ sources and outputs $m$ bits with error $\eps$, if for
any $t$ independent $(n, k)$ sources $X_1, X_2, \cdots , X_t$, we have
\[
   \trdist{\IExt(X_1, X_2, \cdots, X_t)-\uniform{m}} \leq \eps.
\] 
For any subset $S \subseteq[t]$, $\IExt$ is called $S$-strong if 
\[
  \trdist{\IExt(X_1, X_2, \cdots, X_t)X_{ S}-\uniform{m}\ot X_{ S}} \leq \eps.
\]
\end{definition}

\begin{definition}[Two-source Extractor] \label{def:classical_TExt}
A function $\TExt :  \01^{n_1} \times \01^{n_2} \rightarrow \01^m$ is a $(n_1,k_1,n_2,k_2, m, \eps)$ two-source extractor 
if for
any independent $(n_1,k_1)$ source $X_1$ and $(n_2,k_2)$ source $X_2$,  we have
\[
   \trdist{\TExt(X_1, X_2)-\uniform{m}} \leq \eps.
\] 
Moreover, $\TExt$ is called $X_1$-strong, (and similarly for $X_2$-strong), if 
\[
  \trdist{\TExt(X_1, X_2)X_1-\uniform{m}\ot X_1} \leq \eps.
\]
\end{definition}

We say that an extractor is explicit if it can be computed in polynomial time.



%
%
%
%
%

\end{trivlist}

\subsection{Quantum Seeded Extractors} \label{sec:prelim_q_ext}

\begin{trivlist}
 \item{\textbf{Quantum Conditional Min-entropy}}. 
In the regime of quantum extractors, it is necessary to consider the existence of adversaries who are furthermore given quantum computational power.
In the seeded extractor setting, it suffices to model the adversary as \emph{quantum side information} which is stored in the system $\E$ as follows. 
For a cq state $\rho_{XE} \in \density{\X\ot \E}$, the amount of \emph{extractable} randomness (from $X$ against $E$) is characterized by its conditional min-entropy.
\begin{definition}[Conditional Min-entropy] \label{prelim:def:min-entropy}
Let $\rho_{XE} \in \density{\X\ot \E}$. The \emph{min-entropy} of $X$ conditioned on $E$ is defined as
  \begin{equation*}
    \Hmin({X|E})_\rho \defeq \max \{\lambda \geq 0 :  \exists \sigma_E \in \density{\E}, \mathrm{s.t.}\,\, 2^{-\lambda} \I_X \ot \sigma_E \geq \rho_{XE}\}.
  \end{equation*}
\end{definition}
This definition has a simple operational interpretation shown in~\cite{KRS09} that 
\[
  \Hmin(X|E)_\rho = - \log(p_{\mathrm{guess}} (X|E)_\rho),
\]
where $p_{\mathrm{guess}} (X|E)_\rho$ is the maximum probability of guessing $X$ by making arbitrary measurements on $E$ system. 
Similar to the classical min-entropy, the quantum conditional entropy also satisfies the following property. 

\begin{lemma}[\cite{KT08}] \label{lem:q_condition}
Given any ccq state $\rho_{XWE}$ in which $W \leftrightarrow X \leftrightarrow E$~\footnote{Namely, we have $\rho_{XWE}=\sum_{x,w} \Prob[X=x, W=w] \ketbra{x,w} \ot \rho^x_E$.}, we have 
\[\Pr_{w \sim W} \left [ \Hmin(X|W=w, E) \geq \Hmin(X)-\log\dim(\W)-\log (1/\eps ) \right] \geq 1-\epsilon\]
\end{lemma}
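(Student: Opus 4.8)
The plan is to adapt the proof of the classical conditioning lemma (Lemma~\ref{lem:condition}) to quantum side information, using the operational characterization $\Hmin(X|E)_\rho=-\log p_{\mathrm{guess}}(X|E)_\rho$ recalled above and, crucially, the Markov hypothesis $W\leftrightarrow X\leftrightarrow E$, which guarantees that conditioning on $W$ does not disturb the quantum part. (For trivial $E$ one has $\Hmin(X|E)_\rho=\Hmin(X)$, so this is genuinely an extension of Lemma~\ref{lem:condition}.)

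\textbf{Step 1 --- the effect of conditioning on $W=w$.} By the form $\rho_{XWE}=\sum_{x,w}\Pr[X=x,W=w]\,\ketbra{x,w}\ot\rho_E^x$ noted in the footnote, conditioning on the classical event $W=w$ yields
\[
  \rho^{w}_{XE}\ =\ \sum_x \Pr[X=x\mid W=w]\,\ketbra{x}\ot\rho_E^x ;
\]
only the classical weights change, while the conditional quantum states $\rho_E^x$ stay put. This is the one place the Markov condition is used, and it is the heart of the matter: for an arbitrary $\rho_{XWE}$ the conditioned state would carry $w$-dependent operators $\rho_E^{x,w}$, and the argument below would not go through.

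\textbf{Step 2 --- transferring the min-entropy bound.} Next I would show that, for each $w$, guessing $X$ from $(W=w,E)$ is at most a factor $1/\Pr[W=w]$ easier than guessing $X$ from $E$ alone. Indeed, if $\{M_x\}$ is a POVM on $E$ optimal for $\rho^{w}_{XE}$, then using $\Pr[X=x\mid W=w]\le\Pr[X=x]/\Pr[W=w]$ together with the maximality defining $p_{\mathrm{guess}}(X|E)_\rho$,
\[
  p_{\mathrm{guess}}(X\mid W=w,E)_{\rho^w}\ =\ \sum_x \Pr[X=x\mid W=w]\,\tr(M_x\rho_E^x)\ \le\ \frac{1}{\Pr[W=w]}\,p_{\mathrm{guess}}(X|E)_\rho .
\]
Taking $-\log$ of both sides gives $\Hmin(X\mid W=w,E)_{\rho^w}\ \ge\ \Hmin(X|E)_\rho-\log\bigl(1/\Pr[W=w]\bigr)$. (Equivalently, at the operator level: a density operator $\sigma_E$ with $2^{-\Hmin(X|E)_\rho}\,\sigma_E\ge\Pr[X=x]\,\rho_E^x$ for all $x$ remains a witness for $\rho^{w}_{XE}$ after lowering the guaranteed level by $\log(1/\Pr[W=w])$, since $\Pr[X=x]\,\rho_E^x\ge\Pr[W=w]\,\Pr[X=x\mid W=w]\,\rho_E^x$.)

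\textbf{Step 3 --- averaging over $w$.} Finally I would invoke the Maurer--Wolf counting: the classical register $W$ assumes at most $\dim(\W)$ values, so the event $\{\Pr[W=w]<\eps/\dim(\W)\}$ has probability less than $\dim(\W)\cdot\eps/\dim(\W)=\eps$ under $W$. For every other value $w$ we have $\log(1/\Pr[W=w])\le\log\dim(\W)+\log(1/\eps)$; plugging this into the bound from Step~2 gives the desired inequality with probability at least $1-\eps$. The only genuinely quantum ingredient is Step~1; Steps~2 and~3 are then the classical argument almost unchanged, with the guessing probability playing the role of the maximal point mass.
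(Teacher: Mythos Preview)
The paper does not supply a proof of this lemma; it is simply cited from \cite{KT08}. So there is no in-paper argument to compare against, and your proof stands on its own.

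Your argument is correct, with one caveat worth making explicit. You prove
\[
  \Pr_{w\sim W}\bigl[\Hmin(X\mid W=w,E)\ \ge\ \Hmin(X|E)_\rho-\log\dim(\W)-\log(1/\eps)\bigr]\ \ge\ 1-\eps,
\]
with the \emph{conditional} min-entropy $\Hmin(X|E)_\rho$ on the right, whereas the lemma as printed has the unconditional $\Hmin(X)$. The printed form is actually false in general: take $W$ trivial and $E$ a classical copy of $X$; then the Markov condition holds, $\Hmin(X|W=w,E)=0$, but $\Hmin(X)=n$. So the version you prove is the intended one, and it is also the version the paper needs in its applications (e.g.\ in the proof of Corollary~\ref{cor:weakseed_q}, where the hypothesis is a lower bound on $\Hmin(R|E)$, not on $\Hmin(R)$). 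Your route via the guessing-probability characterization $\Hmin(X|E)=-\log p_{\mathrm{guess}}(X|E)$ together with $\Pr[X=x\mid W=w]\le\Pr[X=x]/\Pr[W=w]$ is the natural one and matches the standard proof.
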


We can also consider the \emph{smooth} min-entropy that consists in maximizing the
min-entropy over all sub-normalized states that are $\eps$-close to the actual
state $\rho_{XE}$ in trace distance. Note that allowing an extra error $\eps$ can increase the min-entropy of a certain state very significantly.
\begin{definition}[smooth min-entropy] \label{prelim:def:smooth-min-entropy}
  Let $\eps \geq 0$ and $\rho_{XE} \in \density{\X \ot \E}$, then the
  \emph{$\eps$-smooth min-entropy} of $X$ conditioned on $E$ is defined as
  \begin{equation*}
    \Hmin^\eps(X|E)_\rho \defeq \max_{ \trdist{\sigma_{XE}
      -\rho_{XE}} \leq \eps} \Hmin(X|E)_\sigma,
  \end{equation*}
\end{definition}
Similarly, we call $\rho_{XE}$ a (smooth) $(n,k)$-source (or $k$-source) if $X \in \{0,1\}^n$ and $ \Hmin({X|E})_\rho\geq k$. ($ \Hmin^\eps({X|E})_\rho\geq k$)

\begin{definition}[Quantum block-source] \label{def:quantum_block_source}
Let $\rho_{X^1 \cdots  X^C E} \in \density{X^1 \ot \cdots \ot X^C \ot \E}$ is called a $(k_1, k_2, \cdots, k_C)$ quantum block-source if for any $i\in [C]$, we have that for any $x_1 \in \X^1, \cdots, x_{i-1} \in \X^{i-1}$, $\Hmin(X^i|X^1=x_1, \cdots, X^{i-1}=x_{i-1}, E)\geq k_i$, i.e., each block contains high min-entropy even conditioned on every possible value of previous blocks and the quantum system $E$. 
If $k_1=k_2=\cdots=k_C$, then $X$ is called a quantum $k$-block-source. 
\end{definition}

In the following survey a few useful lemmata about conditional quantum min-entropy. 

\begin{lemma}[Data Processing] \label{lem:data_processing}
Let $\rho_{XE}$ be a cq state, $\Phi: \lin{\E} \rightarrow \lin{\E'}$ be any admissible operation. Moreover, let $\sigma_{XE'}=\Phi(\rho_{XE})$. Then we have
\[
  \Hmin(X|E)_\rho \leq \Hmin(X|E')_\sigma. 
\]
\end{lemma}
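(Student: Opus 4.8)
The plan is to unpack the semidefinite-program definition of conditional min-entropy and simply push the optimal witness through the channel $\Phi$. Concretely, set $\lambda \defeq \Hmin(X|E)_\rho$ and let $\tau_E \in \density{\E}$ be an operator achieving the maximum in Definition~\ref{prelim:def:min-entropy}, so that $2^{-\lambda}\, \I_X \ot \tau_E \geq \rho_{XE}$, equivalently $2^{-\lambda}\,\I_X \ot \tau_E - \rho_{XE} \in \pos{\X \ot \E}$.

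Next I would apply the super-operator $\I_X \ot \Phi$ to this operator inequality. Since $\Phi$ is admissible, hence completely positive, the extended map $\I_X \ot \Phi$ is positive, so by linearity it sends the positive semidefinite operator $2^{-\lambda}\,\I_X \ot \tau_E - \rho_{XE}$ to a positive semidefinite operator; that is,
\[
  2^{-\lambda}\, \I_X \ot \Phi(\tau_E) - \sigma_{XE'} \geq 0,
\]
where I used $\sigma_{XE'} = (\I_X \ot \Phi)(\rho_{XE})$. Moreover $\Phi(\tau_E) \in \density{\E'}$: it is positive semidefinite because $\Phi$ is positive, and it has unit trace because $\Phi$ is trace-preserving. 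Hence $\Phi(\tau_E)$ is a legitimate choice for the state $\sigma_{E'}$ in the maximization defining $\Hmin(X|E')_\sigma$, certifying that $\lambda$ is feasible there. Therefore $\Hmin(X|E')_\sigma \geq \lambda = \Hmin(X|E)_\rho$, which is the claimed inequality.

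There is essentially no serious obstacle here; the only point that needs a moment's care is the observation that applying a completely positive map on the $E$ register preserves the Loewner ($\geq$) ordering of operators on $\X \ot \E$ — this is exactly why complete positivity of $\Phi$ (rather than mere positivity) is invoked, so that tensoring with the identity on the classical register $\X$ is harmless. An alternative route is to use the operational characterization $\Hmin(X|E)_\rho = -\log p_{\mathrm{guess}}(X|E)_\rho$ from \cite{KRS09} together with the fact that any POVM on $E'$ pulls back through $\Phi$ (via its unital adjoint) to a POVM on $E$, so $p_{\mathrm{guess}}(X|E')_\sigma \leq p_{\mathrm{guess}}(X|E)_\rho$; but the direct SDP argument above is self-contained and shorter. (If one worries about the maximization set being empty, note $\lambda = 0$ is always feasible for a cq-state by taking $\tau_E = \rho_E$, so $\Hmin(X|E)_\rho$ is well-defined and nonnegative, and the argument applies verbatim.)
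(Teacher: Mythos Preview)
The paper does not actually prove this lemma; it is stated in the preliminaries as a standard fact without proof. Your argument is correct and is precisely the textbook proof: take an optimal witness $\tau_E$ for $\Hmin(X|E)_\rho$, push the operator inequality through $\I_X \ot \Phi$ using complete positivity, and observe that $\Phi(\tau_E)$ is a density operator by trace preservation, yielding a feasible point for $\Hmin(X|E')_\sigma$. The alternative guessing-probability route you mention is equally standard; either suffices.
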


\begin{lemma}[Chain-rule] \label{lem:chain_rule}
Let $\eps>0, \eps'>0, \eps''>0$ and $\rho \in \density{\A \ot \B \ot \C}$, then we have the following chain rule:
\[
  \Hmin^{\eps+2\eps'+\eps''}(AB|C)_\rho \geq \Hmin^{\eps'}(A|BC)_\rho + \Hmin^{\eps''}(B|C)_\rho-\log\frac{2}{\eps^2}. 
\]
\end{lemma}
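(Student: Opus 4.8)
The plan is to prove this as an instance of the smooth min-entropy chain rule of Vitanov, Dupuis, Tomamichel and Renner. It splits into two stages: a \emph{smoothing reduction} that strips the smoothing off the right-hand side within the stated trace-distance budget, and a non-smooth \emph{truncation core} that produces the additive $-\log(2/\eps^2)$ loss. Write $k_1:=\Hmin^{\eps'}(A|BC)_\rho$ and $k_2:=\Hmin^{\eps''}(B|C)_\rho$.

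For the reduction, first replace $\rho$ by a sub-normalized $\bar\rho$ with $\trdist{\bar\rho-\rho}\le\eps'$ and $\Hmin(A|BC)_{\bar\rho}=k_1$. Since $\Hmin$ only increases when one passes to a smaller sub-operator, it then suffices to exhibit some $R_{ABC}\le\bar\rho_{ABC}$ with $\trdist{R-\rho}\le 2\eps'+\eps''$ whose $BC$-marginal still has $\Hmin(B|C)_R\ge k_2$, and to prove the unsmoothed inequality $\Hmin^{\eps}(AB|C)_R\ge k_1+k_2-\log(2/\eps^2)$ for that $R$. Such an $R$ is produced by holding the $A$-conditional of $\bar\rho$ fixed and re-smoothing only the $BC$-marginal towards an optimizer of $\Hmin^{\eps''}(B|C)_\rho$ (which is $\eps'$-far from $\bar\rho_{BC}$, accounting for the $\eps'+\eps''$ spent there); keeping the $A$-conditional fixed is exactly what lets the $A|BC$ bound survive, and it is the reason the final budget is the asymmetric $2\eps'+\eps''$ rather than $\eps'+\eps''$. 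This is the only place the ``two quantities get smoothed on different states'' subtlety enters, and it is dealt with exactly as in the cited work.

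For the non-smooth core, let $2^{-k_1}\I_A\ot\sigma_{BC}\ge R_{ABC}$ and $2^{-k_2}\I_B\ot\tau_C\ge R_{BC}$ be the operator inequalities witnessing $\Hmin(A|BC)_R\ge k_1$ and $\Hmin(B|C)_R\ge k_2$. Truncate $R$ to its ``good part'': in the classical case, discard every $(b,c)$-slice on which the conditional guessing probability $\max_a R(a\mid b,c)$ exceeds $\tfrac{2}{\eps^2}2^{-k_1}$, and in general replace this by a spectral projection onto the matching subspace of the witness $\sigma_{BC}$. On the good part one has $\max_a R(a,b,c)\le\tfrac{2}{\eps^2}2^{-k_1}R(b,c)$, so combining the two witnesses slice-by-slice gives a witness for $\Hmin(AB|C)\ge k_1+k_2-\log(2/\eps^2)$; and a Markov estimate bounds the discarded mass, since $\max_a R(a\mid b,c)>\tfrac{2}{\eps^2}2^{-k_1}$ forces $R(b,c)<\tfrac{\eps^2}{2}2^{k_1}\max_a R(a,b,c)$ while $\sum_{b,c}\max_a R(a,b,c)\le 2^{-k_1}$, so the total discarded mass is at most $\tfrac{\eps^2}{2}$ and hence $\trdist{Q-R}\le\eps$. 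Chaining the two stages yields a state $Q$ with $\trdist{Q-\rho}\le\eps+2\eps'+\eps''$ and $\Hmin(AB|C)_Q\ge k_1+k_2-\log(2/\eps^2)$, which is the claim.

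The main obstacle is the core step when $C$ carries a genuine quantum register. An unsmoothed chain rule is simply false (take $A=B$ and $C$ trivial: $\Hmin(AB)$, $\Hmin(A|B)$, and $\Hmin(B)$ can all equal $1$), so the truncation cannot be avoided; but ``discard the slices where $\max_a R(a\mid b,c)$ is large'' must become a spectral cut-off of the witness operators, and controlling the off-diagonal terms the cut-off leaves behind requires an operator-monotonicity / pinching argument rather than an elementary inequality. This is also where the factor $2$ inside $\log(2/\eps^2)$ ultimately comes from --- morally the same factor $2$ that appears in the superdense-coding bound used for the BS model. The smoothing reduction, by contrast, is routine once one fixes the $A$-conditional as above.
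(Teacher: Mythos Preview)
The paper does not prove this lemma; it is stated in the preliminaries (Section~\ref{sec:prelim_q_ext}) as a known result --- the smooth min-entropy chain rule, essentially due to Vitanov--Dupuis--Tomamichel--Renner --- and is then invoked as a black box in the proof of Proposition~\ref{prop:all_entropy}. So there is no in-paper proof to compare against.

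Your sketch is the standard two-stage argument and is sound in outline. A few remarks. First, your Markov bound actually yields discarded mass at most $\eps^2/2$ and hence trace distance at most $\eps^2/4\le\eps$, so the bound you reach is a bit stronger than what the lemma asks for; this just reflects that the stated lemma is not tight. Second, the step where you ``hold the $A$-conditional of $\bar\rho$ fixed and re-smooth only the $BC$-marginal'' is precisely the place where the quantum case is delicate: there is no direct analogue of a classical conditional distribution, and making this precise (so that both the $A|BC$ bound survives and the distance budget is respected) is the technical heart of the cited proof. You correctly flag this as the main obstacle, but what you have written is a pointer to the argument rather than the argument itself. Third, the remark linking the factor $2$ inside $\log(2/\eps^2)$ to superdense coding is suggestive but not literal --- that $2$ comes from the Markov threshold you chose, not from any two-for-one communication phenomenon; the superdense-coding $2$ in the paper appears in a different place (Lemma~\ref{lem:NS} and the BS-model discussion).
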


\begin{lemma}[\cite{NS06}]  \label{lem:NS}
Let $X$ be an $n$-bit random variable with min-entropy $k$, and suppose Alice wishes to convey $X$ to Bob over a one-way quantum communication channel using $b$ qubits with shared entanglement. Let $Y$ be Bob's guess for $X$. Then we have $\Prob[Y=X] \leq 2^{-(k-2b)}.$
\end{lemma}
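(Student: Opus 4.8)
The plan is to move to the cq-state language of Section~\ref{sec:prelim_q_ext} and reduce the statement to a bound on quantum conditional min-entropy, exploiting the operational characterization $\Hmin(X|E)_\rho = -\log p_{\mathrm{guess}}(X|E)_\rho$. Model the protocol as follows: before any communication Alice holds $X$ and Alice and Bob share an entangled state $\psi_{AB}$ that is independent of $X$, so the joint ccq-state is $\rho_{XAB} = \rho_X \ot \psi_{AB}$. On input $X=x$ Alice applies a channel $\mathcal{E}_x$ taking her register $A$ to the $b$-qubit message register $M$ (so $\dim M = 2^b$) and sends $M$ to Bob, giving $\rho_{XBM} = \sum_x \Prob[X=x]\,\ketbra{x}\ot(\I_B\ot\mathcal{E}_x)(\psi_{AB})$. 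Bob's guess $Y$ is the outcome of a measurement (POVM $\{\Lambda_x\}$) on his total register $BM$, so $\Prob[Y=X] = \sum_x \Prob[X=x]\tr(\Lambda_x\rho^x_{BM}) \le p_{\mathrm{guess}}(X|BM)_\rho = 2^{-\Hmin(X|BM)_\rho}$. Hence it suffices to show $\Hmin(X|BM)_\rho \ge k - 2b$.

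Two ingredients give this. First, since Alice's operation on $A$ is local, tracing out $M$ restores Bob's marginal: $\tr_M\rho_{XBM} = \rho_X\ot\rho_B$ is a product state, so $\Hmin(X|B)_\rho = \Hmin(X) \ge k$. Second --- and this is the heart of the argument --- conditioning on the $b$-qubit register $M$ costs at most $2b$ in min-entropy: $\Hmin(X|BM)_\rho \ge \Hmin(X|B)_\rho - 2\log\dim M$. Combining the two yields $\Hmin(X|BM)_\rho \ge k - 2b$, hence the lemma. This is the quantum counterpart of Lemma~\ref{lem:q_condition}, with the classical loss of $\log|\W|$ replaced by a loss of $2\log\dim M$; the extra factor of two is precisely the super-dense-coding slack, and it is unavoidable.

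To prove the min-entropy loss bound, unwind the definition. Let $\lambda = \Hmin(X|B)_\rho$ and pick a density operator $\sigma_B$ with $2^{-\lambda}\I_X\ot\sigma_B \ge \rho_{XB} = \tr_M\rho_{XBM}$. The key operator fact is: if $\tau_{XBM}\ge 0$ and $\tr_M\tau_{XBM}\le A_{XB}$, then $\tau_{XBM}\le (\dim M)\,A_{XB}\ot\I_M$. Applying it with $\tau = \rho_{XBM}$ and $A_{XB} = 2^{-\lambda}\I_X\ot\sigma_B$ gives $\rho_{XBM} \le (\dim M)\,2^{-\lambda}\,\I_X\ot\sigma_B\ot\I_M = 2^{-(\lambda - 2\log\dim M)}\,\I_X\ot\sigma'_{BM}$ with $\sigma'_{BM} = \sigma_B\ot\I_M/\dim M$ a legitimate density operator, so $\Hmin(X|BM)_\rho\ge\lambda - 2\log\dim M$. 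The operator fact is proved by conjugating with $A_{XB}^{-1/2}\ot\I_M$ on the relevant support, which reduces it to: any $\tau'_{XBM}\ge 0$ with $\tr_M\tau'_{XBM}\le\I_{XB}$ satisfies $\tau'_{XBM}\le(\dim M)\,\I_{XBM}$. For that, let $\ket v$ be a top eigenvector with eigenvalue $\mu$; it has Schmidt rank at most $\dim M$ across the $(XB):M$ cut, so its reduced state on $XB$ has largest eigenvalue at least $1/\dim M$. From $\tau'\ge\mu\ketbra v$ we get $\I_{XB}\ge\tr_M\tau'\ge\mu\,\tr_M\ketbra v$, hence $\mu\le 1/\lambda_{\max}(\tr_M\ketbra v)\le\dim M$, as claimed.

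The only genuine obstacle is pinning down the factor-two loss; the rest is bookkeeping. Alternatively one could simply invoke the known quantum conditioning inequality $\Hmin(X|EE')_\rho \ge \Hmin(X|E)_\rho - 2\log\dim(E')$ (the quantum analogue of Lemma~\ref{lem:q_condition}) and skip the operator argument, or just cite \cite{NS06}; I would nonetheless include the short self-contained derivation above, since it makes transparent where the super-dense-coding factor enters --- the same factor that reappears when we bound the GE-entropy in the bounded-storage special case of our model.
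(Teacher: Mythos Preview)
The paper does not prove this lemma at all---it simply cites \cite{NS06} and moves on. So there is no paper proof to compare against; your proposal is supplying something the authors chose to import.

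Your argument is correct and self-contained. The reduction to $\Hmin(X|BM)_\rho \ge k - 2b$ via the guessing-probability interpretation is the natural way to phrase the result in this paper's language, and the operator inequality you prove (that conditioning on a $b$-qubit register costs at most $2b$ in conditional min-entropy) is exactly the ``leaking chain rule'' that shows up elsewhere in the literature. Your Schmidt-rank argument for $\|\tau'\|_{\mathrm{op}} \le \dim M$ is clean and correct; the only technicality is restricting to the support of $A_{XB}$ when it is singular, which you flag with ``on the relevant support'' and which is handled by the usual pseudoinverse convention since $\tr_M\tau \le A$ forces the support of $\tau$ into $(\mathrm{supp}\,A)\otimes M$.

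By contrast, the original \cite{NS06} argument is phrased directly in terms of the success probability of the decoding measurement rather than via conditional min-entropy. Your route has the advantage of plugging straight into the paper's entropy framework and making the connection to the BS-model bound $k_i \ge k_i' - 2b_i$ (which the paper states immediately after this lemma) completely transparent. If the goal is a self-contained exposition consistent with Section~\ref{sec:prelim_q_ext}, your version is the right choice.
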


\item{\textbf{Quantum Seeded Extractors.}}
Here we review quantum seeded
randomness extractors, which turn a min-entropy source to a quantum-secure uniform output, with the help of a short seed.  Since now the system involves a quantum adversary, we refer this as the \emph{quantum} security.  

\begin{definition}[Quantum Strong Seeded Extractor] \label{prelim:def:q_strong_extractor}
  \label{def:extractor_q_proof}
  A function $\Ext: \{0,1\}^n \times \{0,1\}^d \to \{0,1\}^m$ is a
  \emph{quantum-secure} (or simply quantum) 
  \emph{$(k,\eps)$-strong seeded (randomness) extractor}, if for all cq
  states $\rho_{XE}$ with $\Hmin(X|E) \geq k$,
  and for a uniform seed $Y$ independent of $\rho_{XE}$, we
  have
\begin{equation} \label{eqn:extractor_q_proof} \trdist{
    \rho_{\Ext(X,Y)YE} - \uniform{m} \ot \rho_Y \ot \rho_E}
  \leq \eps. \end{equation}
 \end{definition}
We state the following quantum strong seeded extractor in~\cite{DPVR12} that will be useful for us to instantiate our multi-source and network extractors. 

\begin{theorem}[\cite{DPVR12}, Corollary 5.4] \label{thm:Ext1} For every $n, k \in \N$ and $\eps > 0$ with $k \geq 4 \log(1/\eps) + O(1)$, there exists a quantum $(k,\eps)$-strong seeded extractor $\Ext: \zo^n \times \zo^d \rightarrow \zo^m$ with $m = k - 4 \log(1/\eps) - O(1)$ and $d = O(\log^2(n/\eps) \log m)$.
\end{theorem}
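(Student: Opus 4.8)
This theorem is quoted verbatim from \cite{DPVR12}, so rather than reprove it I sketch the argument one would run: the \emph{reconstruction paradigm} for Trevisan's extractor \cite{Tre01}, carried out against quantum side information as in De--Vidick \cite{DV10} and refined in \cite{DPVR12}. Fix a weak combinatorial design $S_1,\dots,S_m \subseteq [d]$ with $|S_i| = \log \bar n$ and $\sum_{j<i}\abs{S_i \cap S_j}$ small, and a binary code $\bar C : \zo^n \to \zo^{\bar n}$ with $\bar n = \poly(n/\eps)$ that is $(\tfrac{1}{2} - \delta, L)$-list-decodable for $\delta \approx \eps/m$ and $L = \poly(1/\delta)$. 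Define $\Ext(x,y)_i \defeq \bar C(x)_{y|_{S_i}}$, so the $m$ output bits are $\bar C(x)$ read at $m$ seed-dependent locations; the seed length $d = O(|S_i|^2 \log m) = O(\log^2(n/\eps)\log m)$ is what the Raz--Reingold--Vadhan weak design delivers.

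The plan is to argue the contrapositive. Suppose $\Ext$ is not a quantum $(k,\eps)$-strong extractor: there is a cq-source $\rho_{XE}$ with $\Hmin(X|E)\ge k$ and a uniform independent seed $Y$ such that $\trdist{\rho_{\Ext(X,Y)YE} - \uniform{m} \ot \rho_Y \ot \rho_E} > \eps$. A standard hybrid (next-bit) argument then produces an index $i \in [m]$ and a binary quantum measurement $P$ on $E$ that predicts the $i$-th output bit from $Y$, $E$, and the first $i-1$ output bits with advantage $> \eps/m$. Hard-wiring as classical advice the coordinates $y|_{[d]\setminus S_i}$ together with the bits of $\bar C(x)$ at the at most $\sum_{j<i}\abs{S_i\cap S_j}$ locations on which the earlier output bits depend (this is exactly where the small-intersection property of the design is used, costing $O(m)$ advice bits), we get: given only $y|_{S_i}$ — a uniformly random index into $\bar C(x)$ — the measurement $P$ predicts $\bar C(x)_{y|_{S_i}}$ with advantage $\gtrsim \eps/m \approx \delta$, relative to $E$ and the advice.

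Now list-decoding applies: such a predictor places $\bar C(x)$, hence $x$, in a list of size $L$, so $X$ can be recovered from $E$ by additionally supplying $\log L = O(\log(m/\eps))$ bits naming the correct list element. Thus $X$ is guessed from $E$ plus $a = O(m) + O(\log(m/\eps))$ bits of classical advice with probability far exceeding $2^{-k}$; the operational meaning of conditional min-entropy (Definition~\ref{prelim:def:min-entropy}, $p_{\mathrm{guess}}(X|E)_\rho = 2^{-\Hmin(X|E)_\rho}\le 2^{-k}$) forces $a \ge k - O(\log(1/\eps))$. Tracking the constants carefully then yields the stated $m = k - 4\log(1/\eps) - O(1)$, the $4\log(1/\eps)$ coming from the list-decoding radius together with the extra loss incurred in passing from distinguishing advantage to one-shot prediction advantage in the quantum argument.

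The step I expect to be the real obstacle — and the one that genuinely requires the quantum refinement rather than Trevisan's classical bookkeeping — is making the reconstruction \emph{one-shot}. Classically one evaluates the next-bit predictor at many indices $y|_{S_i}$ to learn enough of $\bar C(x)$ to run list-decoding; quantumly the register $E$ holds $\rho_E$, which cannot be cloned or rewound, so $P$ may be applied only once. The resolution of \cite{DV10,DPVR12} is a quantum Goldreich--Levin / one-shot hard-core argument: realize $P$ as a single binary measurement, feed it the index in superposition, and show that one coherent run suffices to recover the needed information, with the residual $O(\log(1/\delta))$ list-index bits (and the measurement outcomes that must be frozen) folded into the advice. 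Keeping the advice budget in this one-shot argument within an additive $4\log(1/\eps) + O(1)$ of $k$ — so that the extractor loses essentially the information-theoretic minimum — is the delicate part; everything else is the standard design/code plumbing and the parameters of the weak design.
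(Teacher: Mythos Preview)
The paper does not prove this theorem at all; it is imported as a black box from \cite{DPVR12} (their Corollary~5.4) and used as an off-the-shelf tool. So there is no ``paper's own proof'' to compare against, and you correctly recognize this at the outset.

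Your sketch of the underlying argument is broadly accurate and in fact goes well beyond what the paper offers: Trevisan's construction with a weak design and a list-decodable code, the reconstruction-with-advice contrapositive, and the identification of the one-shot/no-rewinding problem as the genuinely quantum obstacle are all on target. One point worth sharpening: the description ``feed it the index in superposition, and show that one coherent run suffices'' is closer in spirit to the pretty-good-measurement style analysis of \cite{DV10} than to the actual mechanism in \cite{DPVR12}. The latter works by isolating the single quantum measurement (the distinguisher's POVM on $E$) from the rest of the reconstruction, observing that all other inputs to the predictor are classical and fixed by the advice, and then controlling the cumulative disturbance to $\rho_E$ across the many predictor calls via an operator-norm/gentle-measurement argument rather than a single coherent query. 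This is a difference of mechanism rather than of outcome, and does not affect the correctness of your outline; but since you flag the one-shot step as ``the real obstacle,'' it is worth getting the resolution's flavor right.
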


\end{trivlist}

\section{Adversarial Model in Multi-source Extraction} \label{sec:model}
%
%
In this section, we formally define the adversarial model in the context of randomness extraction form multi-independent sources, in which the adversary could have access to \emph{quantum} resources.
As we discussed in the introduction,  the multi-source setting, contrasting to the single-source setting, offers a completely new aspect of the problem: the adversaries could potentially share \emph{entanglement} prior to tampering with the sources and 
the obtained leakage could be stored in entanglement. 
A preliminary discussion of such adversarial models can be found in~\cite{KK12}, which correspond to the \emph{independent adversary (IA)} model and the \emph{bounded storage (BS)} model in our later discussion. 
In the following, we identify a more general (powerful) adversarial model, which we called the \emph{general entangled (GE)} model that includes the IA and BS model as special cases (yet we show that randomness extraction is possible provided there are sufficient min-entropy in the sources). 
At the same time, we identify a much less powerful adversarial model, called the \emph{one-sided adversary (OA)} model, which is  a common special case of the GE and IA  model, and is a weaker model than the BS model with incomparable entropy measure.  

Given any extractor, if its output is uniform against any adversary in the GE model, then we call that extractor is \emph{secure} against the GE model (or \emph{GE-secure} for short). 
Similarly for the IA, BS and OA models.
One of the main results in this paper (which is presented in Section~\ref{sec:OA_to_GEA}) is to establish a surprising equivalence between the GE and the OA model in the following sense: any strong OA-secure extractor is \emph{automatically} a strong GE-secure extractor without any loss of parameters. 


\subsection*{General Entangled Adversarial Model}  

\paragraph{Generating Side Information.}First recall, in the classical independent source extraction setting, one is given $t$ \emph{independent} random variables $X_1, \cdots, X_t$ such that $X_i \in \01^n$ for $i\in [t]$. 
    Assume there are $t$ non-communicating parties, each of which receives one classical random variable $X_i$ for $i \in [t]$ each with 
We imagine an adversary who will generate the side information of the source $X_1, \cdots, X_t$ via the following procedure. 
 
   The adversary initially prepares a quantum state $\rho_0$ on registers $A_1, \cdots A_t$  (which is independent of $X_i$s) and sends each register $A_i$ to the $i$th party who holds $X_i$. 
Note that there could be arbitrary \emph{entanglement} among $A_1, \cdots, A_t$. 
Depending on the source $X_i$, the $i$th party then applies an arbitrary admissible \emph{leaking} operation
from $X_i$ to its own quantum register. 
Precisely, this leaking operation can be formulated as a  $X_i$-controlled operation denoted by $\Phi_i(\cdot): \lin{\X_i \ot \A_i} \rightarrow \lin{\X_i \ot \E_i}$. It is easy to see that $\Phi_i$ commutes with  $\Phi_j$ for any different $i,j$. 
Finally, the adversary collects all $E_i$s as the side information of the sources $X_1, \cdots, X_t$. 
Formally, 
the generated quantum side information together with the source $\rho_{X_1\cdots X_t\Adv}$ ($\Adv=E_1, \cdots, E_t$) is given by
  $\rho_{X_1\cdots X_t\Adv}=  \Phi_1 \ot \cdots \ot \Phi_t (X_1\cdots X_t \ot \rho_0)$. 


The above procedure is a generalization of the one discussed in~\cite{KK12} to the multi-source case. 
However,  what makes our model significantly different from theirs is the following \emph{crucial} observation on how to measure the quality (or entropy) of the sources, 
which allows us to directly deal with the more powerful GE model, rather than to work with much restricted IA or BS model like in~\cite{KK12}.

\paragraph{Entropy Measure and Properties.}For any $i \in [t]$, the quality of source $X_i$ is measured by the (in)ability of the adversary to guess it given the quantum side information that contains \emph{only} the leakage from $X_i$. 
Formally, this measure is captured by 
\begin{equation} \label{def:GE:entropy}
  k_i \defeq \Hmin(X_i|\Adv_i)_{\rho_i} \text{ and } \rho_i = \Phi_i (X_1\cdots X_t \ot \rho_0), \forall i \in [t]. 
\end{equation}
Let $X_{-i}$ denote all $X_j$ except $j=i$ and $A_{-i}$ denote all $A_j$ except $j=i$.  In this notation,   $\rho_i=X_{-i} \ot \Phi_i(X_i \ot \rho_0)$ and $\Adv_i=(A_{-i}, E_i)$. Thus, $\rho_{i_{X_i\Adv_i}}= \Phi_i(X_i \ot \rho_0)$.
Intuitively, $k_i$ measures the min-entropy of $X_i$ conditioned on $\Adv$ at an \emph{imaginary} step after the leaking operation $\Phi_i$ is performed, but before all the other leaking operations are performed. 
Such entropy measure enjoys the following natural expected properties. 

\vspace{1mm} \noindent \textbf{(1) Non-decreasing property}. Since $X_1, \cdots, X_t$ are independent and $\Phi_1, \cdots, \Phi_t$ commute with each other,
applying other leaking operations only increases the min-entropy of $X_i$ conditioned on the quantum side information, which is captured by the following proposition:

\begin{proposition} \label{prop:one_entropy}
For any $i\in [t]$, $k_i=\Hmin(X_i|\Adv_i)_{\rho_i} \leq \Hmin(X_i|X_{-i}\Adv)_\rho \leq \Hmin(X_i|\Adv)_\rho$. 
\end{proposition}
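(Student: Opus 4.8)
The plan is to establish the two inequalities separately, each by an application of a monotonicity-type lemma for conditional min-entropy, exploiting the special structure of the GE side-information process. Recall that $\rho_i = \Phi_i(X_1\cdots X_t \ot \rho_0)$, so the state on $X_i \Adv_i$ (with $\Adv_i = (A_{-i},E_i)$) is $\Phi_i(X_i \ot \rho_0)$, whereas the full state is $\rho = \Phi_1 \ot \cdots \ot \Phi_t(X_1\cdots X_t \ot \rho_0)$.

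For the second inequality, $\Hmin(X_i|X_{-i}\Adv)_\rho \leq \Hmin(X_i|\Adv)_\rho$, I would simply invoke data processing (Lemma~\ref{lem:data_processing}): discarding the registers $X_{-i}$ is an admissible operation on the conditioning system, and $\Hmin$ can only increase when the adversary's system is processed (here, traced out). One should note this is applied to the cq-state with classical part $X_i$ and quantum part $(X_{-i},\Adv)$, which is fine since $X_{-i}$ is itself classical.

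For the first inequality, $\Hmin(X_i|\Adv_i)_{\rho_i} \leq \Hmin(X_i|X_{-i}\Adv)_\rho$, the key point is that going from the state $\rho_i$ to the state $\rho$ amounts to applying the remaining leaking operations $\Phi_j$ for $j \neq i$, \emph{and these act only on the conditioning side relative to $X_i$}. Concretely, in $\rho_i$ the conditioning system is $X_{-i}\Adv_i = (X_{-i}, A_{-i}, E_i)$; applying $\bigotimes_{j\neq i}\Phi_j$ maps the registers $(X_{-i},A_{-i})$ to $(X_{-i},E_{-i})$, using only the fact that each $\Phi_j$ for $j\neq i$ is an admissible operation that does not touch $X_i$ or $E_i$. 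Since $\Phi_j$ is $X_j$-controlled and commutes with $\Phi_i$, the marginal on $X_i$ is unchanged and the net effect on the cq-state (classical part $X_i$) is an admissible super-operator applied to the quantum/conditioning part. Data processing then gives $\Hmin(X_i|\Adv_i)_{\rho_i} \leq \Hmin(X_i|X_{-i}\Adv)_\rho$. The independence of $X_1,\dots,X_t$ is what guarantees $\bigotimes_{j\ne i}\Phi_j$ genuinely acts as a channel on the conditioning registers without disturbing $X_i$'s distribution.

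The main obstacle, such as it is, is bookkeeping: one must carefully argue that the transformation from $\rho_i$ to $\rho$ is an admissible super-operator acting entirely on the system being conditioned upon (so that Lemma~\ref{lem:data_processing} applies in the direction we want), rather than something that also modifies $X_i$. This rests on (i) the commutativity $\Phi_i \Phi_j = \Phi_j \Phi_i$ noted earlier, (ii) each $\Phi_j$ being $X_j$-controlled hence acting trivially on $X_i$'s register, and (iii) the sources being independent so that the marginal distribution of $X_i$ in $\rho_i$ already equals that in $\rho$. Once this structural observation is in place, both inequalities are immediate consequences of data processing for conditional min-entropy, and no parameter loss occurs.
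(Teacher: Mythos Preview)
Your proposal is correct and matches the paper's proof essentially step for step: both inequalities are derived from the data-processing lemma (Lemma~\ref{lem:data_processing}), the second by tracing out $X_{-i}$, and the first by observing that the passage from $\rho_i$ to $\rho$ is an admissible operation acting only on the conditioning side of $X_i$. The only cosmetic difference is that the paper phrases the first step as ``locally generate $X_{-i}$ on the $\Adv_i$ side, then apply the remaining $\Phi_j$'', whereas you treat $X_{-i}$ as already sitting in $\rho_i$ (in tensor product) and then apply $\bigotimes_{j\neq i}\Phi_j$; these are the same channel described two ways, and your invocation of independence is exactly what justifies the equivalence.
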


\begin{proof}
This is almost by definition and the data processing lemma of min-entropy (Lemma~\ref{lem:data_processing}). First note that $X_{-i}$ is independent of $X_i$ and can be locally generated on the $\Adv$ side. Thus, there is an admissible operation converting $\rho_i$ to $\rho$ only applying on $\Adv_i$ side by first generating $X_{-i}$ and then applying the leaking operation on them. This gives the first inequality. The second inequality follows because tracing out $X_{-i}$ system is an admissible quantum operation. 
\end{proof}

\vspace{1mm} \noindent \textbf{(2) Additivity property}. Our measure of entropy is genuine and can be added in the following sense: the smooth min-entropy of $X_1, \cdots, X_t$ conditioned on $\Adv$ in the final quantum side information $\rho$ is almost $\sum_{i=1}^t k_i$, the sum of entropies measured for each source $X_i, i \in [t]$. 

\begin{proposition} \label{prop:all_entropy}
For any $\eps>0$ and any $S \subseteq [t]$ (let $|S|=s$),  $\Hmin^{(s-1)\eps}(X_{S}|\Adv)_\rho \geq \sum_{i\in S} k_i -(s-1)\log (2/\eps^2).$
\end{proposition}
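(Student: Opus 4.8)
The plan is to prove Proposition~\ref{prop:all_entropy} by induction on $s = |S|$, using the chain rule for smooth min-entropy (Lemma~\ref{lem:chain_rule}) together with the non-decreasing property established in Proposition~\ref{prop:one_entropy}. The base case $s=1$ is trivial: for a singleton $S = \{i\}$, the statement reads $\Hmin^{0}(X_i|\Adv)_\rho \geq k_i$, which is exactly the second inequality of Proposition~\ref{prop:one_entropy} (recall $\Hmin^0 = \Hmin$). For the inductive step, write $S = S' \cup \{i\}$ with $|S'| = s-1$ and $i \notin S'$, and apply Lemma~\ref{lem:chain_rule} to the tripartition $(A, B, C) = (X_i, X_{S'}, \Adv)$. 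This gives, for parameters $\eps' , \eps''$ to be chosen,
\[
  \Hmin^{2\eps' + \eps''}(X_i X_{S'} | \Adv)_\rho \;\geq\; \Hmin^{\eps'}(X_i | X_{S'} \Adv)_\rho + \Hmin^{\eps''}(X_{S'} | \Adv)_\rho - \log\tfrac{2}{\eps^2},
\]
where here I am invoking the chain rule with its leading smoothing parameter set to $0$; more precisely Lemma~\ref{lem:chain_rule} with $\eps \mapsto \eps$ (the $\log(2/\eps^2)$ term), first smoothing $\mapsto 0$, second $\mapsto \eps'$, third $\mapsto \eps''$, so the left side has smoothing $2\eps' + \eps''$.

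Next I would bound the two terms on the right. For the first term, $\Hmin^{\eps'}(X_i | X_{S'}\Adv)_\rho \geq \Hmin(X_i | X_{-i}\Adv)_\rho \geq k_i$ by monotonicity of smooth min-entropy in the smoothing parameter (taking $\eps' = 0$ suffices here, but any $\eps' \geq 0$ works) together with the first inequality of Proposition~\ref{prop:one_entropy} and the fact that $X_{S'}\Adv$ is obtained from $X_{-i}\Adv$ by tracing out (an admissible operation, so min-entropy only goes up by data processing, Lemma~\ref{lem:data_processing}). For the second term, the inductive hypothesis applied to $S'$ gives $\Hmin^{(s-2)\eps''_0}(X_{S'}|\Adv)_\rho \geq \sum_{j \in S'} k_j - (s-2)\log(2/\eps^2)$ for any $\eps''_0 > 0$. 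Choosing $\eps' = 0$, $\eps'' = (s-2)\eps$ (matching the smoothing so the hypothesis applies with $\eps''_0 = \eps$), and $\log(2/\eps^2)$ throughout, the left-hand smoothing becomes $2\cdot 0 + (s-2)\eps = (s-2)\eps$, which I then need to be at most $(s-1)\eps$ — true. Assembling:
\[
  \Hmin^{(s-1)\eps}(X_S | \Adv)_\rho \;\geq\; k_i + \Big(\sum_{j \in S'} k_j - (s-2)\log\tfrac{2}{\eps^2}\Big) - \log\tfrac{2}{\eps^2} \;=\; \sum_{j \in S} k_j - (s-1)\log\tfrac{2}{\eps^2},
\]
as desired.

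The only genuinely delicate point is bookkeeping the smoothing parameters so that the accumulated error after $s-1$ applications of the chain rule is exactly $(s-1)\eps$ and not something larger; this is why I would set the first smoothing slot of each chain-rule application to $0$ and push all the error into the slot corresponding to the inductively-controlled block $X_{S'}$. One should double check that Lemma~\ref{lem:chain_rule} indeed permits a zero in its first smoothing slot (it does, since $\eps', \eps'' $ range over positive reals but the leading $\eps$ can be taken arbitrarily small / the statement is continuous, and $\Hmin^0 = \Hmin$ is the natural limit); if one wants to avoid this entirely, an alternative is to allot $\eps/(s-1)$-type budgets and re-scale at the end, but the clean choice above is simplest. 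The independence of the $X_j$ and commutativity of the $\Phi_j$ are used only through Proposition~\ref{prop:one_entropy}, so no further structural input is needed beyond the two cited lemmata.
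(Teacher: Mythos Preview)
Your approach is essentially identical to the paper's: both proofs apply the chain rule (Lemma~\ref{lem:chain_rule}) $s-1$ times, using Proposition~\ref{prop:one_entropy} (plus data processing) to lower-bound each conditional term $\Hmin(X_i \mid X_{S'}\Adv)_\rho \geq k_i$; the paper simply unrolls the sequence explicitly rather than phrasing it as induction. One minor bookkeeping slip: with $\eps' = 0$ and $\eps'' = (s-2)\eps$ the left-hand smoothing in Lemma~\ref{lem:chain_rule} is $\eps + 2\eps' + \eps'' = (s-1)\eps$, not $(s-2)\eps$ as you wrote---but this is exactly the target, so the argument goes through (and the paper likewise silently takes $\eps' = \eps'' = 0$ despite the strict inequalities in the lemma statement).
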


\begin{proof}
This proposition follows from a sequential application of the chain-rule for smooth min-entropy in Lemma~\ref{lem:chain_rule}. Without loss of generality, let us assume $|S|=s$ and $S=\{1, \cdots, s\}$. By Proposition~\ref{prop:one_entropy}, we have 
\[
  \Hmin(X_i|X_{-i}\Adv)_\rho \geq k_i, \forall i \in [t]. 
\]
Thus, we have $\Hmin(X_i|X_1\cdots X_{i-1}\Adv)_\rho \geq k_i, \forall i \in [t]$. The following comes from a sequential use of Lemma~\ref{lem:chain_rule}, 
\begin{eqnarray*}
  \Hmin^\eps (X_2X_1|\Adv)_\rho & \geq&  \Hmin(X_2|X_1\Adv)_\rho +\Hmin(X_1|\Adv) -\log \frac{2}{\eps^2} \\
  \Hmin^{2\eps} (X_3X_2X_1|\Adv)_\rho & \geq&  \Hmin(X_3|X_2X_1\Adv)_\rho +\Hmin^\eps(X_2X_1|\Adv) -\log \frac{2}{\eps^2} \\
  \cdots & & \cdots \\
    \Hmin^{(s-1)\eps} (X_sX_{s-1} \cdots X_1|\Adv)_\rho & \geq&  \Hmin(X_s|X_{s-1}\cdots X_1\Adv)_\rho +\Hmin^{(s-2)\eps}(X_{s-1}\cdots X_1|\Adv) -\log \frac{2}{\eps^2} 
\end{eqnarray*}
Therefore, by rearranging all the above inequalities, we have 
\[
  \Hmin^{(s-1)\eps}(X_S|\Adv) \geq \sum_{i \in S} k_i - (s-1)\log \frac{2}{\eps^2}.
\]
\end{proof}

\noindent \textbf{Remark.} Comparing to our model, the entropy measure in the model of~\cite{KK12} is only on the final quantum side information $\rho$ when all leaking operations have been performed (e.g., $\Hmin(X_i|\Adv)_\rho$), whereas our entropy measure is on the state $\rho_i$ for each $X_i$. 
By Proposition~\ref{prop:one_entropy}, the entropy measure on the final quantum side information could potentially be much higher than $k_i$ due to possible interference from other leaking 
operations,  which, hence, fails to characterize the right amount of entropy from each $X_i$. 
This is exactly our motivation to study our notion of entropy $k_i$ that is measured before any interference happens. As shown in Proposition~\ref{prop:all_entropy}, the total min-entropy of the source is lower bounded by the sum of $k_i$s. Thus, there is no double counting of entropy with our measure. 

\subsection*{Justification of GE model}
In this section, we further justify our proposed GE model by demonstrating a few nice properties about the model as follows.


First, we claim that our GE model is a strict generalization of the no-side-information case. Recall the no-side-information case, the sources are independent $X_1, \cdots, X_t \in \01^n$ each with min-entropy $k_i=\Hmin(X_i), \forall i \in [t]$. In the framework of GE model, this implies trivial space $A_1, \cdots, A_t, E_1, \cdots, E_t$ and trivial leaking operations $\Phi_i(\cdot), \forall i\in [t]$. 
By the entropy measure of GE model, we have the entropy for source $X_i$ is $k_i'=\Hmin(X_i|\Adv)_{\rho_i}=\Hmin(X_i|\Adv)_\rho=\Hmin(X_i)=k_i$. Namely, the GE-entropy exactly matches the original entropy measure in the no-side-information case. Thus, the GE model is a strict generalization.


Second, the GE-entropy measure, similar to the classical min-entropy measure, captures the amount of uniform randomness that can be extracted from the source in the presence of GE-side information.
We support the above statement with the following two points: 1) all of the GE-entropy can be extracted and 2) there exists sources with certain GE-side information, in which the GE-entropy also upper bounds the amount of uniform randomness that can be extracted. 
The first point is validated by the existence of strong GE-secure multi-source extractors in Section~\ref{sec:obtain_OA}.\footnote{Precisely, to extract all the GE-entropy, one first notice that there exist $t$-source GE-secure multi-source extractors $\QMExt$ that are strong to $t-1$ sources, which extracts the GE-entropy from one source. One can then apply a strong quantum-proof seeded extractor to the $t-1$ sources by using the output of $\QMExt$ as the seed. In this way, one can further extract all the GE-entropy within the $t-1$ sources guaranteed by Proposition~\ref{prop:all_entropy}.} 
The second point is due to the fact that classical independent flat $k$-sources\footnote{A distribution $\X$ over $\01^n$ is called a flat $k$-source if the support of $\X$ is $2^k$ and for each $x$ in its support, the probability $\Prob[X=x]=2^{-k}$.} are just special cases of GE-sources with GE-entropy also being $k$ and no side information. It is easy to see that in this case $k$ upper bounds the amount of uniform randomness that can be extracted from each source. 

%

Finally, we argue that the one-round side-information-generating process in our GE model (essentially from~\cite{KK12}) is appealing due to both theoretical and practical reasons. 
In the theoretical aspect, this one-round process together with our GE-entropy measure, for the first time, allows the randomness extraction in the presence of general entangled side information. 
Moreover, if one extends this one-round process to multi-rounds, then there will necessarily be interference between different sources. It is again not a prior clear whether the randomness extraction is possible. 
On the other side, the one-round process also characterizes several side-information generating scenarios in practice. For example, if the side information is generated simultaneously at distant parties each holding one of the sources, then it can effectively be characterized by the one-round process. 




\subsection*{Special cases: IA, BS and OA model}

Now we are ready to introduce special cases of the GE model when imposing various restrictions on the adversary and discuss the relation between the measure of entropies within each model. 

\begin{trivlist}

\item \emph{Independent Adversarial (IA) Model} imposes one additional constraint on the GE model: that is the initial state $\rho_0$ is a product state over $A_1, \cdots, A_t$, i.e., $\rho_{A_1, \cdots, A_t}= \rho_{A_1} \ot \cdots \ot \rho_{A_t}$. Thus, by definition, the side information state $\rho_{E_1, \cdots E_t}$ is also a product state.\footnote{This definition is slightly different from the one (called \emph{quantum knowledge}) of~\cite{KK12} which only requires the side information is a product state. However, it is a simple exercise to see that any product side information can be produced by a product initial state. Thus, two definitions are equivalent.}
In this case,  the entropy of each $X_i$ is measured by $k'_i= \Hmin(X_i|E_i)_\rho$, for $i \in [t]$, which matches exactly the definition of our more general entropy measure $k_i$ in (\ref{def:GE:entropy}) when reduced to the IA model. (i.e., $\Hmin(X_i|E_i)_\rho=\Hmin(X_i|E_iA_{-i})_{\rho_i}$) 


\item \emph{Bounded Storage (BS) Model} imposes a different constraint on the GE model: that is to
bound the dimension of each register $E_i$ by $2^{b_i}, \forall i \in [t]$ that are collected at the last step. 
In this case,  the quality of the source $X_i$ is measured by its marginal min-entropy $k'_i=\Hmin(X_i)$ and the size bound $b_i$ on each register $E_i$. 
By Lemma~\ref{lem:NS}, we can relate $k_i$ in (\ref{def:GE:entropy}) with $k'_i$ and $b_i$ by $k_i\geq k'_i-2b_i$, in which the factor two is due to the possibility of super-dense coding. 


\item \emph{One-sided Adversary (OA) Model} is the weakest model in which the adversary is restricted to collect leakage information from only one source $X_i$ but has the freedom to choose which $i\in [t]$. Let $i^*$ be the adversary's choice. Namely, only $A_{i^*}$ is nonempty among all $A_i$s.  
That is, $\Adv=\Adv_{i^*}=E_{i^*}$ and other $\Adv_j=\emptyset$ for $j\neq i^*$. 
The only non-trivial leaking operation is $\Phi_{i^*}$. 
It is easy to see that $\rho=\rho_{i^*}$ but different from $\rho_i, \forall i  \neq i^*$, which equals $\rho_0$. 
According to (\ref{def:GE:entropy}), the entropy of $X_{i^*}$ is measured by $k_{i^*}=\Hmin(X_{i^*}|\Adv_{i^*})_{\rho_{i^*}}=\Hmin(X_{i^*}|E_{i^*})_\rho$ and the entropy of other $X_i$s ($i\neq i^*$) is measured by $k_i=\Hmin(X_i|\Adv_i)_{\rho_i}=\Hmin(X_i)_{\rho_0}=\Hmin(X_i)$. 
It is easy to see that $\Hmin(X_i)=\Hmin(X_i|\Adv)_\rho, \forall i \neq i^*$ as $X_i$ is independent of $\rho$. 
By definition, the OA model is also a special case of the IA model. In terms of the adversary's power, it is also a special case of the BS model.
However, the measure of the quality of sources in the BS model is incomparable from the OA model.

\end{trivlist}

\subsection*{Quantum Multi-source Extractor}

Consider any $t$ independent sources $X_1, \cdots, X_t \in \01^n$ with the quantum side information generated in the GE model. For simplicity, we usually denote $(k_1, \cdots, k_t)$ from (\ref{def:GE:entropy}) by some $k$ such that $k\leq k_i, \forall i \in [t]$ unless explicitly specified 
 and denote all such sources together with the generated quantum side information by \emph{GE-$(t,n,k)$ sources}. Similarly for the IA, BS and OA model. Note that any IA, BS, or OA source is automatically a GE source by definition. Thus, if any extractor is GE-secure, it is automatically secure against IA, BS and OA. 
In the following, we only define extractors for GE and OA models for simplicity. 

\begin{definition}[Quantum Multi-source Extractor] \label{def:QMExt}

Any function $\QMExt :  (\01^n)^t \rightarrow \01^m$ is a $(t,n,k,m,\eps)$ $MM$-secure multi-source extractor 
if for any $MM$-$(t,n,k)$ source, the function $\QMExt$ outputs $m$ bits that are close to uniform with error $\eps$ against the side information in the $MM$ model, where $MM \in \{\mathrm{GE, OA}\}$. 
Namely, with $\Adv= (E_1, \cdots, E_t)$, 
\[
  \trdist{\rho_{\QMExt(X_1, X_2, \cdots, X_t)\Adv}-\uniform{m}\ot \rho_{\Adv}} \leq \eps.
\] 
Moreover, for any given subset $S \subseteq \{1,\cdots, t\}$ and let $X_{S}=\circ_{i \in S}X_i$
, $\QMExt$ is called $S$-strong if, 
\[
  \trdist{\rho_{\QMExt(X_1, X_2, \cdots, X_t)X_{S}\Adv}-\uniform{m}\ot \rho_{X_{S} \Adv}} \leq \eps.
\]
\end{definition}

For the convenience of illustrating parameters, we define formally a special case of the multi-source extractors when $t=2$, namely, two-source extractors, as follows. 

\begin{definition}[Quantum Two-source Extractor] \label{def:QTExt}
Any function $\QTExt :  \01^{n_1} \times \01^{n_2} \rightarrow \01^m$ is a $(n_1, k_1, n_2, k_2, m, \eps)$ $MM$-secure two-source extractor, where $MM \in \{\mathrm{GE, OA}\}$, if the following holds. Given two independent random variables $X_1 \in \01^{n_1} , X_2 \in \01^{n_2}$, 
let the side information $\rho_{\Adv}$ ($\Adv={E_1, E_2}$) be generated in the $MM$ model and let  $(k_1, k_2) $ be the entropy measure defined in (\ref{def:GE:entropy}). 
For any such source,  we have 
\[
  \trdist{\rho_{\QTExt(X_1, X_2)\Adv}-\uniform{m}\ot \rho_{\Adv}} \leq \eps.
\] 
Moreover, then $\QTExt$ is called $X_1$-strong, (and similarly for $X_2$-strong), if,
\[
  \trdist{\rho_{\QTExt(X_1, X_2)X_1\Adv}-\uniform{m}\ot \rho_{X_1\Adv}} \leq \eps.
\]
\end{definition}

\section{Equivalence between Strong OA Security and Strong GE Security}
\label{sec:OA_to_GEA}

In Section~\ref{sec:OA:equip}, we establish the equivalence between the strong one-sided adversary security and the 
general adversary security in the following sense: any strong OA-secure extractor is \emph{automatically} a strong GE-secure extractor without any loss of parameters. The reverse direction is straightforward by definition. 


\subsection*{Equivalence by a simulation argument} \label{sec:OA:equip}
The establishment of the equivalence is due to the following \emph{simulation} argument. Given any GE-$(t,n,k)$ source,  for some index $i^*$ chosen later, 
our first observation is that at the imaginary step when $k_{i^*}$ (from (\ref{def:GE:entropy})) is defined, the source and the side information $\rho_{i^*}$ actually forms a OA-$(t,n,k)$ source. 
Thus, by applying some OA-secure extractor, one can extract randomness from this source. 
The problem here is that the imaginary OA-$(t,n,k)$ source is different from the initial GE-$(t,n,k)$ source. 
Our second observation is to make use of the strong OA-security, which requires the OA-secure extractor to be strong for all $X_j$ except $j=i^*$. 
Then because of all leaking operations commute and commute with the extractor itself, one can safely convert the OA-$(t,n,k)$ source to the initial GE-$(t,n,k)$ source after applying the OA-secure extractor, without increasing the error. 
The above intuition is formally presented in the following theorem. 

\begin{theorem} \label{thm:SOA_GE}
Any $S$-strong $(t,n,k,m,\eps)$ OA-secure multi-source extractor $\QMExt$ is also a $S$-strong $(t,n,k,m,\eps)$ GE-secure multi-source extractor if the size of $S$ is $t-1$, i.e., $|S|=t-1$. 
\end{theorem}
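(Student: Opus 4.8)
The plan is to prove the theorem directly via the simulation argument sketched above. Fix a GE-$(t,n,k)$ source: independent $X_1,\dots,X_t \in \01^n$, an initial adversary state $\rho_0$ on registers $A_1,\dots,A_t$, and commuting leaking operations $\Phi_i : \lin{\X_i \ot \A_i} \to \lin{\X_i \ot \E_i}$, with entropies $k_i = \Hmin(X_i|\Adv_i)_{\rho_i}$ as in (\ref{def:GE:entropy}) satisfying $k_i \geq k$. Since $|S| = t-1$, there is a unique index $i^* \in [t] \setminus S$. First I would introduce the \emph{hybrid state} $\rho_{i^*} = \Phi_{i^*}(X_1 \cdots X_t \ot \rho_0)$ — that is, the state obtained by performing \emph{only} the $i^*$-th leaking operation and leaving all other $A_j$ untouched. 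Observe that $(\rho_{i^*}, X_1,\dots,X_t)$ is exactly an OA-$(t,n,k)$ source with the adversary's choice being $i^*$: only $A_{i^*}$ was nonempty, the side information is $\Adv' = (A_{-i^*}, E_{i^*})$, and by definition the OA-entropy of $X_{i^*}$ is $\Hmin(X_{i^*}|E_{i^*}A_{-i^*})_{\rho_{i^*}} = k_{i^*} \geq k$, while for $j \neq i^*$ the OA-entropy is $\Hmin(X_j)_{\rho_0} \geq k$ (here I use that in the GE source each $X_j$ has $\Hmin(X_j) \geq \Hmin(X_j|\Adv_j)_{\rho_j} = k_j \geq k$ by Proposition~\ref{prop:one_entropy} applied with trivial further operations, or simply that marginal min-entropy dominates conditional min-entropy).

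Next I would apply the hypothesis that $\QMExt$ is $S$-strong OA-secure. Since $S = [t] \setminus \{i^*\}$, this gives, on the OA source $\rho_{i^*}$,
\[
  \trdist{\rho_{i^*}{}_{\QMExt(X_1,\dots,X_t) X_S \Adv'} - \uniform{m} \ot \rho_{i^*}{}_{X_S \Adv'}} \leq \eps,
\]
where $\Adv' = (A_{-i^*}, E_{i^*})$ and $X_S = \circ_{j \neq i^*} X_j$. The final step is to recover the original GE side information by a further admissible operation and invoke monotonicity of trace distance (Fact~\ref{prelim:fact:monotone_trace}). Concretely, from the state on $\QMExt(X_1,\dots,X_t) X_S \Adv'$ one can, for each $j \neq i^*$, apply the $X_j$-controlled leaking operation $\Phi_j$ to the registers $X_j$ (available inside $X_S$) and $A_j$ (available inside $A_{-i^*} \subseteq \Adv'$), producing $E_j$; doing this for all $j \neq i^*$ yields precisely the GE state $\rho$ on registers $\QMExt(X_1,\dots,X_t) X_S \Adv$ with $\Adv = (E_1,\dots,E_t)$. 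Crucially, because the $\Phi_j$ commute with one another, commute with $\Phi_{i^*}$, and act trivially on $X_{i^*}$ and on the output register $\QMExt(X_1,\dots,X_t)$ (the extractor was already applied and $\QMExt$ is a classical function of the $X$'s, which the leaking operations do not touch), applying this composite operation $\Psi = \bigotimes_{j \neq i^*} \Phi_j$ to $\rho_{i^*}$ reproduces the true GE state $\rho$, and applying it to $\uniform{m} \ot \rho_{i^*}{}_{X_S \Adv'}$ gives $\uniform{m} \ot \rho_{X_S \Adv}$ (since $\Psi$ touches neither the uniform output register nor disturbs the product structure with it). Hence by Fact~\ref{prelim:fact:monotone_trace},
\[
  \trdist{\rho_{\QMExt(X_1,\dots,X_t) X_S \Adv} - \uniform{m} \ot \rho_{X_S \Adv}} \leq \trdist{\rho_{i^*}{}_{\QMExt(X_1,\dots,X_t) X_S \Adv'} - \uniform{m} \ot \rho_{i^*}{}_{X_S \Adv'}} \leq \eps,
\]
which is exactly the $S$-strong GE-security guarantee. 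The reverse direction (GE-security implies OA-security) is immediate since every OA source is a GE source.

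The step I expect to be the main obstacle — or at least the one requiring the most care — is making the ``further admissible operation'' argument fully rigorous: one must check that the registers $A_j$ for $j \neq i^*$ genuinely survive untouched inside $\Adv'$ in the hybrid state (they do, since only $\Phi_{i^*}$ acted), that the composite $\Psi$ is well-defined as an admissible super-operator acting only on $\Adv'$ together with the classical $X_S$ registers it is allowed to read as controls, and that $\Psi$ indeed commutes past $\QMExt$ — this last point uses that $\QMExt(X_1,\dots,X_t)$ is a deterministic classical copy of a function of the $X_j$'s, so the leaking operations (which only read, never overwrite, the $X_j$'s) leave it invariant. A secondary subtlety is the footnote's remark that the formal model may not permit $\Adv'$ to hold a local register $A_{-i^*}$; this is handled, as the authors note, by having $\Adv'$ send $A_{-i^*}$ to some party and have it returned, which does not change any entropy or trace-distance quantity. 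Once these bookkeeping points are in place, the proof is a one-line application of monotonicity.
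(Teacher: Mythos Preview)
Your proposal is correct and follows essentially the same two-step simulation argument as the paper: first identify $\rho_{i^*}$ as an OA-$(t,n,k)$ source (the paper does this by packaging all of $A_1,\dots,A_t$ into a single register $A'_{i^*}$ sent to party $i^*$, which matches your $\Adv' = (A_{-i^*},E_{i^*})$), apply the strong OA-security hypothesis, and then apply the remaining leaking operations $\Phi_j$ for $j\neq i^*$ and invoke monotonicity of trace distance (Fact~\ref{prelim:fact:monotone_trace}) to pass to the full GE state $\rho$. The bookkeeping points you flag (that the $A_j$ registers survive in $\Adv'$, that the $\Phi_j$ commute with the already-computed classical output, and the formal packaging of $A_{-i^*}$) are exactly the ones the paper handles, and your justification that $\Hmin(X_j)\geq k_j$ for $j\neq i^*$ via ``marginal min-entropy dominates conditional min-entropy'' is the same inequality the paper uses when it asserts $k'_i\geq k_i$.
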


\begin{proof}
Our proof follows from the two-step intuition illustrated above. Given any GE-$(t,n,k)$, let $X_1, \cdots, X_t$ be the source, $\rho_0 \in \density{\A_1\ot \cdots, \ot \A_t}$ the initial state, $\Phi_i: \lin{\X_i \ot \A_i} \rightarrow \lin{\X_i \ot \E_i}$ the leaking operation for the $i$th party, and $k_i \geq k, \rho_i$ defined as (\ref{def:GE:entropy}). Without loss of generality, let us assume $S=\{1, \cdots, t-1\}$. 

\vspace{1mm} \noindent \emph{(Step 1):} we prove that the source $X_1, \cdots, X_t$ and $\rho_t$ forms a specific OA-$(t,n,k)$ source, by describing a OA procedure to generate the side information $\rho_t$. For clarity, we denote the notations in the OA model with an extra prime.
Let the OA adversary choose to collect only the leakage from $X_t$. Choose $A'_t=(A_1, \cdots, A_t)$ and $E'_t=(A_1, \cdots, A_{t-1}, E_t)$ and $\Phi'_t(\cdot)=\Phi_t$. Thus, it is easy to see
that the side information $\rho'$ collected in the OA procedure is exactly $\rho_t$. 
Moreover, by definition, we have $k'_t=k_t \geq k $ and $k'_i\geq k_i\geq k$ for $i \in [t-1]$. Thus, it is a OA-$(t,n,k)$ source. 
By definition, for any $S$-strong $(t,n,k, m, \eps)$ OA-secure multi-source extractor $\QMExt$, we have, for $\Adv_t=(A_1, \cdots, A_{t-1}, E_t)$, 
\begin{equation} \label{eqn:OAE:step1}
  \trdist{\rho_{t_{\QMExt(X_1, \cdots, X_t)X_1\cdots X_{t-1}\Adv_t}}- \uniform{m} \ot \rho_{t_{X_1\cdots X_{t-1}\Adv_t}}}\leq \eps. 
\end{equation}

\vspace{1mm} \noindent \emph{(Step 2):} now we can apply $\Phi_i: i \in [t-1]$ to both states in (\ref{eqn:OAE:step1}). Since all $\Phi_i: i \in[t]$ commute, we have, for $\Adv=(E_1, \cdots, E_t)$,
\[ 
 \Phi_1 \ot \cdots \ot \Phi_{t-1} (\rho_t)=\rho_{X_1\cdots X_t \Adv}.
\]
Thus, by Fact~\ref{prelim:fact:monotone_trace}, we have 
\[
 \trdist{\rho_{\QMExt(X_1, \cdots, X_t)X_1\cdots X_{t-1}\Adv}- \uniform{m} \ot \rho_{X_1\cdots X_{t-1}\Adv}}\leq \eps,
\]
which, by definition, completes the proof. 
\end{proof}

%
%

\section{Obtaining Strong OA Security from Marginal Security} 
\label{sec:obtain_OA}
In this section, we demonstrate three different techniques to obtain strong OA security from marginal security, i.e., from the extractors that are only known to be marginal-secure. 
These techniques include the \emph{one-bit argument} (in Section~\ref{sec:bOA:one_bit}), the   \emph{one-extra-source argument} (in Section~\ref{sec:bOA:one_source}), and the \emph{one-extra-block argument} (in Section~\ref{sec:bOA:one_block}). 
Together with Theorem~\ref{thm:SOA_GE}, we shall obtain strong GE security for these extractors.  

\subsection{With one-bit argument and XOR lemma} \label{sec:bOA:one_bit}

This technique relies on the equivalence between the strong marginal security and the strong OA security for one-bit extractors demonstrated in~\cite{KK12, KT08}. 
Thus, our argument first shows any strong multi-bit extractor with marginal security is trivially a strong marginal-secure one-bit extractor. 
Then we make use of the aforementioned connection to upgrade the strong marginal security to the strong OA security. 
Finally, by making use of the XOR lemma, we can generalize the analysis to multi-bit extractors with a loss on the parameters. 

It is worth mentioning that this technique is so general that it could be applied to single-source, two-source, and multi-source settings. 
However, in the single-source setting, the parameter loss is so huge to afford, whereas in the multi-source settings, we can do  better by using one extra source (see Section~\ref{sec:bOA:one_source}). 

On the other side, our technique is particularly useful for two-source extractors, and implies that \emph{all} best-known two-source extractors,\footnote{There are several incomparable two-source extractors with different advantages. See below for details.} \emph{as they are}, are in fact strongly quantum-secure with essentially the same parameters. 


\begin{lemma} \label{lem:M_multi_one}
For any $\emptyset \neq S\subseteq [m]$,  any $m$-bit extractor with output $Z \in \01^m$ is also a one-bit extractor with the same set of parameters and properties by outputting $Z_S= \bigoplus_{i\in S} z_i$. 
\end{lemma}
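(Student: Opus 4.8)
The plan is to observe that the parity map $\mathsf{xor}_S : \zo^m \to \zo$, $z \mapsto \bigoplus_{i\in S} z_i$, is a deterministic classical function, hence can be viewed as an admissible super-operator $\Psi_S$ acting only on the output register, with the identity acting on every other register (the side information $\Adv$, any conditioning sources $X_S$ or seed $Y$, etc.). The whole argument is then: apply $\Psi_S \ot \I$ to both states appearing in the relevant security inequality, and invoke monotonicity of trace distance (Fact~\ref{prelim:fact:monotone_trace}) together with the single observation that $\Psi_S$ maps $\uniform{m}$ to $\uniform{1}$.

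First I would record the uniformity fact: for $\emptyset \neq S \subseteq [m]$, if $U$ is uniform on $\zo^m$ then $\bigoplus_{i\in S} U_i$ is uniform on $\zo$ (the parity of any nonempty set of i.i.d.\ uniform bits is an unbiased bit). Equivalently, $\Psi_S(\uniform{m}) = \uniform{1}$. Next, take the defining inequality of the given $m$-bit extractor — e.g.\ for the marginal two-source case $\trdist{\rho_{Z\Adv} - \uniform{m}\ot\rho_\Adv}\le\eps$ where $Z = \TExt(X_1,X_2)$ — and apply the admissible operation $\Psi_S\ot\I_\Adv$. Since $\Psi_S$ touches only the $Z$ register, $\Psi_S\ot\I_\Adv$ sends $\rho_{Z\Adv}$ to $\rho_{Z_S\Adv}$ and sends $\uniform{m}\ot\rho_\Adv$ to $\Psi_S(\uniform{m})\ot\rho_\Adv = \uniform{1}\ot\rho_\Adv$. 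By Fact~\ref{prelim:fact:monotone_trace}, $\trdist{\rho_{Z_S\Adv} - \uniform{1}\ot\rho_\Adv} \le \trdist{\rho_{Z\Adv} - \uniform{m}\ot\rho_\Adv} \le \eps$, which is exactly the statement that $Z_S$ is a one-bit extractor with the same error.

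For the ``same properties'' part I would note that the argument above is insensitive to what additional registers sit on the side that $\Psi_S$ does not touch: replacing $\Adv$ by $(X_S,\Adv)$ (a strong extractor), or by $(Y,E)$ (a strong seeded extractor), or specializing to the OA/IA/BS side-information structures of Section~\ref{sec:model}, changes nothing in the computation, since $\Psi_S\ot\I$ still fixes those registers and still sends $\uniform{m}$ to $\uniform{1}$. In particular the entropy conditions $k_i$ of~\eqref{def:GE:entropy} and the source class ($\mathrm{GE},\mathrm{OA},\dots$) are untouched, so every ``$(\,\cdot\,)$-strong'' qualifier and every model in which the $m$-bit extractor is secure is inherited verbatim by the one-bit extractor $Z_S$.

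There is no real obstacle here; the only point that requires a line of justification is $\Psi_S(\uniform{m})=\uniform{1}$, i.e.\ that parity preserves uniformity, and the bookkeeping that $\Psi_S$ acts as the identity on all conditioning registers so that strongness is preserved. Everything else is an immediate application of Fact~\ref{prelim:fact:monotone_trace}.
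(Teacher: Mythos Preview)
Your proposal is correct and takes exactly the same approach as the paper: the paper's own proof is the single sentence ``The lemma simply follows by definition and Fact~\ref{prelim:fact:monotone_trace},'' and your write-up is a careful unpacking of precisely that argument (parity is an admissible operation on the output register, it sends $\uniform{m}$ to $\uniform{1}$, and monotonicity of trace distance does the rest, with strongness and model preserved because the map acts as the identity on all other registers).
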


\begin{proof}
The lemma simply follows by definition and Fact~\ref{prelim:fact:monotone_trace}. 
\end{proof}

Then by a corollary\footnote{This corollary was originally stated for the IA security, which implies the OA security automatically.} from~\cite{KK12} (which is a simple application of techniques from~\cite{KT08}), we have 

\begin{lemma}[\cite{KK12}, Corollary 5.5] \label{lem:bOA:KK_single_bit}
If $\TExt$ is a classical $(n_1, k_1, n_2, k_2, 1, \eps)$ $X_1$-strong two-source extractor, then it is also a OA-secure $(n_1, k_1, n_2, k_2+\log(1/\eps), 1, \sqrt{\eps})$ $X_1$-strong two-source extractor. Similarly for $\TExt$ being $X_2$-strong. 
As a consequence, if $\TExt$ is a classical $(n_1, k_1, n_2, k_2, 1, \eps)$ two-source extractor that is both $X_1$-strong and $X_2$-strong, then it is also a OA-secure $(n_1, k_1+\log(1/\eps), n_2, k_2+\log(1/\eps), 1, \sqrt{\eps})$  two-source extractor that is both $X_1$-strong and $X_2$-strong. 
\end{lemma}

Thus, by making use of the quantum version of the XOR Lemma (Lemma~\ref{lem:KK_XOR}), we have 

\Knote{Future todo: May be able to save the $\log(1/\eps)$ loss in $k$ using Salil's technique}

\begin{theorem}  \label{thm:one_bit_OA}
If $\TExt$ is a classical $(n_1, k_1, n_2, k_2, m, \eps)$ $X_1$-strong two-source extractor, then it is also a OA-secure $(n_1, k_1, n_2, k_2+\log(1/\eps), m, 2^m\sqrt{\eps})$ $X_1$-strong two-source extractor. Similarly for $\TExt$ being $X_2$-strong. As a consequence, if $\TExt$ is a classical $(n_1, k_1, n_2, k_2, m, \eps)$ two-source extractor that is both $X_1$-strong and $X_2$-strong, then it is also a OA-secure $(n_1, k_1+\log(1/\eps), n_2, k_2+\log(1/\eps), m, 2^m\sqrt{\eps})$  two-source extractor that is both $X_1$-strong and $X_2$-strong. 
\end{theorem}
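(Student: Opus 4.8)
The plan is to bootstrap the desired multi-bit OA-security from its one-bit shadow, exactly as the two preceding lemmas suggest: combine the one-bit reduction (Lemma~\ref{lem:M_multi_one}), the one-bit marginal-to-OA upgrade (Lemma~\ref{lem:bOA:KK_single_bit}), and the quantum XOR lemma (Lemma~\ref{lem:KK_XOR}). First I would fix an arbitrary OA-$(n_1,k_1,n_2,k_2+\log(1/\eps))$ source: independent $X_1 \in \zo^{n_1}$, $X_2 \in \zo^{n_2}$ with OA side information $\Adv$ (obtained by probing whichever of the two sources the adversary chooses), and write $Z = \TExt(X_1,X_2)$. The goal is to bound $\trdist{\rho_{Z X_1 \Adv} - \uniform{m} \ot \rho_{X_1 \Adv}}$ by $2^m\sqrt{\eps}$. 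The key move is to apply the XOR lemma to the cq-state $\rho_{ZE}$ obtained by folding the classical register $X_1$ together with $\Adv$ into a single ``side-information'' register $E$ (a classical register is a legitimate special case of a quantum one, so this is allowed), which yields
\[
\trdist{\rho_{Z X_1 \Adv} - \uniform{m} \ot \rho_{X_1 \Adv}}^2 \;\le\; 2^{\min(d,m)} \sum_{\emptyset \neq S \subseteq [m]} \trdist{\rho_{Z_{\oplus S} X_1 \Adv} - \uniform{1} \ot \rho_{X_1 \Adv}}^2 ,
\]
where $Z_{\oplus S} = \bigoplus_{i \in S} Z_i$ and $2^d = \dim(E)$.

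Next I would bound each summand. For a fixed nonempty $S \subseteq [m]$, let $\TExt_S$ be the one-bit extractor obtained from $\TExt$ by outputting the XOR of its output bits indexed by $S$, so that $Z_{\oplus S} = \TExt_S(X_1,X_2)$. By Lemma~\ref{lem:M_multi_one}, $\TExt_S$ is a classical $(n_1,k_1,n_2,k_2,1,\eps)$ $X_1$-strong two-source extractor; by Lemma~\ref{lem:bOA:KK_single_bit} it is therefore an OA-secure $(n_1,k_1,n_2,k_2+\log(1/\eps),1,\sqrt{\eps})$ $X_1$-strong two-source extractor. Since our fixed source is an OA-$(n_1,k_1,n_2,k_2+\log(1/\eps))$ source, this gives exactly $\trdist{\rho_{Z_{\oplus S} X_1 \Adv} - \uniform{1} \ot \rho_{X_1 \Adv}} \le \sqrt{\eps}$ for every such $S$.

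Plugging this in, and using $\min(d,m) \le m$ together with $|\{S : \emptyset \neq S \subseteq [m]\}| = 2^m - 1 < 2^m$, the right-hand side is at most $2^m \cdot 2^m \cdot \eps = 2^{2m}\eps$; taking square roots gives the claimed bound $2^m\sqrt{\eps}$. The $X_2$-strong statement follows by the symmetric argument (folding $X_2$ into the side information instead of $X_1$), and the final both-strong consequence follows by invoking the both-strong half of Lemma~\ref{lem:bOA:KK_single_bit}, which also forces the extra $\log(1/\eps)$ loss on $k_1$. There is no serious technical obstacle here; the only point that needs care is bookkeeping of the conditioning register --- one must apply the one-bit OA lemma to the same OA source \emph{after} $X_1$ has been moved into the side information, so that ``$X_1$-strong for $\TExt_S$'' and ``$X_1$-strong for $\TExt$'' refer to the same experiment --- plus the observation that the dimension factor $2^{\min(d,m)}$ in the XOR lemma never exceeds $2^m$, no matter how large $\Adv$ is.
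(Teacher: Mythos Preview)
Your proposal is correct and follows essentially the same approach as the paper's proof: apply the quantum XOR lemma with $(X_1,\Adv)$ as the side-information register, bound each one-bit term via Lemma~\ref{lem:M_multi_one} followed by Lemma~\ref{lem:bOA:KK_single_bit}, and combine. Your write-up is in fact slightly more careful than the paper's in making explicit the folding of $X_1$ into the quantum side-information register and the use of $2^{\min(d,m)}\le 2^m$, but the argument is the same.
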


\begin{proof}
We only prove the theorem for the extractor being $X_1$-strong. A similar argument proves when the extractor is $X_2$-strong. By Lemma~\ref{lem:KK_XOR},  and for any subset $\emptyset \neq \tau \subseteq [t]$, let $\TExt_\tau(\cdot)=\TExt(\cdot)_\tau$ as defined in Lemma~\ref{lem:M_multi_one} , then we have, 
\begin{eqnarray*}
  \trdist{\rho_{\TExt(X_1,X_2)X_1\Adv}- \uniform{m} \ot \rho_{X_1\Adv}} & \leq &  \sqrt{2^m \sum_{\tau \neq \emptyset} \trdist{\rho_{\TExt_\tau(X_1,X_2)X_1 \Adv}- \uniform{1} \ot \rho_{X_1\Adv}}^2} \\
  &\leq & \sqrt{2^m \cdot 2^m \eps } =2^m\sqrt{\eps},
\end{eqnarray*}
where the second inequality is due to Lemma~\ref{lem:M_multi_one} and Lemma~\ref{lem:bOA:KK_single_bit}.
\end{proof}

\subsection*{Instantiations}

Here we apply Theorem~\ref{thm:one_bit_OA} and Theorem~\ref{thm:SOA_GE} to lift the security of existing (marginally secure) two-sources extractors to obtain GE-secure extractors with essentially the same parameters (up to a constant factor loss). We first consider Raz's extractor, which has the advantage to apply to two unequal length sources but one of them needs to have $> 1/2$ entropy rate.

\begin{theorem}[Raz's Extractor~\cite{Raz05}] \label{thm:Raz} For any $n_1,n_2,k_1,k_2,m$, and any $0<\delta<1/2$ with
\begin{itemize}
\item $n_1 \geq 6 \log n_1 +2\log n_2$
\item $k_1 \geq (0.5+\delta) n_1 + 3\log n_1 + \log n_2$
\item $k_2 \geq 5 \log(n_1-k_1)$
\item $m \leq \delta \min \{ n_1/8, k_2/40 \} - 1 $
\end{itemize}
There is an explicit $(n_1,k_1,n_2,k_2,m,\eps)$ two-source extractor 
with error $\eps = 2^{-1.5m}$. Furthermore, the extractor is both $X_1$-strong and $X_2$-strong.
\end{theorem}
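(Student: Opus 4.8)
We sketch the construction and analysis of \cite{Raz05}. The key asymmetry to exploit is that $X_1$ carries a large amount of entropy, $(1/2+\delta)n_1 = \Omega(n_1)$, while we only need to output $m = O(\delta k_2)$ bits and $X_2$ may have very little entropy ($k_2 \ge 5\log(n_1-k_1)$, roughly $5\log n_1$) but possibly huge length. The plan is a two-stage composition: first \emph{rescale} $X_1$, using $X_2$ as a weak seed, to a much shorter string $\widetilde{X_1}\in\{0,1\}^\ell$ with $\ell$ slightly below $2k_2$ that still has min-entropy above $\ell/2$ --- i.e., convert the $\Omega(\delta n_1)$ of ``surplus'' entropy of $X_1$ into a working margin over the $1/2$-rate threshold at the new scale --- and then apply the strong multi-output inner-product (Chor--Goldreich/Vazirani) two-source extractor $\mathrm{IP}$ over $\mathbb{F}_{2^m}$ to $\widetilde{X_1}$ and $X_2$, which at this scale have summed min-entropy $\gtrsim \ell/2+k_2 > \ell$, exactly the regime where $\mathrm{IP}$ is a strong $m$-bit two-source extractor with error $2^{-1.5m}$. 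All components (a seeded condenser or a hash family for the rescaling, and $\mathrm{IP}$) are explicit, so the resulting extractor is explicit.

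Concretely I would: (i) fix $q=2^m$ and a reduced dimension $\ell<2k_2$, using the slack $m\le \delta\min\{n_1/8,k_2/40\}$ to leave room in every subsequent inequality; (ii) rescale $X_1$ via a seeded condenser $C:\{0,1\}^{n_1}\times\{0,1\}^s\to\{0,1\}^\ell$ with $s=O(\log n_1)$, where the seed is drawn from $X_2$ --- since condensers and extractors are averaging samplers, a seed of min-entropy only $\Omega(\log n_1)$ suffices at the cost of an additive $2^{-\Omega(s)}$ error, which is the source of the $6\log n_1+2\log n_2$ and $5\log(n_1-k_1)$ terms in the hypotheses; (iii) condition on a ``good'' value of the seed part of $X_2$ (all but a $2^{-\Omega(s)}$ fraction are good): then $\widetilde{X_1}$ is $2^{-\Omega(n_1)}$-close to having min-entropy $\ge \ell/2+\Omega(\delta n_1)$, and by Lemma~\ref{lem:condition} the residual min-entropy of $X_2$ is still $\ge k_2-O(\log n_2)$; (iv) apply $\mathrm{IP}$ (over $\mathbb{F}_{2^m}$, both arguments at bit-length $\ell$) and invoke its standard Fourier/Lindsey analysis to get output $2^{-1.5m}$-close to uniform given $X_2$, and symmetrically given $\widetilde{X_1}$; (v) combine errors: the conditioning losses $2^{-\Omega(n_1)}$ and $2^{-\Omega(k_2)}$ are dominated by $2^{-1.5m}$ because $m\le\delta n_1/8$ and $m\le\delta k_2/40$, so the total error stays $\le 2^{-1.5m}$; (vi) check that the four displayed inequalities are exactly what the above length/entropy bookkeeping demands.

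The main obstacle is steps (ii)--(iii): reconciling the tiny entropy budget of $X_2$ with the need for it to act as a seed condensing $X_1$ down to rate just above $1/2$, which requires getting all the thresholds exactly right so that the $\Omega(\delta n_1)$ surplus of $X_1$ absorbs both the condenser's loss and the rescaling overhead. The second delicate point is two-sided strongness. $X_1$-strongness is easier: $X_2$ is independent of $X_1$ and is only post-processed into the condenser seed, and $\mathrm{IP}$ is strong in its $\widetilde{X_1}$-argument, so the output is uniform given $X_1$, hence given $\widetilde{X_1}$ and the seed. $X_2$-strongness is subtler because the seed bits of $X_2$ must appear in the conditioned-on part; here one uses that $\mathrm{IP}$ is strong in its $X_2$-argument together with a min-entropy chain-rule (Lemma~\ref{lem:condition}) showing that even after the seed bits of $X_2$ are fixed, $\widetilde{X_1}$ retains min-entropy above $\ell/2$, so the inner-product step still goes through. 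Assembling these with the parameters of step (i) yields the claimed $(n_1,k_1,n_2,k_2,m,2^{-1.5m})$ two-source extractor, strong in both inputs.
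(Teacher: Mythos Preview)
This theorem is stated in the paper purely as a citation to \cite{Raz05} and is not proved there; the paper uses it as a black box (together with Theorem~\ref{thm:one_bit_OA} and Theorem~\ref{thm:SOA_GE}) to obtain the GE-secure version in Theorem~\ref{thm:Raz-GE-secure}. So there is no proof in the paper to compare your sketch against.

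Regarding the sketch itself: your two-stage plan (use a few bits of $X_2$ to seed a condenser on $X_1$, then inner-product the condensed string with $X_2$) is not Raz's construction. Raz's extractor is a single explicit multilinear map, analyzed directly via a Fourier/character-sum estimate that exploits the rate-$>1/2$ hypothesis on $X_1$; there is no separate condensing stage and no re-use of $X_2$ as a condenser seed. Your outline also has a dependency issue that would need more care than you indicate: after you spend $s$ bits of $X_2$ as the condenser seed, $\widetilde{X_1}$ is a function of $(X_1,\text{seed})$, and you then feed it into $\mathrm{IP}$ against $X_2$. In particular, the $X_1$-strongness step you sketch runs in the wrong direction --- $\mathrm{IP}$ being strong in its first argument gives uniformity given $\widetilde{X_1}$, not given $X_1$, and $\widetilde{X_1}$ does not determine $X_1$. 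A compose-then-IP strategy of this flavor can be made to work, but not with the bookkeeping you wrote down, and in any case it is not what \cite{Raz05} does.
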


\begin{theorem}[GE-secure Raz's Extractor] \label{thm:Raz-GE-secure} For any $n_1,n_2,k_1,k_2,m$, and any $0<\delta<1/2$ with
\begin{itemize}
\item $n_1 \geq 6 \log n_1 +2\log n_2$
\item $k_1 \geq (0.5+\delta) n_1 + 3\log n_1 + \log n_2$
\item $k_2 \geq 6 \log(n_1-k_1)$
\item $m \leq (\delta/16) \min \{ n_1/8, k_2/40 \} - 1 $
\end{itemize}
There is an explicit $(n_1,k_1,n_2,k_2,m,\eps)$ GE-secure two-source extractor 
with error $\eps = 2^{-1.5m}$. Furthermore, the extractor is both $X_1$-strong and $X_2$-strong.
\end{theorem}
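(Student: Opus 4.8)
The plan is to obtain the GE-secure extractor by composing three ingredients already in place: the classical construction of Raz (Theorem~\ref{thm:Raz}), the one-bit plus quantum XOR-lemma upgrade (Theorem~\ref{thm:one_bit_OA}), and the strong OA-to-GE equivalence (Theorem~\ref{thm:SOA_GE}). Concretely, I would start from Raz's extractor, invoked with an output length $\hat m$ slightly larger than the target $m$; feed it into Theorem~\ref{thm:one_bit_OA} to turn its marginal (both-sided) strongness into OA-security that is still both $X_1$-strong and $X_2$-strong; and finally invoke Theorem~\ref{thm:SOA_GE} with $t=2$ and $S=\{1\}$, and again with $S=\{2\}$ (both have $|S|=t-1=1$), to conclude that the \emph{same} function is GE-secure and strong on each of the two sources. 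Since the reverse direction (GE $\Rightarrow$ OA) is trivial, this yields exactly the claimed GE-secure, doubly-strong extractor; all that remains is to choose $\hat m$ and to check that Raz's four preconditions survive the parameter degradation.

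For the parameter bookkeeping: Theorem~\ref{thm:one_bit_OA} maps a classical $(n_1,k_1,n_2,k_2,\hat m,\hat\eps)$ doubly-strong two-source extractor to an OA-secure one with the same output length, error $2^{\hat m}\sqrt{\hat\eps}$, and each source's min-entropy requirement raised by $\log(1/\hat\eps)$. To land at error $2^{-1.5m}$ after outputting $m$ bits, I would run Raz's extractor with $\hat m=\lceil 10m/3\rceil$, so that $\hat\eps=2^{-1.5\hat m}\le 2^{-5m}$ and hence $2^{m}\sqrt{\hat\eps}\le 2^{m-2.5m}=2^{-1.5m}$, and then take the length-$m$ prefix of the output (a prefix of a string that is $\eps$-close to uniform conditioned on the side information is itself $\eps$-close to uniform conditioned on the same side information, and strongness is preserved). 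The price is that the two sources now need min-entropy larger by $\log(1/\hat\eps)\le 5m$ than in Raz's theorem; everything else about Raz's extractor is used as a black box.

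Finally I would verify that the hypotheses of Theorem~\ref{thm:Raz-GE-secure} imply Raz's preconditions for this $\hat m$ and for the shifted entropies $k_1-5m$ and $k_2-5m$. The condition $n_1\ge 6\log n_1+2\log n_2$ is untouched. For $k_1$: invoke Raz with parameter $\delta/2$ in place of $\delta$; since $m\le(\delta/16)(n_1/8)-1$ gives $5m\le(\delta/2)n_1$, we get $(0.5+\delta/2)n_1+3\log n_1+\log n_2+5m\le(0.5+\delta)n_1+3\log n_1+\log n_2\le k_1$. For $k_2$: Raz needs $k_2-5m\ge 5\log(n_1-(k_1-5m))$ and $\hat m\le(\delta/2)\min\{n_1/8,(k_2-5m)/40\}-1$; the hypothesis $m\le(\delta/16)\min\{n_1/8,k_2/40\}-1$ bounds $\hat m=\Theta(m)$ as required, and the strengthening from Raz's $5\log(n_1-k_1)$ to the hypothesis's $6\log(n_1-k_1)$ provides exactly the slack needed to absorb the $5m$ shift inside the logarithm (for $n_1-k_1$ not too small, which the other constraints force). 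The main obstacle is precisely this last paragraph: the conceptual content is entirely delegated to Theorems~\ref{thm:one_bit_OA} and~\ref{thm:SOA_GE}, and what remains is the somewhat fiddly chase of the constants $16$, $10/3$ and $5$ so that the $\log(1/\hat\eps)$ entropy loss and the $\Theta(m)$ inflation of the output length are simultaneously absorbed while the final error stays at exactly $2^{-1.5m}$.
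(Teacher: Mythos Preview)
Your proposal is correct and follows essentially the same route as the paper: invoke Raz's extractor with parameters yielding error $2^{-5m}$ (the paper does this by lowering the entropies to $k_i'=k_i-5m$ and setting $\delta'=\delta/2$, which amounts to the same prefix trick you describe), apply Theorem~\ref{thm:one_bit_OA} to get OA-security with error $2^{m}\sqrt{2^{-5m}}=2^{-1.5m}$ and entropy requirements restored to $(k_1,k_2)$, and then apply Theorem~\ref{thm:SOA_GE} on each side. The only cosmetic difference is that you phrase the error amplification as ``output $\hat m\approx 10m/3$ bits and take a prefix'' whereas the paper phrases it as ``output $m$ bits with error $\eps'=2^{-5m}\ge 2^{-1.5m'}$''; both are the same maneuver, and your parameter chase is at least as careful as the paper's.
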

\begin{proof}
Let $k'_1 = k_1 - 5 m$, $k'_2 = k_2 - 5 m$, $\delta' = \delta/2$, and $m' = \delta' \min \{ n_1/8, k'_2/40 \} - 1$. Note that $k'_1 \geq (0.5+\delta') n_1 + 3\log n_1 + \log n_2$, $k'_2 \geq 5 \log(n_1-k_1)$. By Theorem~\ref{thm:Raz}, there exists a $(n_1,k'_1,n_2,k'_2,m,\eps')$ two-source extractor $\TExt$ with $\eps' = 2^{-5m} \geq 2^{-1.5m'}$ that is both $X_1$-strong and $X_2$-strong.  By Theorem~\ref{thm:one_bit_OA}, $\TExt$ is also a OA-secure $(n_1, k'_1+\log(1/\eps'), n_2, k'_2+\log(1/\eps'), m, 2^{m}\sqrt{\eps'})$ two-source extractor that is both $X_1$-strong and $X_2$-strong. Note that $k_1 \geq k'_1+\log(1/\eps')$, $k_2 \geq k'_2+\log(1/\eps')$, and $2^{m}\sqrt{\eps'}\leq \eps$. By Theorem~\ref{thm:SOA_GE}, $\TExt$ is also a $(n_1,k_1,n_2,k_2,m,\eps)$ GE-secure two-source extractor that is both $X_1$-strong and $X_2$-strong.
\end{proof}

We next consider Bourgain's extractor, which breaks the ``1/2-barrier''. That is, the extractor works even if both sources have entropy rate (slightly) below $1/2$. 

\begin{theorem}[Bourgain's Extractor~\cite{Bour05}] \label{thm:Bou} There exists a universal constant $\alpha$ such that for any $n \in \N$, there is an explicit $(n,k,n,k,m,\eps)$ two source extractor with $k = (0.5-\alpha) n$, $m = \alpha n$ and $\eps = 2^{-\alpha n}$. Furthermore, the extractor is both $X_1$-strong and $X_2$-strong.
\end{theorem}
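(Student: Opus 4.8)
Theorem~\ref{thm:Bou} is, modulo the ``strong'' clause, exactly the main construction of Bourgain~\cite{Bour05}, so the plan is not to reprove it from scratch but to recall the shape of that argument and supply the one point we genuinely need here: that the extractor is strong in each of its two arguments. First I would recall the construction. Identify each $n$-bit source with an element of $\mathbb{F}_p^2$ for a prime $p\approx 2^{n/2}$, and define $\Ext(x,y)$ to be the top $\alpha n$ bits of $Q(x,y)\in\mathbb{F}_p$, where $Q$ is a fixed low-degree polynomial map --- a bilinear pairing together with a nonlinear correction term --- chosen precisely so that the associated exponential sums can be controlled (a purely bilinear map would only reproduce Chor--Goldreich). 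By Vazirani's XOR lemma~\cite{Vaz87}, to prove the output is $2^{-\Omega(n)}$-close to uniform it suffices to show that for every nontrivial additive character $\psi$ of the output group the bias $\bigl|\mathbb{E}_{X,Y}[\psi(\Ext(X,Y))]\bigr|$ is exponentially small; unfolding $\psi$ and handling the truncation-to-top-bits by a standard Fourier-on-$\mathbb{Z}_p$ rounding step, this reduces to bounding two-variable exponential sums $\sum_{u,v}X(u)\,Y(v)\,e_p\!\bigl(\xi\cdot Q(u,v)\bigr)$ for arbitrary distributions $X,Y$ on $\mathbb{F}_p^2$ of min-entropy $(0.5-\alpha)n$ (and, by convexity, for flat such sources).

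The only genuinely hard step --- and the one I expect to occupy the bulk of the work --- is this exponential-sum estimate. Chor--Goldreich~\cite{CG88} control such sums only when the supports of $X$ and $Y$ have size $>\sqrt p$, i.e.\ entropy rate $>1/2$; Bourgain's contribution is that the nonlinear part of $Q$ lets one drop a constant factor below this barrier. The plan is to follow him: combine the Bourgain--Katz--Tao sum--product theorem in $\mathbb{F}_p$ with Pl\"unnecke--Ruzsa inequalities and point--line incidence estimates to show that a set of size $p^{1-c}$ cannot have enough simultaneous additive and multiplicative structure to resonate with $e_p(\xi\cdot Q(u,v))$, yielding a bound of $p^2\cdot 2^{-\Omega(n)}$ for a small enough universal constant $\alpha>0$. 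Everything around this --- the XOR lemma, the rounding, keeping track of constants --- is routine; the additive-combinatorics core is where all the difficulty lies, and for it I would simply cite~\cite{Bour05} and its expository treatments rather than reproduce the estimates.

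For the ``furthermore'' clause ($X_1$-strong and $X_2$-strong), the plan is to observe that the same machinery delivers it after one extra Cauchy--Schwarz step. To bound $\bigl\|\Ext(X,Y)X-\uniform{m}\ot X\bigr\|$ one applies the XOR lemma conditionally on $X$ and then Cauchy--Schwarz over $X$; this replaces $Y$ by two independent copies $Y,Y'$ and reduces the task to exponential sums $\sum_u X(u)\,e_p\!\bigl(\xi\,(Q(u,v)-Q(u,v'))\bigr)$ averaged over $v,v'$ --- exactly the class of sums Bourgain already bounds, now with one of the two structured sets taken to be $\mathrm{supp}(X)$ and the other free. The symmetric form of $Q$ (equivalently, swapping the two coordinates) then yields the $X_2$-strong property as well. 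Hence the whole statement reduces to invoking Bourgain's exponential-sum estimates, and no new idea beyond this Cauchy--Schwarz bookkeeping is required.
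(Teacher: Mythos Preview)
The paper does not prove Theorem~\ref{thm:Bou} at all: it is stated as a quotation of Bourgain's result~\cite{Bour05} and used as a black box in the subsequent Theorem~\ref{thm:Bou-GE-secure}. So your proposal---reconstructing the sum--product machinery and then adding a Cauchy--Schwarz step for the strong property---goes far beyond what the paper actually does, which is simply to cite~\cite{Bour05}.

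That said, your instinct to flag the ``furthermore'' clause is reasonable: the $X_1$-strong/$X_2$-strong property is not always stated explicitly in presentations of Bourgain's extractor, and the paper implicitly folds it into the citation. Your observation that strongness follows from the same exponential-sum estimates (either via the Cauchy--Schwarz-over-one-variable argument you sketch, or more directly by noting that Bourgain's bias bound is already of the averaged form $\mathbb{E}_X\bigl|\mathbb{E}_Y[\psi(\Ext(X,Y))]\bigr|$ after one application of the triangle inequality) is correct and would be the natural way to justify that clause if one were being careful. But for the purposes of this paper, the entire theorem is treated as imported from~\cite{Bour05} without further argument.
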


\begin{theorem}[GE-secure Bourgain's Extractor] \label{thm:Bou-GE-secure}  There exists a universal constant $\beta$ such that for any $n \in \N$, there is an explicit $(n,k,n,k,m,\eps)$ GE-secure two source extractor with $k = (0.5-\beta) n$, $m = \beta n$ and $\eps = 2^{-\beta n}$. Furthermore, the extractor is both $X_1$-strong and $X_2$-strong.
\end{theorem}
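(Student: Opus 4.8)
The plan is to follow verbatim the two-step recipe used for Raz's extractor (proof of Theorem~\ref{thm:Raz-GE-secure}): first upgrade Bourgain's (marginally secure, both-sided strong) extractor to a \emph{strong OA-secure} extractor via the one-bit/XOR argument of Theorem~\ref{thm:one_bit_OA}, and then invoke the strong OA--GE equivalence of Theorem~\ref{thm:SOA_GE} (with $t=2$, so $|S|=t-1=1$) to obtain strong GE-security with no further loss. Since both transformations are lossy — Theorem~\ref{thm:one_bit_OA} adds $\log(1/\eps)$ to each source's required entropy and multiplies the error by $2^m$ — the entire content of the proof is to pick the new universal constant $\beta$ as a sufficiently small fixed fraction of Bourgain's constant $\alpha$, so that after both losses are paid the resulting extractor still has entropy rate \emph{strictly below} $1/2$, still outputs $\Omega(n)$ bits, and still has exponentially small error.

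Concretely, I would start from Theorem~\ref{thm:Bou}: a $(n,(1/2-\alpha)n,n,(1/2-\alpha)n,\alpha n,2^{-\alpha n})$ two-source extractor that is both $X_1$-strong and $X_2$-strong. Two elementary flexibilities are exploited. First, an extractor with error $2^{-\alpha n}$ is, trivially, also an extractor (and a strong one) with the \emph{larger} error $2^{-c}$ for any $c\le\alpha n$, since the defining trace-distance bound only weakens. Second, truncating the output from $\alpha n$ bits down to any $m_0\le\alpha n$ bits preserves both closeness to uniform and both strongness properties, because tracing out output coordinates is an admissible operation (Fact~\ref{prelim:fact:monotone_trace}). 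I would then set $\beta\defeq\alpha/5$, regard Bourgain's extractor as having error $2^{-c}$ with $c=(\alpha-\beta)n=\tfrac{4}{5}\alpha n$, and truncate its output to $m_0=\beta n=\tfrac{1}{5}\alpha n$ bits. Applying Theorem~\ref{thm:one_bit_OA} (the ``both $X_1$- and $X_2$-strong'' version) yields an OA-secure, both-sided strong two-source extractor with entropy requirement $(1/2-\alpha)n+c=(1/2-\beta)n$, output $m_0=\beta n$, and error $2^{m_0}\sqrt{2^{-c}}=2^{\beta n}\cdot 2^{-2\beta n}=2^{-\beta n}$ — exactly the target parameters, all hit on the nose. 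Finally, applying Theorem~\ref{thm:SOA_GE} once with $S=\{1\}$ and once with $S=\{2\}$ lifts this to a GE-secure two-source extractor that is simultaneously $X_1$-strong and $X_2$-strong, with no loss in $n$, $k$, $m$, or $\eps$; explicitness is preserved since the only change to the function is output truncation.

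I do not expect a genuine obstacle: the statement is essentially a corollary of Theorems~\ref{thm:Bou}, \ref{thm:one_bit_OA}, and \ref{thm:SOA_GE}, mirroring Theorem~\ref{thm:Raz-GE-secure}. The one point requiring a little care is the constant bookkeeping. Unlike Raz's extractor, whose internal error can be driven down exponentially in a freely chosen internal output length, Bourgain's error $2^{-\alpha n}$ is rigidly tied to $n$; consequently the $2^m$ blow-up from the XOR-lemma step forces us both to relax the error (so that the $\log(1/\eps)$ entropy penalty does not push the rate all the way up to $1/2$) and to truncate the output (so that the error penalty does not swamp closeness to uniform). Balancing these two demands is exactly what pins $\beta$ to roughly $\alpha/5$ rather than to $\alpha$ itself. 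As elsewhere in the paper, quantities such as $\alpha n$, $\beta n$, $c$ should be read with appropriate floors/ceilings, which affects nothing asymptotically.
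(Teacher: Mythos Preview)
Your proposal is correct and follows essentially the same approach as the paper's proof: set $\beta=\alpha/5$, relax Bourgain's error to $2^{-4\beta n}$, truncate the output to $\beta n$ bits, apply Theorem~\ref{thm:one_bit_OA} to get OA-security with exactly the target parameters, then invoke Theorem~\ref{thm:SOA_GE}. In fact you are more explicit than the paper about the two routine slack-moves (weakening the error bound and truncating the output), which the paper's proof uses silently when it jumps from $m'=\alpha n$, $\eps'=2^{-\alpha n}$ to $m=\beta n$, $\eps''=2^{-4\beta n}$.
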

\begin{proof} Let $\beta = \alpha / 5$, where $\alpha$ is the universal constant in Theorem~\ref{thm:Bou}. Let $\TExt$ be the $(n,k',n,k',m',\eps')$ two-source extractor in Theorem~\ref{thm:Bou} that is both $X_1$-strong and $X_2$-strong, where $k' = (0.5-\alpha)n$, $m'=\alpha n$, and $\eps' = 2^{-\alpha n}$. Let $\eps'' = 2^{-4 \beta n} \geq \eps'.$ By Theorem~\ref{thm:one_bit_OA}, $\TExt$ is also a OA-secure $(n, k'+\log(1/\eps''), n, k'+\log(1/\eps''), m, 2^{m}\sqrt{\eps''})$ two-source extractor that is both $X_1$-strong and $X_2$-strong. Note that $k \geq k' + \log(1/\eps'')$, and $2^m \sqrt{\eps''} \geq \eps$. By Theorem~\ref{thm:SOA_GE}, $\TExt$ is also a $(n,k,n,k,m,\eps)$ GE-secure two-source extractor that is both $X_1$-strong and $X_2$-strong.
\end{proof}

Finally, we consider the DEOR extractor~\cite{DEOR04}, which has the advantage that the extractor works as long as the sum of the entropy from two sources is greater than $n$. 

\begin{theorem}[DEOR Extractors~\cite{DEOR04}] \label{thm:DEOR} For any $n, k_1, k_2, m$, there is an explicit $(n,k_1,n,k_2,m,\eps)$ two-source extractor with error $\eps = 2^{-(k_1+k_2+1-n-m)/2}$. Furthermore, the extractor is both $X_1$-strong and $X_2$-strong.
\end{theorem}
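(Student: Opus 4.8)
The plan is to recall the DEOR construction and give the standard character-sum (Fourier-analytic) argument that underlies \cite{DEOR04}. Identify $\{0,1\}^n$ with the finite field $\mathbb{F}_{2^n}$, fix a surjective $\mathbb{F}_2$-linear map $\pi:\mathbb{F}_{2^n}\to\{0,1\}^m$ (concretely, read off $m$ suitably chosen coordinates, e.g.\ the coefficients $\mathrm{Tr}(\gamma^i z)$ for $i<m$ and a generator $\gamma$), and define the extractor by $\Ext(x,y)=\pi(x\cdot y)$, where $x\cdot y$ is the product in $\mathbb{F}_{2^n}$. Because $x\cdot y=y\cdot x$, the construction is symmetric in the two sources, and the claimed error bound $2^{-(k_1+k_2+1-n-m)/2}$ is symmetric in $k_1,k_2$; hence it suffices to establish the $X_1$-strong estimate, and the $X_2$-strong estimate follows verbatim after swapping the roles of $X_1$ and $X_2$.

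For the $X_1$-strong bound I would invoke Vazirani's XOR lemma \cite{Vaz87} in its ``strong'' form: writing $Z=\Ext(X_1,X_2)$ and letting $Z_x$ denote the distribution of $\Ext(x,X_2)$ with Fourier coefficients $\widehat{Z_x}(\alpha)=\E_{y\sim X_2}[(-1)^{\langle\alpha,\pi(xy)\rangle}]$, one has
\[
  \trdist{\,Z\,X_1 - \uniform{m}\ot X_1\,}\;\le\;\frac12\sqrt{\;\sum_{0\neq\alpha\in\{0,1\}^m}\;\E_{x\sim X_1}\!\big[\widehat{Z_x}(\alpha)^2\big]\;}\,,
\]
which follows by applying the usual Fourier bound $\trdist{Z_x-\uniform{m}}\le\frac12(\sum_{\alpha\neq0}\widehat{Z_x}(\alpha)^2)^{1/2}$ for each $x$ and then Cauchy--Schwarz with weights $\Pr[X_1=x]$. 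Thus the theorem reduces to showing $\E_{x\sim X_1}[\widehat{Z_x}(\alpha)^2]\le 2^{\,n-k_1-k_2+O(1)}$ for every nonzero $\alpha$.

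The key estimate exploits nondegeneracy of the trace bilinear form on $\mathbb{F}_{2^n}$: the $\mathbb{F}_2$-linear functional $z\mapsto\langle\alpha,\pi(z)\rangle$ equals $z\mapsto\mathrm{Tr}(\tilde\alpha z)$ for a unique $\tilde\alpha\in\mathbb{F}_{2^n}$, and $\tilde\alpha\neq0$ whenever $\alpha\neq0$ since $\pi$ is surjective. Hence, with $q_y=\Pr[X_2=y]$ and using $\mathrm{Tr}(\tilde\alpha\cdot xy)=\mathrm{Tr}((\tilde\alpha x)y)$, we get $\widehat{Z_x}(\alpha)=\sum_y q_y(-1)^{\mathrm{Tr}((\tilde\alpha x)y)}=\widehat{X_2}(\tilde\alpha x)$, a Fourier coefficient of $X_2$ at the point $\tilde\alpha x$. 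Since $x\mapsto\tilde\alpha x$ is a bijection of $\mathbb{F}_{2^n}$, averaging over $x\sim X_1$ and using $\Hmin(X_1)\ge k_1$ together with Parseval and $\sum_y q_y^2\le 2^{-\Hmin(X_2)}\le 2^{-k_2}$ gives
\[
  \E_{x\sim X_1}\big[\widehat{Z_x}(\alpha)^2\big]=\sum_x\Pr[X_1=x]\,\widehat{X_2}(\tilde\alpha x)^2\;\le\;\max_x\Pr[X_1=x]\cdot\!\!\sum_{s\in\mathbb{F}_{2^n}}\!\!\widehat{X_2}(s)^2\;\le\;2^{-k_1}\cdot 2^{n}\!\sum_y q_y^2\;\le\;2^{\,n-k_1-k_2}
\]
(the $x=0$ term, where $\widehat{X_2}(0)=1$, contributes only $\Pr[X_1=0]\le 2^{-k_1}\le 2^{\,n-k_1-k_2}$ and is absorbed into the constant). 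Plugging into the XOR lemma, $\trdist{Z\,X_1-\uniform{m}\ot X_1}\le\frac12\sqrt{2^m\cdot 2^{\,n-k_1-k_2}\cdot O(1)}=2^{-(k_1+k_2-n-m)/2+O(1)}$, which matches $\eps=2^{-(k_1+k_2+1-n-m)/2}$ up to the precise constant.

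I expect the only real bookkeeping obstacle to be the trace-form identification — confirming that composing ``multiply by $x$'', the projection $\pi$, and the functional $\alpha$ produces a single trace functional $\mathrm{Tr}(\tilde\alpha\,\cdot)$ that is nonzero exactly when $\alpha\neq0$, so that the substitution $s=\tilde\alpha x$ is a legitimate bijection and Parseval applies — plus correctly tracking the harmless $x=0$ contribution. Everything else is the routine Cauchy--Schwarz/Parseval manipulation above, and the $X_2$-strong case is identical by the symmetry $\Ext(x,y)=\Ext(y,x)$. (As this statement is attributed to \cite{DEOR04}, one may alternatively just cite their analysis; the sketch above is the argument behind it.)
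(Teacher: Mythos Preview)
The paper does not prove this theorem; it is stated as a result cited from \cite{DEOR04} and used as a black box (see the text surrounding Theorem~\ref{thm:DEOR}, which immediately proceeds to Theorem~\ref{thm:DEOR-GE-secure} without any proof of Theorem~\ref{thm:DEOR} itself). Your sketch is the standard DEOR/Chor--Goldreich Fourier argument and is correct up to the bookkeeping you flag; one small quibble is that the averaging step $\E_x[\sqrt{\cdot}]\le\sqrt{\E_x[\cdot]}$ is Jensen's inequality for the concave square root rather than Cauchy--Schwarz, and the $x=0$ term needs no special treatment since $x\mapsto\tilde\alpha x$ is already a bijection of $\mathbb{F}_{2^n}$ sending $0$ to $0$, so it is covered by the same $\max_x\Pr[X_1=x]\cdot\sum_s\widehat{X_2}(s)^2$ bound.
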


\begin{theorem}[GE-secure DEOR Extractors] \label{thm:DEOR-GE-secure} For any $n, k_1, k_2$ with $k_1+k_2 > n - 1$, there is an explicit $(n,k_1,n,k_2,m,\eps)$ two-source extractor with $m = \min\{ (k_1+k_2+1-n)/ 20, k_1/4, k_2/4\} $ and $\eps = 2^{-m}$. Furthermore, the extractor is both $X_1$-strong and $X_2$-strong.
\end{theorem}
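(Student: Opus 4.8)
The plan is to follow exactly the two-move template already used for the GE-secure versions of Raz's and Bourgain's extractors (Theorems~\ref{thm:Raz-GE-secure} and~\ref{thm:Bou-GE-secure}): instantiate the classical DEOR extractor of Theorem~\ref{thm:DEOR} with slightly reduced entropy parameters, lift its (marginal) security to strong OA security via the one-bit/XOR argument of Theorem~\ref{thm:one_bit_OA}, and then invoke the strong-OA-to-strong-GE equivalence of Theorem~\ref{thm:SOA_GE}. The only genuinely new point, relative to the Raz/Bourgain cases, is bookkeeping: DEOR's error is controlled by the \emph{excess} entropy $k_1+k_2+1-n-m$ rather than being tied directly to the output length, so I must choose both the reduced parameters and the error bound I feed into Theorem~\ref{thm:one_bit_OA} so that the $\log(1/\eps)$ entropy loss incurred there is \emph{exactly} absorbed back.

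Concretely, I would set $k_1' = k_1 - 4m$ and $k_2' = k_2 - 4m$; these are nonnegative precisely because $m \le k_1/4$ and $m \le k_2/4$ — which is the reason those two terms appear in the definition of $m$. Theorem~\ref{thm:DEOR} then yields an explicit classical $(n,k_1',n,k_2',m,\eps')$ two-source extractor $\TExt$, strong on both coordinates, with $\eps' = 2^{-(k_1'+k_2'+1-n-m)/2} = 2^{-(k_1+k_2+1-n-9m)/2}$. The constraint $m \le (k_1+k_2+1-n)/20$ gives $k_1+k_2+1-n \ge 20m$, hence $(k_1+k_2+1-n-9m)/2 \ge 11m/2 \ge 4m$, so $\eps' \le 2^{-4m}$; setting $\eps'' = 2^{-4m}$, the map $\TExt$ is a fortiori a $(n,k_1',n,k_2',m,\eps'')$ two-source extractor, still strong on both coordinates.

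Next I apply Theorem~\ref{thm:one_bit_OA} to $\TExt$ with the (rounded-up) error bound $\eps''$: it produces an OA-secure two-source extractor with parameters $(n,\ k_1'+\log(1/\eps''),\ n,\ k_2'+\log(1/\eps''),\ m,\ 2^m\sqrt{\eps''})$, again strong on both coordinates. Since $\log(1/\eps'') = 4m$, the entropy requirements become $k_1'+4m = k_1$ and $k_2'+4m = k_2$, and the error becomes $2^m\cdot 2^{-2m} = 2^{-m} = \eps$; i.e., $\TExt$ is an OA-secure $(n,k_1,n,k_2,m,\eps)$ two-source extractor that is both $X_1$-strong and $X_2$-strong. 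Finally, applying Theorem~\ref{thm:SOA_GE} once with $S = \{1\}$ and once with $S = \{2\}$ — each of size $t-1 = 1$ — promotes $X_1$-strong and $X_2$-strong OA security to the corresponding GE security, so $\TExt$ is a GE-secure $(n,k_1,n,k_2,m,\eps)$ two-source extractor, both $X_1$-strong and $X_2$-strong.

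The loose ends are all routine: rounding $4m$ and $(k_1+k_2+1-n-9m)/2$ to integers (absorbed into the constant $20$, at worst making it a slightly larger absolute constant), and noting that Theorem~\ref{thm:SOA_GE}, though stated with a single entropy parameter $k$, applies verbatim when the two sources carry different entropies $k_1 \ne k_2$, since its simulation argument never uses their equality. The only real ``obstacle,'' and it is mild, is the fixed-point calibration in the second paragraph: choosing $k_i' = k_i - cm$ with $c$ too small, or keeping the true error $\eps'$ instead of rounding it up to $\eps'' = 2^{-4m}$, would leave $k_i' + \log(1/\eps')$ strictly larger than $k_i$ and break the chain; the balance $c = 4$, $\eps'' = 2^{-4m}$, and the slack $m \le (k_1+k_2+1-n)/20$ is exactly what makes the parameters close.
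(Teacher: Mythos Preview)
Your proposal is correct and follows exactly the paper's approach: set $k_i' = k_i - 4m$, invoke Theorem~\ref{thm:DEOR} with error (at most) $2^{-4m}$, lift to OA via Theorem~\ref{thm:one_bit_OA}, and then to GE via Theorem~\ref{thm:SOA_GE}. You actually supply more detail than the paper does---in particular, you explicitly verify that the DEOR error $2^{-(k_1'+k_2'+1-n-m)/2}$ is indeed at most $2^{-4m}$ using the constraint $m \le (k_1+k_2+1-n)/20$, and you justify the roles of the $k_1/4$ and $k_2/4$ caps---whereas the paper simply asserts $\eps' = 2^{-4m}$ without spelling out the arithmetic.
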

\begin{proof}
Let $k'_1 = k_1 - 4m$ and $k'_2 = k_2-4m$. Let $\TExt$ be the $(n,k'_1,n,k'_2,m,\eps')$ two source extractor in Theorem~\ref{thm:DEOR} that is $X_1$-strong and $X_2$-strong, where $\eps' = 2^{-4m}$. By Theorem~\ref{thm:one_bit_OA}, $\TExt$ is also a OA-secure $(n, k'+\log(1/\eps'), n, k'+\log(1/\eps'), m, 2^{m}\sqrt{\eps'})$ two-source extractor that is both $X_1$-strong and $X_2$-strong. Note that $k_a \geq k'_a + \log(1/\eps')$ for both $a \in \{1,2\}$, and $2^m \sqrt{\eps'} \geq \eps$. By Theorem~\ref{thm:SOA_GE}, $\TExt$ is also a $(n,k_1,n,k_1,m,\eps)$ GE-secure two-source extractor that is both $X_1$-strong and $X_2$-strong.
\end{proof}

\paragraph{Remark.} With similar arguments, it is not hard to show that the best existential two-source extractor with logarithmic min-entropy (guaranteed by the probabilistic method) is also GE-secure with almost the same parameters. 

\subsection{With one extra independent source} \label{sec:bOA:one_source}

Our second technique is a transformation that uses one extra source to obtain strong OA security, and is particularly useful for the multi-source setting. In fact, it additionally offers several extra advantages: the original multi-source extractor does not need to be strong, yet the resulting extractor is strong for all but the last source, and extracts almost all min-entropy out from the last block. This in turn, allows us to use the output to extract from all but the last source, and extract all min-entropy out from all sources!

This technique relies on the following observation: for any marginally close-to-uniform distribution, one can add an independent quantum min-entropy source and make use of a quantum strong seeded extractor to lift its security. 
Precisely,  the marginal distribution will be used as the seed to extract from the additional independent quantum min-entropy source. Because the extractor is a strong seeded extractor, 
any quantum system that is associated with the marginal distribution can be added back without destroying its security. The following lemma formalizes the above idea. 


\begin{lemma} \label{lem:ind_lift}
Consider two \emph{independent}  cq states $\rho_{X_1E_1}$ and $\rho_{X_2E_2}$ such that $X_1=\01^{n_1}$ and $X_2=\01^{n_2}$ (i.e., the global system $\rho_{X_1X_2E_1E_2}=\rho_{X_1E_1} \ot \rho_{X_2E_2}$). 
Let $f : \01^{n_2} \rightarrow \01^d$ be any classical deterministic function.
If $\Hmin(X_1|E_1)_\rho \geq k$ and the marginal  of  $f(X_2)$ satisfies $\trdist{f(X_2)- \uniform{d}}\leq \delta$, then for any quantum strong $(k,\eps)$ extractor $\Ext : \01^{n_1} \times \01^{d} \rightarrow \01^m$, we have 
\[
  \trdist{\rho_{\Ext(X_1,f(X_2))X_2E_1E_2} -\uniform{m} \ot \rho_{X_2E_1E_2}} \leq \eps+\delta.
\]
Note that $\rho_{X_2E_1E_2}=\rho_{E_1} \ot \rho_{X_2E_2}$.
\end{lemma}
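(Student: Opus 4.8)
The plan is to reduce the claim to the defining property of a quantum strong seeded extractor by a hybrid argument, where the hybrid replaces the imperfect seed $f(X_2)$ with a genuinely uniform and independent seed. First I would set up the two states to be compared: the real state $\sigma \defeq \rho_{\Ext(X_1,f(X_2))X_2E_1E_2}$ and the ideal state $\tau \defeq \uniform{m}\ot\rho_{X_2E_1E_2}$, and introduce the intermediate state $\sigma'$ obtained by running the same experiment but with a fresh uniform seed $Y\in\zo^d$ independent of everything, applied to the source $\rho_{X_1E_1}$. By the triangle inequality for trace distance, $\trdist{\sigma-\tau}\le\trdist{\sigma-\sigma'}+\trdist{\sigma'-\tau}$, and it suffices to bound the first term by $\delta$ and the second by $\eps$.

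For the second term, observe that because $\rho_{X_1X_2E_1E_2}=\rho_{X_1E_1}\ot\rho_{X_2E_2}$, the registers $X_2,E_2$ are independent of $X_1,E_1$ and of the fresh seed $Y$; moreover $\Hmin(X_1|E_1)_\rho\ge k$. Hence Definition~\ref{prelim:def:q_strong_extractor} applied to the cq-state $\rho_{X_1E_1}$ with uniform seed $Y$ gives $\trdist{\rho_{\Ext(X_1,Y)\,Y\,E_1}-\uniform{m}\ot\rho_Y\ot\rho_{E_1}}\le\eps$. Tracing out the seed register $Y$ only decreases trace distance (Fact~\ref{prelim:fact:monotone_trace}, or simply monotonicity under the partial trace), giving $\trdist{\rho_{\Ext(X_1,Y)E_1}-\uniform{m}\ot\rho_{E_1}}\le\eps$; and then tensoring both sides with the independent state $\rho_{X_2E_2}$ leaves the trace distance unchanged (Fact~\ref{fact:trdist:add_prod}). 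Since in $\sigma'$ the output $\Ext(X_1,Y)$ is a deterministic function of $X_1$ and the independent $Y$, and $X_2E_2$ is independent of all of this, we indeed have $\trdist{\sigma'-\uniform{m}\ot\rho_{X_2E_1E_2}}\le\eps$, i.e. $\trdist{\sigma'-\tau}\le\eps$.

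For the first term, I would exhibit a single admissible (classical) super-operator $\Psi$ that maps the state $f(X_2)\ot\rho_{X_1X_2E_1E_2}$-type configuration to $\sigma$ and the state $\uniform{d}\ot\rho_{X_1X_2E_1E_2}$-type configuration to $\sigma'$: namely, $\Psi$ reads the ``seed'' register (whether it holds $f(X_2)$ or a fresh uniform $d$-bit string), applies $\Ext(X_1,\cdot)$ with $X_1$ coming from the $X_1E_1$ part, writes the result into the output register, and discards the seed register. More carefully: start from $\rho_{X_1E_1}\ot\rho_{X_2E_2}$, which contains $X_2$; the channel that appends $f(X_2)$ in a fresh register, applies $\Ext$, and then erases the $f$-register, produces $\sigma$; the channel that instead appends a uniform independent $Y$ does the corresponding thing and produces $\sigma'$. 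These two channels differ only in the distribution fed to the seed register, so $\trdist{\sigma-\sigma'}\le\trdist{f(X_2)\ot(\text{rest})-\uniform{d}\ot(\text{rest})}$ by monotonicity (Fact~\ref{prelim:fact:monotone_trace}); since the ``rest'' is independent of (or, in the case of $f(X_2)$, the only dependence is through $X_2$ which the channel handles), this collapses to $\trdist{f(X_2)-\uniform{d}}\le\delta$ by hypothesis and Fact~\ref{fact:trdist:add_prod}. Combining, $\trdist{\sigma-\tau}\le\delta+\eps$, as claimed; the final identity $\rho_{X_2E_1E_2}=\rho_{E_1}\ot\rho_{X_2E_2}$ is immediate from the product assumption.

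The only delicate point — the ``main obstacle'' — is bookkeeping the correlation structure in the first term: $f(X_2)$ is correlated with the register $X_2$ (and hence $E_2$) that also appears in the final state, so one must be careful that the super-operator realizing the reduction genuinely treats $X_2$ as the source of the seed rather than pretending the seed is independent of $X_2$. Writing the channel so that it \emph{computes} $f$ from the already-present $X_2$ register (rather than receiving $f(X_2)$ as a separate correlated input) resolves this cleanly and makes the monotonicity step apply to the pair $\bigl(f(X_2),\,\text{everything else}\bigr)$ versus $\bigl(\uniform{d},\,\text{everything else}\bigr)$ with the ``everything else'' part held fixed, so that the distance is exactly $\trdist{f(X_2)-\uniform{d}}$.
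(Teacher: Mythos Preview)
Your second term ($\trdist{\sigma'-\tau}\le\eps$) is fine, but the first term is where the argument breaks. You claim $\trdist{\sigma-\sigma'}\le\delta$ via monotonicity, yet there is no single admissible map that does what you need. If $\Psi$ reads a separate seed register $S$ and outputs $(\Ext(X_1,S),X_2,E_1,E_2)$, then to obtain $\sigma$ you must feed it the \emph{joint} state $\rho_{f(X_2)X_1X_2E_1E_2}$ (with $S=f(X_2)$ correlated to $X_2$), and to obtain $\sigma'$ you feed it $\uniform{d}\ot\rho_{X_1X_2E_1E_2}$. Monotonicity then gives only
\[
\trdist{\sigma-\sigma'}\le \trdist{\rho_{f(X_2)X_2E_2}-\uniform{d}\ot\rho_{X_2E_2}},
\]
and the right-hand side is \emph{not} $\trdist{f(X_2)-\uniform{d}}$: take $f$ the identity and $X_2$ uniform so that $\delta=0$, yet the right-hand side equals $1-2^{-d}$. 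Your proposed fix in the last paragraph --- letting the channel compute $f$ from the already-present $X_2$ --- cannot help either, because then the \emph{same} channel on the \emph{same} input produces $\sigma$, and there is no input on which it produces $\sigma'$ (which requires a seed independent of $X_2$). So the two descriptions you give are two different channels on the same input, and monotonicity does not compare those.

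The paper avoids this hybrid altogether. It expands the target distance directly over $x_2$: by the product structure,
\[
\trdist{\sigma-\tau}=\sum_{x_2}p_{x_2}\,\trdist{\rho_{\Ext(X_1,f(x_2))E_1}-\uniform{m}\ot\rho_{E_1}}=\sum_{x_2}p_{x_2}\,w_{f(x_2)}=\sum_{z}p_z\,w_z,
\]
where $w_y$ is the per-seed error and $p_z=\Pr[f(X_2)=z]$. The strong extractor guarantee is exactly $\sum_z 2^{-d}w_z\le\eps$, and since $0\le w_z\le 1$ one gets $\sum_z p_z w_z\le \sum_z 2^{-d}w_z+\sum_{z:p_z>2^{-d}}(p_z-2^{-d})\le\eps+\delta$. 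The point is that the $X_2$-register is handled by conditioning, not by a data-processing step; this is precisely what your hybrid cannot do while keeping $X_2$ in the output.
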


\begin{proof}
First note that since $\Ext$ is a quantum strong $(k,\eps)$ extractor and $\Hmin(X_1|E_1) \geq k$, we have that for an independent seed $\rho_{Y}=\uniform{d}$, 
\begin{equation*}
  \trdist{\rho_{\Ext(X_1,Y)YE_1} - \uniform{m} \ot \rho_{Y} \ot \rho_{E_1}} \leq \eps,
\end{equation*}
which is equivalent to
\begin{equation} \label{eqn:ext_decomp}
  w_y \defeq \trdist{\rho_{\Ext(X,y)yE_1} -\uniform{m} \ot \ketbra{y}\ot  \rho_{E_1}}
  \text{ and } \sum_{y \in \01^{d}} \frac{1}{2^{d}} w_y \leq \eps.
\end{equation}
Moreover, let $\rho_{f(X_2)X_2E_2}=\sum_{x_2 \in \01^{n_2}} p_{x_2} \ketbra{f(x_2), x_2} \ot \rho^{x_2}_{E_2}$. For each $x_2 \in \01^{n_2}$, define
\begin{equation}
  u_{x_2}\defeq \trdist{ \rho_{\Ext(X_1, f(x_2))f(x_2)x_2E_1E_2}-\uniform{m} \ot \ketbra{f(x_2),x_2} \ot \rho_{E_1E_2}^{x_2}}.
\end{equation}
Note that $\rho_{\Ext(X_1, f(x_2))f(x_2)x_2E_1E_2}=\rho_{\Ext(X_1,f(x_2))f(x_2)E_1} \ot \ketbra{x_2} \ot \rho^{x_2}_{E_2}$ and $\rho_{E_1E_2}^{x_2}=\rho_{E_1} \ot \rho_{E_2}^{x_2}$. 
Hence, by Fact~\ref{fact:trdist:add_prod}, for each $x_2 \in \01^{n_2}$,
\begin{eqnarray}
  \nonumber u_{x_2} & = & \trdist{ \rho_{\Ext(X_1, f(x_2))f(x_2)x_2E_1E_2}-\uniform{m} \ot \ketbra{f(x_2),x_2} \ot \rho_{E_1E_2}^{x_2}} \\ 
  & = & \trdist{\rho_{\Ext(X_1, f(x_2))f(x_2)E_1}-\uniform{m} \ot \ketbra{f(x_2)} \ot \rho_{E_1}}=w_{f(x_2)}. \label{eqn:u_x_w_f}
\end{eqnarray}
By Fact~\ref{fact:trdist:c_decomp},  observe that 
\begin{eqnarray*}
 & & \trdist{\rho_{\Ext(X_1,f(X_2))X_2E_1E_2} -\uniform{m} \ot  \rho_{X_2E_1E_2}} \\
 & = &\trdist{\rho_{\Ext(X_1,f(X_2))f(X_2)X_2E_1E_2} -\uniform{m} \ot  \rho_{f(X_2)X_2E_1E_2}} =\sum_{x_2 \in \01^{n_2}} p_{x_2} u_{x_2}.
 \end{eqnarray*}
By (\ref{eqn:u_x_w_f}), it is easy to see that 
\[
  \sum_{x_2 \in \01^{n_2}} p_{x_2} u_{x_2}=\sum_{z \in \01^d} \sum_{f(x_2)=z} p_{x_2}u_{x_2}
  =\sum_{z \in \01^d} w_z \sum_{f(x_2)=z} p_{x_2}=\sum_{z \in \01^d} p_z w_z,
\]
in which $p_z$ denotes the marginal distribution of $f(X_2)$. Finally, we have 
 \begin{eqnarray*}
 \sum_{x_2 \in \01^{n_2}} p_{x_2} u_{x_2} & = & \sum_{z \in \01^d} p_z w_z =  \sum_{z \in \01^d} \frac{1}{2^d}w_z + \sum_{z \in \01^d} (p_z -\frac{1}{2^d})w_z \\
 & \leq & \eps+ \sum_{z: p_{z}> \frac{1}{2^{d}}} (p_{z}- \frac{1}{2^{d}}) \leq \eps+\delta,
 \end{eqnarray*}
where the first inequality is because of (\ref{eqn:ext_decomp}) and $0\leq w_z \leq 1$ and the second inequality is due to $\trdist{Z=f(X_2)- \uniform{d}}\leq \delta$ and (\ref{eqn:fact:c_dist}). 
\end{proof}

Now we present a general construction of strong OA-secure multi-source extractors 
from a classical independent source extractor and a quantum strong seeded extractor, which requires one more independent source but can match almost all parameters of classical independent source extractors.

\begin{figure}
\begin{protocol*}{Multi-source Extractor $\QMExt$}
\begin{description}
\item Let $\IExt: (\zo^n)^t \rightarrow \zo^m$ be a classical $(t, n,k, m,\eps_1)$ independent source extractor. 
\item Let $\Ext_q: \zo^n \times \zo^m \rightarrow \zo^l$ be a quantum strong $(k, \eps_2)$ seeded extractor.
\item Construct $\QMExt: (\zo^n)^{t+1} \rightarrow \zo^l$ as follows:
\end{description}
\begin{step}
\item Let $Z=\IExt(X_1,\cdots, X_t)$. 
\item $\QMExt(X_1, \cdots, X_t, X_{t+1})\defeq \Ext_q(X_{t+1}, Z)= \Ext_q(X_{t+1}, \IExt(X_1, \cdots, X_t))$. 
\end{step}
\end{protocol*}
\caption{Construction of $\QMExt$ from any classical independent source extractor $\IExt$.}
\label{fig:QIndExt}
\end{figure}

\begin{theorem}   \label{thm:Ind_QIExt}
Let $\IExt: (\01^n)^t \rightarrow \01^m$ be any classical $(t, n,k,m,\eps_1)$ independent source extractor. Let $\Ext_q: \zo^n \times \zo^m \rightarrow \zo^l$ be a quantum strong $(k, \eps_2)$ randomness extractor. Then $\QMExt$ (constructed in Fig.~\ref{fig:QIndExt}) is an OA-secure $(t+1, n,k, l, \eps_1+\eps_2)$ multi-source extractor. Moreover, $\QMExt$ is $X_S$-strong for $S=\{1, \cdots, t\}$.  
\end{theorem}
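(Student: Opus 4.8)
The plan is to reduce the statement to Lemma~\ref{lem:ind_lift} by splitting on which of the $t+1$ sources the one-sided adversary chooses to attack. Since an $X_S$-strong extractor with $S=\{1,\cdots,t\}$ is a fortiori an ordinary OA-secure extractor (trace out $X_S$), it suffices to prove the strong bound. So fix an OA-$(t+1,n,k)$ source: by definition of the OA model the adversary picks an index $i^*\in[t+1]$, the side information is $\Adv=E_{i^*}$, and we are guaranteed $\Hmin(X_{i^*}|E_{i^*})_\rho\geq k$, $\Hmin(X_j)\geq k$ for every $j\neq i^*$, and --- since $\Phi_{i^*}$ is an $X_{i^*}$-controlled operation and $A_{i^*}$ is independent of all sources --- the pair $(X_{i^*},E_{i^*})$ is jointly independent of $(X_j)_{j\neq i^*}$, while the marginal distribution of every $X_j$ is left intact. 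In particular $\Hmin(X_{i^*})\geq\Hmin(X_{i^*}|E_{i^*})_\rho\geq k$, so in all cases $X_1,\cdots,X_t$ are $t$ independent $(n,k)$ sources and marginal security of $\IExt$ gives $\trdist{Z-\uniform{m}}\leq\eps_1$ for $Z=\IExt(X_1,\cdots,X_t)$ (any side information on one of the first $t$ sources plays no role in this purely marginal statement about $Z$).

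Next I would apply Lemma~\ref{lem:ind_lift} with $f=\IExt$ (so $d=m$ and $\delta=\eps_1$), with $\Ext=\Ext_q$ (error $\eps_2$), with the ``source side'' being $X_{t+1}$ and the ``seed side'' being $(X_1,\cdots,X_t)$, routing the single block of side information to whichever side carries it. If $i^*=t+1$, instantiate the lemma's $\rho_{X_1E_1}$ by $\rho_{X_{t+1}E_{t+1}}$ (which has conditional min-entropy $\geq k$) and $\rho_{X_2E_2}$ by the classical state $X_1\cdots X_t$ with $E_2$ trivial; this is legitimate because $(X_{t+1},E_{t+1})$ is independent of $(X_1,\cdots,X_t)$. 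If $i^*\in[t]$, instead take $\rho_{X_1E_1}=X_{t+1}$ with $E_1$ trivial (still min-entropy $\geq k$) and $\rho_{X_2E_2}=\rho_{X_1\cdots X_tE_{i^*}}$, which is again independent of $X_{t+1}$. In either case Lemma~\ref{lem:ind_lift} yields
\[
  \trdist{\rho_{\Ext_q(X_{t+1},\IExt(X_1,\cdots,X_t))\,X_1\cdots X_t\,E_{i^*}}-\uniform{l}\ot\rho_{X_1\cdots X_tE_{i^*}}}\leq\eps_1+\eps_2 ,
\]
and since $\QMExt(X_1,\cdots,X_t,X_{t+1})=\Ext_q(X_{t+1},\IExt(X_1,\cdots,X_t))$ and $\Adv=E_{i^*}$, this is precisely the $X_S$-strong OA-security bound of Definition~\ref{def:QMExt} for $S=\{1,\cdots,t\}$.

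I do not expect a genuine obstacle; the proof is essentially bookkeeping around Lemma~\ref{lem:ind_lift}. The two points worth stating carefully are: (i) attacking one of the first $t$ sources still leaves $\IExt$ with an honest tuple of independent $(n,k)$ sources, so $Z$ is $\eps_1$-close to uniform as a marginal --- this is exactly where we use that $\IExt$ need only be marginally secure, never strong; and (ii) the ``strongness over $X_1,\cdots,X_t$'' of $\QMExt$ comes entirely from $\Ext_q$ being a \emph{strong} seeded extractor, via the observation (the identity $u_{x_2}=w_{f(x_2)}$ inside the proof of Lemma~\ref{lem:ind_lift}) that conditioning on all of $X_1,\cdots,X_t$ is merely a refinement of conditioning on the seed $Z=\IExt(X_1,\cdots,X_t)$, while $X_{t+1}$ stays independent of both.
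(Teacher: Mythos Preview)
Your proposal is correct and follows essentially the same argument as the paper: split on whether the one-sided adversary attacks one of the first $t$ sources or $X_{t+1}$, use marginal security of $\IExt$ to get $Z$ close to uniform, and then invoke Lemma~\ref{lem:ind_lift} with the side information attached to whichever side carries it. The paper's proof is slightly terser but the case analysis, the key lemma, and the resulting bound are identical.
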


\begin{proof}
Consider any $t+1$ independent sources $X_1, \cdots, X_{t+1} \in \01^n$ and the quantum side information $\rho_{\Adv}$  that is generated in the OA model. Note that in this case $\Adv=E_{i^*}$ for a single $i^* \in [t+1]$. By definition, 
 $\Hmin(X_i)\geq \Hmin(X_i|E_i)_{\rho_i}=\Hmin(X_i|E_i)_\rho \geq k$ for each $i \in [t+1]$, and moreover $X_1, \cdots X_t$ are independent. 
Because $\IExt$ is a $(t, n,k,m,\eps_1)$ independent source extractor, by definition, we have 
\[
  \trdist{\IExt(X_1,\cdots, X_t)-\uniform{m}} \leq \eps_1. 
\]
Let $Z=\IExt(X_1,\cdots, X_t)$. Hence, we have $\trdist{Z- \uniform{m}}\leq \eps_1$.  If $i^* \in [t]$, then we have $\rho_{X_1\cdots X_t\Adv}$ and $\rho_{X_{t+1}}$ are independent cq states and $\Hmin(X_{t+1}) \geq k$. Otherwise, we have $i^*=t+1$, and $\rho_{X_1\cdots X_t}$ and $\rho_{X_{t+1}\Adv}$ are independent cq states and $\Hmin(X_{t+1}|\Adv)_\rho \geq k$.   
In either case,  by Lemma~\ref{lem:ind_lift}, we have 
\[ 
   \trdist{\rho_{\Ext(Z,X_{t+1})X_1\cdots X_t\Adv} -\uniform{l} \ot \rho_{X_1\cdots X_t \Adv}} \leq \eps_1+\eps_2,
\]
which, by definition, completes the proof. 
\end{proof}


\subsection*{Instantiations}

Here we apply Theorem~\ref{thm:Ind_QIExt} and Theorem~\ref{thm:SOA_GE} to lift the security of existing (marginally secure) multi-sources extractors to obtain strong GE-secure extractors that extract all min-entropy out. 

The best known multi-source extractor is Li's extractor~\cite{Li13b}, which can extract randomness from a constant number of sources with entropy as low as $\polylog(n)$. 

\begin{theorem}[Li's Extractor \cite{Li13b}] \label{thm:Li_IExt}
For every constant $\eta > 0$ and all $n,k \in \N$ with $k \geq \log^{2+\eta} n$, there exists an explicit $(t,n,k,m,\eps)$ independent source extractor with $m = \Omega(k)$, $t = O(1/\eta) + O(1)$, and $\eps = 1/\poly(n) + 2^{-k^{\Omega(1)}}$. 
\end{theorem}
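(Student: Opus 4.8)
This is a result of prior work and in the paper it will simply be invoked from \cite{Li13b}; here I sketch the route I would take to re-prove it. The overall strategy is the standard reduction of independent-source extraction to extraction from an \emph{aligned somewhere-random source}, pushed down to the polylogarithmic-entropy regime. First I would use two of the sources, say $X_1,X_2$, to build a matrix $W$ whose rows are all functions of $X_1$ alone and a $(1-\sqrt{\eps})$-fraction of whose rows are $\sqrt{\eps}$-close to uniform: concretely, apply a strong seeded extractor (the classical analogue of Theorem~\ref{thm:Ext1}) to $X_1$ with every seed of length $O(\log n)$, so that since $X_1$ has entropy $k\gg\log n$, strongness together with a Markov argument forces most of the $\poly(n)$ resulting rows to be individually close to uniform. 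Because every row of $W$ depends on $X_1$ only and $X_1$ is independent of the other sources, $W$ is independent of $X_2,\dots,X_t$, and its alignment with the low-entropy source $X_1$ is exactly what will let a constant number of further sources finish the job despite $W$ having $\poly(n)$ rows.

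Next I would feed $W$ together with the fresh independent sources $X_3,\dots,X_t$ into an extractor for a somewhere-random source plus weak sources, in the style of \cite{Rao06,BarakRSW06}, to output $m=\Omega(k)$ bits. Carrying strongness through each intermediate extraction — so that, conditioned on the inputs already consumed, the sources not yet used remain independent with only $o(k)$ entropy loss — keeps the total error at $1/\poly(n)+2^{-k^{\Omega(1)}}$, and accounting for the sources consumed across the condensing and merging rounds gives $t=O(1/\eta)+O(1)$.

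The main obstacle is this second step in the polylogarithmic-entropy regime: $W$ has $\poly(n)$ rows, far too many to be collapsed row-by-row by classical mergers with only $O(1)$ sources available, and a naive iterated scheme costs $\Theta(\log n)$ rounds rather than $O(1/\eta)$. The technical heart of \cite{Li13b} is a low-entropy merger / correlation breaker built on \emph{alternating extraction} that exploits the alignment of $W$ with $X_1$; designing that primitive and maintaining the invariant ``$O(1)$ effective rows, $o(k)$ entropy loss per round'' through $O(1/\eta)$ rounds is where essentially all the work lies, and I would expect the individual extractor instantiations (Theorem~\ref{thm:Ext1}, Theorem~\ref{thm:Raz}, and the sum-product somewhere-condenser of \cite{BarakKSSW05,Zuc07}) to be routine by comparison.
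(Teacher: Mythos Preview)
You are correct that the paper does not prove this statement: it is stated as a black-box citation of \cite{Li13b} and used downstream without any argument. Your acknowledgment of this is exactly right, and your sketch of how one would re-derive the result---build an aligned somewhere-random matrix from one or two sources, then use an alternating-extraction-based merger/correlation-breaker to collapse the $\poly(n)$ rows with only $O(1/\eta)$ additional sources---is a faithful high-level summary of the route taken in \cite{Li13b}. Since the paper supplies no proof of its own here, there is nothing further to compare.
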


By using the quantum strong seeded extractor from Theorem~\ref{thm:Ext1} in Theorem~\ref{thm:Ind_QIExt} (and applying Theorem~\ref{thm:SOA_GE}), we obtain strong GE-secure version of Li's extractor with improved output length.  

\begin{theorem}[GE-secure Li's Extractor]  \label{thm:Li_GE} For every constant $\eta > 0$ and all $n,k \in \N$ with $k \geq \log^{2+\eta} n$, there exists an explicit $(t+1,n,k,m,\eps)$ independent source extractor with $m = k- o(k)$, $t = O(1/\eta) + O(1)$, and $\eps = 1/\poly(n) + 2^{-k^{\Omega(1)}}$. Furthermore, the extractor is $X_{S}$-strong for $S = \{1,\dots,t\}$.
\end{theorem}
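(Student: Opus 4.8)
The plan is to obtain Theorem~\ref{thm:Li_GE} as a direct instantiation of the generic construction in Figure~\ref{fig:QIndExt}: use Li's classical independent source extractor (Theorem~\ref{thm:Li_IExt}) as the classical ingredient $\IExt$ and the DPVR quantum strong seeded extractor (Theorem~\ref{thm:Ext1}) as the quantum ingredient $\Ext_q$, then apply Theorem~\ref{thm:Ind_QIExt} to get an $X_S$-strong \emph{OA}-secure $(t+1)$-source extractor, and finally invoke the OA-to-GE equivalence (Theorem~\ref{thm:SOA_GE}) to upgrade it to $X_S$-strong \emph{GE}-security with no loss of parameters.

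Concretely, fix $\eta>0$ and let $t=O(1/\eta)+O(1)$ be the source count from Theorem~\ref{thm:Li_IExt}, so that for $k\ge \log^{2+\eta}n$ we have an explicit classical $(t,n,k,m_0,\eps_1)$ independent source extractor with $m_0=\Omega(k)$ and $\eps_1=1/\poly(n)+2^{-k^{\Omega(1)}}$. I would pick a target seeded-extractor error $\eps_2$ that is small but not too small --- for instance $\eps_2=1/\poly(n)$ (or, if a smaller error is desired, $\eps_2=2^{-k^{1/4}}$) --- so that on the one hand $4\log(1/\eps_2)+O(1)\le k$ (so that Theorem~\ref{thm:Ext1} applies) and $\log(1/\eps_2)=o(k)$ (so that the output length will be $k-o(k)$), and on the other hand $\log(1/\eps_2)$ is small enough that the DPVR seed length $d=O(\log^2(n/\eps_2)\log l)$ satisfies $d=o(k)\le m_0$. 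The latter bound is where the hypothesis $k\ge\log^{2+\eta}n$ is used: it gives $\log n\le k^{1/(2+\eta)}$ with $1/(2+\eta)<1/2$, hence $\log^2(n/\eps_2)\log l=k^{1-\Omega(1)}\cdot O(\log k)=o(k)$. Since $d\le m_0$, I would truncate Li's output to its first $d$ bits; the truncation is still an explicit classical $(t,n,k,d,\eps_1)$ independent source extractor (restricting a distribution that is $\eps_1$-close to uniform on $m_0$ bits to its first $d$ coordinates keeps it $\eps_1$-close to uniform on $d$ bits), and we feed it as the seed into $\Ext_q:\zo^n\times\zo^d\to\zo^l$ with $l=k-4\log(1/\eps_2)-O(1)=k-o(k)$.

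With these two ingredients, Theorem~\ref{thm:Ind_QIExt} yields the extractor $\QMExt$ of Figure~\ref{fig:QIndExt}, which is an OA-secure $(t+1,n,k,l,\eps_1+\eps_2)$ multi-source extractor that is $X_S$-strong for $S=\{1,\dots,t\}$. Because $|S|=t=(t+1)-1$, Theorem~\ref{thm:SOA_GE} applies verbatim and makes $\QMExt$ an $X_S$-strong GE-secure $(t+1,n,k,l,\eps_1+\eps_2)$ multi-source extractor with identical parameters. Setting $m:=l=k-o(k)$ and $\eps:=\eps_1+\eps_2=1/\poly(n)+2^{-k^{\Omega(1)}}$, and noting that $t+1=O(1/\eta)+O(1)$ and that $\QMExt$, being the composition of two explicit maps, is explicit, gives exactly the claimed statement.

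The only non-trivial part is the parameter bookkeeping in the second step: one must choose $\eps_2$ in the narrow window that simultaneously keeps $\log(1/\eps_2)=o(k)$ (for output length $k-o(k)$) and keeps the DPVR seed length below Li's output length $m_0=\Omega(k)$, which is exactly where $k\ge\log^{2+\eta}n$ enters. Everything else is immediate from the cited results: the OA-to-GE equivalence and the strong-seeded-extractor security-lifting lemma underlying Theorem~\ref{thm:Ind_QIExt} do all the real work, so there is no genuine conceptual obstacle here.
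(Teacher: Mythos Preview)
Your proposal is correct and follows exactly the approach the paper takes: instantiate the generic construction of Theorem~\ref{thm:Ind_QIExt} with Li's extractor (Theorem~\ref{thm:Li_IExt}) and the DPVR quantum seeded extractor (Theorem~\ref{thm:Ext1}), then apply the OA-to-GE equivalence (Theorem~\ref{thm:SOA_GE}). The paper states this in one line without spelling out the parameter bookkeeping; your careful check that the DPVR seed length $d=O(\log^2(n/\eps_2)\log l)=o(k)$ fits under Li's output length $m_0=\Omega(k)$ (using $k\ge\log^{2+\eta}n$ to get $\log^2 n=k^{2/(2+\eta)}=o(k)$) is exactly the detail one needs to verify, and it is correct.
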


The downside of  Li's extractor is that it has at least $1/\poly(n)$ error. The following extractor from~\cite{BarakRSW06,Rao06} achieves (sub-)exponentially small error, but uses $O(\log n / \log k)$ sources.

\begin{theorem}[BRSW Extractor\cite{BarakRSW06,Rao06}] For any $n,k \in \N$ with $k \geq \log^{10}n$, there exists an explicit $(t,n,k,m,\eps)$ independent source extractor with $m = \Omega(k)$, $t = O(\log n / \log k)$, and $\eps = 2^{-k^{\Omega(1)}}$.
\end{theorem}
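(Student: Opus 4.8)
Since this is the classical independent-source extractor of Barak--Rao--Shaltiel--Wigderson and of Rao, the plan is to recall their construction: build the extractor from a strong seeded extractor, somewhere-random-source mergers, and the challenge-response mechanism, and run a recursion on the number of rows of an intermediate somewhere-random source.

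First I would condense one source to a somewhere-random source. Take $X_1$ and a strong seeded extractor $\Ext\colon \zo^n \times \zo^d \to \zo^w$ for entropy $k$ with seed length $d = O(\log n)$ (so $2^d = \poly(n)$) and width $w = k^{\Omega(1)}$; then the $\poly(n) \times w$ matrix whose $s$-th row is $\Ext(X_1, s)$ is a convex combination of somewhere-random sources $Y^{(0)}$ with $N_0 = \poly(n)$ rows, at cost only $2^{-k^{\Omega(1)}}$ in error and nothing in the source count. The core is a single merge step: given a somewhere-random source $Y$ with $N$ rows of width $\ge k^{\Omega(1)}$ and an independent $(n,k)$-source $X$, group the rows of $Y$ into $N/k^{\Omega(1)}$ blocks, use the challenge-response mechanism to locate the entropy, and apply a merger (Raz's or Rao's) inside each block that combines its rows into one output row using fresh randomness extracted from $X$; with error $2^{-k^{\Omega(1)}}$ this produces a new somewhere-random source $Y'$ with $N' = N/k^{\Omega(1)}$ rows of width $k^{\Omega(1)}$ that is independent of the remaining sources.

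I would then recurse, applying the merge step with fresh sources $X_2, X_3, \dots$. Since each step divides the row count by $k^{\Omega(1)}$ and $N_0 = \poly(n)$, after $t = O(\log N_0 / \log k) = O(\log n / \log k)$ steps the somewhere-random source has a single row, i.e.\ a string $2^{-k^{\Omega(1)}}$-close to uniform; feeding the last merger a seed that still carries $\Omega(k)$ bits of entropy makes this output row have length $m = \Omega(k)$. The errors add over the $O(t)$ rounds and remain $2^{-k^{\Omega(1)}}$, and every map is computable in polynomial time, so the extractor is explicit.

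The hardest part is making the merge step work in the low-entropy regime $k = \polylog(n)$: the row width and the entropy available to seed the merger are both only $k^{\Omega(1)}$, so one must ensure that a merge step loses at most a constant factor of entropy and does not inflate the error past $2^{-k^{\Omega(1)}}$, and that the challenge-response subroutine still separates good rows from bad ones although the ``response'' strings have only $k^{\Omega(1)}$ bits. Balancing these parameters consistently across $\Theta(\log n / \log k)$ rounds is precisely the technical heart of the BRSW/Rao argument.
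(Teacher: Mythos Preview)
The paper does not prove this theorem at all: it is stated purely as a citation of prior work from \cite{BarakRSW06,Rao06}, with no accompanying proof or proof sketch, and is used only as a black box (see the instantiation immediately following it, Theorem~\ref{thm:BRSW_GE}). So there is no ``paper's own proof'' to compare your proposal against.

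That said, your outline is a reasonable high-level summary of the Rao/BRSW construction: condense one source to a somewhere-random matrix with $\poly(n)$ rows, then use each additional independent source to drive a merge step that reduces the row count by a $k^{\Omega(1)}$ factor, terminating after $O(\log n/\log k)$ sources. Your identification of the delicate point---maintaining $k^{\Omega(1)}$ row width and $2^{-k^{\Omega(1)}}$ error uniformly across all rounds when $k=\polylog(n)$---is accurate. If you were actually writing this up rather than citing it, you would want to be more precise about which merger is used (Rao's construction avoids the challenge-response mechanism of \cite{BarakRSW06} in favor of a direct merger based on somewhere-random sources, which is why the two papers are cited together), but for the purposes of this paper the citation suffices.
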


As before, using the quantum strong seeded extractor from Theorem~\ref{thm:Ext1} in Theorem~\ref{thm:Ind_QIExt} (and applying Theorem~\ref{thm:SOA_GE}), we obtain strong GE-secure version of BRSW extractor with improved output length.

\begin{theorem}[GE-secure BRSW Extractor\cite{BarakRSW06,Rao06}] \label{thm:BRSW_GE} For any $n,k \in \N$ with $k \geq \log^{10}n$, there exists an explicit $(t+1,n,k,m,\eps)$ independent source extractor with $m = k- o(k)$, $t = O(\log n / \log k)$, and $\eps = 2^{-k^{\Omega(1)}}$. Furthermore, the extractor is $X_{S}$-strong for $S = \{1,\dots,t\}$.
\end{theorem}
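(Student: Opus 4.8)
The plan is to follow the proof of Theorem~\ref{thm:Li_GE} verbatim, with the BRSW extractor of~\cite{BarakRSW06,Rao06} playing the role of the underlying classical independent source extractor. Concretely, I would instantiate the generic construction $\QMExt$ of Figure~\ref{fig:QIndExt}: let $\IExt$ be the BRSW $(t,n,k,m_0,\eps_1)$ independent source extractor, which exists since $k \ge \log^{10} n$ and has $m_0 = \Omega(k)$, $t = O(\log n/\log k)$, and $\eps_1 = 2^{-k^{\Omega(1)}}$; and let $\Ext_q$ be the quantum strong $(k,\eps_2)$-seeded extractor of Theorem~\ref{thm:Ext1}. Theorem~\ref{thm:Ind_QIExt} then shows directly that the resulting $(t+1)$-source extractor $\QMExt$ is OA-secure with error $\eps_1+\eps_2$, is $X_S$-strong for $S=\{1,\dots,t\}$, and has output length equal to that of $\Ext_q$. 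Since $|S| = (t+1)-1$, Theorem~\ref{thm:SOA_GE} upgrades this to a strong GE-secure extractor with the very same parameters, which is the statement to be proved.

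The only bookkeeping that requires care is compatibility of the two building blocks: the seed fed to $\Ext_q$ is the output of $\IExt$, so I need $\Ext_q$'s seed length $d$ to be at most $m_0$, and I need $\Ext_q$'s output length $m = k - 4\log(1/\eps_2) - O(1)$ to be $k - o(k)$ while keeping $\eps_1 + \eps_2 = 2^{-k^{\Omega(1)}}$. Choosing $\eps_2 = 2^{-k^{1/4}}$ does all of this: then $\log(1/\eps_2) = k^{1/4}$, so $m = k - 4k^{1/4} - O(1) = k - o(k)$ and $\eps_1 + \eps_2 = 2^{-k^{\Omega(1)}}$; moreover $k \ge \log^{10} n$ gives $\log n \le k^{1/10}$, hence $d = O(\log^2(n/\eps_2)\log m) = O\big((k^{1/10}+k^{1/4})^2\log k\big) = o(k) \le m_0$ for all large $k$. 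One then simply uses the first $d$ bits of $\IExt(X_1,\dots,X_t)$ as the seed; a prefix of an $\eps_1$-close-to-uniform string is itself $\eps_1$-close to uniform, so the hypothesis of Theorem~\ref{thm:Ind_QIExt} is met with the same $\eps_1$. Finally $k \ge 4\log(1/\eps_2)+O(1)$ holds for large $k$, so Theorem~\ref{thm:Ext1} is applicable.

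I do not expect a genuine obstacle: this is exactly the ``one-extra-source'' reduction already executed for Li's extractor, so the proof reduces to the verification above. The mildest subtlety is the triple constraint on $\eps_2$ --- small enough that $d \le m_0$ and that the final error is still sub-exponential, yet large enough that the entropy loss $4\log(1/\eps_2)$ is $o(k)$ --- and the displayed choice $\eps_2 = 2^{-k^{1/4}}$ resolves it; any $\eps_2 = 2^{-k^{\beta}}$ with $0 < \beta < 1/2$ would serve equally well, and tracking the $4k^{1/4}+O(1)$ term makes the $o(k)$ in $m = k - o(k)$ fully explicit.
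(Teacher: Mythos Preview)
Your proposal is correct and follows exactly the approach the paper takes: instantiate the generic one-extra-source construction of Theorem~\ref{thm:Ind_QIExt} with the BRSW extractor as $\IExt$ and the quantum strong seeded extractor of Theorem~\ref{thm:Ext1} as $\Ext_q$, then invoke Theorem~\ref{thm:SOA_GE} to pass from strong OA to strong GE security. The paper merely states this in one line without the parameter bookkeeping; your explicit choice $\eps_2 = 2^{-k^{1/4}}$ and the verification that $d \le m_0$, $m = k - o(k)$, and $\eps_1+\eps_2 = 2^{-k^{\Omega(1)}}$ fill in precisely the details the paper omits.
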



\subsection{With one extra block in block-sources} 
\label{sec:bOA:one_block}



In the context of block+general source extractors (e.g., \cite{BarakRSW06}), our third technique is to add one extra block to the existing block source and make use of one classical and one quantum strong seeded extractor to lift its security. 
Comparing to the technique in Section~\ref{sec:bOA:one_source}, we only require to add one extra block that is not independent of existing sources.
Thus, it is conceivable that we need more complicated techniques to obtain strong OA security in this case. 
To that end, we make use of the technique called \emph{alternating extraction}, and moreover, a quantum strong seeded extractor at the last step to achieve this goal. As in Section~\ref{sec:bOA:one_source}, we are able to improve the output length to extract all min-entropy out, but this time, we need to start from a strong block+general source extractor. 


\begin{figure}
\begin{protocol*}{Block + General Source Extractor $\QBSAExt$}
\begin{description}
\item Let $\BExt: \zo^{n_1} \times \zo^{n_3} \rightarrow \zo^{m_1}$ be a classical block+general source extractor that works with a $k_1$-block-source and an independent $(n_3, k_3)$ source. Moreover, the extractor is \emph{strong} in the block-source side and with error $\eps_1$. 
\item Let $\Ext_c: \zo^{n_2} \times \zo^{m_1'} \rightarrow \zo^{m_2}$ be a classical strong $(k_2, \eps_2)$ seeded extractor.
\item Let $\Ext_q: \zo^{n_3} \times \zo^{m_2} \rightarrow \zo^l$ be a quantum strong $(0.9k_3, \eps_3)$ seeded extractor.
\item Let $(X_1, X_2) \in \01^{n_1} \times \01^{n_2}$ be a block-source with $C+1$ blocks in which $X_1$ contains $k_1$-block-source with $C$ blocks and $X_2$ is the last block and an $(n_2, k_2)$ source conditioned on $X_1$. 
Let $X_3 \in \01^{n_3}$ be an independent min-entropy source.  Both $(X_1,X_2)$ and $X_3$ could have quantum side information. 
\item Construct $\QBSAExt: \zo^{n_1+n_2} \times \zo^{n_3} \rightarrow \zo^l$ as follows:
\end{description}
\begin{step}
\item Apply $\BExt$  to  obtain $R$ that is the first $0.05k_3$ bits of $\BExt(X_1, X_3)$. (i.e., $m_1'<m_1$. )
\item Alternating Extraction: let $T=\Ext_c(X_2, R)$ and $Z=\Ext_q(X_3, T)$.
\item $\QBSAExt( (X_1,X_2), X_3) \defeq Z$. 
\end{step}
\end{protocol*}
\caption{Construction of $\QBSAExt$ from any classical block + general source extractor $\BExt$.}
\label{fig:QBSAExt}
\end{figure}


\begin{theorem}\label{thm:one_extrac_block}
Let $\BExt: \zo^{n_1} \times \zo^{n_3} \rightarrow \zo^{m_1}$ be a classical block+general source extractor that makes use of a $k_1$-block source with $C$ blocks and one extra $k_3$-source to output $m_1$ uniform bits with error $\eps_1$. Moreover, $\BExt$ is \emph{strong} in the block-source side. Then $\QBSAExt: \zo^{n_1+n_2} \times \zo^{n_3} \rightarrow \zo^l$ constructed in Fig.~\ref{fig:QBSAExt} is an OA-secure block+general source extractor that makes use of a $(k_2, k_1, \cdots, k_1)$-block source with $C+1$ blocks and one extra $k_3$-source, both entropy measured in the OA model, to output $l$ uniform bits with error  $4\eps_1+2\eps_2 + \eps_3 + 2^{-\Omega(k_3)}$. Moreover, $\QBSAExt$ is \emph{strong} in the block source $(X_1, X_2)$. 
\end{theorem}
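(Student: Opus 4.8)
The plan is to prove OA-security together with $(X_1,X_2)$-strongness by an \emph{alternating extraction} analysis, organized around which source the one-sided adversary leaks from: either (Case A) from $X_3$, or (Case B) from (some part of) the block source $(X_1,X_2)$. Write $E$ for the resulting quantum side information, and note that $X_3$ is independent of $(X_1,X_2,E)$ in Case B, while $(X_1,X_2)$ is independent of $(X_3,E)$ in Case A.

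The first ingredient is a collection of ``typical fixing'' statements for the intermediate values. Since $\BExt$ is strong on the block-source side, $(R,X_1)$ is $\eps_1$-close to $\uniform{m_1'}\ot X_1$, so conditioned on a typical $X_1{=}x_1$ the prefix $R$ is close to uniform; moreover, once $x_1$ is fixed, $R$ is a deterministic function of $X_3$ with only $0.05 k_3$ output bits. Because $X_3$ has min-entropy $\geq k_3$ given $x_1$ (and also given $(x_1,E)$, using independence in the appropriate case and the Markov chain $R\leftrightarrow X_3\leftrightarrow E$ in Case A), Lemma~\ref{lem:q_condition} (resp.\ Lemma~\ref{lem:condition}) gives that, for all but a $2^{-\Omega(k_3)}$-fraction of fixings $(x_1,r)$ of $(X_1,R)$, $\Hmin(X_3\mid X_1{=}x_1,R{=}r,E)\geq 0.95 k_3-\log(1/\delta)\geq 0.9 k_3$ with $\delta=2^{-0.05 k_3}$ -- exactly the entropy $\Ext_q$ requires. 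Also, conditioning on $r$ (a function of $X_3$, which is independent of $X_2$) does not change $X_2$, so $X_2$ still has min-entropy $\geq k_2$ given $(x_1,r)$, and $X_2$ is independent of $(X_3,E)$ given $(x_1,r)$ in Case A (resp.\ $X_3$ is independent of $(X_2,E)$ given $(x_1,r)$ in Case B).

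Next is the alternating-extraction core, where the delicate point is $(X_1,X_2)$-strongness: $X_2$ is itself used to build the seed $T$ fed to $\Ext_q$, so we must \emph{not} fix $X_2$. Fix a good $(x_1,r)$. On the one hand, since $\Ext_c$ is a classical strong seeded extractor and $r$ is a typical seed value, $T=\Ext_c(X_2,r)$ is $\delta_r$-close to $\uniform{m_2}$, with $\E_{(x_1,r)}[\delta_r]=O(\eps_1+\eps_2)$. On the other hand, I would expand the target quantity $\trdist{\rho_{Z X_1 X_2 R E}-\uniform{l}\ot\rho_{X_1 X_2 R E}}$ via Fact~\ref{fact:trdist:c_decomp} as an average over $(x_1,r,x_2)$, restrict to good $(x_1,r)$ (losing $2^{-\Omega(k_3)}$ for the bad ones), and then re-index the inner average over $x_2$ by the value $t=\Ext_c(x_2,r)$ of $T$. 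After this conditioning $E$ is effectively independent of $X_3$ (it factors out in Case B, and in Case A it remains legitimate side information for which $X_3$ is a valid $\Ext_q$-source), so the inner sum becomes $\sum_{t}p_T(t\mid x_1,r)\,\trdist{\rho_{\Ext_q(X_3,t)E}-\uniform{l}\ot\rho_E}$ (all conditioned on $(x_1,r)$); since $p_T(\cdot\mid x_1,r)$ is $\delta_r$-close to uniform and $\Hmin(X_3\mid x_1,r,E)\geq 0.9 k_3$, this is within $\eps_3+\delta_r$ of the quantum strong-seeded guarantee of $\Ext_q$. Averaging back over good $(x_1,r)$, adding the bad-fixing mass and the $O(\eps_1+\eps_2)$ contributions from $R$ and $T$, and tracing out $R$, yields $\trdist{\rho_{Z X_1 X_2 E}-\uniform{l}\ot\rho_{X_1 X_2 E}}\leq 4\eps_1+2\eps_2+\eps_3+2^{-\Omega(k_3)}$.

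I expect the re-indexing/averaging argument for $(X_1,X_2)$-strongness, intertwined with the side information, to be the main obstacle: $T$ is only close to uniform \emph{on average over its own seed $R$}, never for a worst-case fixing, so every ``typical fixing'' quantifier has to be threaded through with union bounds; and one must check that the classical extractors $\BExt$ and $\Ext_c$ are applied only to classical marginals of sources that, in the relevant conditional picture, carry no quantum side information -- which is precisely why the case split on which source is leaked from is needed -- while $\Ext_q$ is what absorbs $E$ in Case A. The rest is bookkeeping: pinning the constants in $4\eps_1+2\eps_2+\eps_3+2^{-\Omega(k_3)}$, and verifying the implicit parameter compatibility ($m_1'=0.05 k_3$ as $\Ext_c$'s seed length, $\Ext_c$'s output length $m_2$ serving as $\Ext_q$'s seed, and $0.9 k_3$ sufficing for $\Ext_q$).
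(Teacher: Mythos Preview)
Your proposal is correct and follows essentially the same alternating-extraction skeleton as the paper: fix $X_1$, use strongness of $\BExt$ to make $R$ close to uniform, fix $R$, feed it to $\Ext_c$ to make $T$ close to uniform, then invoke $\Ext_q$ on $X_3$ with seed $T$, using Lemma~\ref{lem:q_condition} (or Lemma~\ref{lem:condition}) to certify $\Hmin(X_3\mid x_1,r,E)\geq 0.9k_3$, and finally average/trace out. The error accounting matches.

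Two organizational differences are worth noting. First, the paper handles your Case~B (leak from the block source) more cheaply than you do: rather than carrying $E$ through and arguing it ``factors out,'' the paper proves the $(X_1,X_2)$-strong bound \emph{with no side information}, and then observes that since $E$ is generated from $(X_1,X_2)$ alone, one can apply the leaking operation to both sides of the resulting inequality and invoke monotonicity of trace distance (Fact~\ref{prelim:fact:monotone_trace}). This is a one-line deferral and avoids threading $E$ through the fixings. Second, for $(X_1,X_2)$-strongness the paper does not re-index by $t=\Ext_c(x_2,r)$ as you do; instead it establishes that $(Z,T,X_1,R)\approx \uniform{l}\ot(T,X_1,R)$ and then adds $X_2$ back in one step using the conditional independence of $X_2$ and $Z$ given $(T,X_1,R)$ (since $Z$ is a function of $(X_3,T)$ and $X_2\perp X_3$ after fixing $T,R,X_1$). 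Your re-indexing is equivalent---it is precisely the Lemma~\ref{lem:ind_lift}-style averaging---but the paper's independence step is shorter and sidesteps the quantifier-threading you flag as the ``main obstacle.''
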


\begin{proof}
Let us first argue that such alternating extraction works without considering the OA side information. For now, let us imagine all the entropies are measured in the no side information case and lie in the range for the extractors to work. Thus, by definition of $\BExt$, we have 
\[
   \trdist{(X_1, R) - X_1 \ot \uniform{m_1}} \leq \eps_1. 
\]
Fix $X_1=x_1$ and let $w_{x_1}=\trdist{(x_1, R)-x_1 \ot \uniform{m_1}}$. Then we have $\Exp_{x_1\sim X_1} w_{x_1} \leq \eps_1$. Because $X_2$ is independent of $R$ conditioned on $x_1$, thus we have 
\[
  \trdist{(X_2,x_1, R) - (X_2, x_1) \ot \uniform{m_1}} =w_{x_1}. 
\]
Now by definition of $\Ext_c$ and the fact that $(X_2,X_1)$ is a $k$-block source, we have 
\begin{equation} \label{eqn:extc}
  \trdist{(T, x_1, R) - \uniform{m_2} \ot x_1 \ot \uniform{m_1}} \leq w_{x_1}+\eps_2.
\end{equation}
By an triangle inequality and the definition of $w_{x_1}$, it is easy to see that 
\[
  \trdist{(T,x_1, R) - \uniform{m_2} \ot (x_1, R)} \leq 2w_{x_1}+\eps_2. 
\]
Because $T$ and $X_3$ are independent conditioned on $x_1$ and $R$, then we have 
\[
\trdist{(T,x_1, R, X_3) - \uniform{m_2} \ot (x_1, R, X_3)} \leq 2w_{x_1}+\eps_2.
\]
Let us further condition on $R=r$, and let 
\[
w_{x_1,r}=\trdist{(T,x_1, r, X_3) - \uniform{m_2} \ot (x_1, r, X_3)},
\]
Namely, we have $\Exp_{r \sim R|_{x_1}} w_{x_1,r} \leq 2w_{x_1}+\eps_2$.
By Lemma~\ref{lem:condition} (resp. in the presence of quantum side information, we invoke Lemma~\ref{lem:q_condition}), with probability $1-2^{-0.05k_3}$ over $r$,  $X_3$ still has min-entropy (resp. quantum conditional min-entropy) at least $k_3-0.05k_3-0.05k_3\geq 0.9k_3$.
By the definition of $\Ext_q$  we have 
\begin{equation} \label{eqn:extq}
\trdist{(T,x_1, r, Z) - \uniform{m_2} \ot (x_1,r) \ot \uniform{l}} \leq w_{x_1,r} + \eps_3+2^{-\Omega(k_3)}.
\end{equation}
By triangle inequalities, and take average over $x_1,r$, then we have 
\[
\trdist{(T, X_1, R, Z) - (T, X_1, R) \ot \uniform{l}} \leq 4\eps_1+2\eps_2 + \eps_3 + 2^{-\Omega(k_3)}.
\]
Because $X_2$ and $Z$ are independent conditioned on $T$ and $R$, thus we have 
\begin{equation} \label{eqn:strong_block}
 \trdist{(X_2, X_1, Z) - (X_2, X_1) \ot \uniform{l}} \leq 4\eps_1+2\eps_2 + \eps_3 + 2^{-\Omega(k_3)}.
\end{equation}
Namely, we prove our claim in the case of no side information. 

Now let us proceed to see what happens with the OA side information. First notice that no matter which source the OA adversary gets side information from, the OA entropy is a lower bound on the marginal entropy. Moreover, we will invoke Lemma~\ref{lem:q_condition} instead of Lemma~\ref{lem:condition} in the presence of quantum side information.  Thus all the sources have sufficient entropy to guarantee the success of the above argument. 

In the case in which the OA adversary gets side information from the block-source $(X_1, X_2)$, we are already done because the side information can be generated after obtaining (\ref{eqn:strong_block}). In the case in which the side information is from $X_3$, because we use a quantum-proof strong extractor at the last step, we still have  the side information version of (\ref{eqn:extq}). All of the rest arguments still apply. 
Thus, let $\Adv$ denote the OA side information, we always have 
\[
  \trdist{\rho_{X_1X_2 \QBSAExt(X_1,X_2,X_3) \Adv} - \uniform{l} \ot \rho_{X_1X_2\Adv}} \leq 4\eps_1+2\eps_2 + \eps_3 + 2^{-\Omega(k_3)},
\]
which completes the proof. 
\end{proof}

%

\subsection*{Instantiations}



Here we apply Theorem~\ref{thm:one_extrac_block} and Theorem~\ref{thm:SOA_GE} to lift the security of a block+general source extractor of~\cite{BarakRSW06} to obtain a strong GE-secure version extractor that extracts all min-entropy out.

\begin{theorem}[BRSW Block+general Source Extractor~\cite{BarakRSW06}] There exists a constant $c$ such that for every $n, k \in \N$, and $C = c \cdot \log n / \log k$ there exists a classical block+general source extractor $\BExt: \zo^{Cn} \times \zo^n \rightarrow \zo^m$ that make use of a $k$-block source with $C$ blocks and one general $k$-source to output $m = \Omega(k)$ uniform bits with error $\eps = n^{-\Omega(1)}$. Moreover, $\BExt$ is strong in the block-source side.
\end{theorem}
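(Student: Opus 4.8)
This statement is from Barak, Rao, Shaltiel, and Wigderson~\cite{BarakRSW06}, building on Rao~\cite{Rao06}, so the honest route is to invoke it directly; below I sketch the argument one would reconstruct. Writing the block source as $X_1 = X^1 \circ \cdots \circ X^C$ and the general source as $X_3$, the plan is to build $\BExt$ by induction on $C$, maintaining the invariant that after processing the first $i$ blocks one has computed a \emph{somewhere-random source} $W_i$ whose number of rows shrinks geometrically with $i$, that $W_i$ is a deterministic function of $(X^1,\dots,X^i,X_3)$ alone (hence independent of $X^{i+1},\dots,X^C$ conditioned on the earlier blocks), and that conditioned on $(X^1,\dots,X^i)$ the source $X_3$ still has min-entropy $(1-o(1))k$.

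First I would set up the base case: from $X^1$ and $X_3$ --- reading only, say, $0.01k$ bits of each so that entropy is preserved --- apply the sum--product/condenser machinery of~\cite{BarakKSSW05,BarakRSW06} to obtain (a convex combination of) somewhere-random sources $W_1$ with $\poly(n)$ rows, one of which is $2^{-k^{\Omega(1)}}$-close to uniform. For the inductive step I would use the next block $X^{i+1}$ to \emph{merge} the rows of $W_i$: via the challenge--response mechanism composed with strong seeded extraction, a block of min-entropy $\sim k$ cuts the number of rows by a factor of $k^{\Omega(1)}$ while provably keeping the distinguished good row good. Since $W_1$ has $\poly(n)$ rows and each of the $C = O(\log n / \log k)$ blocks shrinks the count by a $k^{\Omega(1)}$ factor, after all $C$ blocks one is left with a single, close-to-uniform string $W_C$; feeding $W_C$ as the seed into a strong seeded extractor applied to the remaining entropy of $X_3$ yields the $m = \Omega(k)$ output bits.

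Strongness in the block-source side then comes essentially for free: every primitive used in the merging and at the last step is a \emph{strong} seeded (or two-source) extractor, strongness is preserved under this kind of sequential composition, and $X_3$ is read only through short outputs at each stage, so a union of Lemma~\ref{lem:condition}-type losses leaves $X_3$ with min-entropy $(1-o(1))k$ conditioned on all of $(X^1,\dots,X^C)$ --- exactly what the last extractor needs. Fixing the block source and unwinding the strong guarantees shows $\BExt(X_1,X_3)$ is $\eps$-close to uniform even given $X_1$.

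I expect the main obstacle to be the row-merging analysis: one must close three budgets at once --- (i) the distinguished good row survives every merge, (ii) the error accumulated over the $O(\log n/\log k)$ rounds stays $n^{-\Omega(1)}$, and (iii) each round consumes only a tiny fraction of $X_3$'s min-entropy, so that $\Omega(k)$ remains for the final output. Making all three work with only $O(\log n / \log k)$ blocks, rather than $\Theta(\log n)$, is the delicate point and is precisely the technical heart of~\cite{Rao06,BarakRSW06}.
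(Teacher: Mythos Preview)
The paper does not prove this theorem at all: it is stated purely as a citation to~\cite{BarakRSW06} and used as a black box (the paper even remarks immediately afterward that the strong version they cite has only inverse polynomial error, and that whether strongness holds with exponentially small error is open). So there is no ``paper's own proof'' to compare against.

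Your sketch is a fair high-level reconstruction of the Rao/BRSW argument --- somewhere-random source, row-merging via challenge--response using successive blocks, final strong seeded extraction --- and you correctly identify the three budgets (good-row survival, error accumulation, entropy depletion of $X_3$) as the technical heart. One small inaccuracy: in the actual BRSW pipeline the initial somewhere-random source with $\poly(n)$ rows is not obtained from the sum--product condenser of~\cite{BarakKSSW05} applied to $X^1$ and $X_3$; rather, it comes from a seeded condenser/extractor applied with all possible seeds, and the blocks are then used purely for merging. This does not affect the overall shape of your argument, but if you were writing this out in detail you would want to match the actual base step. For the purposes of this paper, simply citing~\cite{BarakRSW06} (as the authors do) is the appropriate move.
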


Note that the extractor we cite above is strong in the block-source side but has inverse polynomial error. \cite{BarakRSW06} also showed their construction has exponentially small error, but it is no longer clear if it is strong in the block-source side. It is an interesting question to see whether the extractor is strong with exponentially small error.

By using the quantum strong seeded extractor from Theorem~\ref{thm:Ext1} in Theorem~\ref{thm:one_extrac_block} (as both $\Ext_c$ and $\Ext_q$) and applying Theorem~\ref{thm:SOA_GE}, we obtain GE-secure version of this extractor. 

\begin{theorem}[GE-secure BRSW Block+general Source Extractor]\label{thm:2-block-general-ext1}
There exists a constant $c$ such that for every $n, k \in \N$ with $k \geq \log^3 n$, and $C = c \cdot \log n / \log k$ there exists a GE-secure block+general source extractor $\QBSAExt: \zo^{(C+1)n} \times \zo^n \rightarrow \zo^m$ that make use of a $k$-block source with $C+1$ blocks and one general $k$-source to output $m = k-o(k)$ uniform bits with error $\eps = n^{-\Omega(1)}$. Moreover, $\QBSAExt$ is strong in the block-source side.
\end{theorem}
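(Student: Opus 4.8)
The plan is to instantiate the generic construction $\QBSAExt$ of Figure~\ref{fig:QBSAExt} with a concrete inner extractor $\BExt$ and concrete seeded extractors $\Ext_c,\Ext_q$, then to invoke Theorem~\ref{thm:one_extrac_block} to obtain an \emph{OA}-secure block+general source extractor, and finally Theorem~\ref{thm:SOA_GE} to upgrade it to \emph{GE}-security for free. So the whole proof is a composition of three already-established results together with some parameter bookkeeping.

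For the inner extractor I would take $\BExt:\zo^{Cn}\times\zo^n\to\zo^{m_1}$ to be the BRSW block+general source extractor of~\cite{BarakRSW06} stated just above, so that $C = c\log n/\log k$, it uses a $k$-block source with $C$ blocks and one independent $k$-source, it has $m_1 = \Omega(k)$, error $\eps_1 = n^{-\Omega(1)}$, and it is strong in the block-source side. For both $\Ext_c$ and $\Ext_q$ I would take the quantum strong seeded extractor of Theorem~\ref{thm:Ext1} (a quantum strong seeded extractor is in particular a classical one, so it is a legitimate choice for $\Ext_c$), with error parameters $\eps_2,\eps_3 = n^{-\Omega(1)}$. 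The point is that for these extractors the seed lengths, the output lengths we actually need, and the entropy thresholds $4\log(1/\eps)+O(1)$ are all $\polylog(n)$, hence $o(k)$, precisely because $k\ge\log^3 n$; in particular $m_1 = \Omega(k)$ is far larger than the seed length that $\Ext_c$ requires.

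The one place where I would not use the construction of Figure~\ref{fig:QBSAExt} verbatim is in the choice of the seed $R$: rather than the first $0.05 k_3$ bits of $\BExt(X_1,X_3)$, I would take $R$ to be the first $d_c$ bits, where $d_c = \polylog(n)$ is the seed length that $\Ext_c$ needs, and I would use a slack of $\log^2 n$ (instead of $0.05 k_3$) when applying Lemma~\ref{lem:condition}/Lemma~\ref{lem:q_condition} to $X_3$, so that conditioning on $R$ succeeds except with probability $2^{-\log^2 n} = n^{-\omega(1)}$ and leaves $X_3$ with conditional min-entropy $k - d_c - \log^2 n = k - o(k)$. With this change the alternating-extraction argument in the proof of Theorem~\ref{thm:one_extrac_block} goes through line for line: $\Ext_q$ at the last step then extracts $l = (k-o(k)) - 4\log(1/\eps_3) - O(1) = k - o(k)$ bits, with total error $O(\eps_1+\eps_2+\eps_3) + n^{-\omega(1)} = n^{-\Omega(1)}$, and the resulting $\QBSAExt$ is OA-secure and strong in the block-source side over a $k$-block source with $C+1$ blocks and one independent $k$-source. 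Applying Theorem~\ref{thm:SOA_GE} (viewing the $(C+1)$-block block source and the general source as the two super-sources, so that ``strong in the block-source side'' is exactly $S$-strength with $|S| = t-1 = 1$) then promotes this to a GE-secure extractor with the same parameters, which is the claim.

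The conceptual content lives entirely in the two cited theorems; the one thing that needs care — and the part I would regard as the main obstacle — is the parameter bookkeeping that delivers output length $k - o(k)$ rather than merely $\Omega(k)$. Concretely, one has to notice that the two $0.05 k_3$ slacks baked into the generic construction can both be shrunk to $\polylog(n)$-sized quantities (the minimal seed length of $\Ext_c$, and a mild failure-probability slack), and that the hypothesis $k\ge\log^3 n$ is precisely what makes these $o(k)$ while still dominating every seed length, output length, and entropy threshold of the DPVR extractors and leaving $m_1 = \Omega(k)$ room to carve $R$ out of $\BExt(X_1,X_3)$. Everything else — verifying that each of the three invoked results has its hypotheses met, and collecting the $n^{-\Omega(1)}$ error — is routine.
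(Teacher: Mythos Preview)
Your proposal is correct and follows exactly the paper's approach: the paper's ``proof'' is the one-line remark preceding the theorem, namely to plug the BRSW block+general extractor into Theorem~\ref{thm:one_extrac_block} with the DPVR extractor of Theorem~\ref{thm:Ext1} serving as both $\Ext_c$ and $\Ext_q$, and then invoke Theorem~\ref{thm:SOA_GE}. Your observation that the two $0.05k_3$ slacks in Figure~\ref{fig:QBSAExt} must be shrunk to $\polylog(n)$ to actually achieve output length $k-o(k)$ (rather than $0.9k-o(k)$) is a point the paper leaves implicit, so your parameter bookkeeping is more careful than the paper's own treatment.
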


We also apply this technique to Raz's two-source extractor to obtain a strong  GE-secure two-block+general source extractor that extracts all entropy out. We will later use this extractor in Section~\ref{sec:network}.\footnote{We mention that we do not make attempt to optimize the parameters of this extractor. For example, the entropy rate of the second block do not need to be $\geq 1/2$. We state the extractor in a way that it is sufficient to be used in Section~\ref{sec:network}.}

\begin{theorem}[GE-secure Two-block+general Source Extractor] \label{thm:2-block-general-ext}
For any $n_1, n_2, k_1, k_2 \in \N$ and any $0<\delta<1/2$ with $k_1,k_2 \geq \log^5(n_1+n_2)$ and $k_1 \geq (0.5+\delta) n_1$, there exists a GE-secure block+general source extractor $\QBSAExt: \zo^{2n_1} \times \zo^{n_2} \rightarrow \zo^m$ that make use of a $k_1$-block source with $2$ blocks and one general $k_2$-source to output $m = k-o(k)$ uniform bits with error $\eps = 2^{-k_2^{\Omega(1)}}$. Moreover, $\QBSAExt$ is strong in the block-source side.
\end{theorem}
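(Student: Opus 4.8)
The plan is to obtain this extractor as an instantiation of the one-extra-block transformation (Theorem~\ref{thm:one_extrac_block}) followed by the OA--GE equivalence (Theorem~\ref{thm:SOA_GE}); since both of these are already established, essentially all the work is choosing the base extractor and checking parameters. Take the number of base blocks to be $C=1$: a ``classical strong block+general source extractor with one block'' is nothing but a two-source extractor whose first source is the (single-block) block source and whose second source is the general source, with ``strong in the block-source side'' meaning ``$X_1$-strong''. Raz's extractor (Theorem~\ref{thm:Raz}) fits this description perfectly --- it is both $X_1$- and $X_2$-strong, it handles two unequal-length sources as long as the first has entropy rate $>1/2$, and it has (sub-)exponentially small error. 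This is exactly why the hypothesis $k_1\ge(0.5+\delta)n_1$ appears, and also why only the first block needs large entropy rate (the second block is merely a seeded-extractor source downstream), consistent with the footnote. So set $\BExt=\Raz$ with first-source length $n_1$, entropy slightly below $k_1$, and slack parameter $\delta'=\delta/2$, and second-source length $n_2$, entropy slightly below $k_2$; the hypotheses $k_1,k_2\ge\log^5(n_1+n_2)$ and $k_1\ge(0.5+\delta)n_1$ comfortably absorb Raz's lower-order side conditions $n_1\ge 6\log n_1+2\log n_2$, $k_1\ge(0.5+\delta')n_1+3\log n_1+\log n_2$, and $k_2\ge 5\log(n_1-k_1)$.

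Now apply Theorem~\ref{thm:one_extrac_block} to this $\BExt$, instantiating both $\Ext_c$ and $\Ext_q$ with the quantum strong seeded extractor of Theorem~\ref{thm:Ext1} (which is also classically secure); both have poly-logarithmic seed length and error that can be pushed down to $2^{-k_2^{\Omega(1)}}$ with the entropy at hand. This yields an OA-secure block+general source extractor $\QBSAExt: \zo^{2n_1}\times\zo^{n_2}\to\zo^m$ for a $k_1$-block source with $C+1=2$ blocks and one general $k_2$-source (entropies measured in the OA model), strong in the block source $(X_1,X_2)$, with error $4\eps_1+2\eps_2+\eps_3+2^{-\Omega(k_3)}=2^{-k_2^{\Omega(1)}}$. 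Two small adjustments to the generic construction of Figure~\ref{fig:QBSAExt} are needed to get the advertised $m=k-o(k)$ (with $k=k_2$): there $R$ is taken to be the first $0.05k_3$ output bits of $\BExt(X_1,X_3)$, but (i) Raz outputs fewer than $\delta'k_2/40<0.05k_2$ bits, and (ii) discarding a constant fraction of $X_3$'s entropy would leave only $\Omega(k_2)$ output. Both issues are resolved by taking $|R|=m_1'$ to be only as long as the seed that $\Ext_c$ requires --- which is $o(k_2)$, hence still producible by Raz --- and re-running the proof of Theorem~\ref{thm:one_extrac_block} unchanged; its only dependence on $|R|$ is through the conditioning step, where Lemma~\ref{lem:condition} (resp.\ Lemma~\ref{lem:q_condition} under quantum side information) now leaves $X_3$ with $k_2-|R|-O(\log(1/\eps))=k_2-o(k_2)$ min-entropy, so $\Ext_q$ extracts $m=k_2-o(k_2)$ bits.

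Finally, for the adversarial model a block+general source extractor is a two-source ($t=2$) object: the entire block source $(X_1,X_2)$ is one source held by one party and the general source $X_3$ is the other, so ``strong in the block-source side'' is precisely the case $S=\{\text{block source}\}$ with $|S|=1=t-1$. Theorem~\ref{thm:SOA_GE} therefore applies and upgrades the strong OA security obtained above to strong GE security with no loss of parameters, giving the claimed GE-secure two-block+general source extractor. The main obstacle is not conceptual but bookkeeping: confirming that Raz legitimately instantiates the ``one-block strong block+general extractor'' hypothesis, and the parameter juggling --- most delicately, choosing $|R|$ short enough to preserve a $k_2-o(k_2)$-bit output yet long enough to be a valid seed for $\Ext_c$, and verifying that Raz's side conditions follow from $k_1,k_2\ge\log^5(n_1+n_2)$ and $k_1\ge(0.5+\delta)n_1$. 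The one subtlety worth flagging is the mismatch between the $|R|=0.05k_3$ hard-wired into Figure~\ref{fig:QBSAExt} and the $m=k-o(k)$ output promised here, which forces the mild re-inspection of the proof of Theorem~\ref{thm:one_extrac_block} described above.
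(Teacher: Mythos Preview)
Your proposal is correct and follows precisely the approach the paper indicates: the paper does not give a separate proof of this theorem but merely says ``We also apply this technique to Raz's two-source extractor,'' i.e., instantiate Theorem~\ref{thm:one_extrac_block} with $C=1$ and $\BExt=\Raz$, use the extractor of Theorem~\ref{thm:Ext1} for both $\Ext_c$ and $\Ext_q$, and then invoke Theorem~\ref{thm:SOA_GE}. Your observation that the hard-wired choice $|R|=0.05k_3$ in Figure~\ref{fig:QBSAExt} must be shrunk to $o(k_2)$ (both because Raz's output is shorter than $0.05k_3$ and to achieve $m=k-o(k)$) is a valid parameter refinement that the paper glosses over.
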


\def \bits {\{0, 1\}}
\def \e {\epsilon}
\def \cond {\mathrm{Cond}}
\def \bext {\mathrm{BExt}}
\def \supp {\mathrm{Supp}}

\section{A New Three-source Extractor and its GE-security} \label{sec:three_Ext}
In this section we construct a \emph{new} strong three-source extractor for sources of uneven lengths. 
Moreover, we prove the strong OA-security of the newly constructed extractor (which is essentially due to our technique in Section~\ref{sec:bOA:one_block}) and then make use our OA-GE equivalence to convert it into a GE-secure strong three source extractor.
We also make use of the newly constructed three source extractor to obtain a strong GE-secure seeded extractor that works even if the seed only has min-entropy rate bigger than a half. 
We will demonstrate its application to privacy amplification and quantum key distribution in Section~\ref{sec:PA}. 


We start with the construction of a strong classical three source extractor. We will first list some of the previous work that we use.

\subsection{Somewhere Random Sources, Extractors and Condensers}

\begin{definition} [Somewhere Random sources] \label{def:SR} A source $X=(X_1, \cdots, X_t)$ is $(t \times r)$
  \emph{somewhere-random} (SR-source for short) if each $X_i$ takes values in $\bits^r$ and there is an $i$ such that $X_i$ is uniformly distributed.
\end{definition}

\begin{definition}
An elementary somewhere-k-source is a 	vector of sources $(X_1, \cdots, X_t)$, such that some $X_i$ is a $k$-source. A somewhere $k$-source is a convex combination of elementary somewhere-k-sources.
\end{definition}

\begin{definition}
A function $C: \bits^n \times \bits^d \to \bits^m$ is a $(k \to l, \epsilon)$-condenser if for every $k$-source $X$, $C(X, U_d)$ is $\epsilon$-close to some $l$-source. When convenient, we call $C$ a rate-$(k/n \to l/m, \epsilon)$-condenser.   
\end{definition}

\begin{definition}
A function $C: \bits^n \times \bits^d \to \bits^m$ is a $(k \to l, \e)$-somewhere-condenser if for every $k$-source $X$, the vector $(C(X, y)_{y \in \bits^d})$ is $\e$-close to a somewhere-$l$-source. When convenient, we call $C$ a rate-$(k/n \to l/m, \epsilon)$-somewhere-condenser.   
\end{definition}

\begin{theorem} [\cite{BarakKSSW05, Zuc07}] \label{thm:swcondenser}
For any constant $\delta>0$, there is an efficient family of rate-$(\delta \to 0.9, \epsilon=2^{-\Omega(n)})$-somewhere condensers $\cond: \bits^n \to (\bits^m)^D$ where $D=\poly(1/\delta)=O(1)$ and $m=\Omega(n)$. 
\end{theorem}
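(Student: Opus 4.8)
The plan is to derive this from a \emph{basic condensing step} that improves the entropy rate by a fixed factor, and then to iterate that step a constant number of times. The technical heart is the following lemma of \cite{BarakKSSW05} (the point where finite-field sum--product estimates enter): there is an absolute constant $\gamma>0$ such that for every $\delta\in(0,1)$ there is a polynomial-time computable map $C_\delta:\bits^n\to(\bits^{n'})^{D_0}$ with $n'=\Omega(n)$ and $D_0=O(1)$ so that for every $(n,\delta n)$-source $X$ the tuple $(C_\delta(X)_i)_{i\in[D_0]}$ is $2^{-\Omega(n)}$-close to a somewhere-$\big(\min\{(1+\gamma)\delta,\,0.9\}\cdot n'\big)$-source. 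To prove this lemma I would identify $\bits^{n}$ with (a piece of) a finite field, split $X$ into two halves $X=(A,B)$, and consider the $O(1)$ ``line'' maps $b\mapsto A+b\cdot B$ together with the product map $A\cdot B$ and a few shifts; the claim is that, unless $X$ is abnormally structured, at least one of these deterministic images has min-entropy noticeably larger than that of $X$. The structured case---$X$ essentially supported on a coset of an additive or multiplicative subgroup, or (in characteristic $2$) near a proper subfield---is exactly what the quantitative sum--product theorem of Bourgain--Katz--Tao, respectively Bourgain--Glibichuk--Konyagin in characteristic two, rules out, once set growth is translated into min-entropy gain via a Pl\"unnecke--Ruzsa estimate. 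Morally this is the single-source analogue of the Barak--Impagliazzo--Wigderson \cite{BarakIW04} sum-product extractor; the difference is that $A$ and $B$ are not independent, which is why only a \emph{somewhere}-high-entropy guarantee is obtained.

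Second, I would record a composition lemma for somewhere-condensers: if $C$ is a rate-$(\delta\to\delta_1,\eps_1)$-somewhere-condenser with $D_1$ outputs and $C'$ is a rate-$(\delta_1\to\delta_2,\eps_2)$-somewhere-condenser with $D_2$ outputs, then feeding each of the $D_1$ outputs of $C$ into $C'$ produces a rate-$(\delta\to\delta_2,\ \eps_1+D_1\eps_2)$-somewhere-condenser with $D_1D_2$ outputs. The only point needing care is that a somewhere-source is a convex combination of elementary ones, so it suffices to treat an elementary somewhere-source with a single genuinely high-rate block, push it through $C'$, and observe that ``somewhere-of-somewhere is somewhere'' after flattening the product index set; convexity and the triangle inequality then give the error bound in general.

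Third, I would iterate the basic step. Starting from rate $\delta$, each application improves the rate by a multiplicative factor $1+\gamma>1$ (with $\gamma$ depending only on our working below rate $0.9$) and caps it at $0.9$, so a number $T$ of steps depending only on $\delta$---hence $T=O(1)$---brings the rate to $0.9$. Tracking parameters: the output count multiplies by $D_0$ each step, giving $D=D_0^{\,T}=(1/\delta)^{O(1)}=\poly(1/\delta)=O(1)$; the block length drops by at most a constant factor per step, so the final $m=\Omega(n)$; and the error accumulates, by the composition lemma, to at most $\poly(1/\delta)\cdot 2^{-\Omega(n)}=2^{-\Omega(n)}$, using again that $\delta$ is constant. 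Polynomial-time computability is immediate since each step is a fixed rational-function evaluation over a field of size $2^{O(n)}$.

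The main obstacle is the basic condensing step: extracting a genuine min-entropy gain from one source, rather than from two independent ones, is what forces the full strength of the finite-field sum--product machinery and a careful case analysis to exclude the configurations on which no single line condenses. In characteristic two the subfield obstruction makes this strictly harder than over prime fields and requires the Bourgain--Glibichuk--Konyagin estimate; this is precisely why the statement is credited jointly to \cite{BarakKSSW05} and \cite{Zuc07}. Everything downstream---the composition lemma, the iteration, and the parameter bookkeeping---is routine.
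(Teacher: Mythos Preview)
The paper does not prove this theorem at all: it is quoted from \cite{BarakKSSW05, Zuc07} as a black-box tool and used (in the proof of Theorem~\ref{thm:bext}) without any argument. So there is no ``paper's own proof'' to compare your proposal against.

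That said, your outline is an accurate sketch of how the cited references actually establish the result: a single sum--product-based condensing step that boosts the entropy rate by a fixed multiplicative factor, a composition lemma for somewhere-condensers, and constant-depth iteration. Your attribution is also right: \cite{BarakKSSW05} gives the scheme over prime fields, and \cite{Zuc07} supplies the characteristic-two analysis via Bourgain--Glibichuk--Konyagin needed to work directly over $\mathrm{GF}(2^n)$. For the purposes of this paper, though, none of this is needed---you may simply cite the theorem as stated.
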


\begin{theorem} [\cite{Rao06, BarakRSW06}] \label{thm:srgeneral_a} 
For any constant $C>1$ and every $n,k(n)$ with $k > \log^{2}n$, there is a polynomial time computable function $\SRExt:\{0,1\}^{n} \times \{0,1\}^{Ck} \rightarrow \{0,1\}^m $ s.t. if $X$ is an $(n,k)$ source and $Y$ is a $(C \times k)$-SR-source,

\[ | (Y , \SRExt(X,Y)) - (Y , U_m) | < \epsilon \]
and
\[ | (X , \SRExt(X,Y)) - (X , U_m) | < \epsilon \]

where $U_m$ is independent of $X,Y$, $m = \Omega(k)$ and $\epsilon = 2^{-\Omega(k)}$.
\end{theorem}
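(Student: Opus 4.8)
Since Theorem~\ref{thm:srgeneral_a} is quoted verbatim from \cite{Rao06,BarakRSW06}, what follows is a sketch of how one would build such an $\SRExt$ rather than a parameter verification. Write $Y=(Y_1,\dots,Y_C)$ with each $Y_j\in\{0,1\}^k$, and let $i^\ast$ be the (unknown) coordinate for which $Y_{i^\ast}$ is uniform. The plan is to pass, in a constant number of stages, from ``weak source $X$ $+$ somewhere-random $Y$'' to ``weak source $+$ block source'', and then invoke a strong block-source extractor.

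\textbf{Stage 1: produce an aligned SR-source of extractor outputs.} First I would fix an explicit classical \emph{strong} seeded extractor $E:\{0,1\}^n\times\{0,1\}^k\to\{0,1\}^{m_0}$ for min-entropy $k/2$ with error $2^{-\Omega(k)}$ and $m_0=\Omega(k)$ (available because $k>\log^2 n$ leaves enough room for the seed, e.g.\ a GUV extractor, and we spend only a constant fraction of $k$). Set $Z_j:=E(X,Y_j)$ and $Z:=(Z_1,\dots,Z_C)$. Since $E$ is strong and $Y_{i^\ast}$ is uniform and independent of $X$, the pair $(Y,Z_{i^\ast})$ is $2^{-\Omega(k)}$-close to $(Y,U_{m_0})$; hence $Z$ is (close to) a $C\times m_0$ SR-source whose uniform row sits at the same position $i^\ast$ as that of $Y$, and $Z$ is determined by $(X,Y)$. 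Because $Z$ leaks at most $Cm_0$ bits of $X$, conditioned on $Z$ the source $X$ still has min-entropy $\ge k-Cm_0=\Omega(k)$, while conditioning on $Y$ does not touch $Y$'s own entropy. So it remains to extract, strongly in both inputs, from an independent pair (reduced-entropy weak source, aligned $C\times m_0$ SR-source).

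\textbf{Stage 2: merge into uniform via block-source conversion.} The plan here is recursive halving: split $Z$ into its first and second halves; one half is still an aligned SR-source on $\lceil C/2\rceil$ rows, so recursing on $\SRExt(X,\cdot)$ for each half yields two strings at least one of which is near-uniform even given $X$ --- reducing to the case $C=2$. For $C=2$ I would run alternating extraction (``flip-flop''): feed the (possibly uniform) row of $Z$ as a seed into a strong seeded extractor on $X$, feed the output back as a seed extracting from the other side, and so on, carving a short block source out of $X$ together with an aligned companion block source out of $Z$; a \emph{strong} block-source extractor (obtained by composing strong seeded extractors in the standard Nisan--Zuckerman way) then outputs the final near-uniform string. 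Every seeded and block extractor used must be chosen \emph{strong}, and one must separately verify that the output is close to uniform conditioned on $Y$ and conditioned on $X$; because $C$ is a constant, the $O(1)$ recursive rounds and the constant-factor entropy losses they incur are affordable, and the error stays $2^{-\Omega(k)}$ after $O(1)$ triangle inequalities.

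\textbf{Main obstacle.} The delicate part is Stage 2: merging a somewhere-random source of constantly many rows into one uniform string using \emph{only} one extra independent weak source and no truly uniform seed, while preserving strongness on both inputs. Off-the-shelf mergers need a short uniform seed, which is unavailable here; the alternating-extraction / block-source-conversion route circumvents this, but closing the entropy accounting --- ensuring that $X$ retains $\Omega(k)$ min-entropy and that the extracted seeds stay long enough to drive the next round, through all $O(1)$ rounds --- is where the real work sits. The rest (choosing explicit strong seeded extractors, invoking the min-entropy conditioning lemmas, propagating errors) is routine.
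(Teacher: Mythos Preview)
The paper does not prove Theorem~\ref{thm:srgeneral_a} at all: it is stated as a black-box tool imported from \cite{Rao06,BarakRSW06} and used without further argument in Section~6. So there is no ``paper's own proof'' to compare against; your sketch is an attempt to reconstruct the cited result, not an alternative to anything in this paper.

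On the sketch itself: Stage~1 is exactly the standard first move in \cite{Rao06,BarakRSW06} --- apply a strong seeded extractor to $X$ with each row $Y_j$ as seed to get an aligned SR-source $Z$ that is a deterministic function of $(X,Y)$, then observe $X$ retains $\Omega(k)$ min-entropy conditioned on $Z$. Stage~2, however, is where your description becomes loose. The ``recursive halving'' step as you wrote it is circular: splitting $Z$ into two halves and recursing on $\SRExt(X,\cdot)$ for each half gives two strings, one of which is near-uniform --- but that is just a $2\times m'$ SR-source again, and you have not reduced the problem unless you can show the recursion terminates with usable parameters. The actual constructions in \cite{Rao06,BarakRSW06} do not halve; they iterate a fixed sequence of strong seeded extractions (essentially $C$ rounds of alternating extraction between $X$ and the rows of $Z$), exploiting that after conditioning on the good row $Z_{i^\ast}$ all other rows become deterministic functions of the residual entropy in $X$, so the process converges in $O(C)=O(1)$ steps. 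Your ``flip-flop for $C=2$'' paragraph is closer to the real mechanism, and if you applied that idea directly for general constant $C$ (one round per row, not halving) you would recover the cited construction. The entropy accounting you flag as the main obstacle is indeed the substance of the proof in those papers, and your sketch does not supply it.
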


\subsection{Extractor Construction and its Marginal Security}

Given a $(k_1, k_2)$ block source $X=(X_1, X_2) \in \zo^{n_1} \times \zo^{n_2}$ and an independent source $(n_3, k_3)$ source $X_3$ such that $k_1 \geq \delta n_1$ for some constant $\delta>0$, our block source extractor is given in Figure~\ref{fig:BExt}.

We note the proof here share a lot of similarity with the one for Theorem~\ref{thm:one_extrac_block}, however, with concrete instantiation and parameters. 

\begin{figure}
\begin{protocol*}{Block-source Extractor $\bext$ ($\QBSAExt$)}
\begin{description}
\item Let $\cond$ be the somewhere condenser in Theorem~\ref{thm:swcondenser}. 
\item Let $\Raz$ be the strong two-source extractor from Theorem~\ref{thm:Raz}.
\item Let $\SRExt$ be the extractor from Theorem~\ref{thm:srgeneral_a}.
\item Let $\Ext_c$ be a strong seeded extractor that uses $\Omega(k)$ bits to extract $m$ bits from an $(n_3, 0.9k_3)$ source with error $\e$. 
In the \emph{quantum} case,  we will use a quantum strong seeded extractor $\Ext_q$. 
\item Construct $\bext: \zo^{n_1} \times \zo^{n_2} \times \zo^{n_3} \to \zo^m$ as follows:
\end{description}
\begin{step}
\item $Y=\cond(X_1)$ such that $Y$ has $D=O(1)$ rows and each row has length $\Omega(n_1)$.
\item For each row $i$ of $Y$, apply $\Raz$ to $Y_i$ and $X_3$ and output $\ell=\Omega(k)$ bits such that $D\ell \leq 0.05k_3$. Concatenate these outputs to get $W_3$.
\item Let $V=\SRExt( X_2, W_3)$
\item $\bext(X_1, X_2, X_3) \defeq Z=\Ext_c(X_3, V)$. \\  In the quantum case, $\QBSAExt(X_1,X_2,X_3) \defeq Z =\Ext_q(X_3, v)$. 
\end{step}
\end{protocol*}
\caption{Construction of $\bext$ ($\QBSAExt$) for a weak source and a block source of two blocks.}
\label{fig:BExt}
\end{figure}

\begin{theorem}\label{thm:bext}
For all $n_1, k_1, n_2, k_2, n_3, k_3, k \in \N$ and constant $\delta>0$ such that $k_1 \geq \delta n_1$, $\min(k_1, k_2, k_3) \geq k \geq \log^3 (\max(n_1, n_2, n_3))$, the function $\bext: \bits^{n_1} \times \bits^{n_2} \times \bits^{n_3} \to \bits^m$ described in Figure~\ref{fig:BExt} is a block source extractor such that if $X=(X_1, X_2)$ is a $(k_1, k_2)$ block source on $n_1+n_2$ bits and $X_3$ is an independent $(n_3, k_3)$ source, then 

\[\trdist{(\bext(X, X_3), X)-\uniform{m} \ot X} \leq 2^{-\Omega(k)}+\e.\]

\end{theorem}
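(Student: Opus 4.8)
The plan is to follow the same layered-conditioning argument used in the proof of Theorem~\ref{thm:one_extrac_block}, specialized to the concrete building blocks $\cond$, $\Raz$, $\SRExt$, $\Ext_c$, and carried out entirely in the no-side-information setting, so that all entropy quantities are plain min-entropies and Lemma~\ref{lem:condition} is used wherever Lemma~\ref{lem:q_condition} was used there. Concretely, I would show $\trdist{(\bext(X,X_3),X)-\uniform{m}\ot X}\leq \e+2^{-\Omega(k)}$ by peeling off one layer at a time: first handle the condenser, then fix $X_1$, then fix $W_3$, and finally run the last seeded extraction, tracking at each stage which of $X_2,X_3$ remain independent and how much min-entropy is left.

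First I would dispose of the condenser step. Since $X_1$ has rate $k_1/n_1\geq\delta$, Theorem~\ref{thm:swcondenser} says $Y=\cond(X_1)$ is $2^{-\Omega(n_1)}$-close to a convex combination of elementary somewhere-rate-$0.9$ sources; as $n_1\geq k_1\geq k$ this error is $2^{-\Omega(k)}$, so at that additive cost I may assume $Y$ is genuinely somewhere-rate-$0.9$, and by further conditioning on the component of the combination and on the value of $X_1$, assume a fixed index $i^*$ with $Y_{i^*}=\cond(X_1)_{i^*}$ of min-entropy rate $\geq 0.9$ and length $\Omega(n_1)$. Now fix $X_1=x_1$ (a ``good'' one). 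Then $Y$ is deterministic, $X_2$ still has min-entropy $\geq k_2\geq k$ and is independent of $X_3$ (hence of $W_3$, which is a deterministic function of $(x_1,X_3)$). Because $Y_{i^*}$ has rate $>1/2$ and $X_3$ has min-entropy $k_3\geq k\geq\log^3(\cdot)\geq 5\log(n_1-k_1)$, Theorem~\ref{thm:Raz} gives that the $i^*$-th row of $W_3$, namely $\Raz(Y_{i^*},X_3)$, is $2^{-1.5\ell}$-close to uniform; choosing $\ell=\Omega(k)$ small enough that $D\ell\leq 0.05 k_3$ and $\ell\leq\delta'k_3/40$ keeps this valid while keeping $|W_3|=D\ell\leq 0.05 k_3$. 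Hence $W_3\mid X_1{=}x_1$ is $2^{-\Omega(k)}$-close to a genuine $(D\times\ell)$-SR-source, and applying $\SRExt$ (Theorem~\ref{thm:srgeneral_a}) to the $(n_2,k_2)$-source $X_2$ and this SR-source, using the strongness of $\SRExt$ on the SR-source side, shows $V=\SRExt(X_2,W_3)$ is $2^{-\Omega(k)}$-close to uniform conditioned on $W_3$ (in expectation over $W_3\mid x_1$); moreover, since $V$ is a function of $X_2$ once $W_3$ is fixed, $V$ is independent of $X_3$ conditioned on $(x_1,W_3{=}w_3)$.

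The final step mirrors the one-extra-block argument. Since $|W_3|\leq 0.05 k_3$, Lemma~\ref{lem:condition} (applied with error parameter $2^{-0.05k_3}$) shows that for all but a $2^{-\Omega(k_3)}$ fraction of $w_3$, the source $X_3\mid(x_1,W_3{=}w_3)$ still has min-entropy $\geq 0.9 k_3$. For such $w_3$, $V\mid(x_1,w_3)$ is close to uniform and independent of $X_3\mid(x_1,w_3)$, so I would invoke Lemma~\ref{lem:ind_lift} in its no-side-information form --- with $X_3$ as the min-entropy source, $X_2$ as the other source, $\SRExt(\cdot,w_3)$ as the deterministic map producing the seed, and $\Ext_c$ (strong, $(0.9k_3,\e)$) as the extractor --- to get that $(\Ext_c(X_3,V),X_2)\mid(x_1,w_3)$ is $(\e+2^{-\Omega(k)})$-close to $(X_2\mid x_1)\ot\uniform{m}$. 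Averaging this over $w_3\mid x_1$ (the bad $w_3$ contribute $2^{-\Omega(k_3)}$), using that $X_2\mid(x_1,w_3)=X_2\mid x_1$ since $W_3\perp X_2\mid x_1$, and then averaging over the good $x_1$ (the bad $x_1$ contribute $2^{-\Omega(k)}$ from the condenser step), gives $\trdist{(Z,X_2,X_1)-\uniform{m}\ot(X_1,X_2)}\leq \e+2^{-\Omega(k)}$; collecting the $\Raz$, $\SRExt$, $\Ext_c$, Lemma~\ref{lem:condition}, and condenser error terms and using $k_3,n_1\geq k$ yields the stated bound $2^{-\Omega(k)}+\e$.

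The main obstacle I anticipate is the bookkeeping in the middle: ensuring that conditioning first on $X_1$ and then on $W_3$ preserves the independence of $X_2$ and $X_3$ that both the $\SRExt$ step and the final $\Ext_c$ step rely on, applying $\SRExt$'s strongness on the correct coordinate (the SR-source side, to make $V$ uniform given $W_3$ rather than given $X_2$), and correctly accounting for the min-entropy left in $X_3$ after conditioning on $W_3$ (losing $|W_3|+\log(1/\e')$ via Lemma~\ref{lem:condition}) so that $\Ext_c$ still applies to an $(n_3,0.9k_3)$-source. The convex-combination form of the condenser's output also requires a routine but slightly fussy reduction to the elementary somewhere-source case, and one must check that the seed length of $\Ext_c$ (which is $\Omega(k)$) does not exceed the output length $|V|=\Omega(k)$ of $\SRExt$.
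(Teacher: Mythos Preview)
Your proposal is correct and follows essentially the same layered-conditioning argument as the paper: condenser, fix $X_1$, use the strongness of $\Raz$ to get $W_3$ somewhere-random given $X_1$, apply $\SRExt$ strong on the $W_3$ side, fix $W_3$, use Lemma~\ref{lem:condition} to retain $\geq 0.9k_3$ min-entropy in $X_3$, and run the final strong seeded extraction. The only cosmetic difference is that for the last step you package the argument via Lemma~\ref{lem:ind_lift}, whereas the paper argues it more directly by first using the strongness of $\Ext_c$ in the seed to get $(Z,V)\approx \uniform{m}\ot V$ and then noting that, conditioned on $V$ (a function of $X_2$ once $W_3$ is fixed), $Z$ is a function of $X_3$ and hence independent of $X_2$, yielding $(Z,X_2)\approx \uniform{m}\ot X_2$; this is exactly what Lemma~\ref{lem:ind_lift} unpacks to in the no-side-information case, so the two routes coincide.
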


\begin{proof}
By Theorem~\ref{thm:swcondenser}, $Y$ is $2^{-\Omega(n_1)}$-close to a somewhere entropy rate $0.9$ source. Without loss of generality we can assume that it is an elementary somewhere-rate-$0.9$ source. Ignoring the error, now by Theorem~\ref{thm:Raz}, $W_3$ is $2^{-\Omega(k)}$-close to a somewhere random source with $D=O(1)$ rows and each row has length $\ell=\Omega(k)$. Note that since $\Raz$ is a strong two-source extractor, thus the previous statement is true even if we condition on the fixing of the source $X_1$. Note that after this fixing, $W_3$ is a deterministic function of $X_3$, and is thus independent of $X_2$.

Note that since $X=(X_1, X_2)$ is a $(k_1, k_2)$ block source, we have that conditioned on the fixing of $X_1$, $X_2$ is an $(n_2, k_2)$ source. Now by Theorem~\ref{thm:srgeneral}, we have

\[\trdist{(V, W_3)- \uniform{} \ot W_3} \leq 2^{-\Omega(\ell)}=2^{-\Omega(k)}.\]

Thus we can further fix $W_3$, and condition on this fixing, $V$ is $2^{-\Omega(k)}$-close to uniform. Note that after this conditioning, $V$ is a deterministic function of $X_2$, and is thus independent of $X_3$. Furthermore by Lemma~\ref{lem:condition} we know that with probability $1-2^{-0.05k}$ over this fixing, $X_3$ still has min-entropy at least $k_3-0.05k-D\ell \geq 0.9k_3$. Since $\Ext$ is a strong $(0.9k_3, \e)$ extractor, we have that

\[\trdist{(Z, V)-\uniform{m} \ot V}\leq \e.\]

Note that $Z=\Ext(X_3, V)$. Thus conditioned on $V$, $Z$ is a deterministic function of $X_3$, which is independent of $X_2$. Thus we also have that

\[\trdist{(Z, X_2)-\uniform{m} \ot X_2} \leq \e.\]

Note that we have already fixed $X_1$. Thus adding back all the errors we get

\[\trdist{(Z, X_1, X_2)-\uniform{m} \ot ( X_1, X_2)} \leq \e+2^{-\Omega(n_1)}+2^{-\Omega(k)}+2^{-\Omega(k)}+2^{-0.05k}=2^{-\Omega(k)}+\e.\]
\end{proof}

One corollary of this theorem is as follows.

\begin{corollary}\label{cor:weakseed}
For any constant $\delta>0$ there exists a constant $C=\poly(1/\delta)$ such that if there is a classical strong $(k, \e)$ extractor $\Ext_c: \bits^n \times \bits^d \to \bits^m$ with $d \leq k/C$, then there is another strong $(1.2k, \e+2^{-\Omega(d)})$ extractor $\Ext'_c: \bits^n \times \bits^{d'} \to \bits^m$ where $d'=O(d)$ and $\Ext'_c$ works even if the seed only has min-entropy $(1/2+\delta)d'$.
\end{corollary}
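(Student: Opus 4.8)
The plan is to show that $\bext$ from Theorem~\ref{thm:bext} (the construction in Figure~\ref{fig:BExt}), with its final seeded extractor taken to be the given $\Ext_c$ and its block source taken to be the two halves of the weak seed, is exactly the desired $\Ext'_c$. Concretely, fix a large enough $C=\poly(1/\delta)$, set $d'=Cd$, and for $X\in\bits^n$ with $\Hmin(X)\ge 1.2k$ and a seed $Y'\in\bits^{d'}$ write $Y'=(Y'_1,Y'_2)$ with $|Y'_1|=|Y'_2|=d'/2$; define
\[
 \Ext'_c(X,Y')\defeq \bext\big((Y'_1,Y'_2),\,X\big),
\]
where inside $\bext$ one runs $\cond$ on $Y'_1$, then $\Raz$ and $\SRExt$ exactly as in Figure~\ref{fig:BExt}, and finally the given $\Ext_c$ applied to the source $X$ (so $n_3=n$, $k_3=1.2k$), using the first $d$ bits of the $\SRExt$-output $V$ as the seed of $\Ext_c$.

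First I would verify that halving a rate-$(1/2+\delta)$ seed produces an approximate block source. For every $y_1$, $\Pr[Y'_1=y_1]=\sum_{y_2}\Pr[Y'=(y_1,y_2)]\le 2^{d'/2}\cdot 2^{-(1/2+\delta)d'}=2^{-\delta d'}$, so $\Hmin(Y'_1)\ge \delta d'$ unconditionally; in particular $Y'_1$ has constant entropy rate $2\delta$. Applying Lemma~\ref{lem:condition} to $Y'$ with conditioning variable $Y'_1$ and error $2^{-\delta d'/2}$ shows that, except with probability $2^{-\delta d'/2}$ over $y_1$, $\Hmin(Y'_2\mid Y'_1=y_1)=\Hmin(Y'\mid Y'_1=y_1)\ge (1/2+\delta)d'-d'/2-\delta d'/2=\delta d'/2$. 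Hence, up to statistical error $2^{-\Omega(d)}$, $(Y'_1,Y'_2)$ is a $(\delta d',\,\delta d'/2)$ block source with constant-rate first block, and $X$ is an independent $(n,1.2k)$ source. The interface of Figure~\ref{fig:BExt} is then met: $\Ext_c$ is strong, has source length $n$, error $\e$, output $m$, and is applied to a source of min-entropy $0.9k_3=1.08k\ge k$; its seed length $d$ is at most the $\SRExt$-output length $\Omega(\delta d')$; and the internal budgets of Figure~\ref{fig:BExt} (notably $D\ell\le 0.05k_3$ with $D=\poly(1/\delta)$ condenser rows) together with $\min(\delta d'/2,\,1.2k)\ge \log^3\max(d',n)$ all hold once $C$ is a sufficiently large $\poly(1/\delta)$ and $d$ exceeds $\polylog(n)\cdot\poly(1/\delta)$ --- this is precisely where the constant $C$ and the hypothesis $d\le k/C$ are used.

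Given these, Theorem~\ref{thm:bext} asserts that $\bext$ is \emph{strong in its block source}:
\[
 \trdist{\big(\bext((Y'_1,Y'_2),X),\,(Y'_1,Y'_2)\big)-\uniform{m}\ot(Y'_1,Y'_2)}\le 2^{-\Omega(\delta d')}+\e .
\]
Since $(Y'_1,Y'_2)=Y'$ and $2^{-\Omega(\delta d')}=2^{-\Omega(d)}$, adding the $2^{-\Omega(d)}$ error from replacing $(Y'_1,Y'_2)$ by a genuine block source gives $\trdist{(\Ext'_c(X,Y'),Y')-\uniform{m}\ot Y'}\le \e+2^{-\Omega(d)}$. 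So $\Ext'_c$ is a strong $(1.2k,\ \e+2^{-\Omega(d)})$ seeded extractor with output length $m$ and seed length $d'=O(d)$ that works whenever the seed has min-entropy $(1/2+\delta)d'$, as claimed.

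The main obstacle is the parameter accounting in the second paragraph: one must check that the chain $\cond\to\Raz\to\SRExt\to\Ext_c$ inside Figure~\ref{fig:BExt} can simultaneously (i) accept a block source whose blocks have entropy rate only $\Theta(\delta)$, (ii) still produce an $\SRExt$-seed of length at least $d$ for $\Ext_c$, and (iii) keep the somewhere-random block short enough that $D\ell\le 0.05k_3$; balancing (ii) and (iii) is exactly what fixes $C=\poly(1/\delta)$ and forces $d\le k/C$. The remaining ingredients --- that a rate-$(1/2+\delta)$ seed halves into an approximate block source, and the strong-in-the-block-source property of $\bext$ --- are immediate from Lemma~\ref{lem:condition} and Theorem~\ref{thm:bext}. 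The quantum analogue used in Section~\ref{sec:PA} follows identically, using a quantum strong seeded extractor as the final $\Ext_c$ and Lemma~\ref{lem:q_condition} in place of Lemma~\ref{lem:condition}.
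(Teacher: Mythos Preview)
Your proposal is correct and follows essentially the same approach as the paper's own proof: split the weak seed into two equal halves, argue via Lemma~\ref{lem:condition} that this yields (up to $2^{-\Omega(d')}$ error) a $(\delta d',\,\delta d'/2)$ block source, and then invoke Theorem~\ref{thm:bext} with this block source and the $(n,1.2k)$ source $X$, choosing $C=\poly(1/\delta)$ large enough that $\SRExt$ can output $d$ bits while maintaining $D\ell\le 0.05k_3$ and noting $0.9\cdot 1.2k>k$ so that $\Ext_c$ applies. Your parameter accounting is in fact more explicit than the paper's.
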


\begin{proof}
We first show that for any weak source $R$ on $d'$ bits with min-entropy $(1/2+\delta)d'$, if we divide it into two equal blocks $R=(R_1, R_2)$, then it is $2^{-\Omega(d')}$-close to a $(\delta d', \delta d'/2)$ block source. Indeed, we have that for any $r \in \supp(R_1)$, $\Pr[R_1=r] \leq 2^{d'/2} 2^{-(1/2+\delta)d'} =2^{-\delta d'}$. Thus $R_1$ is a $\delta d'$ source. Now by Lemma~\ref{lem:condition}, we have that with probability $1-2^{-\delta d'/2}$ over the fixing of $R_1$, $R_2$ has min-entropy at least $(1/2+\delta)d'-d'/2-\delta d'/2=\delta d'/2$. Thus $R=(R_1, R_2)$ is $2^{-\Omega(d')}$-close to a $(\delta d', \delta d'/2)$ block source.

Now we can apply Theorem~\ref{thm:bext} where $R=(R_1, R_2)$ is the block source and $X$ is an independent $(n, k)$ source to construct $\Ext'_c$, where the $k$ in that theorem will be $\delta d'/2=O(d)$. We can choose $C=\poly(1/\delta)$ large enough so that in step 3 we can output $d$ bits while still satisfying that $D\ell \leq 0.05k_3$. Note that $0.9 \cdot 1.2k > k$, so we can use the strong extractor $\Ext_c$ to compute the final output $Z=\Ext_c(X, V)$, and the error is $\e+2^{-\Omega(d)}$. 
\end{proof}

\paragraph{Instantiations.}We can instantiate with the following classical extractor of best-known parameters and get two corollaries.

\begin{theorem} [\cite{GuruswamiUV09}] \label{thm:optext} 
For every constant $\alpha>0$, and all positive integers $n,k$ and $\e>0$, there is an explicit construction of a strong $(k,\e)$ extractor $\Ext: \bits^n \times \bits^d \to \bits^m$ with $d=O(\log n +\log (1/\e))$ and $m \geq (1-\alpha) k$.
\end{theorem}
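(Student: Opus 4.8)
The plan is to follow the code-based paradigm of \cite{GuruswamiUV09}: reduce the whole task to constructing an explicit \emph{lossless condenser} $C:\bits^n\times\bits^d\to\bits^m$ with near-optimal seed length, and then finish with a simple strong extractor for sources of min-entropy rate close to $1$. Recall a $(k,\e)$ lossless condenser has the property that for every $(n,k)$-source $X$ and an independent uniform seed $Y$, the pair $(Y,C(X,Y))$ is $\e$-close to a source of min-entropy at least $k+d$; the target is $d=O(\log n+\log(1/\e))$ and $m\le(1+\alpha)(k+d)$, with $C$'s output additionally \emph{containing $Y$ in the clear}. Given such a $C$, strongness and output length $(1-\alpha)k$ follow from two routine reductions (step three below), so essentially the whole content is the condenser.

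Second, I would build $C$ from Parvaresh--Vardy codes, exactly as in \cite{GuruswamiUV09}. Work over $\mathbb{F}_q$ with $q$ a power of two and $\log q=O(\log n+\log(1/\e))$; identify the source $x\in\bits^n$ with the coefficient vector of $f_x\in\mathbb{F}_q[Y]$ of degree $<n/\log q$, let the seed be a point $y\in\mathbb{F}_q$, fix an irreducible $E(Y)\in\mathbb{F}_q[Y]$ of suitable degree and an integer $h$, and set
\[
  C(x,y)\;=\;\bigl(y,\ f_x(y),\ (f_x^{h}\bmod E)(y),\ (f_x^{h^2}\bmod E)(y),\ \dots,\ (f_x^{h^{c-1}}\bmod E)(y)\bigr)
\]
for a number $c$ of ``reduced powers'' chosen so that $(c+1)\log q$ lies between $k+d$ and $(1+\alpha)(k+d)$. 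The lossless-condensing property is proved by a \emph{reconstruction / list-decoding} argument: if some set $T\subseteq\bits^m$ witnessed a violation, say $\Pr_{X,Y}[C(X,Y)\in T]>|T|\,2^{-(k+d)}+\e$, interpolate a nonzero low-degree $Q(Y,Z_0,\dots,Z_{c-1})$ vanishing on $T$; then $Q$ vanishes at $(y,f_x(y),\dots)$ for many $y$ for each ``heavy'' $x$, hence identically in $Y$, hence $Q(Y,f_x,f_x^h,\dots)\equiv 0$ in the field $\mathbb{F}_q[Y]/(E)$, so $f_x\bmod E$ is a root of the nonzero univariate polynomial $R(Z)=Q(Y,Z,Z^h,\dots,Z^{h^{c-1}})$ of controlled degree over that field. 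Hence the heavy $x$'s form a list of size at most $\deg R$, which for the right $h$ and $\deg E$ is well below $2^k$ --- contradicting $\Hmin(X)\ge k$.

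Third, the two finishing reductions. (i) Since the first $d$ output bits of $C(X,Y)$ are exactly $Y$ and $(Y,C(X,Y))$ is $\e$-close to a source of min-entropy $k+d$, conditioned on any fixed seed value the remaining $m-d\le(1+\alpha)k+O(\alpha d)$ output bits carry min-entropy $\ge k$, i.e. form (up to $\e$) a source of rate $\ge 1-O(\alpha)$. Apply to it any explicit strong extractor for rate-$(1-O(\alpha))$ sources with seed length $O(\log k+\log(1/\e))$ --- e.g. split that short string into two halves, use that it is $\e$-close to a block source, and run block-source extraction --- to extract $(1-\alpha)k$ bits. The composed map, with seed the concatenation of $C$'s seed and this fresh seed (still $O(\log n+\log(1/\e))$ bits), is strong because both stages reveal their seeds. (ii) If the parameters cannot all be met in one shot with polynomial $q$, replace $C$ by a constant number of compositions $C(\cdot,Y_1),C(\cdot,Y_2),\dots$, each shrinking the length toward $(1+\alpha)k$ and losing only $O(\log(1/\e))$ entropy, so that each individual seed stays $O(\log n+\log(1/\e))$; errors add, so run every stage with error $\e/O(1)$ and rescale.

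The main obstacle is the second step: the Parvaresh--Vardy list-size bound in the sharp ``lossless'' regime. One must bound, by less than $2^k$, the number of degree-$<n/\log q$ polynomials whose twisted graph $(f,f^h,\dots,f^{h^{c-1}})$ lies on a prescribed low-degree hypersurface, while keeping the alphabet $q$ only polynomial in $n/\e$; this is the algebraic heart of \cite{GuruswamiUV09} and hinges on choosing the power $h$ (essentially the code's list-decoding parameter) together with tight entropy bookkeeping so that the deficit added per round is only $O(\log(1/\e))$. By contrast, the reduction to a strong extractor, the block-source finish, and the constant-round composition are routine and cost only lower-order terms in seed length and error.
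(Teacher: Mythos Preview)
The paper does not prove this statement at all: Theorem~\ref{thm:optext} is simply quoted from \cite{GuruswamiUV09} as a black-box tool, with no accompanying argument. Your proposal is a faithful high-level sketch of the Guruswami--Umans--Vadhan construction itself (lossless condenser from Parvaresh--Vardy codes plus a finishing extractor for high-rate sources), so it matches the \emph{cited} paper rather than the present one. One small inaccuracy: in step~(i) you say ``conditioned on any fixed seed value the remaining $m-d$ output bits carry min-entropy $\ge k$''; lossless condensing only gives that $(Y,C(X,Y))$ is $\e$-close to a $(k+d)$-source, which yields the block-source structure on average over $Y$ (or for a $1-\e$ fraction of seeds), not for every fixed $y$ --- but this is exactly what the block-source finish needs, so the argument still goes through.
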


\begin{corollary}\label{cor:weakseed2}
For any constant $\delta>0$ there exist constants $C>1$ and $\alpha>0$ such that for any $n, k \in \N$ and $2^{-\alpha k} \leq \e \leq 2^{-C\log^3 n}$ there is an efficient strong $(k, \e)$ extractor $\Ext: \bits^n \times \bits^d \to \bits^m$ with $d =O(\log (n/\e))$ and $m=0.9k$ that works even if the seed only has min-entropy $(1/2+\delta)d$.
\end{corollary}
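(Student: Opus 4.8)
The plan is to instantiate the generic weak-seed transformation of Corollary~\ref{cor:weakseed} (equivalently, the construction of Theorem~\ref{thm:bext}) with the near-optimal classical strong seeded extractor of Theorem~\ref{thm:optext}, and to choose the parameters so that the seed-length restriction of Corollary~\ref{cor:weakseed} (the underlying extractor must have seed length at most a $1/C$ fraction of its source-entropy parameter) holds for exactly the claimed range $2^{-\alpha k}\le \e\le 2^{-C\log^3 n}$.

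Concretely, I would proceed as follows. Fix the constant $\delta>0$ and let $C=\poly(1/\delta)$ be the constant produced by Corollary~\ref{cor:weakseed} (equivalently, the constant controlling the number $D$ of somewhere-condenser rows and the budget $D\ell\le 0.05k_3$ inside Theorem~\ref{thm:bext}). Instantiate the final seeded extractor $\Ext_c$ of Theorem~\ref{thm:bext} with the extractor of Theorem~\ref{thm:optext}, whose entropy-loss parameter $\alpha_{\mathrm{GUV}}$ can be taken to be an arbitrarily small constant; tracking it together with the constant-factor entropy loss on $X_3$ from the conditioning step (by Lemma~\ref{lem:condition}, $X_3$ retains min-entropy $\ge 0.9k_3$), this yields output length $m=0.9k$. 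Next choose the constant $\alpha$ of the corollary small enough, \emph{as a function of} $C$, so that for every $\e\ge 2^{-\alpha k}$ the Guruswami--Umans--Vadhan seed length $d_{\mathrm{GUV}}=O(\log n+\log(1/\e))$ satisfies $d_{\mathrm{GUV}}\le k/C$: the $\log(1/\e)$ term is bounded using $\log(1/\e)\le \alpha k$, and the $\log n$ term is dominated because $\e\le 2^{-C\log^3 n}$ forces $\log(1/\e)\ge C\log^3 n\gg \log n$; in particular $d_{\mathrm{GUV}}=\Theta(\log(1/\e))=\Theta(\log(n/\e))$, matching the claimed seed length $d=O(\log(n/\e))$ of the transformed extractor (which is $O(d_{\mathrm{GUV}})$ by Corollary~\ref{cor:weakseed}). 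The upper bound $\e\le 2^{-C\log^3 n}$, combined with $\e\ge 2^{-\alpha k}$, also gives $k\ge (C/\alpha)\log^3 n\ge \log^3 n$, which, together with block lengths $n_1,n_2=d/2=O(k)$ contributing only a lower-order $\log^3$ term, supplies the hypothesis $k\ge \log^3(\max(n_1,n_2,n_3))$ needed by Theorem~\ref{thm:bext} and its ingredients (Theorem~\ref{thm:Raz}, Theorem~\ref{thm:srgeneral_a}). Finally, running the transformation with $\Ext_c$ set to error $\e/2$ and padding $d$ (still within the $k/C$ budget and still $\Theta(\log(n/\e))$) so that the additive $2^{-\Omega(d)}$ term is at most $\e/2$ gives total error $\le \e$; the resulting extractor works with a seed of min-entropy $(1/2+\delta)d$ by construction, and is explicit since $\cond$, $\Raz$, $\SRExt$ and the GUV extractor are all explicit.

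The main obstacle is purely bookkeeping: the constants are nested ($\alpha_{\mathrm{GUV}}$ and $\alpha$ must be small relative to $1/C$, which is small relative to $\delta$), and one must verify that the single parameter $d$ can simultaneously (i) be large enough for the GUV extractor and to make $2^{-\Omega(d)}$ negligible against $\e$, (ii) be small enough to fit the budget $d\le k/C$ over the whole $\e$-range, and (iii) come out as $\Theta(\log(n/\e))$. The one point that needs a little extra care is matching the output length to exactly $0.9k$: this requires absorbing both the $0.9$-factor entropy loss from the conditioning step in Theorem~\ref{thm:bext} and the $(1-\alpha_{\mathrm{GUV}})$-loss of Theorem~\ref{thm:optext} (and, if one prefers to invoke Corollary~\ref{cor:weakseed} verbatim rather than opening up Theorem~\ref{thm:bext}, the slack in its ``$1.2k$'' input-entropy bound) into the choice of constants.
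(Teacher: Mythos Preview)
Your proposal is correct and follows exactly the paper's approach: the paper's own proof is the one-line ``This follows directly from Corollary~\ref{cor:weakseed} and Theorem~\ref{thm:optext},'' and you have simply filled in the parameter bookkeeping that justifies that sentence. Your discussion of the nested constants and of how to hit $m=0.9k$ (by tightening the conditioning loss inside Theorem~\ref{thm:bext} and taking $\alpha_{\mathrm{GUV}}$ small) is exactly the care the paper leaves implicit.
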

\begin{proof}
This follows directly from Corollary~\ref{cor:weakseed} and Theorem~\ref{thm:optext}. 
\end{proof}


\begin{corollary}\label{cor:3source}
For any constant $\delta>0$ there exist constants $C>1$ and $\alpha>0$ such that for any $n, k \in \N$ and $2^{-\alpha k} \leq \e \leq 2^{-C\log^3 n}$ there is an efficient function $\Ext: \bits^d \times \bits^d \times \bits^n \to \bits^m$ with $d =O(\log (n/\e))$ and $m=0.9k$, such that if $X_1, X_2$ are two independent $(d, \delta d)$ sources and $X_3$ is an independent $(n, k)$ source then 

\[\trdist{(\Ext(X_1, X_2, X_3), X_1, X_2)-(U_m, X_1, X_2)} \leq \e.\]
\end{corollary}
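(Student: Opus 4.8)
The plan is to reduce this directly to Theorem~\ref{thm:bext} — equivalently, to re-run the analysis behind Corollary~\ref{cor:weakseed2} — using the elementary fact that two independent sources form a block source. First I would set $X = (X_1, X_2) \in \bits^d \times \bits^d$ and note that, because $X_1 \perp X_2$, we have $\Hmin(X_2 \mid X_1 = x_1) = \Hmin(X_2) \ge \delta d$ for every $x_1 \in \supp(X_1)$; hence $(X_1, X_2)$ is a $(\delta d, \delta d)$ block source on $2d$ bits. Together with the independent general source $X_3 \in \bits^n$, this is precisely the input shape consumed by the block-source extractor $\bext$ of Figure~\ref{fig:BExt}, instantiated with $n_1 = n_2 = d$, $k_1 = k_2 = \delta d$, $n_3 = n$, $k_3 = k$; note $k_1 = \delta n_1$, as required there.

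Next I would instantiate the final seeded extractor $\Ext_c$ in Figure~\ref{fig:BExt} by the strong extractor of Theorem~\ref{thm:optext} with a small enough constant, applied to $X_3$, which (by the conditioning argument in the proof of Theorem~\ref{thm:bext}, via Lemma~\ref{lem:condition}) still has min-entropy at least $0.9 k$ except with probability $2^{-\Omega(k)}$. This gives seed length $O(\log(n/\e))$, which fixes $d$ up to a constant factor, and output length $m = 0.9 k$ (choosing the constant in Theorem~\ref{thm:optext} small and, if necessary, reserving slightly less entropy in the intermediate steps of $\bext$). Setting $\Ext = \bext$, Theorem~\ref{thm:bext} then yields $\trdist{(\Ext(X_1, X_2, X_3), X_1, X_2) - \uniform{m} \ot (X_1, X_2)} \le 2^{-\Omega(\min(\delta d, k))} + \e/2$, and the claim follows once the right-hand side is bounded by $\e$.

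The one place that needs care — and the only thing I would call an obstacle — is the parameter bookkeeping: I must pick constants $C > 1$ and $\alpha > 0$ depending only on $\delta$ so that, for all $n, k$ and all $\e$ with $2^{-\alpha k} \le \e \le 2^{-C \log^3 n}$, the quantity $d = \Theta(\log(n/\e))$ is at once (i) small enough to meet the stated seed length; (ii) large enough for $\cond$, $\SRExt$, and $\Ext_c$ in Figure~\ref{fig:BExt} to apply with these lengths — in particular large enough that step~3 there can output $d$ bits while keeping $D \ell \le 0.05 k$; and (iii) large enough that $\delta d \ge \log^3 n \ge \log^3 d$, which, together with $k \ge \log^3 n$ (both consequences of $\log(1/\e) \ge C \log^3 n$ and $\log(1/\e) \le \alpha k$ for suitable $C, \alpha$), supplies the hypothesis $\min(k_1, k_2, k_3) \ge \log^3(\max(n_1, n_2, n_3))$ of Theorem~\ref{thm:bext}. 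The same lower bounds on $\delta d$ and $k$ force $2^{-\Omega(\min(\delta d, k))} \le \e/2$, so the total error is at most $\e$. This verification is not deep; it is essentially identical to the one already carried out for Corollary~\ref{cor:weakseed2}, merely with the two blocks handed over as independent sources instead of being carved out of a single min-entropy-rate-$(1/2 + \delta)$ seed.
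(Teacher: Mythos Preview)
Your proposal is correct and follows essentially the same approach as the paper: observe that independent $X_1, X_2$ form a $(\delta d, \delta d)$ block source, then invoke Theorem~\ref{thm:bext} (with $\Ext_c$ instantiated via Theorem~\ref{thm:optext}) and check the parameter constraints. The paper's own proof is a terse two-sentence version of exactly this; your write-up simply spells out the bookkeeping that the paper leaves implicit.
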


\begin{proof}
Note that since $X_1, X_2$ are independent, they form a $(\delta d, \delta d)$ block source. Note that $\delta d=\Omega(\log^3 n) >\log^2 n$, so we can apply Theorem~\ref{thm:bext} such that the final error is at most $\e$.
\end{proof}

\subsection{Strong OA-security and Instantiations} 

The strong OA-security of $\bext$ in Figure~\ref{fig:BExt} is quite straightforward from the proof of Theorem~\ref{thm:bext} and Theorem~\ref{thm:one_extrac_block}.

\begin{theorem}\label{thm:qbext}
For all $n_1, k_1, n_2, k_2, n_3, k_3, k \in \N$ and constant $\delta>0$ such that $k_1 \geq \delta n_1$, $\min(k_1, k_2, k_3) \geq k \geq \log^3 (\max(n_1, n_2, n_3))$, the function $\QBSAExt: \bits^{n_1} \times \bits^{n_2} \times \bits^{n_3} \to \bits^m$ described in Figure~\ref{fig:BExt} is an OA-secure block source extractor such that if $X=(X_1, X_2)$ is a $(k_1, k_2)$ block source on $n_1+n_2$ bits and $X_3$ is an independent $(n_3, k_3)$ source, and let $\Adv$ denote the side information, then  
\[\trdist{\rho_{\QBSAExt(X, X_3) X\Adv}-\uniform{m} \ot \rho_{X\Adv}} \leq 2^{-\Omega(k)}+\e.\]
\end{theorem}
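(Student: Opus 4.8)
The plan is to mirror the proof of Theorem~\ref{thm:bext} step by step, inserting the quantum-side-information handling at exactly the two places where it matters: at the final $\Ext_q$ call, and at the min-entropy-conditioning lemma. First I would observe that since $\Adv$ is OA side information, no matter which of the three inputs the adversary chose to leak from, the OA-entropy of that input is a lower bound on its marginal entropy (this is exactly the relation $\Hmin(X_i|\Adv_i)_{\rho_i}=\Hmin(X_i|E_i)_\rho$ recorded in the definition of the OA model), and the other two inputs retain their full marginal entropy. Hence all three inputs satisfy the entropy hypotheses needed to run the marginal analysis of Theorem~\ref{thm:bext}. So the skeleton of the argument --- apply $\cond$ to $X_1$ to get a somewhere-rate-$0.9$ source $Y$; apply the strong $\Raz$ to each row of $Y$ with $X_3$ to get a somewhere-random $W_3$ of $D=O(1)$ rows, each of length $\ell$ with $D\ell\le 0.05k_3$, which stays somewhere-random after fixing $X_1$; apply the strong $\SRExt$ to $X_2$ and $W_3$ to get $V$ close to uniform even after fixing $X_1$ and $W_3$; and finally extract $Z$ from $X_3$ using $V$ as seed --- carries over verbatim up to the last step.

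Next I would split into the two OA cases. If the adversary leaks from the block source $(X_1,X_2)$, then $\Adv$ is a (classical or quantum) function of $(X_1,X_2)$ alone; since the marginal conclusion
\[
\trdist{(Z,X_1,X_2)-\uniform{m}\ot(X_1,X_2)}\le 2^{-\Omega(k)}+\e
\]
of Theorem~\ref{thm:bext} holds, one simply applies the leaking super-operator $\Phi_{(X_1,X_2)}$ to both sides and invokes Fact~\ref{prelim:fact:monotone_trace} (monotonicity of trace distance under admissible maps) to conclude $\trdist{\rho_{Z X_1 X_2\Adv}-\uniform{m}\ot\rho_{X_1X_2\Adv}}\le 2^{-\Omega(k)}+\e$; note $\Adv$ is generated \emph{after} $Z$ is fixed so there is no interference. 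If instead the adversary leaks from $X_3$, I would redo the last step with side information present: after fixing $X_1$ and then $W_3$ and then $V$, the variable $V$ is a deterministic function of $X_2$ hence independent of $(X_3,\Adv)$, and by Lemma~\ref{lem:q_condition} (the quantum analogue of Lemma~\ref{lem:condition}) with probability $1-2^{-0.05k}$ over the fixing of $(W_3,V)$ the source $X_3$ still has \emph{quantum conditional} min-entropy $\ge k_3-0.05k-D\ell\ge 0.9k_3$ given $\Adv$. Since $\Ext_q$ is a quantum strong $(0.9k_3,\e)$ seeded extractor, we get $\trdist{\rho_{Z V X_3^{\mathrm{cond}}\Adv}-\uniform{m}\ot\rho_{V\Adv}}\le\e$ on the good event, and adding back the $2^{-0.05k}$ failure probability and all earlier errors gives the same bound; then independence of $Z$ and $X_2$ conditioned on $V$ (and $W_3$, $X_1$) upgrades this to a statement conditioned on $X_1X_2\Adv$.

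The remaining case --- the adversary leaks from $X_2$ only, or from $X_1$ only --- is subsumed by the block-source case since $X_2$ (resp. $X_1$) is part of the block source and $\Adv$ can be generated as a post-processing of $X_2$ (resp. $X_1$) after $Z$ is output. Combining the cases yields the claimed bound in all situations, which is exactly the statement of Theorem~\ref{thm:qbext}. I do not expect any genuine obstacle here: the only subtlety is making sure that in the $X_3$-leakage case we use the quantum-proof strong extractor $\Ext_q$ (not $\Ext_c$) at the last step and invoke Lemma~\ref{lem:q_condition} rather than Lemma~\ref{lem:condition}, and that in the $(X_1,X_2)$-leakage case the side information is produced strictly after $Z$ so that the simulation/monotonicity argument applies cleanly; everything else is a transcription of the proof of Theorem~\ref{thm:bext} with $\rho$'s attached, and indeed the theorem statement itself flags that the proof "is quite straightforward from the proof of Theorem~\ref{thm:bext} and Theorem~\ref{thm:one_extrac_block}."
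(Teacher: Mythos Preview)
Your proposal is correct and follows essentially the same approach as the paper. The paper's proof is a two-line sketch that defers to Theorem~\ref{thm:bext} for the parameter check and to the argument of Theorem~\ref{thm:one_extrac_block} for the OA-security lift; your detailed plan unpacks precisely that argument---the case split on which source the OA adversary leaks from, the use of Lemma~\ref{lem:q_condition} and the quantum-proof $\Ext_q$ when the leak is from $X_3$, and the post-hoc application of the leaking map (via monotonicity of trace distance, Fact~\ref{prelim:fact:monotone_trace}) when the leak is from the block source $(X_1,X_2)$.
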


\begin{proof}
(Sketch): the proof of Theorem~\ref{thm:bext} demonstrates the parameters are correct. Then we can make use of the same argument in the proof of Theorem~\ref{thm:one_extrac_block} to lift its security to strong OA. 
\end{proof}

Similar to the classical case, we could turn this OA-secure extractor $\QBSAExt$ into an OA-secure strong seeded extractor that works even if the seed only has entropy rate $>1/2$.
However, different from the classical case, we will consider side information generated in the OA model. (and later in the instantiations, the side information could be generated in the GE model). 
In such models, the (weak) seed for the extractor could have quantum side information (in the OA model) that could even be entangled with the side information of the source (in the GE model). 

\begin{corollary}\label{cor:weakseed_q}
For any constant $\delta>0$ there exists a constant $C=\poly(1/\delta)$ such that if there is a quantum strong $(k, \e)$ extractor $\Ext_q: \bits^n \times \bits^d \to \bits^m$ with $d \leq k/C$, then there is another OA-secure strong $(1.2k, \e+2^{-\Omega(d)})$ extractor $\Ext'_q: \bits^n \times \bits^{d'} \to \bits^m$ where $d'=O(d)$ and $\Ext'_q$ works even if the seed only has min-entropy $(1/2+\delta)d'$.
\end{corollary}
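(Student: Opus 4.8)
The plan is to follow the proof of Corollary~\ref{cor:weakseed} almost verbatim, but to replace the (marginally secure) block-source extractor $\bext$ of Theorem~\ref{thm:bext} by its OA-secure counterpart $\QBSAExt$ of Figure~\ref{fig:BExt} (Theorem~\ref{thm:qbext}), and to use the \emph{given} quantum strong $(k,\eps)$ seeded extractor in the role of the last-step extractor $\Ext_q$ inside that construction. Concretely, split the $d'$-bit weak seed $R$ into two equal halves $R=R_1\circ R_2$, show that $(R_1,R_2)$ is (close to) a block source even in the presence of OA side information, feed it together with the $n$-bit source into $\QBSAExt$, and read off the conclusion from Theorem~\ref{thm:qbext}. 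The only genuinely new ingredient compared with the classical Corollary~\ref{cor:weakseed} is quantum side information, which in the OA model sits on exactly one of the two inputs; since Theorem~\ref{thm:qbext} already absorbs OA side information on either the block-source side or the general-source side, the extra work is confined to the block-source reduction for the seed.

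\textbf{Steps.} \emph{(1)} Put $d'=Cd$ with $C=\poly(1/\delta)$ chosen below, write $R=R_1\circ R_2$ with $|R_1|=|R_2|=d'/2$, and define $\Ext'_q(x,r):=\QBSAExt(r_1,r_2,x)$ for $r=r_1\circ r_2$. \emph{(2)} Show $(R_1,R_2)$ is $2^{-\Omega(d')}$-close to a $(\delta d',\delta d'/2)$ (quantum) block source. If the OA adversary takes side information from the $n$-bit source, then $R$ is untouched and this is the first paragraph of the proof of Corollary~\ref{cor:weakseed} verbatim: $\Hmin(R_1)\ge(1/2+\delta)d'-d'/2=\delta d'$, and by Lemma~\ref{lem:condition}, with probability $1-2^{-\delta d'/2}$ over $r_1$, $\Hmin(R_2\mid R_1=r_1)=\Hmin(R\mid R_1=r_1)\ge(1/2+\delta)d'-d'/2-\delta d'/2=\delta d'/2$. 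If instead the OA adversary takes side information $E$ from the seed, so $\Hmin(R\mid E)\ge(1/2+\delta)d'$, run the identical computation with Lemma~\ref{lem:q_condition} in place of Lemma~\ref{lem:condition}, using that $R_1\leftrightarrow R\leftrightarrow E$ is a Markov chain ($R_1$ is a deterministic function of $R$) and $\Hmin(R)\ge\Hmin(R\mid E)$; this produces a $2^{-\Omega(d')}$-approximate $(\delta d',\delta d'/2)$ quantum block source in the sense of Definition~\ref{def:quantum_block_source}. \emph{(3)} Apply $\QBSAExt$ of Figure~\ref{fig:BExt} with this block source as $(X_1,X_2)$ ($n_1=n_2=d'/2$, $k_1=\delta d'$, $k_2=\delta d'/2$), with the $n$-bit source as $X_3$ ($n_3=n$, $k_3=1.2k$), with the ``$k$'' of Theorem~\ref{thm:qbext} equal to $\min(k_1,k_2,k_3)=\Theta(d)$, and with $\Ext_q$ of Figure~\ref{fig:BExt} taken to be the given quantum strong $(k,\eps)$ extractor. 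Choosing $C=\poly(1/\delta)$ large enough and using the hypothesis $d\le k/C$, the $\Raz$ and $\SRExt$ steps of Figure~\ref{fig:BExt} can be tuned to output a seed $V$ of exactly $d$ bits while respecting $D\ell\le0.05k_3$; moreover $k_1,k_2\ge d$, $0.9k_3=1.08k\ge k$, and (in the only interesting regime $d=\Omega(\log^3 n)$) the entropy constraints of Theorem~\ref{thm:qbext} hold, so the final application of $\Ext_q$ to $X_3$, which by Lemma~\ref{lem:q_condition} still has quantum-conditional min-entropy $\ge0.9k_3$ after the intermediate fixings, is valid. Theorem~\ref{thm:qbext} then gives an OA-secure extractor, strong in $(X_1,X_2)=R$, with error $2^{-\Omega(d)}+\eps$; adding the $2^{-\Omega(d')}=2^{-\Omega(d)}$ from Step (2) via the triangle inequality (and convexity, if Step (2) yielded a convex combination of block sources) gives the claimed $\eps+2^{-\Omega(d)}$, with $d'=O(d)$ and source min-entropy $1.2k$, which completes the proof.

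\textbf{Main obstacle.} The only point needing genuine (if light) care is Step (2) when the OA side information sits on the seed: one must verify that fixing the first half $R_1$ still leaves $\Omega(d')$ bits of \emph{quantum}-conditional min-entropy in $R_2$, which is exactly what Lemma~\ref{lem:q_condition} delivers, and it applies precisely because $R_1$ is a deterministic function of $R$, so the required Markov condition holds. Everything else is the block-source extraction argument already carried out for Theorem~\ref{thm:one_extrac_block} and Theorem~\ref{thm:qbext}, together with the routine constant-chasing showing that a single $C=\poly(1/\delta)$ simultaneously satisfies the length and entropy-rate constraints and lets the intermediate $\SRExt$ output be set to exactly $d$ bits.
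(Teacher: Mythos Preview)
Your approach is essentially the same as the paper's: split the weak seed into two equal halves, argue it is (close to) a quantum $(\delta d',\delta d'/2)$ block source, and then feed it into $\QBSAExt$ via Theorem~\ref{thm:qbext}. One small point deserves tightening. When the OA side information $E$ sits on the seed, the block-source reduction needs $\Hmin(R_1\mid E)\ge \delta d'$, not just $\Hmin(R_1)\ge\delta d'$; your ``identical computation'' only delivers the latter. The paper handles this directly from the defining operator inequality: from $\rho_{RE}\le 2^{-(1/2+\delta)d'}\,\I_R\otimes\sigma$, tracing out $R_2$ gives $\rho_{R_1E}\le 2^{-\delta d'}\,\I_{R_1}\otimes\sigma$, hence $\Hmin(R_1\mid E)\ge\delta d'$. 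With that one-line fix, your argument matches the paper's.
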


\begin{proof}
We note the proof here resembles the one of Corollary~\ref{cor:weakseed}. It suffices to prove the arguments therein extend to the quantum case. 

Firstly, given any quantum weak source (i.e., cq state) $\rho_{RE}, R \in \01^{d'}$ such that $\Hmin(R|E)_\rho\geq (1/2+\delta)d'$, if we divide it into two equal blocks $R=(R_1, R_2)$, then it is $2^{-\Omega(d')}$-close to a quantum $(\delta d', \delta d'/2)$ block source. 
By definition, there exists a $\sigma \in \density{\E}$, such that 
\[
  \rho_{RE} =\sum_r \Prob[R=r] \ketbra{r} \ot \rho^E_r \leq 2^{-(1/2+\delta)d'} \I_R \ot \sigma.
\]
Thus, by taking a partial trace over $R_2$, we have 
\[
  \rho_{R_1E} =\sum_{r_1} \Prob[R_1=r_1] \ketbra{r_1} \ot \rho^E_{r_1} \leq 2^{d'/2}2^{-(1/2+\delta)d'} \I_{R_1} \ot \sigma. 
\]
By definition, we have $\Hmin(R_1|E)_\rho \geq \delta d'$. 
Morever by Lemma~\ref{lem:q_condition}, we have that with probability $1-2^{-\delta d'/2}$ over the fixing of $R_1=r_1$, $\Hmin(R_2|R_1=r_1, E)\geq (1/2+\delta)d'-d'/2-\delta d'/2=\delta d'/2$. Thus $R=(R_1, R_2)$ is $2^{-\Omega(d')}$-close to a quantum $(\delta d', \delta d'/2)$ block source.

Now we can apply Theorem~\ref{thm:qbext} (instead of Theorem~\ref{thm:bext}) to construct $\Ext'_q$. The rest argument remains the same. 
%
\end{proof}

\paragraph{Instantiations.} We can have the following two instantiations of GE-secure extractors, similar to the classical setting. Two points are worth noticing. First, there is no quantum strong seeded extractors like the one of Theorem~\ref{thm:optext}. Instead, we make use of the Trevisan's extractor from Theorem~\ref{thm:Ext1}. 
Second, after obtaining the strong OA-security, we apply Theorem~\ref{thm:SOA_GE} to lift its security to strong GE. 
 
\begin{corollary}\label{cor:weakseed2_q}
For any constant $\delta>0$ there exist constants $C>1$ and $\alpha>0$ such that for any $n, k \in \N$ and $2^{-\alpha k} \leq \e \leq 2^{-C\log^3 n}$ there is an efficient GE-secure strong $(k, \e)$ extractor $\Ext_q: \bits^n \times \bits^d \to \bits^m$ with $d =O(\log^3 (n/\e))$ and $m=0.9k$ that works even if the seed only has min-entropy $(1/2+\delta)d$.
\end{corollary}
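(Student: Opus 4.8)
The plan is to mirror the proof of the classical Corollary~\ref{cor:weakseed2} almost verbatim, making only three changes: replace the classical optimal seeded extractor of Theorem~\ref{thm:optext} by the quantum strong seeded extractor of Theorem~\ref{thm:Ext1}; replace the classical weak-seed transformation of Corollary~\ref{cor:weakseed} by its quantum counterpart, Corollary~\ref{cor:weakseed_q} (equivalently, instantiate the construction of Figure~\ref{fig:BExt}, whose strong OA-security is Theorem~\ref{thm:qbext}); and, at the very end, invoke the strong OA~$\Rightarrow$~GE equivalence of Theorem~\ref{thm:SOA_GE} to turn OA-security into GE-security for free.

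In more detail: given the constant $\delta>0$, let $C_0=\poly(1/\delta)$ be the constant from Corollary~\ref{cor:weakseed_q}. First I would take, from Theorem~\ref{thm:Ext1}, a quantum strong $(k_0,\eps)$ seeded extractor $\Ext_q^{0}\colon\zo^{n}\times\zo^{d_0}\to\zo^{m_0}$ with $m_0=k_0-4\log(1/\eps)-O(1)$ and $d_0=O(\log^2(n/\eps)\log m_0)$; since $m_0\le n$ this gives $d_0=O(\log^3(n/\eps))$, which already explains the $\log^3$ (as opposed to the classical $\log$) seed length in the statement. Next I would choose the constants $\alpha,C$: a small enough $\alpha$ (so that $\eps\ge 2^{-\alpha k}$ forces $\log(1/\eps)$ to be a small fraction of the entropy, keeping the output length a $0.9$ fraction) and a large enough $C$ (so that $\eps\le 2^{-C\log^3 n}$ makes $\log n$ negligible next to $\log(1/\eps)$), which together ensure the hypothesis $d_0\le k_0/C_0$ of Corollary~\ref{cor:weakseed_q} is met. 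Applying Corollary~\ref{cor:weakseed_q} to $\Ext_q^{0}$ then produces an OA-secure \emph{strong} seeded extractor $\Ext_q'\colon\zo^{n}\times\zo^{d'}\to\zo^{m}$ with $d'=O(d_0)=O(\log^3(n/\eps))$, error $\eps+2^{-\Omega(d_0)}$, source min-entropy requirement $k$, and output length $m=0.9k$, which works even when the seed has min-entropy only $(1/2+\delta)d'$ --- all exactly as in Corollary~\ref{cor:weakseed2}.

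Finally I would lift OA-security to GE-security. The observation is that $\Ext_q'(X,R)$ is a two-source extractor in the weak source $X$ and the weak seed $R$, and that ``strong'' for a seeded extractor means strong on the seed, i.e.\ it is $S$-strong with $S=\{R\}$ and $|S|=1=t-1$ for $t=2$ (equivalently, in the block picture of Figure~\ref{fig:BExt} with $R=(R_1,R_2)$ it is $\{R_1,R_2\}$-strong, again with $|S|=t-1$). Hence Theorem~\ref{thm:SOA_GE} applies with no loss of parameters and promotes $\Ext_q'$ to a strong GE-secure extractor, in which the weak seed and the weak source may each carry side information, possibly entangled with one another, measured in the GE model. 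I expect the only nontrivial part to be the parameter bookkeeping in the middle paragraph: verifying that $d_0\le k_0/C_0$ and $m_0\ge 0.9k_0$ are simultaneously achievable over the entire window $2^{-\alpha k}\le\eps\le 2^{-C\log^3 n}$, which is exactly what determines the admissible constants $\alpha$ and $C$. All the conceptual content --- making a quantum seeded extractor tolerate a weak ($>1/2$ rate) seed, and converting OA-security into GE-security --- is already packaged in Corollary~\ref{cor:weakseed_q}/Theorem~\ref{thm:qbext} and in Theorem~\ref{thm:SOA_GE}.
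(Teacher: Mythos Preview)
Your proposal is correct and takes exactly the same approach as the paper: instantiate Corollary~\ref{cor:weakseed_q} with the quantum strong seeded extractor of Theorem~\ref{thm:Ext1}, then apply Theorem~\ref{thm:SOA_GE} to pass from OA- to GE-security. The paper's own proof is just these two lines; your additional parameter bookkeeping (in particular the explanation of why $d_0=O(\log^3(n/\eps))$ and how the window on $\eps$ makes $d_0\le k_0/C_0$ achievable) is a welcome expansion of what the paper leaves implicit.
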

\begin{proof}
We instantiate the OA-secure extractor in Corollary~\ref{cor:weakseed_q} with the one from Theorem~\ref{thm:Ext1}. Then we apply Theorem~\ref{thm:SOA_GE} to obtain the GE-security. 
\end{proof}


\begin{corollary}\label{cor:3source_q}
For any constant $\delta>0$ there exist constants $C>1$ and $\alpha>0$ such that for any $n, k \in \N$ and $2^{-\alpha k} \leq \e \leq 2^{-C\log^3 n}$ there is an efficient GE-secure strong extractor $\Ext_q: \bits^d \times \bits^d \times \bits^n \to \bits^m$ with $d =O(\log^3 (n/\e))$ and $m=0.9k$.
Namely,  if $X_1, X_2$ are two independent $(d, \delta d)$ sources and $X_3$ is an independent $(n, k)$ source (all entropy are measured in the GE model), and let $\Adv$ denote the side information in the GE model, then 
\[\trdist{\rho_{\Ext_q(X_1, X_2, X_3) X_1 X_2\Adv}- \uniform{m} \ot \rho_{ X_1 X_2\Adv}} \leq \e.\]
\end{corollary}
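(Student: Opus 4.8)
The plan is to realize $\Ext_q$ as the three-input function $\QBSAExt$ of Figure~\ref{fig:BExt}, instantiated with $n_1=n_2=d$, $k_1=k_2=\delta d$, $n_3=n$, $k_3=k$, and with the final seeded extractor taken to be the quantum-secure Trevisan extractor of Theorem~\ref{thm:Ext1}; this is the quantum counterpart of plugging the GUV extractor of Theorem~\ref{thm:optext} into the classical three-source extractor of Corollary~\ref{cor:3source}. First I would observe that, because $X_1$ and $X_2$ are independent $(d,\delta d)$ sources, the pair $(X_1,X_2)$ is a $(\delta d,\delta d)$ block source, and the numeric hypotheses of Theorem~\ref{thm:qbext} (namely $\min(\delta d, k)\ge \log^3(\max(d,n))$) hold once the constant $C$, hence the floor $d=\Theta(\log^3(n/\e))$, is chosen large enough; here the exponent $3$ replaces the $1$ of Corollary~\ref{cor:3source} precisely because Trevisan's seed length is $O(\log^2(n/\e)\log m)$ rather than $O(\log(n/\e))$. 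The output length $m=0.9k$ is read off the last step of Figure~\ref{fig:BExt}, where the seeded extractor acts on a source whose conditional min-entropy (after the condition on the earlier blocks) is at least $0.9k_3=0.9k$.

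With these parameters fixed, the next step is to invoke Theorem~\ref{thm:qbext}, which states that $\QBSAExt$ is strong OA-secure in the block-source side: for side information $\Adv$ generated in the OA model,
\[
 \trdist{\rho_{\QBSAExt(X_1,X_2,X_3)X_1X_2\Adv}-\uniform{m}\ot\rho_{X_1X_2\Adv}}\le 2^{-\Omega(k)}+\e .
\]
Choosing $\alpha$ small enough that $2^{-\Omega(k)}\le\e/2$ on the range $2^{-\alpha k}\le\e$, and running Trevisan's extractor with error $\e/2$, the right-hand side is at most $\e$. Reading this for the three independent sources $X_1,X_2,X_3$, it says exactly that $\QBSAExt$ is an $S$-strong OA-secure extractor with $S=\{1,2\}$ and $|S|=2=3-1$ (leakage from $X_1$ alone or from $X_2$ alone is a special case of leakage from the composite block source $(X_1,X_2)$, so Theorem~\ref{thm:qbext} indeed covers every one-sided adversary).

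Finally I would apply the OA-to-GE equivalence, Theorem~\ref{thm:SOA_GE}, with $t=3$ and $S=\{1,2\}$ (the mild extension to sources of uneven lengths is immediate, exactly as in the earlier GE-security instantiations such as Theorem~\ref{thm:2-block-general-ext}): it upgrades the $\{1,2\}$-strong OA-security just established to $\{1,2\}$-strong GE-security at no loss of parameters, which is precisely
\[
 \trdist{\rho_{\Ext_q(X_1,X_2,X_3)X_1X_2\Adv}-\uniform{m}\ot\rho_{X_1X_2\Adv}}\le\e
\]
for GE side information $\Adv$.

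The whole argument is thus a direct composition of Theorem~\ref{thm:qbext} and Theorem~\ref{thm:SOA_GE}, and I do not expect a genuine obstacle; the only places that need attention are the constant bookkeeping above, and the observation that the block structure of $(X_1,X_2)$ survives a one-sided leak from either $X_1$ or $X_2$ --- which is immediate since $X_1,X_2$ are independent, so conditioning on a fixing of the first block leaves the second with conditional min-entropy $\ge\delta d$ against the (possibly entangled) side information, just as in the proof of Corollary~\ref{cor:weakseed_q} via Lemma~\ref{lem:q_condition}.
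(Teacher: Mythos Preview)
Your proposal is correct and follows essentially the same approach as the paper: you first argue that under OA side information the pair $(X_1,X_2)$ is a $(\delta d,\delta d)$ block source (since the leak touches at most one of them), invoke Theorem~\ref{thm:qbext} to obtain $\{1,2\}$-strong OA-security of $\QBSAExt$, and then apply Theorem~\ref{thm:SOA_GE} with $t=3$, $S=\{1,2\}$ to upgrade to GE-security. The paper's proof is a terse version of exactly this; your additional parameter bookkeeping and the explicit remark that one-sided leakage from $X_1$ or $X_2$ alone is a special case of leakage from the composite block source are accurate elaborations.
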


\begin{proof}
Let us prove its strong OA security first. (i.e., assume for now all the entropies are measured in the OA model).
Then we claim that $(X_1, X_2)$ forms a $(\delta d, \delta d)$ block source. The reason is that $X_1, X_2$ are independent and the side information can only be from one (or none) of them. 
Note that $\delta d=\Omega(\log^3 n) >\log^2 n$, so we can apply Theorem~\ref{thm:qbext} to construct a strong OA-secure $\Ext_q$ such that the final error is at most $\e$. 
Then we apply Theorem~\ref{thm:SOA_GE} to obtain the GE-security. 
\end{proof}

\section{Application to Privacy Amplification} \label{sec:PA}
Privacy amplification is a basic and important task in cryptography and an important ingredient in quantum key distribution. The setting is that two parties, Alice and Bob share a secret weak random source $X$. Alice and Bob each also has local private random bits. The goal is to convert the shared weak source $X$ into a nearly uniform random string by having the two parties communicating with each other. However, the communication channel is watched by a (passive) adversary Eve, and we want to make sure that eventually the shared uniform random bits remain secret to Eve. In the quantum setting, Eve can also have quantum side information to the shared source $X$.

Strong seeded extractors (and quantum secure strong seeded extractors) can be used to solve this problem in one round by having one party (say Alice) send a seed to Bob and they each apply the extractor to the shared source using the seed. The strong property of the extractor guarantees that even if seeing the seed, Eve has no information about the extracted uniform key. One advantage of this method is that if we have good strong seeded extractors, then we can just use a small seed to extract a long shared key. 

However, as we stated before, it is not clear that we can simply assume that the two parties have local uniform random bits. There may well only have weak sources which may  be subject to (entangled) quantum side information. 
Here we show that as long as the local random sources have arbitrary constant min-entropy rate as measured by our GE model, we can still achieve privacy amplification with asymptotically the same parameters. In particular, this keeps the nice property that we can use a small (weak) seed to extract a long uniform key. 

\subsection*{Privacy Amplification with Local Weak Sources}
We present two scenarios in which we can perform privacy amplification with weak sources. In the first case, only one local random source with entropy rate $>1/2$ is needed. In the second one, two local random sources are needed, however, can be of any constant entropy rate. See Figure~\ref{fig:PA} for details. The correctness of such protocols follow directly from Corollary~\ref{cor:weakseed2_q} and Corollary~\ref{cor:3source_q}. 

\begin{figure}
\begin{protocol*}{Privacy Amplification with One Local Random Source}
\begin{description}
\item Alice and Bob share a weak random source $X$ with entropy at least $k$.  Moreover, Alice has a local random source $Y$ that is independent of $X$ with entropy rate $>1/2$. Both entropies are measured in the GE model. 
\item Let $\Ext_q$ be the extractor from Corollary~\ref{cor:weakseed2_q}.
\end{description}
\begin{step}
\item Alice sends $Y$ to Bob. 
\item Then both parties compute $Z=\Ext_q(X, Y)$, which is their shared randomness. 
\end{step}
\end{protocol*}

\begin{protocol*}{Privacy Amplification with Two Local Random Sources}
\begin{description}
\item Alice and Bob share a weak random source $X$ with entropy at least $k$.  Moreover, Alice has a local independent random source $Y_1$ and 
Bob has a local independent random source $Y_2$. Both are of entropy rate $\delta$ for any constant $\delta>0$. 
All entropies are measured in the GE model. 
\item Let $\Ext_q$ be the extractor from Corollary~\ref{cor:3source_q}.
\end{description}
\begin{step}
\item Alice sends $Y_1$ to Bob. And Bob sends $Y_2$ to Alice. 
\item Then both parties compute $Z=\Ext_q(Y_1, Y_2, X)$, which is their shared randomness. 
\end{step}
\end{protocol*}

\caption{Privacy Amplification with Weak Sources}
\label{fig:PA}
\end{figure}

%
%
%

We remark that in our GE model, the two parties local randomness may even have \emph{entangled} quantum side information with the shared weak source, and we show that even in this case privacy amplification can still be achieved.

%
%

\section{Network Extractor} \label{sec:network}

In the classical setting, network extractors are motivated by the problem of using imperfect randomness in distributed computing, a problem first studied by \cite{GoldwasserSV05}. Kalai, Rao, Li, and Zuckerman formally defined network extractors in \cite{KLRZ}, and gave several efficient constructions for both synchronous networks and asynchronous networks, and both the information-theoretic setting and the computational setting. For simplicity and to better illustrate our ideas, in this paper we will focus on synchronous networks and the information-theoretic setting. We start with formal definitions.

\subsection{Model Definition}

We consider a set $P = [p]$ of $p$ players execute a classical protocol in a synchronized network. Each (honest) player receives an independent source $X_i$, and a side information adversary $\AdvSI$ collects side information $\rho$ from the sources $X = (X_1,\dots, X_p)$ (in certain adversarial models such as OA and GE). We assume each $X_i$ has length $n$ and min-entropy at least $k$ measured in the same way as (\ref{def:GE:entropy}). We call $(X, \AdvSI)$ a $(n,k)$-source for $P$. Formally, such a source is represented by a state
$\rho \in \density{\X_1\ot \cdots \ot \X_p \ot \AdvSI}$. 
We assume $k > C \log p$ for some constant $C>1$ (This is because in distributed computing problems such as Byzantine agreement or leader election, each player needs at least $C \log p$ random bits).


We consider \emph{adaptive} corruption in a full information model, where an all powerful adversary $\AdvNet$ may decide to corrupt a set $\Faulty \subset P$ of up to $t$ players at  any time during the protocol execution, and can perform \emph{rushing} attack to determine the messages of the corrupted players after seeing all communication messages from honest players at each round, potentially with the help of quantum side information generated by $\AdvSI$. At the conclusion of protocol execution, let $T$ denote the transcript of protocol messages that are public, and $Z_i$ be the private output of (honest) player $i$. 

We call $(X,\AdvSI,\AdvNet)$ a $(p,t,n,k)$ network-source-adversary (\NSA) system, and $\AdvSI, \AdvNet$ the adversary for the system. Let $\Adv$ denote all the space that is used by $\AdvSI$ and $\AdvNet$.

The goal of a \emph{network extractor} protocol $\NetExt$ is to let (as many) honest players to extract private uniform randomness  at the conclusion of the protocol when executed on any $(p,t,n,k)$ \NSA system. 
To formally define the security, we need to specify the way that $\AdvSI$ collects side information from $X$ as well as the way that $\AdvNet$ perform rushing attacks. For $\AdvSI$, as before, we consider only the \emph{one-sided adversary} (OA) and the \emph{general entangled} (GE) adversary. For $\AdvNet$, we consider \emph{independent rushing} (IR) adversary and \emph{quantum rushing} (QR) adversary.
\begin{itemize}
\item Independent rushing (IR) adversary:  The rushing messages of the corrupted players depends only on the protocol messages that $\AdvNet$ sees, but \emph{not} on the (quantum) side information $\rho$ collected by $\AdvSI$. This models the situation where the side information $\rho$ is not available during the protocol execution, or the scenario that $\AdvNet$ is classical and cannot process the quantum side information (which can be later used by a quantum distinguisher to distinguish the (private) outputs of the honest players from uniform.
\item Quantum rushing (QR) adversary: The rushing messages can depend on both the protocol messages and the side information collected by both $\AdvNet$ and $\AdvSI$. Moreover, $\AdvNet$ could \emph{simultaneously} manipulate the rushing message and the quantum side information, creating complicated correlations among the protocol messages and the side information. 
This models the situation that the side information $\rho$ is available to $\AdvNet$ at the beginning of protocol execution.
\end{itemize}

Clearly, quantum rushing adversaries are more general and characterize the general power of a fully quantum adversary. On the other hand, the scenario of independent rushing adversary  seems also quite natural and reasonable when the adversary is semi-quantum. Therefore, we consider both settings. 

We note that handling quantum rushing is much more challenging, since it allows protocol messages to depend on the whole side information $\rho$, which in turn depends on all sources $X$. As such, it introduces global correlation among all sources and protocol messages, and destroys the structure of side information. For example, consider that at some point of the protocol, a public seed (or high entropy string) $Y$ is generated and used by a honest player $i$ to extract private uniform randomness from $X_i$ (which is the case for existing protocols). If $Y$ depends on the rushing messages (which is hard to prevent), then $Y$ is correlated with $X_i$ through rushing messages, and thus, extraction has no guarantee to work.

We proceed to define various security notion for network extractors against side information, parametrized by the type of adversaries $\AdvSI$ and $\AdvNet$. Our definition is slightly stronger than the definition in~\cite{KLRZ}, where we guarantees security for a fixed set of players (if they are honest).

\begin{definition} A network extractor $\NetExt$ for $(p,t,n,k)$ \NSA system is \emph{XX-YY secure} for a player set $S \subset P$ with error $\eps$ if for every $(p,t,n,k)$ \NSA system $(X, \Adv_{SI}, \Adv_{Net})$ with XX $\Adv_{SI}$ and YY $\Adv_{Net}$, let $S' = S \backslash \Faulty$, 
$$ \trdist{\rho_{Z_{S'} Z_{-{S'}} T \Adv}-  U \ot \rho_{Z_{-S'} T \Adv}}\leq \eps,$$
where XX $\in \{$ M, OA, GEA $\}$, and YY $\in \{$ IR, QR $\}$. 
\end{definition}

We note that when there is no side information, it suffices to require that a honest player's output $Z_i$ is close to uniform given the transcript $T$, as defined in~\cite{KLRZ}, since conditioned on $T$, $Z_i$ is independent of $X_{-i}$. In the presence of side information, we need to explicitly require $Z_{S'}$ to be close to uniform given $Z_{-{S'}}, T,$ and $\Adv$.

The above definition implies that at the conclusion of the protocol, at least $g = |S|-t$ players obtain secure private uniform randomness. Thus, our definition implies the $(t, g= |S|-t, \eps)$ notion in~\cite{KLRZ}, and has the additional property that the set of successful honest players is fixed before the protocol execution. The KLRZ construction actually satisfies this property. 

To reason about security for a set of players, we also define strong security for an individual player $i$, where we require $Z_i$ to be close to uniform even given other players' input $X_{-{i}}$.

\begin{definition} A network extractor $\NetExt$ for $(p,t,n,k)$ \NSA  system is \emph{strongly XX-YY secure} for a player $i \in P$ with error $\eps$ if for every $(p,t,n,k)$ \NSA  system $(X, \AdvSI, \AdvNet)$ with XX $\AdvSI$ and YY $\AdvNet$ such that $i \notin \Faulty$,  for some uniform distribution $U$
$$ \trdist{\rho_{Z_i X_{-i} T \Adv} - U \ot \rho_{X_{-i} T\Adv}} \leq \eps,$$
where XX $\in \{$ M, OA, GE $\}$, and YY $\in \{$ IR, QR $\}$. 
\end{definition}

The following lemma says that if $\NetExt$ is strongly secure for every $i\in S$, then $\NetExt$ is secure for $S$.

\begin{lemma} \label{lem:net_ext-single-to-set} If $\NetExt$ is a network extractor for $(p,t,n,k)$ \NSA system with strong XX-YY security for every $i \in S$ for some set $S \subset P$, then $\NetExt$ is also XX-YY secure for $S$.
\end{lemma}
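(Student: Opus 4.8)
The plan is to prove the lemma by a standard hybrid argument that replaces the honest outputs $Z_i$, $i \in S' := S \setminus \Faulty$, one at a time by fresh independent uniform strings, charging each swap to the strong security guarantee for the corresponding individual player. Fix an enumeration $S' = \{i_1, \dots, i_\ell\}$ and define hybrid states $H_0, \dots, H_\ell$, where $H_j$ is the joint state of the system with $Z_{i_1}, \dots, Z_{i_j}$ replaced by independent uniform strings and all other registers ($Z_{i_{j+1}}, \dots, Z_{i_\ell}$, $Z_{-S'}$, $T$, $\Adv$) as produced by an actual execution. Then $H_0 = \rho_{Z_{S'} Z_{-S'} T \Adv}$ and $H_\ell = U \ot \rho_{Z_{-S'} T \Adv}$, so it suffices to show $\trdist{H_{j-1} - H_j} \le \eps$ for each $j$; the lemma follows by the triangle inequality, with the error for $S$ being $|S'|\,\eps \le |S|\,\eps$.

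For the single-step bound, I would first use Fact~\ref{fact:trdist:add_prod} to strip off the $j-1$ already-substituted independent uniform registers (which appear tensored on both sides), obtaining that $\trdist{H_{j-1} - H_j}$ equals $\trdist{\rho_{Z_{i_j} Z_{i_{j+1}} \cdots Z_{i_\ell} Z_{-S'} T \Adv} - U \ot \rho_{Z_{i_{j+1}} \cdots Z_{i_\ell} Z_{-S'} T \Adv}}$. Then I invoke strong XX-YY security for player $i_j$ (which applies since $i_j \notin \Faulty$): $\trdist{\rho_{Z_{i_j} X_{-i_j} T \Adv} - U \ot \rho_{X_{-i_j} T \Adv}} \le \eps$. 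Finally I post-process both sides by an admissible super-operator $\I \ot \Psi$, acting as the identity on the $Z_{i_j}$ register, where $\Psi$ maps $(X_{-i_j}, T, \Adv)$ to $(Z_{i_{j+1}} \cdots Z_{i_\ell} Z_{-S'}, T, \Adv)$, and apply monotonicity of trace distance (Fact~\ref{prelim:fact:monotone_trace}). This yields exactly $\trdist{\rho_{Z_{i_j} Z_{i_{j+1}} \cdots Z_{i_\ell} Z_{-S'} T \Adv} - U \ot \rho_{Z_{i_{j+1}} \cdots Z_{i_\ell} Z_{-S'} T \Adv}} \le \eps$, as needed.

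The crux — and the only point requiring care — is justifying the existence of $\Psi$: in the synchronized full-information model, given the transcript $T$ and the adversarial registers $\Adv$, every output other than $Z_{i_j}$ is computable without touching $X_{i_j}$. For an honest player $i_k \ne i_j$, $Z_{i_k}$ is a function of its own source $X_{i_k}$ (a component of $X_{-i_j}$) and the messages it received, all of which are recorded in $T$; any additional private coins used by honest players can be sampled afresh by $\Psi$ without changing the distribution. For a faulty player, its output is determined by $\AdvNet \subseteq \Adv$ together with $T$. Thus $\Psi$ is a legitimate classical post-processing (tensored with the identity on the quantum part of $\Adv$). Note that this reasoning is insensitive to whether $\AdvNet$ is independent-rushing or quantum-rushing, and to the type of $\AdvSI$: it only uses that $T$ and $\Adv$ are already part of the state under consideration, so the remaining outputs are deterministic functions of $(X_{-i_j}, T, \Adv)$. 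Hence the conclusion holds uniformly for YY $\in \{$IR, QR$\}$ and XX $\in \{$M, OA, GE$\}$. Once the construction of $\Psi$ is granted, the rest is the routine hybrid/monotonicity calculation sketched above.
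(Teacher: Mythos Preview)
Your proposal is correct and follows essentially the same hybrid argument as the paper. The only cosmetic difference is that the paper carries the raw sources $X_{-S'_j}$ through the induction (using at each step that $Z_{i_j}$ is a deterministic function of $T$ and $X_{i_j}$), whereas you immediately pass to the outputs $Z_{-S'}$ via your post-processing map $\Psi$; both routes rely on the same observation that every other player's output is computable from $(X_{-i_j},T,\Adv)$.
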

\begin{proof}
Let $S' = S \backslash \Faulty = \{i_1,\dots,i_s\}$ and $S'_j = \{i_1,\dots,i_j\}$. We have for every $i\in S'$, 
$$\trdist{\rho_{Z_i X_{-i} T \Adv} -  U \ot \rho_{X_{-i}T \Adv}}\leq \eps.$$ 
We show 
$$\trdist{\rho_{Z_{S'} X_{-S'}T \Adv} - U \ot \rho_{X_{-S'}T\Adv}} \leq |S'|\eps, $$ 
by induction on $j$ for the following statement:
$$\trdist{\rho_{Z_{S'_j} X_{-S'_j}T \Adv} - U \ot \rho_{X_{-S'_j}T\Adv}} \leq j\eps. $$
The base case $S_1$ is trivial. Now, suppose induction holds for $j-1$. That is,
$$\trdist{\rho_{Z_{S'_{j-1}} X_{-S'_{j-1}}T \Adv} - U \ot \rho_{X_{-S'_{j-1}}T\Adv}} \leq (j-1)\eps. $$
Note that $Z_{i_j}$ is a deterministic function of $T$ and $X_j$. We have
$$\trdist{\rho_{Z_{S'_{j-1}} Z_{i_j} X_{-S'_j}T \Adv} - U \ot \rho_{Z_{i_j}X_{-S'_j}T\Adv}} \leq (j-1)\eps. $$
We also have
$$\trdist{\rho_{Z_{i_j} X_{-i_j} T \Adv} -U \ot \rho_{X_{-i_j}T \Adv}} \leq \eps,$$
which implies 
$$\trdist{\rho_{Z_{i_j} X_{-S'_j} T \Adv} -U \ot \rho_{X_{-S'_j}T \Adv}} \leq \eps.$$
Therefore, by triangle inequality, we have 
$$\trdist{\rho_{Z_{S'_j} X_{-S'_j}T \Adv} - U \ot \rho_{X_{-S'_j}T\Adv}} \leq j\eps. $$
\end{proof}

\subsection{Our Results} 

Here we formally state our results for network extractors against side information. For the case of independent rushing, we are able to tolerate close to $1/3$-fraction of faulty players, scarify only roughly $t$ honest players, and extract almost all entropy out even for low entropy $k = \polylog(n)$.





\begin{theorem}[GE-IR-secure Network Extractors] \label{thm:CR-GEA-net-ext} For every constants $\alpha < \gamma \in (0,1)$ and $c > 0$, for sufficiently large $p,t,n,k$ such that $p \geq (3+\gamma) t$ and $k \geq \log^{10}n$, there exists a 3-round network extractor $\NetExt$ for $(p,t,n,k)$ \NSA system with output length $m = k - o(k)$ and a set $S \subset [p]$ of size $|S| \geq p - (1+\alpha) t$ such that $\NetExt$ is GE-IR secure for set $S$ with error $\eps = n^{-c}$.
\end{theorem}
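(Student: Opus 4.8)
The strategy is to reduce the theorem to constructing a $3$-round protocol that is \emph{strongly OA-IR secure} for each player in a large set $S$, and then to assemble the per-player guarantees via Lemma~\ref{lem:net_ext-single-to-set}.

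\emph{Step 1 (OA-IR $\Rightarrow$ GE-IR, the network analogue of Theorem~\ref{thm:SOA_GE}).} I would first show that strong OA-IR security for a player $i$ implies strong GE-IR security for $i$. The crucial feature of independent rushing is that $\AdvNet$ never consults the side information, so the transcript $T$ and all outputs $Z_j$ are a deterministic function of $X$ and $\AdvNet$'s private coins alone; hence the collection of side information by $\AdvSI$ may be deferred to \emph{after} the entire protocol has been run. Given a GE side-information adversary sending registers $A_1,\dots,A_p$ and leaking $E_1,\dots,E_p$, I pass (exactly as in the simulation argument of Theorem~\ref{thm:SOA_GE}) to the hybrid one-sided adversary $\Adv'$ that sends only $A_i$ to player $i$ and retains $A_{-i}$. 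Because the GE-entropy of $X_i$ is measured \emph{immediately after} $E_i$ is produced, $X_i$ still has conditional min-entropy $\ge k$ against $\Adv'$, so strong OA-IR security gives that $Z_i$ is $\eps$-close to uniform given $X_{-i}$, $T$, and $(E_i,A_{-i})$. Applying the commuting leaking channels $\Phi_j$ ($j\ne i$) to $X_{-i}$ and $A_{-i}$ reproduces the GE side information, and by monotonicity of the trace distance (Fact~\ref{prelim:fact:monotone_trace}) this only decreases the distance; hence $\NetExt$ is strongly GE-IR secure for $i$. Combined with Lemma~\ref{lem:net_ext-single-to-set}, it now suffices to construct a $3$-round protocol, a set $S$ of size $\ge p-(1+\alpha)t$, and output length $m=k-o(k)$, that is strongly OA-IR secure for every $i\in S$ with per-player error $\le n^{-c'}$ for a suitable constant $c'$.

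\emph{Step 2 (the OA-IR protocol).} I would adapt the synchronous information-theoretic network extractor of~\cite{KLRZ}, upgrading its ingredients with the tools of Sections~\ref{sec:obtain_OA}–\ref{sec:three_Ext}. In the first two rounds the players run a sampler-based sub-protocol that, using $p\ge(3+\gamma)t$ (so a random $O(1)$-subset of players misses $\Faulty$ with good probability) and $k\ge\log^{10}n$ (so Li's extractor, Theorem~\ref{thm:Li_IExt}, applies to a constant number of sources of min-entropy $\ge\log^{2+\eta}n$), produces for each player $i$ in a good set $S$ of size $\ge p-(1+\alpha)t$ a \emph{public} string $Y_i$ that is $\poly(1/n)$-close to uniform and is a deterministic function of the sources of $O(1)$ honest players \emph{other than} $i$; here I would take the underlying multi-source extraction to be already OA-secure via Theorem~\ref{thm:Ind_QIExt} (or use the OA-secure block$+$general extractor of Theorem~\ref{thm:qbext}), and use alternating extraction so that the intermediate steps lose only a $o(1)$ fraction of the entropy. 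In the third round player $i$ privately outputs $Z_i=\Ext_q(X_i,Y_i)$, where $\Ext_q$ is the quantum strong seeded extractor of Theorem~\ref{thm:Ext1}; since $\log(1/\eps)=O(\log n)=o(k)$, this yields $m=k-o(k)$. Strong OA-IR security for $i$ is then an instance of the one-extra-source lemma, Lemma~\ref{lem:ind_lift}: whichever single register the OA adversary attacks, $X_i$ retains conditional min-entropy $\ge k$ against it while $Y_i$ is marginally $\poly(1/n)$-close to uniform and independent of it, so $Z_i$ is $\poly(1/n)$-close to uniform given $X_{-i}$ and the side information. Choosing the error parameters of Li's extractor, the sampler, and $\Ext_q$ small enough and summing via Lemma~\ref{lem:net_ext-single-to-set} yields total error $\eps=n^{-c}$, and the bound $|S|\ge p-(1+\alpha)t$ is inherited from the first stage.

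\emph{Main obstacle.} I expect the hardest part to be the first stage: making the generation of the public seeds $\{Y_i\}_{i\in S}$ robust against the $t$ \emph{rushing} faulty players, who can correlate their broadcast messages with the honest players' messages, while discarding only $(1+\alpha)t$ honest players from $S$ (beyond the $\le t$ corrupted ones). This requires a careful sampler/combinatorial argument guaranteeing that for all but $(1+\alpha)t$ honest $i$ the $O(1)$ players feeding $Y_i$ lie entirely in the honest set \emph{and} that the public randomness driving the sampler (itself produced inside the protocol from honest sources) does not reveal enough about any honest $X_j$ to break the independence that Li's extractor requires. A secondary, more routine, point is the case analysis in Step 2: the OA adversary may attack $X_i$ itself, one of the $O(1)$ sources feeding $Y_i$, or an unrelated source, and one must check that Lemma~\ref{lem:ind_lift} applies in each case with the registers $E_1,E_2$ assigned accordingly.
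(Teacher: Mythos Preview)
Your Step~1 is exactly the paper's route: Theorem~\ref{thm:OA-GE-eq-for-net-ext} proves the OA--GE equivalence for network extractors by the same deferred-simulation argument you give (crucially using that under independent rushing the transcript is a function of $X$ alone, so the leaking channels commute with the protocol), and Lemma~\ref{lem:net_ext-single-to-set} assembles the per-player guarantees into security for the set $S$.

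Step~2, however, differs from the paper in both the combinatorics and the final extractor, and the difference is not cosmetic. The paper does \emph{not} try to hand each player $i$ a personal close-to-uniform seed $Y_i$; it publishes a \emph{single} string $y=(y^1,y^2)$ that is only a two-block source with entropy rate $>1/2$ per block (Lemma~\ref{lem:step-1-net-ext-conclusion}). Concretely, it partitions $P=A\cup B\cup C$ with $|A|=(1+\alpha)t$, has $A$ broadcast their sources, and has each $j\in B$ compute $y_j=\SRExt(x_j,s^j)$ where $s^j$ is a somewhere-random matrix built from $\IExt$ outputs via an $\AND$-disperser (Lemma~\ref{lem:ANDdisperser}) composed with an expander (Theorem~\ref{thm:expander}); then $y^1,y^2$ are two short slices of the $y_j$'s published in separate rounds. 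Because faulty $j\in B$ can inject arbitrary $y_j$, the aggregate $y$ is only guaranteed to be a block source, so the last step cannot be the seeded $\Ext_q$ of Lemma~\ref{lem:ind_lift}: the paper instead uses the $y$-strong OA-secure two-block$+$general extractor of Theorem~\ref{thm:2-block-general-ext}, whose internal alternating extraction is what delivers $m=k-o(k)$. The set $S$ is simply $B\cup C$, of size $p-(1+\alpha)t$; no sampling is involved.

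The gap in your plan is the one you flag yourself as the ``main obstacle'': your sampler needs public randomness that does not yet exist at round one, and producing such randomness is precisely the content of the first stage. The paper sidesteps this circularity entirely by using the \emph{deterministic} $\AND$-disperser/expander pair rather than any sampler, so no seed is needed to select groups; the price is that the public output is only high-min-entropy rather than uniform, which forces the switch from a seeded extractor to a block$+$general one. Your final case analysis (the OA adversary touching $X_i$, one of the contributing sources, or an unrelated source) is the right thing to check; the paper carries it out in the proof of Lemma~\ref{lem:CR-OA-net-ext}, handling $i\in C$ and $j\in B$ separately and conditioning on $(T_1,Y^1_j,Y^2_j)$ in the latter case.
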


We note that even without side information, Theorem~\ref{thm:CR-GEA-net-ext} is the best known and improves the result of~\cite{KLRZ}. The reasons are that (i) at the time of \cite{KLRZ}, they did not have Li's extractor for a constant number of weak sources with min-entropy $k=\polylog(n)$ \cite{Li13b}, and (ii) we additionally use alternating extraction to extract almost all entropy out.




For the case of quantum rushing, we obtain slightly worse parameters, where we can tolerate a constant fraction of faulty players, and scarify $O(t)$ honest players. Here we require the min-entropy $k$ to be sufficiently larger then $t$. We discuss at the end of the section how to relax this requirement.

\begin{theorem}[GE-QR-secure Network Extractor] \label{thm:QR-GEA-net-ext}  There exists a constant $\gamma \in (0,1)$ such that for every constant $c > 0$, for sufficiently large $p, t, n, k$ with $p > t / \gamma$ and $k \geq \max \{ \log^{10}n, t/\gamma \}$, there exists a network extractor $\NetExt$ for $(p,t,n,k)$ 
\NSA system with output length $m = \Omega(k)$ and a set $S \subset [p]$ of size $|S| \geq p - t/\gamma$ such that $\NetExt$ is GE-QR secure for set $S$ with error $\eps = n^{-c}$.
\end{theorem}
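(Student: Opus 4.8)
The plan is to produce the $1$-round protocol in two stages. First, build a $1$-round network extractor that is strongly GE-secure against \emph{independent} rushing (IR), with the crucial feature that the part of the transcript which is adversarially controlled and which feeds the extraction has total length only $O(t)$, and with as small an error as the construction allows. Second, lift its security from IR to quantum rushing (QR) by a guessing argument that pays a factor $2^{O(t)}$ in the error, which is affordable precisely because $k \ge t/\gamma$ with $\gamma$ small. Throughout, I prove \emph{strong} security for each individual player in a fixed set $S \subset P$ with $|S| = p - t/\gamma$ (the $t/\gamma$ excluded players serve as broadcasters), and then invoke Lemma~\ref{lem:net_ext-single-to-set} to pass from strong per-player security to security for the set $S$.

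\textbf{Stage 1: a GE-IR-secure $1$-round protocol.} Designate a set $B$ of $t/\gamma$ broadcasters (disjoint from $S$) and have each $j\in B$ broadcast its source $X_j$; players in $S$ keep their sources private. At most $t$ of the broadcasts are adversarial, so at least $(1/\gamma-1)t$ of them are genuine independent weak sources; moreover, under IR these at most $t$ adversarial broadcasts are a function of the honest broadcasts \emph{only}, independent of the side information $\rho$. From the broadcasts one publicly computes a short string $Y$ of length $\Theta(k)$ that is close to uniform, by applying a GE-secure strong multi-source / block$+$general source extractor (Theorems~\ref{thm:BRSW_GE} and~\ref{thm:2-block-general-ext}) across the honest broadcasters, wrapped with the standard ``evaluate on every size-$(|B|-t)$ subset and combine'' device so as not to need to know which broadcasters are honest. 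Each $i\in S$ then outputs $Z_i = \Ext_q(X_i,Y)$, where $\Ext_q$ is the quantum-secure strong seeded extractor of Theorem~\ref{thm:Ext1}. Strong OA security of this construction follows from the one-extra-source / one-extra-block arguments of Section~\ref{sec:obtain_OA} (Theorems~\ref{thm:Ind_QIExt} and~\ref{thm:one_extrac_block}): $Y$ plays the role of the ``extra source/seed'', and strongness of the seeded extractor keeps $Z_i$ close to uniform given $X_{-i}$ and the transcript. Strong OA-IR security is then upgraded to strong GE-IR security by the same simulation argument that proves Theorem~\ref{thm:SOA_GE}: since IR rushing leaves the faulty messages independent of $\rho$, one may first run the protocol against a one-sided adversary and only afterwards let the other parties leak, which by monotonicity of trace distance cannot increase the error.

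\textbf{Stage 2: lifting IR $\to$ QR.} Under QR the faulty messages may depend on $\rho$, hence on all honest sources, so the simulation above breaks. Let $W$ denote the string of rushing bits that actually enters the extraction; by the design of Stage~1, $|W| = L = O(t)$ (each faulty broadcaster contributes only $O(1)$ effective bits --- this is the reason for sacrificing $\Theta(t)$ players and for reading only a short summary of each broadcast). Given a QR adversary, enumerate the $2^{L}$ possible values of $W$; conditioned on any fixed value, the rushing messages are \emph{constants}, so the adversary's behavior reduces to that of an IR adversary, for which the Stage-1 bound applies. Conditioning on $W$ reduces the GE-conditional min-entropy of each honest $X_i$ by at most $L$ (Lemma~\ref{lem:q_condition}) and, by decomposing over the $2^{L}$ values and using Fact~\ref{fact:trdist:c_decomp}, inflates the error by a factor $2^{L}$. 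Hence the overall error is at most $2^{L}$ times the Stage-1 error at entropy $k-L$, plus the entropy-loss terms; taking the underlying IR error sub-exponentially small in $k$ and $\gamma$ small enough that $L = O(t) \le \gamma k$ is dominated by the error exponent, this is $\le n^{-c}$ whenever $k \ge \max\{\log^{10}n,\,t/\gamma\}$, while the output length remains $m = \Omega(k)$. Finally, Lemma~\ref{lem:net_ext-single-to-set} converts strong per-player GE-QR security into GE-QR security for the set $S$, completing the proof.

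\textbf{Main obstacle.} The heart of the argument is the Stage-1 protocol design, which must simultaneously (a) make the adversarially controlled, extraction-relevant part of the transcript of length only $O(t)$; (b) still produce a public seed $Y$ close enough to uniform that the $2^{O(t)}$ blow-up of Stage~2 is absorbed --- this is exactly what forces the use of the \emph{strong}, all-but-one-strong multi-source/block-source extractors together with the resilient-subset wrapper rather than a plain multi-source extractor; and (c) ensure that conditioning on the $O(t)$-bit guess in Stage~2 leaves every honest source in $S$ with GE-conditional min-entropy $\ge k - O(t)$ and preserves the independence and block-source structure needed by $\Ext_q$ and by the combiner. A further point to verify is that the guessing argument works uniformly over all $\binom{p}{\le t}$ adaptive choices of $\Faulty$, so that a single union bound (over $W$, not over corrupted sets) suffices.
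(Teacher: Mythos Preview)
Your two-stage plan (build GE-IR security, then lift to GE-QR by guessing the rushing bits) matches the paper's high-level strategy and your invocation of Lemma~\ref{lem:net_ext-single-to-set} and the IR$\to$QR simulation (Theorem~\ref{thm:CR_QR}) is correct. But Stage~1 as you describe it does not deliver the property Stage~2 needs, and the error accounting has a real gap.

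\textbf{The rushing length is not $O(t)$ in your construction.} You have each broadcaster $j\in B$ publish its full source $X_j\in\{0,1\}^n$, so a faulty broadcaster's rushing message is $n$ bits long and the total rushing string has length $O(tn)$. The IR$\to$QR lemma pays $2^{|Y_R|}$ where $Y_R$ is the \emph{actual classical message} produced by rushing, not some ``effective'' summary: fixing the public seed $Y$ does not suffice, because $Y$ is a function of honest broadcasts as well, and your ``evaluate on every size-$(|B|-t)$ subset and combine'' device (besides being exponential in $t$ and not any theorem cited) mixes all broadcasts into every output bit. The paper avoids this by a specific \emph{concatenation-of-short-slices} structure: players in $A$ are grouped into $s=\Theta(t)$ groups of constant size $d$, the multi-source extractor $\IExt$ is applied \emph{per group}, and only the first $k/s$ bits of each group's output are published, so the public seed is $y=(y_1,\dots,y_s)\in\{0,1\}^k$ and a faulty group can only affect its own slice. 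Hence $|y_{\Bad}|\le t\cdot(k/s)=O(\alpha k)$, which is the quantity that actually enters the $2^{|Y_R|}$ blow-up.

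\textbf{The hybrid-first trick is missing.} Even with the right slicing, the per-group $\IExt$ has error $\eps_1=1/\poly(n)$, and $2^{|y_{\Bad}|}\cdot s\eps_1>1$. The paper's fix is \emph{not} to make the IR error sub-exponential; it is to switch to a hybrid in which the honest groups' $\IExt$ outputs are replaced by truly uniform strings \emph{before} invoking Theorem~\ref{thm:CR_QR}. This costs only $s\eps_1$ (once, outside the blow-up), and in the hybrid the only remaining error is that of the two-source extractor $\QTExt_{\Raz}$, which is $2^{-\Omega(k)}$ and can absorb $2^{|y_{\Bad}|}$. Your sentence ``taking the underlying IR error sub-exponentially small in $k$'' is exactly the step that cannot be achieved with the multi-source extractors you cite, and is where the argument would fail without this hybrid.
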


\subsection{Security Lifting Lemmas for Network Extractors} \label{sec:CR_QR}
In this section we present two security equivalence/lifting tools in the context of network extractors, which we consider as one of our main contributions in this paper. 
The first one is about the equivalence between the strong OA security and the strong GE security in the context of network extraction, which is an analogue of their equivalence in the multi-source extraction. 
However, to make it work, our argument relies on the fact that protocols only have independent rushing but no quantum rushing. 
The second one is a new tool to connect the IR security to the QR security, by another simulation argument which suffers a certain amount of loss of parameters. We start with the OA-GE equivalence as follows. 


\begin{theorem} \label{thm:OA-GE-eq-for-net-ext}
If $\NetExt$ is strongly OA-IR-secure for a player $i \in [p]$ with error $\eps$, then $\NetExt$ is GE-IR-secure for $i$  with error $\eps$. 
\end{theorem}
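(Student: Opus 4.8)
The plan is to carry out the same \emph{simulation} argument used for multi-source extractors in Theorem~\ref{thm:SOA_GE}, now adapted to the network setting, and crucially exploiting that $\AdvNet$ performs only \emph{independent} rushing. Fix any $(p,t,n,k)$ \NSA system $(X,\AdvSI,\AdvNet)$ in which $\AdvSI$ is a GE side-information adversary, $\AdvNet$ is an IR network adversary, and $i\notin\Faulty$. Recall $\AdvSI$ prepares a (possibly entangled) state $\rho_0$ on $A_1,\dots,A_p$, sends $A_j$ to player $j$, and player $j$ applies an $X_j$-controlled leaking operation $\Phi_j:\lin{\X_j\ot\A_j}\to\lin{\X_j\ot\E_j}$; the GE-entropy of $X_i$ is $k_i=\Hmin(X_i|A_{-i}E_i)_{\rho_i}\ge k$ with $\rho_i=\Phi_i(X_1\cdots X_p\ot\rho_0)$, and for $j\neq i$ Proposition~\ref{prop:one_entropy} together with Lemma~\ref{lem:data_processing} gives $\Hmin(X_j)\ge k_j\ge k$. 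The goal is to establish the strong security inequality $\trdist{\rho_{Z_i X_{-i} T \Adv} - U\ot\rho_{X_{-i} T \Adv}}\le\eps$.

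\textbf{Step 1 (pass to an OA system).} Define a hybrid \emph{OA} side-information adversary $\Adv'_{SI}$ that picks $i^*=i$, sets $A'_i=(A_1,\dots,A_p)$, has player $i$ apply the $X_i$-controlled operation $\Phi'_i$ that runs $\Phi_i$ on $(X_i,A_i)$ and the identity on $A_{-i}$, and collects $E'_i=(A_{-i},E_i)$, with all other $A'_j$ empty. Pair it with the \emph{same} $\AdvNet$, which, being IR, makes its rushing choices from the transcript alone and never touches the side information. Then $(X,\Adv'_{SI},\AdvNet)$ is a valid OA-IR \NSA system: $X_i$ has OA-entropy $\Hmin(X_i|E'_i)_{\rho_i}=\Hmin(X_i|A_{-i}E_i)_{\rho_i}=k_i\ge k$, and every other honest $X_j$ has OA-entropy $\Hmin(X_j)\ge k$, so no parameters are lost. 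By strong OA-IR security of $\NetExt$ for player $i$,
\[
  \trdist{\rho'_{Z_i X_{-i} T \Adv'} - U\ot\rho'_{X_{-i} T \Adv'}} \le \eps,
\]
where $\Adv'$ consists of $E'_i=(A_{-i},E_i)$ together with $\AdvNet$'s workspace, and $\rho'$ denotes the resulting final state.

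\textbf{Step 2 (simulate the remaining leakage).} Since each $\Phi_j$ is $X_j$-controlled, it commutes with the protocol (which only reads the classical inputs) and, because $\AdvNet$ is IR, with $\AdvNet$'s rushing choices; hence in the GE system the leakage operations $\{\Phi_j\}_{j\ne i}$ may be deferred to the very end. Consequently the GE final state is obtained from $\rho'$ by applying, on the combined register $(X_{-i},A_{-i})$ — which is available precisely because we are using the \emph{strong} security notion that keeps $X_{-i}$ — the admissible operation $\bigotimes_{j\ne i}\Phi_j$ that produces $E_j$ from $(X_j,A_j)$ and discards $A_j$. This same operation maps $U\ot\rho'_{X_{-i}T\Adv'}$ to $U\ot\rho_{X_{-i}T\Adv}$ and maps $\rho'_{Z_iX_{-i}T\Adv'}$ to $\rho_{Z_iX_{-i}T\Adv}$, leaving $Z_i$ (resp.\ $U$) and $T$ untouched and turning $\Adv'=(A_{-i},E_i,\dots)$ into $\Adv=(E_1,\dots,E_p,\dots)$. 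By monotonicity of trace distance (Fact~\ref{prelim:fact:monotone_trace}),
\[
  \trdist{\rho_{Z_iX_{-i}T\Adv} - U\ot\rho_{X_{-i}T\Adv}} \le \eps,
\]
which is exactly strong GE-IR security of $\NetExt$ for player $i$ (and the converse direction is immediate, since any OA system is a GE system).

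The main obstacle, and the only place where real care is needed, is justifying that deferring $\{\Phi_j\}_{j\ne i}$ to the end changes neither the transcript $T$ nor $\AdvNet$'s behaviour: this is exactly the step that uses independent rushing and that fails for quantum rushing (where $\AdvNet$'s messages may depend on the side information, so the leakage operations can no longer be postponed) — which is why a separate IR-to-QR lifting argument is needed in Section~\ref{sec:CR_QR}. The remaining bookkeeping — that the OA entropy of $X_i$ in the hybrid equals its GE entropy $k_i$, and that the composed operation in Step 2 acts only on the subsystem $X_{-i}\Adv'$ — is routine and parallels the proof of Theorem~\ref{thm:SOA_GE}.
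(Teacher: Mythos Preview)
Your proposal is correct and follows essentially the same simulation argument that the paper sketches: pass to an OA adversary that leaks only from $X_i$ (keeping $A_{-i}$ as part of the side information), invoke strong OA-IR security, and then apply the deferred leaking operations $\{\Phi_j\}_{j\neq i}$ on the $(X_{-i},A_{-i})$ registers using monotonicity of trace distance. You also correctly identify the one place where care is needed---the commutation of $\NetExt$ (and $\AdvNet$'s rushing) with the $\Phi_j$'s---and why this relies on independent rushing; this is precisely the point the paper highlights as the key difference from Theorem~\ref{thm:SOA_GE}.
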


\begin{proof} 
The proof of theorem is quite similar to the one of Theorem~\ref{thm:SOA_GE}. Thus, we only provide a sketch here and highlight the difference. 
Given any GE source in the network extraction context, for each $i \in S$, one can perform exactly the same first step in the proof of Theorem~\ref{thm:SOA_GE} by working at an imaginary step after the leakage from the source $X_i$ but before any leakage from $X_{-i}$ happens. At that step, one obtains an OA source, and can apply $\NetExt$ because it is strongly OA-IR-secure. 

The second step is slightly different, where we need to make crucial use of the fact that the protocol only allows IR, which makes the operation $\NetExt$ commute with all leaking operations $\Phi_i$ on the source. 
In contrast, if there were QR, then such commutativity is violated and we cannot proceed with our current technique. 
Then we can follow the original argument to make use of the fact that $\NetExt$ is strongly OA-IR-secure and safely convert the OA source at the imaginary step to the final GE source. 
%
%
%
\end{proof}

\vspace{1mm} Now we switch to dealing with quantum rushing. To that end, we need to formally define the possible correlations that could be generated between classical and quantum systems during the execution of the protocol. 
However, our simulation idea is so general that we don't want to restrict to a very specific protocol design in discussion. 
Thus, we formulate a relatively general model as below which will fit our use in the later analysis for specific protocols, and at the same time serve as an intuitive model to understand independent rushing, quantum rushing and our idea to bridge them. 


Imagine a ccq state $\rho_{XY\Adv} \in \density{\X \ot \Y \ot\Adv}$ where $\X, \Y$ are classical. In a real protocol execution, this state $\rho$ could represent the system at some point. Moreover, let $Y$ be the public message and $X$ be some private information. 
Now imagine a rushing message $Y_R$ and a function $E: \X \times \Y \times \Y_R \rightarrow \Z$ that could be the output of the protocol.  
The difference between independent rushing and quantum rushing can be formulated as
\begin{itemize}
  \item (IR): the rushing message $Y_R$ is only a function of the public $Y$, i.e., $Y_R=Y_R(Y)$. The correlation between $X, Y$ and the quantum part $\Adv$ remains the same. Only some new purely classical correlation is established between $X,Y$ and $Y_R$. 
  \item (QR): the rushing message $Y_R$ is generated by an admissible operation on both $\Y$ and $\Adv$. Precisely, let $\Phi_q: \lin{\Y \ot \Adv} \rightarrow \lin{\Y \ot \Y_R \ot \Adv}$ be a $Y$-controlled admissible operation that captures the quantum rushing strategy. Thus, after the quantum rushing, the whole system becomes, 
  \[
    \rho_{XYY_R \Adv} =\Phi_q (\rho_{XY\Adv}).
  \]
  As a result, the correlation between $X,Y$ and the quantum part $\Adv$ could be completely changed. 
\end{itemize}

In the context of randomness extraction, we care about whether the output $Z=E(X,Y, Y_R)$ is close to uniform against $\Adv$. Let us denote its distance by $\eps$. 
By the following simulation argument, if $Z$ is $\eps$ close to uniform against $\Adv$ when subject to any IR attack, then it is $2^m\eps$ close to uniform against $\Adv$ when subject to any QR attack, where $m=|Y_R|$. 

\begin{theorem}[IR to QR] \label{thm:CR_QR}
For any $\rho_{XY\Adv}$ system described above, let $Z \in \01^l$ and $Y_R \in \01^m$. If for any IR attack,
\[
   \trdist{\rho_{E(X,Y,Y_R)Y\Adv}-\uniform{l} \ot \rho_{Y\Adv}} \leq \eps,
\]
then for any QR attack, we have 
\[
   \trdist{\rho_{E(X,Y,Y_R)Y\Adv}-\uniform{l} \ot \rho_{Y\Adv}} \leq 2^m\eps.
\]
\end{theorem}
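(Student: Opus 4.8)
The plan is to prove this by a "guessing" simulation: I will show how to convert any QR attack into a (randomized) IR attack that succeeds with probability $2^{-m}$, so that the QR error can be at most $2^m$ times the IR error. Concretely, given a QR attack specified by a $Y$-controlled admissible operation $\Phi_q: \lin{\Y \ot \Adv} \rightarrow \lin{\Y \ot \Y_R \ot \Adv}$, I would first pick a uniformly random guess $r \in \01^m$ for the rushing message $Y_R$. The IR simulator then does the following: it sets $Y_R := r$ (which is a legal IR move, since $r$ does not depend on $\Adv$ — in fact it does not depend on anything), runs the rest of the protocol / the function $E$ on $(X, Y, r)$, and on the side-information side it applies the operation $\Phi_q$ internally but then post-selects (conditions) on the event that the $\Y_R$ register produced by $\Phi_q$ equals $r$. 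Since $\Phi_q$ is $Y$-controlled and acts only on $\Y \ot \Adv$, and since the guess $r$ is independent of everything, this internal simulation by the side-information adversary is a valid admissible operation (after the post-selection, which happens with probability exactly $2^{-m}$ conditioned on $r$ matching, and overall the joint distribution of $(r = Y_R)$ is uniform on the correct value in $2^{-m}$ fraction of cases).

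The key computation is then a direct averaging argument. Let $\sigma^{(r)}$ denote the sub-normalized state obtained from the IR simulation with guess $r$ conditioned on the match event, and note $\sum_r 2^{-m} \sigma^{(r)}$ reconstructs exactly the true QR state $\rho_{E(X,Y,Y_R) Y \Adv}$ (because summing over all guesses $r$ with the post-selection onto $Y_R = r$ is the same as not post-selecting at all, and then relabeling). Similarly the ideal state $\uniform{l} \ot \rho_{Y\Adv}$ decomposes the same way. Therefore by the triangle inequality,
\[
  \trdist{\rho_{E(X,Y,Y_R)Y\Adv}-\uniform{l} \ot \rho_{Y\Adv}}
  \;\le\; \sum_{r \in \01^m} 2^{-m}\, \trdist{\sigma^{(r)} - 2^{-m}\bigl(\uniform{l}\ot \rho_{Y\Adv}\bigr)_{\text{restricted}}}.
\]
Each term in the sum, after rescaling by $2^m$, is (at most) the trace distance achieved by the IR attack "set $Y_R := r$ and simulate $\Phi_q$ with post-selection onto $Y_R=r$", which by hypothesis is at most $\eps$. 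Hence each of the $2^m$ summands is at most $2^{-m}\cdot 2^m \cdot 2^{-m}\eps$... — more carefully, I would phrase the bound so that the $2^{-m}$ weights and the $2^{-m}$ post-selection probability combine to give $\sum_r 2^{-m} \cdot \eps \cdot$ (an extra $2^m$ from un-normalizing the conditioned state) $= 2^m \eps$. The cleanest route is: work with the normalized conditional states, observe $\rho_{QR} = \sum_r 2^{-m}\,\hat\sigma^{(r)}$ is a convex combination of the $2^m$ normalized IR-conditioned states $\hat\sigma^{(r)}$, the same holds for the ideal state, and joint convexity of trace distance gives $\trdist{\rho_{QR} - \text{ideal}} \le \max_r \trdist{\hat\sigma^{(r)} - \text{ideal}^{(r)}}$; but the $\hat\sigma^{(r)}$ is $\eps$-close to the ideal only after accounting for the fact that the IR hypothesis is stated for the \emph{un-post-selected} IR state, which differs from $\hat\sigma^{(r)}$ by a factor related to the $2^{-m}$ mass — this is exactly where the $2^m$ loss enters. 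I would make this bookkeeping precise using Fact~\ref{fact:trdist:c_decomp} with $r$ playing the role of the classical label.

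The main obstacle, and the only subtle point, is the bookkeeping around the post-selection: one must verify that "setting $Y_R := r$ and applying $\Phi_q$ internally on the $\Adv$ side, conditioned on the simulated $\Y_R$-register equalling $r$" is genuinely realizable as an IR attack on the state $\rho_{XY\Adv}$ — i.e. that the side-information adversary can carry out $\Phi_q$ and the conditioning without touching $X$, which holds because $\Phi_q$ is $Y$-controlled and acts only on $\Y \ot \Adv$, and $Y$ is public. Once that is granted, invoking the IR hypothesis for this attack and summing/averaging over the $2^m$ guesses (with the $2^{-m}$ success probability of each guess causing the blow-up) yields the claimed bound of $2^m \eps$. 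I would also remark, as the paper does, that this $2^m$ factor is the source of the requirement $k > t/\gamma$ in Theorem~\ref{thm:QR-GEA-net-ext}, since $m$ there will be (a constant times) the effective rushing length.
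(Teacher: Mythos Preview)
Your high-level idea---guess the rushing value $r$, reduce to a deterministic IR attack, and pay $2^m$ for the union over guesses---is exactly the paper's, but your execution has a genuine gap at precisely the step you flag as subtle. The IR hypothesis is a statement about the \emph{original} state $\rho_{XY\Adv}$: it says that for any choice $Y_R = Y_R(Y)$, the resulting trace distance is at most $\eps$. You try to invoke it on the state obtained \emph{after} applying $\Phi_q$ and post-selecting on $Y_R = r$; but post-selection is not an admissible (trace-preserving) operation, so neither the hypothesis nor monotonicity justifies any bound on that conditioned state. Relatedly, your decomposition $\rho_{QR} = \sum_r 2^{-m}\sigma^{(r)}$ carries a spurious $2^{-m}$: if $\sigma^{(r)}$ is the sub-normalized $Y_R = r$ piece of the QR state, then simply $\sum_r \sigma^{(r)} = \rho_{QR}$, since these pieces partition the state. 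Your alternative convexity route (bounding by $\max_r$) cannot produce the required $2^m$ factor either, which is why you end up cycling through three formulations without landing on one.

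The fix is to reverse the order of operations so that the IR premise is applied where it is actually stated. For each $r\in\01^m$, first invoke the IR premise on the deterministic attack $Y_R\equiv r$, obtaining $w_r := \trdist{\rho_{E(X,Y,r)Y\Adv} - \uniform{l}\ot\rho_{Y\Adv}} \leq \eps$. \emph{Then} apply $\Phi_q$ to both states; Fact~\ref{prelim:fact:monotone_trace} keeps the bound and produces a $Y_R$ register. Now decompose classically over its value $r'$ (Fact~\ref{fact:trdist:c_decomp}) to get sub-normalized distances $u_{rr'}$ with $\sum_{r'} u_{rr'}\leq\eps$ for every $r$. The QR trace distance is exactly the diagonal sum $\sum_r u_{rr}$ (this is where your ``guess matches'' intuition is made precise), and since all $u_{rr'}\geq 0$ one has $\sum_r u_{rr}\leq \sum_{r,r'} u_{rr'}\leq 2^m\eps$.
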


\begin{proof} Let $\Phi_q$ be a quantum rushing operation that defines a QR attack.

For any $r \in \01^m$, we can consider an IR attack that set $Y_R=r$ deterministically. By the premise of the theorem, we have
\[
   w_r\defeq \trdist{\rho_{E(X,Y, r)Y\Adv}-\uniform{l} \ot \rho_{Y\Adv}} \leq \eps.
\]
We can apply $\Phi_q$ on both sides:
\[
\trdist{\rho_{E(X,Y, r)Y_RY\Adv}-\uniform{l} \ot \rho_{Y_RY\Adv}}\leq \eps. 
\]
Define 
\[
  u_{rr'} \defeq \trdist{\rho_{E(X,Y, r){y_R=r'}Y\Adv}-\uniform{l} \ot \rho_{{y_R=r'}Y\Adv}} .
\]
Note that $\rho_{E(X,Y, r){y_R=r'}Y\Adv}$ is a sub-normalized state and $\sum_{r'} \rho_{E(X,Y, r){y_R=r'}Y\Adv}=\rho_{E(X,Y, r)Y_RY\Adv}$. Thus, it is easy to see that $\sum_{r' \in \01^m} u_{rr'} \leq w_r \leq \eps, \forall r \in \01^m$. Finally, observe that when $r'=r$, the classical part and the quantum part have the correct correlation after the QR attack, and thus, 
\[
    \trdist{\rho_{E(X,Y,Y_R)Y\Adv}-\uniform{l} \ot \rho_{Y\Adv}}\leq \sum_{r \in \01^m} u_{rr} \leq \sum_{r,r' \in \01^m} u_{rr'} \leq 2^m \eps. 
\]
\end{proof}

The above theorem provides an important tool to handle QR attacks. However, this technique incurs a significant loss in parameters and using this technique alone would fail to handle the QR setting for known protocols. We shall address the additional issues and provide our solutions in Section~\ref{sec:QR}

As a final remark of the two theorems in this section, we shall first apply Theorem~\ref{thm:OA-GE-eq-for-net-ext} to lift the OA-IR security to the GE-IR security as our simulation technique there does not handle QR. Then we apply Theorem~\ref{thm:CR_QR} together with the ideas from Section~\ref{sec:QR} to lift the GE-IR security to the GE-QR security. 


\subsection{Combinatorial and Extractor Tools}
Before moving to the construction and the analysis of our protocol, we briefly review a few combinatorial tools that will be used later. 
First, we shall need the concept of an \emph{$\AND$-disperser} defined in \cite{KLRZ}:

\begin{definition}
[$\AND$-disperser] An $(l,r,d,\delta,\gamma)$ $\AND$-disperser is a bipartite graph with left vertex set $[l]$, right vertex set $[r]$, left degree $d$ s.t. for every set $V \subset [r]$ with $|V|=\delta r$, there exists a set $U \subset [l]$ with $|U| \geq \gamma l$ whose neighborhood is contained in $V$.
\end{definition}

The following lemma is proved in \cite{KLRZ}.

\begin{lemma}[$\AND$-disperser] \label{lem:ANDdisperser} 
There exists a constant $c>0$ such that if $D=o(\log M)$ then for every constant $0<\alpha<1$ and large enough $M$, there exists an explicit construction of an $(N,M,D,\alpha,\beta)$ AND-disperser $G$ such that $M < N \leq Md^D$ and $\beta >\mu^D$. Here $d = c\alpha^{-8}, \mu =\alpha^2/3.$
\end{lemma}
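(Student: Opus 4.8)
As the excerpt notes, this is exactly the lemma of \cite{KLRZ}, so one option is simply to invoke it; here I sketch the construction that establishes it. The plan is to build $G$ from random walks on an explicit expander. First fix an explicit $d$-regular expander $H$ on the right vertex set $[M]$ whose normalized second eigenvalue satisfies $\lambda \le \alpha/4$; standard explicit families (a suitable power of a constant-degree base expander, or a zig-zag product) achieve this with degree $d = \mathrm{poly}(1/\lambda) = c\,\alpha^{-8}$ for an absolute constant $c$. Take the left vertex set of $G$ to be the set of all length-$(D-1)$ walks in $H$, i.e.\ all tuples $(v_0,\dots,v_{D-1})$ of vertices with consecutive entries adjacent in $H$. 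There are $M d^{D-1} \le M d^{D}$ such walks, and strictly more than $M$ of them once $D\ge 2$, which yields $M < N \le M d^{D}$. Join each walk $(v_0,\dots,v_{D-1})$ to the $D$ right vertices $v_0,\dots,v_{D-1}$; the left degree is then $D$, exactly as the definition demands. (A walk revisiting a vertex has fewer than $D$ distinct neighbors, which only makes the disperser property easier; one can also pad to exactly $D$ distinct neighbors if literal adherence to the definition is wanted.)

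Next I would verify the $\AND$-disperser property. Fix $V \subseteq [M]$ with $|V| = \alpha M$. A left vertex has all its neighbors inside $V$ precisely when the corresponding walk never leaves $V$, so the fraction of such left vertices equals the probability that a length-$(D-1)$ walk in $H$ started from the uniform distribution stays inside $V$ at all $D$ steps. Invoking the standard expander ``stays-in-a-set'' bound, this probability is at least $(\alpha - \lambda)^{D} \ge (3\alpha/4)^{D} \ge (\alpha^{2}/3)^{D} = \mu^{D}$, where the crude last step absorbs the low-order factors from the precise form of the walk lemma into $\mu = \alpha^{2}/3$, which is comfortably weaker than $3\alpha/4$. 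Hence the set $U$ of left vertices whose neighborhood lies in $V$ has $|U| \ge \mu^{D} N = \beta N$, as required.

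It then remains to note explicitness and the role of $D = o(\log M)$: since $d$ depends only on $\alpha$ and $D = o(\log M)$, we have $N \le M d^{D} = M\cdot 2^{o(\log M)} = M^{1+o(1)}$, so $N$ is polynomial in $M$, and, $H$ being explicit, the $D$ neighbors of any left vertex are computable in time $\mathrm{poly}(\log N)$. The only genuine work is bookkeeping: selecting an explicit expander family with degree exactly of the shape $c\alpha^{-8}$ and a large enough spectral gap, and quoting the expander random-walk lemma with constants tuned so that the per-step survival probability is at least $\mu = \alpha^{2}/3$. The main (mild) obstacle is that what is needed here is a clean \emph{lower} bound on the walk-survival probability rather than the more familiar upper bound, so one must use the two-sided version of the expander walk estimate; this is why \cite{KLRZ} can package the statement in exactly this form, and I would defer the constant-chasing to their proof.
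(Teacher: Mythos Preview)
The paper does not prove this lemma at all; it simply states ``The following lemma is proved in \cite{KLRZ}'' and moves on. Your sketch is correct and is essentially the KLRZ construction itself (expander random walks, with the walk-stays-in-$V$ lower bound providing the $\beta \geq \mu^D$ guarantee), so there is nothing to compare---you have supplied the proof the paper omits by citation.
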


Another well studied object that we need is a construction of a bipartite expander.

\begin{definition}
[Bipartite Expander] A $(l,r,d,\beta)$ bipartite expander is a bipartite graph with left vertex set $[l]$, right vertex set $[r]$, left degree $d$ and the property that for any two sets $U \subset [l], |U| = \beta l$ and $V \subset [r], |V| = \beta r$, there is an edge from $U$ to $V$.
\end{definition}

Pippenger proved the following theorem:

\begin{theorem}
[Explicit Bipartite Expander \cite{Pippenger87, LubotzkyPS88}]\label{thm:expander} For every $\beta > 0$, there exists a constant $d(\beta) < O(1/\beta^2)$ and a family of polynomial time constructible $(l,l,d(\beta),\beta)$ bipartite expanders.
\end{theorem}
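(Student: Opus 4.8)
The plan is to derive the combinatorial bipartite expander from a spectral (near‑)Ramanujan expander via the expander mixing lemma; the existence of such graphs is in fact easy and the real content of the statement is \emph{explicitness}. For the non‑explicit existence, a uniformly random $d$-left-regular bipartite graph on $[l]\times[l]$ fails to have an edge between a fixed pair $U,V$ of $\beta l$-sets with probability at most $(1-\beta)^{d\beta l}$, and a union bound over the at most $\binom{l}{\beta l}^2\le 2^{2H(\beta)l}$ such pairs is beaten as soon as $d\beta\ln\frac{1}{1-\beta}>2H(\beta)\ln 2$, i.e. once $d=\Omega(\beta^{-1}\log(1/\beta))$. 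So I would spend the effort on making this explicit.

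First I would fix an explicit family of $d$-regular Ramanujan graphs — e.g. the Lubotzky--Phillips--Sarnak graphs, or Morgenstern's construction which exists for every prime‑power degree $q+1$ — on $N$ vertices with second‑largest‑in‑absolute‑value eigenvalue $\lambda(G)\le 2\sqrt{d-1}$. Passing to the \emph{bipartite double cover} $\tilde G$ (left and right copies of $V(G)$, with $\{u,v\}\in E(G)$ giving the edge $(u,v)$) yields a $d$-left-regular bipartite graph on $[N]\times[N]$ whose biadjacency matrix $B$ has the all‑ones vector as top left‑ and right‑singular vector with singular value $d$, and all other singular values equal to $|\lambda_i(G)|\le 2\sqrt{d-1}$. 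The bipartite expander mixing lemma then gives, for any $U$ on the left and $V$ on the right,
\[
  \Big|\,e(U,V)-\tfrac{d\,|U|\,|V|}{N}\,\Big|\;\le\;2\sqrt{d-1}\,\sqrt{|U|\,|V|}.
\]
Taking $|U|=|V|=\beta N$ this reads $e(U,V)\ge \beta N\bigl(d\beta-2\sqrt{d-1}\bigr)>0$ whenever $d\beta>2\sqrt{d}$, i.e. whenever $d>4/\beta^2$. Hence I would set $d(\beta)$ to be the least admissible Ramanujan degree exceeding $4/\beta^2$, which is $O(1/\beta^2)$ since the admissible degrees are dense up to constant factors.

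It then remains to realize the construction for an \emph{arbitrary} number of vertices $l$, not just the $N$'s in the Ramanujan family, and to make the graph exactly $d$-left-regular. For the former I would pick $N$ in the family with $l\le N\le Cl$ for an absolute constant $C$ (possible by density of primes / prime powers for the relevant moduli) and delete all but the first $l$ vertices on each side of $\tilde G$: a set of size $\beta l$ in the restriction has size $\ge\beta l\ge(\beta/C)N$ in $\tilde G$, so the mixing‑lemma bound still applies with $\beta$ replaced by $\beta/C$, giving an edge provided $d>4C^2/\beta^2=O(1/\beta^2)$. After the restriction the left degree is only at most $d$; since adding edges can only help the defining property, I would add arbitrary edges to restore exact $d$-left-regularity. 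Polynomial‑time constructibility is inherited from that of the underlying LPS/Morgenstern family together with these elementary post‑processing steps.

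The main obstacle is exactly this last bookkeeping: ensuring that the restricted set of $(\text{degree},\text{size})$ pairs for which strongly explicit Ramanujan graphs are known is dense enough — up to constant factors in \emph{both} coordinates — to hit every $(\Theta(1/\beta^2),\Theta(l))$ target, and tracking the constant‑factor losses incurred by the restriction so that the final degree is still $O(1/\beta^2)$. An alternative that sidesteps the number theory is to use the Gabber--Galil/Margulis explicit expanders on $\mathbb{Z}_n^2$ (available for every $n$) and take graph powers, but this only yields $\lambda=O(d^{1-\delta})$ for a fixed $\delta>0$, hence degree $O(\beta^{-1/\delta})$ rather than $O(1/\beta^2)$; so for the claimed quadratic dependence one genuinely wants a (near‑)Ramanujan family, and the density argument for it is the crux.
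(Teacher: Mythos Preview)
The paper does not give a proof of this statement: it is quoted as a known result, attributed to Pippenger and to Lubotzky--Phillips--Sarnak, and used as a black box in the network extractor construction. So there is nothing in the paper to compare your proposal against.

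That said, your sketch is a sound route to the claimed bound. Deriving the bipartite expander from a (near-)Ramanujan family via the bipartite double cover and the expander mixing lemma is exactly the standard argument, and the threshold $d>4/\beta^2$ you obtain matches the $O(1/\beta^2)$ claim. Your identification of the real work --- density of admissible $(\text{degree},\text{size})$ pairs in the explicit Ramanujan families, and the constant-factor bookkeeping when restricting to arbitrary $l$ --- is accurate; these are genuine but routine issues, handled e.g.\ by combining LPS/Morgenstern with prime-power density. The alternative via Gabber--Galil/Margulis plus powering that you mention would indeed lose the quadratic dependence, so sticking with Ramanujan is the right call for the stated parameters.
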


We will also need to use the following extractor for a special type of sources. 

\begin{theorem} [General Source vs Somewhere random source with few rows Extractor \cite{BarakRSW06}]
There exist constants \label{thm:srgeneral}
  $\alpha, \beta < 1$ such that for every $n,k(n)$ with $k >
\log^{10}n,$ and constant $0< \gamma < 1/2$, there is a polynomial
time computable function $\SRExt:\{0,1\}^{n} \times
\{0,1\}^{k^{\gamma +1}} \rightarrow \{0,1\}^m $ s.t. if $X$ is an
$(n,k)$ source and $Y$ is a $(k^{\gamma} \times k)$-SR-source,\footnote{Here, we view $Y$ as $k^{\gamma}$ rows of strings of length $k$, and $Y$ is a $(k^{\gamma} \times k)$-SR-source if there exist a marginally uniform row in $Y$.}

\[ | (Y , \SRExt(X,Y)) - (Y , U_m) | < \epsilon \]
and
\[ | (X , \SRExt(X,Y)) - (X , U_m) | < \epsilon \]
where $U_m$ is independent of $X,Y$, $m = k-k^{O(1)}$ and
$\epsilon = 2^{-k^{\alpha}}$.
\end{theorem}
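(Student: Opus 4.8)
The plan is to reconstruct the $\SRExt$ of \cite{BarakRSW06} (this is a quoted result, so at the level of this paper it suffices to cite it, but a self-contained argument proceeds roughly as follows). The starting observation is that a $(k^{\gamma}\times k)$-SR-source $Y=(Y_1,\dots,Y_{k^\gamma})$ is in particular a $(k^{\gamma+1},k)$ source, since marginalizing onto the (unknown) uniform row $Y_{j^*}$ shows $\Pr[Y=y]\le 2^{-k}$ for every $y$. The obstruction is that its entropy \emph{rate} $k^{-\gamma}$ is far below $1/2$, so Theorem~\ref{thm:Raz} does not apply directly, and we must exploit the somewhere-random structure. The core subroutine is a \emph{merger}: using the weak source $X$ in the role of a seed, reduce the number of candidate ``good rows''. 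Concretely, apply a strong seeded extractor with each row $Y_j$ as seed to a fresh slice of $X$, obtaining candidates $V_j$, one of which (for $j=j^\ast$) is close to uniform; then run a challenge--response / win--win step that pairs up the candidate rows, extracts a short ``challenge'' from one half and a ``response'' from $X$, and argues that in one of the two cases the response is genuinely uniform. After $O(\log k^\gamma)$ levels of this recursion one is left with a constant number of rows, at which point Theorem~\ref{thm:srgeneral_a} (the constant-row version of $\SRExt$) finishes the job.

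The parameters come out as stated precisely because $t=k^{\gamma}$ with $\gamma<1/2$ is polynomially bounded: the recursion has depth $O(\log t)=O(\log k)$, each level makes $\mathrm{poly}(t)$ strong-extractor calls, so the total error is $\mathrm{poly}(t)\cdot 2^{-k^{\Omega(1)}}=2^{-k^{\alpha}}$ for a suitable $\alpha$, and the total entropy ``spent'' on seeds, challenges, and conditionings across all levels is $k^{O(1)}$ (each invocation of Lemma~\ref{lem:condition} or of a chain rule costs at most $k^{O(1)}$ bits, summed over $\mathrm{poly}(t)$ uses), leaving output length $m=k-k^{O(1)}$. The seed $Y$ needs only length $k^{\gamma+1}=t\cdot k$, which is exactly the size of the SR-source, and the hypothesis $k>\log^{10}n$ is inherited from Theorem~\ref{thm:srgeneral_a} to guarantee the base extractors have enough room (seed length versus available entropy).

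To obtain the \emph{two-sided} strong guarantee---$\SRExt(X,Y)$ close to uniform given $Y$, and also given $X$---the construction is organized as an \emph{alternating extraction} between $X$ and the rows of $Y$: each intermediate string is extracted from one side using a seed taken from the other side, so by induction every intermediate value is a deterministic function of the side it was extracted from once the other side is fixed, and is close to uniform conditioned on that other side; the final output then inherits closeness-to-uniform in both directions. I expect the main obstacle to be exactly this bookkeeping: one must check that conditioning on the transcript of all previously produced strings (needed to invoke strongness at the next step) does not destroy the min-entropy of $X$ or of the good row $Y_{j^\ast}$, and that the conditional independence required for each extraction step---namely that, after the fixings made so far, the source currently being extracted from is independent of the current seed---survives every level of the recursion. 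Controlling how these conditionings compound is the reason the argument needs the number of rows to be only polynomial and the min-entropy $k$ to be at least $\log^{10}n$; the resulting error and entropy losses then form a geometric/linear sum that stays within the claimed bounds.
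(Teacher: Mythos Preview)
The paper does not prove this theorem at all: it is quoted verbatim from \cite{BarakRSW06} as a black-box tool in the ``Combinatorial and Extractor Tools'' subsection and is never argued for. So there is no ``paper's own proof'' to compare your proposal against; at the level of this paper, the correct proof is simply the citation, which you already note in your first parenthetical.

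Your sketch of what happens inside \cite{BarakRSW06} is in the right spirit (iterated merging/challenge--response to collapse the $k^{\gamma}$ rows down to $O(1)$, then finish with the constant-row extractor, with alternating-extraction bookkeeping to get strongness in both variables), and the parameter accounting you give is plausible. But since the present paper treats the result as an imported theorem, none of that reconstruction is needed here; if you want to include it, it should be clearly marked as an informal summary of the cited work rather than as a proof belonging to this paper.
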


\subsection{Our Network Extractor for the Independent Rushing Case} \label{subsec:CR-net-ext}

We construct our network extractors for the independent rushing case and prove Theorem~\ref{thm:CR-GEA-net-ext} in this section. Note that by Theorem~\ref{thm:OA-GE-eq-for-net-ext} and Lemma~\ref{lem:net_ext-single-to-set}, it suffices to construct a strongly OA-IR secure network extractor. Our construction follows the construction in~\cite{KLRZ}, but lift the marginal security to OA security. Along the way, we obtain a simpler construction that improves several aspects of the KLRZ network extractors by using improved independent source extractor of Li~\cite{Li13b}, and an alternate extraction idea.

\begin{lemma}[Strong OA-IR Network Extractors] \label{lem:CR-OA-net-ext} For every constants $\alpha < \gamma \in (0,1)$ and $c > 0$, for sufficiently large $p,t,n,k$ such that $p \geq (3+\gamma) t$ and $k \geq \log^{10}n$, there exists a network extractor $\NetExt$ for $(p,t,n,k)$ \NSA system with output length $m = k - o(k)$ and a set $S \subset [p]$ of size $|S| \geq p - (1+\alpha) t$ such that $\NetExt$ is strongly OA-IR secure for every $i \in S$ with error $\eps = n^{-c}$.
\end{lemma}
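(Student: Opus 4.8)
The plan is to follow the KLRZ network-extractor template~\cite{KLRZ}, instantiated with the best available building blocks, and then lift its no-side-information guarantee to strong OA-IR security, player by player. By Theorem~\ref{thm:OA-GE-eq-for-net-ext} together with Lemma~\ref{lem:net_ext-single-to-set}, this lemma already yields Theorem~\ref{thm:CR-GEA-net-ext}, so it is the technical core.

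\emph{The protocol and its marginal analysis.} Following~\cite{KLRZ}, designate at most $(1+\alpha)t$ players as ``publishers'' and let the remaining honest players be ``extractors''. Apply the $\AND$-disperser of Lemma~\ref{lem:ANDdisperser} to the publisher set to form committees of size $D = o(\log p)$, so that whichever $\le t$ players the adversary corrupts, some committee is entirely honest. In round one the members of each committee publish their sources and the transcript computes Li's independent-source extractor $\IExt$ (Theorem~\ref{thm:Li_IExt}, which needs only a constant number of sources and $k \ge \log^{10}n$) on each committee's values; an all-honest committee produces an $\IExt$-almost-uniform candidate even conditioned on all the other players' sources, so the list of candidates is a somewhere-random source $Y$ with few rows, published to everyone. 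A bipartite expander (Theorem~\ref{thm:expander}) is used, as in~\cite{KLRZ}, to arrange the committee assignments; since publishers and extractors are disjoint, $Y$ is a deterministic function of $X_{-i}$ for every extractor $i$. Each extractor $i$ extracts $Z_i$ from its untouched $X_i$ using $Y$ as a somewhere-random seed via $\SRExt$ (Theorem~\ref{thm:srgeneral}) followed by an alternating-extraction tail ending in a quantum strong seeded extraction $\Ext_q(X_i,\cdot)$, as in Figure~\ref{fig:BExt}/Theorem~\ref{thm:qbext}; this tail pushes the output to $m = k - o(k)$. The protocol takes three rounds. The disperser and expander guarantees, plus a union bound over the identity of the faulty set, yield a set $S$ of honest players with $|S| \ge p - (1+\alpha)t$ — this is where $p \ge (3+\gamma)t$ enters — on which, with no side information, $Z_i$ is $\eps/2$-close to uniform even given $X_{-i}$ and the transcript $T$; this is the KLRZ marginal argument re-run with $\IExt$ and the alternating-extraction tail.

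\emph{Lifting to strong OA-IR security.} Fix $i \in S$. An OA adversary leaks only from one source $X_{j^*}$, and its initial state $\rho_0$ is independent of all the $X_j$; under independent rushing it is also independent of the corrupted players' messages. If $j^* \neq i$, conditioning on $X_{-i}$ fixes $X_{j^*}$, so $\Adv$ becomes a function of the fixed $X_{j^*}$ and of $\rho_0$, hence independent of $(Z_i,T)$; strong OA security collapses to the marginal statement already proved, and Fact~\ref{fact:trdist:add_prod} reattaches $\Adv$. If $j^* = i$, then by the entropy convention of the \NSA system $\Hmin(X_i\mid E_i) \ge k$; player $i$ is an extractor, so it publishes nothing derived from $X_i$ and $Y$ depends only on $X_{-i}$; fixing first the short string that the alternating exchange draws from $X_i$, Lemma~\ref{lem:q_condition} leaves $X_i$ with conditional min-entropy above the threshold of $\Ext_q$ given $E_i$, $X_{-i}$, $T$ and the (now public, $X_i$-independent) seed, and the quantum-strong property of $\Ext_q$ makes $Z_i$ close to uniform given that seed and $E_i$, hence given $X_{-i}, T, \Adv$. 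Summing the two cases onto the $\eps/2$ construction error gives error $n^{-c}$, and Lemma~\ref{lem:net_ext-single-to-set} passes from per-player to set security.

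\emph{Main obstacle.} The delicate point is the $j^*=i$ case: the seed of the final $\Ext_q$ is, in the alternating-extraction tail, itself partly drawn from $X_i$, so one must order the conditioning correctly — fix that short string first so the rest of the seed becomes a public, $X_i$-independent quantity, then apply Lemma~\ref{lem:q_condition} — and check that $X_i$ still clears the $\Ext_q$ threshold while $Z_i$ keeps length $k - o(k)$, all in the presence of the quantum side information $E_i$. A secondary nuisance is the bookkeeping needed to make the disperser/expander parameters, the union bound over the faulty set, and the accumulated extractor errors simultaneously deliver $|S| \ge p - (1+\alpha)t$, tolerance $p \ge (3+\gamma)t$, and error $n^{-c}$.
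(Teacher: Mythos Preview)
Your high-level strategy — follow the KLRZ template and then lift to strong OA security by ending with a quantum-proof seeded extraction on $X_i$ — matches the paper. But your protocol description collapses the paper's three player sets $A,B,C$ into just two (``publishers'' and ``extractors''), and this loses the structural ingredient that makes the lift work.

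The point where your argument breaks is the ``alternating-extraction tail ending in $\Ext_q(X_i,\cdot)$, as in Figure~\ref{fig:QBSAExt}/Theorem~\ref{thm:one_extrac_block}.'' That template takes a \emph{block} source $(X_1,X_2)$ together with an independent general source $X_3$: one computes $R=\BExt(X_1,X_3)$, then $T=\Ext_c(X_2,R)$, then $Z=\Ext_q(X_3,T)$. The extra block $X_2$ is exactly what makes the final seed $T$ become $X_3$-independent after fixing $(X_1,R)$; without it there is nothing to feed $\Ext_c$. In your two-set protocol the only objects available to extractor $i$ are $X_i$ and the public somewhere-random $Y$ (a deterministic function of $X_{-i}$). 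After you compute $R=\SRExt(X_i,Y)$ and condition on $Y$ (or on $X_i$), there is no remaining independent block with entropy to play the role of $X_2$, so the step $T=\Ext_c(\cdot,R)$ has no legal input and the tail cannot be carried out. Your ``main obstacle'' paragraph already senses this (``fix that short string first so the rest of the seed becomes a public, $X_i$-independent quantity''), but never says where that ``rest of the seed'' comes from.

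The paper supplies the missing block by inserting the intermediate set $B$: players $j\in B$ compute $y_j=\SRExt(X_j,s^j)$ from the somewhere-random source coming out of $A$, and publish slices $y^1_j,y^2_j$ in two \emph{separate} rounds. Lemma~\ref{lem:step-1-net-ext-conclusion} shows that $(y^1,y^2)$ is (close to) a genuine two-block source with rate $>1/2$ in each block, even conditioned on $T_1$; this block structure is created by the round separation, not by any property of the somewhere-random source. Then $\PriExt$ applies the pre-packaged $Y$-strong OA-secure two-block+general extractor $\OAExt$ of Theorem~\ref{thm:2-block-general-ext} to $(X_i,(y^1,y^2))$ for $i\in C$, and to $(X_j,y_{-j})$ for $j\in B$. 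The proof of Lemma~\ref{lem:CR-OA-net-ext} then reduces to invoking the OA security of $\OAExt$ as a black box, plus the conditioning observation that, given $y$ (resp.\ $y_{-j}$ and $(T_1,y^1_j,y^2_j)$), the output $Z_i$ is independent of $X_{-i}$, $T$, and any side information attached to another source. Your per-case lifting argument for $j^*\neq i$ is fine and is what the paper does; for $j^*=i$ the paper does not need the delicate conditioning you worry about, precisely because the seed is produced by $B$ and hence is already $X_i$-independent for $i\in C$ (and $y_{-j}$-independent for $j\in B$ after fixing the short $y_j$).
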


At a high level, to lift the security, we simply replace the extractor in the last step of~\cite{KLRZ} by a strongly OA-secure one with a similar idea appeared in Section~\ref{sec:bOA:one_source}. More precisely, the construction of~\cite{KLRZ} can be viewed in two steps, where the first step generates a public high min-entropy source $Y$, which is used by each honest player $i$ in the second step to extract private uniform randomness $Z_i = \Ext(X_i,Y)$ using some $Y$-strong randomness extractor $\Ext$. We show that if $\Ext$ is strongly OA-secure for $X_i$, then the network extractor is strongly OA-IR secure for player $i$.
We proceed to present our (simplified) construction in steps as follows. 
\begin{itemize}
\item In step 1, we construct a three-round sub-protocol $\PubExt$ that outputs a public two-block source $y = (y^1, y^2)$ with marginal security using $\AND$-dispersers (Lemma~\ref{lem:ANDdisperser}), expanders (Theorem~\ref{thm:expander}), BRSW extractors (Theorem~\ref{thm:srgeneral}), and improved independent source extractors (Theorem~\ref{thm:Li_IExt}). 
\item In step 2, each honest player $i$ uses $y$ to extract uniform randomness from $x_i$ using a $y$-strong OA-secure randomness extractor. 
\end{itemize}

Let $\delta = (\gamma - \alpha)/4$. Throughout the protocol, we partition players into three disjoint sets $P = A \cup B \cup C$ of size $|A| = (1+\alpha)\cdot t$, $|B| = 2\cdot (1+2\delta) \cdot t$, and $|C| = p - |A| - |B|$.



\paragraph{Step 1. Obtain a public block source with marginal security.} In this step, we construct a $\PubExt$ sub-protocol that outputs a public two-block source $y = (y^1, y^2)$ with (marginal) entropy rate $> 0.5$ in both blocks. A formal description of the $\PubExt$ protocol can be found in Figure~\ref{fig:step1-net-ext}. Note that only marginal security is required here. We prove the following lemma by adapting the analysis of~\cite{KLRZ}. The proof explains the intuition behind the construction.

\begin{lemma} \label{lem:step-1-net-ext} For every $(p,t,n,k)$ \NSA system $(X,\AdvSI, \AdvNet)$ with OA $\AdvSI$ and IR $\AdvNet$, there exists a set $B_{\Good} \subset B \backslash \Faulty$ of size at least $|B_{\Good}| \geq (1/2 + \delta/4) \cdot |B| +1$ such that at the conclusion of $\PubExt$, for every $j \in B_{\Good}$, we have $(Y_j, T_1) \approx_{\eps_1+\eps_2} (U_{m_2}, T_1)$, where $T_1$ denotes the transcript of the first round.
\end{lemma}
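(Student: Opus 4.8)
The plan is to follow the analysis of KLRZ~\cite{KLRZ}, unwinding $\PubExt$ one round at a time while tracking which honest sources remain independent of the public transcript. Fix the corrupted set $\Faulty$ with $|\Faulty|\le t$. Since $|A|=(1+\alpha)t$ and $|B|=2(1+2\delta)t$, at least $\alpha t$ players of $A$ and at least $(1+4\delta)t$ players of $B$ are honest. The first point I would record is the consequence of \emph{independent rushing}: every round-$1$ message is a function of the round-$1$ messages of the honest players (and of public coins) only, and in $\PubExt$ the honest players who speak in round~$1$ are exactly those in $A$; hence $T_1$ is a randomized function of $\{X_a : a\in A\setminus\Faulty\}$ alone, and in particular $T_1$ is independent of every $X_j$ with $j\in B\setminus\Faulty$. (The OA side information plays no role in this lemma beyond ensuring that each honest $X_j$ still has marginal min-entropy $\ge k$, which is all we use; the side information itself is handled in Step~2.) This independence is the \emph{only} place the IR hypothesis enters, and it is exactly what breaks under quantum rushing.

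Next I would run the KLRZ combinatorial machinery on the round-$1$ broadcasts. Instantiating the $\AND$-disperser of Lemma~\ref{lem:ANDdisperser} with right set $A$ and honest fraction $\alpha/(1+\alpha)$ gives constant degree $D=O(1)$ and a constant fraction of ``good'' query-tuples, meaning $D$-tuples of $A$-players that are entirely honest; composing with a bipartite expander of Theorem~\ref{thm:expander} then distributes these tuples over $B$ so that a set $B_{\Good}\subseteq B\setminus\Faulty$ with $|B_{\Good}|\ge (1/2+\delta/4)|B|+1$ each receives at least one good tuple among its constantly many assigned tuples — this is where the slack between $(1+4\delta)t$ and $(1/2+\delta/4)|B|$ is spent. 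Applying Li's independent-source extractor $\IExt$ (Theorem~\ref{thm:Li_IExt}) to each assigned tuple, a good tuple is fed $D$ independent sources of min-entropy $\ge k$, so its output is $\eps_1$-close to uniform; hence the vector $W_j$ of $j$'s candidate outputs is $\eps_1$-close to a somewhere-random source with only constantly many rows (a vector one coordinate of which is $\eps_1$-close to uniform is $\eps_1$-close to somewhere-random, by reweighting that coordinate). Crucially $W_j$ is a deterministic function of $T_1$ and the public coins, hence independent of $X_j$.

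Then I would do the per-player extraction and the lifting. For $j\in B_{\Good}$, $\PubExt$ derives $j$'s output $Y_j$ from $X_j$ and $W_j$ using BRSW's $\SRExt$ (Theorem~\ref{thm:srgeneral}), which is strong on the somewhere-random side; since $X_j$ has min-entropy $\ge k$ and is independent of $W_j$, Theorem~\ref{thm:srgeneral} together with the $\eps_1$-approximation of $W_j$ gives $(W_j,Y_j)\approx_{\eps_1+\eps_2}(W_j,U_{m_2})$, where $\eps_2$ is the error of $\SRExt$ (and if $\PubExt$ applies a further strong seeded extractor to reach length $m_2$ the same reasoning applies verbatim, since that extractor is strong too). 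To replace $W_j$ by the whole transcript $T_1$, note that $Y_j$ is a deterministic function of $(X_j,W_j)$ while $W_j$ is a deterministic function of $T_1$; so, conditioned on $W_j$, $Y_j$ is a function of $X_j$ alone, which is independent of $T_1$, whence $Y_j$ and $T_1$ are conditionally independent given $W_j$. Splitting $T_1=(W_j,T_1')$ notationally, this gives $\trdist{(Y_j,T_1)-(U_{m_2},T_1)}=\trdist{(Y_j,W_j)-(U_{m_2},W_j)}\le\eps_1+\eps_2$, which is the claim.

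I expect the main obstacle to be the simultaneous bookkeeping rather than any single inequality: one must choose the $\AND$-disperser and expander parameters so that each $j\in B_{\Good}$ obtains a constant-row somewhere-random source containing a genuinely uniform row while still $|B_{\Good}|\ge(1/2+\delta/4)|B|+1$; one must check that under IR no round-$1$ message, honest or faulty, can depend on an honest $X_j$ with $j\in B$ (this is what forces only $A$ to speak in round~$1$); and one must apply the conditional-independence lifting to all of $T_1$ and not merely to $W_j$. Each step is routine in isolation, but the last two are exactly why the argument is confined to independent rushing — handling quantum rushing later needs the separate reduction of Theorem~\ref{thm:CR_QR}.
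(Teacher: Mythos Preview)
Your proposal is correct and follows essentially the same approach as the paper: use the $\AND$-disperser to find a $\beta_1$-fraction of ``good'' tuples in $[N]$ whose neighbors lie entirely in $A\setminus\Faulty$, use the expander to show all but $\delta t$ players in $B$ have a good tuple among their $d_2$ neighbors, apply $\IExt$ and $\SRExt$, and finally lift from $S^j$ (your $W_j$) to $T_1$ via conditional independence. The paper's proof is terser—it simply asserts ``given $S^j$, $Y_j$ is independent of $T_1$'' without spelling out the role of IR or the conditional-independence equality $\trdist{(Y_j,T_1)-(U_{m_2},T_1)}=\trdist{(Y_j,S^j)-(U_{m_2},S^j)}$—so your added detail on these points is a genuine clarification rather than a different route.
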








\begin{proof} 
Let $A_{\Faulty}$ and $B_{\Faulty}$ denotes the sets of faulty players in $A$ and $B$, respectively. Since $|A_{\Faulty}| \leq t$, by the property of the $\AND$-disperser, there exists a good set $V \subset [N]$ of size $|V| \geq \beta_1 N$ such that the neighbors of $V$ in $G$ are contained in $A \backslash A_{\Faulty}$. Thus, for every $v\in V$ with neighbors $i_1,\dots,i_{d_1}$, $S_v = \IExt(X_{i_1},\dots, X_{i_{d_1}})$ is $\eps_1$-close to uniform. Let $B_{\Bad}$ be the set of left vertices $j \in H$ such that all neighbors of $j$ are outside $V$. By the property of the expander, we have $|B_{\Bad}| \leq \beta_2 N \leq \delta t$. Let $B_{\Good} = B \backslash (B_{\Faulty} \cup B_{\Bad})$. We have $|B_{\Good}| \geq |B| - t - \delta t \geq (1/2+\delta/4)|B|$. By definition, for every $j \in B_{\Good}$, $j$ is a honest player and $j$ has a neighbor in $V$. Thus, $S^j$ is $\eps_1$-close to a somewhere random source, and $Y_j = \SRExt(X_j,S^j)$ is $(\eps_1+\eps_2)$-close to uniform given $S^j$. Finally, note that given $S^j$, $Y_j$ is independent of $T_1$. Therefore, 
$(Y_j,T_1) \approx_{\eps_1+\eps_2} (U_{m_2}, T_1)$.
\end{proof}

\begin{figure}
\begin{protocol*}{Protocol $\PubExt$: Obtain a public block source with marginal security.}
\begin{description}
\item Protocol Input: Private weak sources $x_i$'s of players $i$ in sets $A$ and $B$.
\item Protocol Output: A public block source $y = (y^1, y^2) \in \zo^{|B|\cdot \sqrt{k}+|B|\cdot \sqrt{k}}$.
\item Sub-Routines and Parameters:
\begin{step}
\item Let $\IExt$ be a $(d_1,n,k,m_1,\eps_1)$ independent source extractor with some constant $d_1$ and $\eps_1 \leq n^{-2c}$ from Theorem~\ref{thm:Li_IExt}.
\item Let $G$ be an explicit $\AND$-disperser with parameters $(N, M=|A|, d_1, \alpha_1 = (\alpha/(1+\alpha)), \beta_1)$ from Theorem~\ref{lem:ANDdisperser}, where $M \leq N \leq M \cdot \poly(\alpha_1^{-d_1})$ and  $\beta_1 \geq \poly(\alpha_1^{d_1})$.
\item Let $H$ be an explicit bipartite expander with parameters $(N,N, d_2, \beta_2)$ from Theorem~\ref{thm:expander}, where $\beta_2 = \min \{ \beta_1, \delta t/ N\}$, and $d_2 = O(1/\beta_2^2)$.
\item Let $\SRExt$ be the BRSW extractor from Theorem~\ref{thm:srgeneral} with error parameter $2^{-k^{\Omega(1)}}$ and output length $m_2 \geq \sqrt{k}$. 
\end{step}
\item Round $1$.
\begin{step}
\item Every player $i \in A$ sends his source $x_i$ to all the players in $B$.
\end{step}
\item Round $2$ and $3$.
\begin{step}
\item Identify $A$ with the right vertex set of $G$. Identify $B$ with (arbitrary subset of) left vertex set of $H$. Identify right vertex set of $H$ with left vertex set of $G$. 
\item For each left vertex $v \in [N]$ in $G$, let $i_1,\dots, i_{d_1}$ be its neighbors. Define $s_v = \IExt( x_{i_1},\dots, x_{i_{d_1}})$.
\item For $j \in B$, let $v_1, \dots, v_{d_2}$ be his neighbors in $H$. Let $s^j = (s_{v_1},\dots, s_{v_{d_2}})$. Player $j$ computes $y_j = \SRExt(x_j, s^j)$ and output the first $\sqrt{k}$ bits as $y^1_j$ in round $2$ and and the next $\sqrt{k}$ bits as $y^2_j$ in round $3$.
\item The public outputs $y^1$ and $y^2$ are concatenation of $y^1_j$ and $y^2_j$ for $j\in B$, respectively. 
\end{step}
\end{description}
\end{protocol*}
\caption{Step 1 of our GE-IR secure network extractor protocol.}
\label{fig:step1-net-ext}
\end{figure}

The above lemma readily implies the following technical statement, which says that the output $(Y^1,Y^2)$ forms a block source even given the transcript $T_1$. 

\begin{lemma} \label{lem:step-1-net-ext-conclusion} For every $(p,t,n,k)$ \NSA system $(X,\AdvSI, \AdvNet)$ with OA $\AdvSI$ and IR $\AdvNet$, at the conclusion of $\PubExt$, the output $(T_1, Y^1, Y^2)$ is $\eps'$-close to a block source with entropy rate at least $1/2 + \delta/4$ in second and third blocks (i.e., $Y = (Y^1, Y^2)$ is a two-block source even conditioned on $T_1$), where $\eps' = |B| \cdot (\eps_1+\eps_2)$.

Furthermore, for every $j\in B \backslash \Faulty$, let $Y_{-j} = (Y^1_{-j}, Y^2_{-j})$ be the two-block string $Y$ with the $j$-th component removed. $(T_1, Y^1_j, Y^2_j, Y^1_{-j}, Y^2_{-j})$ is a block source with entropy rate at least $(1/2 + \delta/4)$ for the last two blocks. 
\end{lemma}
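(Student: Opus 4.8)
The plan is to derive both statements from Lemma~\ref{lem:step-1-net-ext} by a hybrid argument, after first pinning down the independence structure of the $B$-players' outputs conditioned on the first-round transcript $T_1$. Since the adversary is independent rushing, the faulty $A$-players' round-$1$ messages are deterministic functions of the messages they see, so $T_1$ depends only on the $A$-players' sources and $\AdvNet$'s internal randomness, all of which are independent of the $B$-players' sources; in particular, conditioned on $T_1$ the honest sources $\{X_j\}_{j\in B\setminus\Faulty}$ remain mutually independent, and the seeds $s^j$ (and, by independent rushing, every faulty $B$-player's round-$2$ and round-$3$ message) are deterministic functions of $T_1$. Hence, conditioned on $T_1 = t_1$, each honest $B$-player's pair $(Y^1_j, Y^2_j)$ (the first $2\sqrt{k}$ output bits of $\SRExt(X_j, s^j)$) is a deterministic function of $X_j$ alone, so the pairs $\{(Y^1_j,Y^2_j)\}_{j\in B\setminus\Faulty}$ are mutually independent given $T_1$. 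Combining this with Lemma~\ref{lem:step-1-net-ext}, for every $j\in B_{\Good}$ we have $((Y^1_j,Y^2_j), T_1) \approx_{\eps_1+\eps_2} (U_{2\sqrt k}, T_1)$, where $U_{2\sqrt k}$ is a fresh uniform string.

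Next I would run a hybrid argument over $B_{\Good}$ at the level of the ``pre-faulty'' distribution $(T_1, (Y^1_j,Y^2_j)_{j\in B\setminus\Faulty})$: replacing the pair of each $j\in B_{\Good}$, one at a time, by an independent fresh uniform string on $\zo^{\sqrt k}\times\zo^{\sqrt k}$. By the conditional independence just established (and Fact~\ref{fact:trdist:add_prod} in the classical case), each replacement changes the distribution by at most $\eps_1+\eps_2$ in statistical distance, for a total of $|B_{\Good}|(\eps_1+\eps_2)\le |B|(\eps_1+\eps_2) = \eps'$. Applying to both the real and the fully replaced pre-faulty distributions the same deterministic map that re-derives the faulty players' messages (which only decreases distance, Fact~\ref{prelim:fact:monotone_trace}), and then reading off $(T_1,Y^1,Y^2)$, we get $(T_1,Y^1,Y^2)\approx_{\eps'} (T_1,\tilde Y^1,\tilde Y^2)$. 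The point of passing to $(T_1,\tilde Y^1,\tilde Y^2)$ is that it is an \emph{exact} block source: conditioned on \emph{any} value of $T_1$, the public string $Y^1$ contains, as a substring, the $|B_{\Good}|$ fresh uniform $\sqrt k$-bit blocks contributed by $B_{\Good}$, and these are independent of $T_1$, so $\Hmin(\tilde Y^1\mid T_1=t_1)\ge |B_{\Good}|\sqrt k$, i.e.\ entropy rate $\ge |B_{\Good}|/|B| > 1/2+\delta/4$; likewise the second-round fresh uniform blocks are independent of $(T_1,\tilde Y^1)$, so $\Hmin(\tilde Y^2\mid T_1,\tilde Y^1)\ge |B_{\Good}|\sqrt k$ and the block $Y^2$ also has entropy rate $>1/2+\delta/4$. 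This proves the first statement.

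For the furthermore part, fix $j\in B\setminus\Faulty$ and repeat the same hybrid argument, but now replace only the pairs of the players in $B_{\Good}\setminus\{j\}$, keeping $(Y^1_j,Y^2_j)$ intact; this is legitimate because, conditioned on $T_1$, $(Y^1_j,Y^2_j)$ is independent of all the pairs being replaced, so the per-step bound $\eps_1+\eps_2$ and the total bound $\eps'$ still apply. In the replaced distribution, conditioning additionally on $(T_1,Y^1_j,Y^2_j)$ leaves the $|B_{\Good}\setminus\{j\}|\ge |B_{\Good}|-1\ge (1/2+\delta/4)|B|$ first-round uniform blocks of $B_{\Good}\setminus\{j\}$ uniform and independent (they are functions of sources independent of $X_j$), so $\Hmin(\tilde Y^1_{-j}\mid T_1,Y^1_j,Y^2_j)\ge (1/2+\delta/4)|B|\sqrt k$ while $Y^1_{-j}$ has only $(|B|-1)\sqrt k$ bits, giving entropy rate $>1/2+\delta/4$; the analogous statement for $Y^2_{-j}$ given $(T_1,Y^1_j,Y^2_j,\tilde Y^1_{-j})$ follows the same way. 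The main obstacle I anticipate is the bookkeeping around the faulty $B$-players, whose round-$2$/round-$3$ messages may depend (through rushing) on the honest outputs we replace; what makes this harmless is precisely that $Y^1$ (resp.\ $Y^2$) literally contains the honest $B_{\Good}$ players' contributions as a substring, so its conditional min-entropy is never smaller than that of those uniform substrings, no matter what the faulty players broadcast.
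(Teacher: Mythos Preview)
Your proof is essentially the paper's own argument: invoke Lemma~\ref{lem:step-1-net-ext}, use conditional independence of the honest $B$-players' outputs given $T_1$ to run a hybrid replacing the $B_{\Good}$ contributions by fresh uniform strings (total cost $\eps'$), and then read off the block-source entropy rates from the surviving uniform substrings, using the round separation so that rushing on $Y^1$ cannot touch the fresh $Y^2_{B_{\Good}}$ blocks. One slip to fix: your parenthetical claim that ``every faulty $B$-player's round-$2$ and round-$3$ message'' is a deterministic function of $T_1$ alone is false (rushing lets them depend on the honest $B$-messages of the same round), but you correctly handle exactly this issue later when you apply the rushing map \emph{after} the hybrid, so the argument stands once that early sentence is deleted.
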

\begin{proof}By Lemma~\ref{lem:step-1-net-ext}, there exists a set $B_{\Good} \subset B \backslash \Faulty$ of size at least $|B_{\Good}| \geq (1/2 + \delta/4) \cdot |B| + 1$ such that at the conclusion of $\PubExt$, for every $j \in B_{\Good}$, we have $(Y_j, T_1) \approx_{\eps_1+\eps_2} (U_{m_2}, T_1)$.  Note that conditioned on $T_1$, $\{Y_j\}_{j\in B_{\Good}}$ are mutually independent. For notational convenience, let $\bar{B} = B \backslash B_{\Good}$, and let $Y_{B_{\Good}}$ denote $\{Y_j\}_{j\in B_{\Good}}$. By a standard hybrid argument, we have $(Y_{B_{\Good}}, T_1) \approx_{\eps'} (U_{|B_{\Good}|\cdot m_2}, T_1)$. Therefore, up to a $\eps'$ statistical error, we can switch to a hybrid where $Y_{B_{\Good}}$ is uniform given $T_1$. (Moe precisely, we can define a hybrid experiment where $Y_{B_{\Good}}$ is perfectly uniform and independent of $T_1$, and the real experiment is $\eps'$ close to the hybrid experiment.)

In this hybrid, since $|B_{\Good}| \geq (1/2+\delta/4)\cdot |B|+1$, $Y^1$ has entropy rate at least $1/2+\delta/4$ given $T_1$. Also, note that $Y^2_{B_{\Good}}$ is uniform given both $T_1$ and $Y^1_{B_{\Good}}$, and since $Y^1$ and $Y^2$ are released in different round, $Y^1_{\bar{B}}$ can only depend on $Y^1_{B_{\Good}}$ and $T_1$, but independent of $Y^2_{B_{\Good}}$. Thus,  $Y^2$ has entropy rate at least $1/2+\delta/4$ given $Y^1$ and $T_1$. It follows that in this hybrid, $(T_1, Y^1, Y^2)$ is $\eps'$-close to a block source with entropy rate at least $1/2 + \delta/4$ in second and third blocks, which proves the first statement of the lemma.

The ``furthermore'' part of the lemma follows by the same argument and noting that the $B_{\Good} \backslash \{j\}$ components already provide sufficient entropy.
\end{proof}

\paragraph{Step 2. Extract OA-secure private uniform randomness using $y$.} In this step, each honest player in $B \cup C$ simply uses a $Y$-strong OA-secure two-block+general extractor from Theorem~\ref{thm:2-block-general-ext} to extract private uniform randomness (there is no interaction). A formal description of the $\PriExt$ protocol can be found in Figure~\ref{fig:step1-net-ext}.


We show that the output is OA-IR secure for every player $i \in B \cup C$. To see this, let us consider a honest player $i\in C$. Note that an OA $\AdvSI$ can only get side information from one source. Let us first consider the case that $\AdvSI$ gets side information $\rho_i$ from $X_i$. In this case, by the $Y$-strong OA-security of $\OAExt$, $Z_i$ is close to uniform given both $Y$ and $\rho_i$. Now, note that conditioned on $Y$, $Z_i$ is independent of $X_{-i}$ and transcript $T$. Therefore, $Z_i$ is close to uniform even given $(X_{-i}, T, \rho_i)$. Similarly, for the case that $\AdvSI$ gets side information $\rho_j$ from some $X_j$ for $j \neq i$, $Z_i$ is close to uniform given $Y$, and contitioned on $Y$, $Z_i$ is independent of $X_{-i}, T$, and $\rho_j$. Thus, $Z_i$ is close to uniform given $(X_{-i}, T, \rho_i)$. The analysis generalizes to handle players $j\in B$ by additionally condition on $T_1$ and $(Y^1_{j}, Y^2_j)$. 

\vspace{1mm} 
\begin{proof}(of Lemma~\ref{lem:CR-OA-net-ext}) We consider $\NetExt$ that execute $\PubExt$ and $\PriExt$ sub-protocols in order, and the set $S = B \cup C$. We show that $\NetExt$ is OA-IR secure for every $i\in S$ with error $\eps \leq n^{-c}$. 


Let $(X,\AdvSI, \AdvNet)$ be a $(p,t,n,k)$ \NSA system with OA $\AdvSI$ and IR $\AdvNet$. Let us first consider a honest player $i \in C$. 
By Lemma~\ref{lem:step-1-net-ext-conclusion}, at the conclusion of $\PubExt$, $(T_1, Y^1, Y^2)$ is $\eps'$-close to a block source with entropy rate at least $1/2 + \delta/4$ in second and third blocks.
Thus, up to a $\eps'$ error in the trace distance, we can switch to a hybrid where the condition holds with no error.

Suppose the OA $\AdvSI$ chooses to only get side information $\rho_i$ from $X_i$. Note that $Y$ is independent of $(X_i,\rho_i)$, and $X_i$ has $k$-bits of entropy given $\rho_i$. By strong OA-security of $\OAExt$, we have $\trdist{\rho_{Z_iY\Adv}-\uniform{m} \ot \rho_{Y\Adv}}\leq \eps_3$
(note that $\Adv$ denotes the space of $(\AdvSI, \AdvNet)$, and here it refers to the side information space)
. Also note that given $Y$, $Z_i$ is independent of $X_{-i}$ and transcript $T$. Therefore,
$$\trdist{\rho_{Z_i X_{-i}T \Adv} - \uniform{m} \ot \rho_{X_{-i}T\Adv}} \leq \eps_3.$$

Similarly, suppose the OA $\AdvSI$ chooses to get side information for $\rho_{i'}$ from $X_{i'}$ for some $i' \neq i$. Note that $Y$ is independent of $X_i$, and $X_i$ has $k$-bits of entropy. By strong OA-security of $\OAExt$, we have 
$\trdist{\rho_{Z_iY\Adv}-\uniform{m} \ot \rho_{Y\Adv}}\leq \eps_3$. Also note that given $Y$, $Z_i$ is independent of $X_{-i}$, transcript $T$, and side information $\rho_{i'}$. Therefore,
$$\trdist{\rho_{Z_i X_{-i}T \Adv} - \uniform{m} \ot \rho_{X_{-i}T\Adv}} \leq \eps_3.$$

Now, let us consider a honest player $j\in B$. Again by Lemma~\ref{lem:step-1-net-ext-conclusion}, at the conclusion of $\PubExt$, $(T_1, Y^1_j, Y^2_j, Y^1_{-j}, Y^2_{-j})$ is a block source with entropy rate at least $(1/2 + \delta/4)$ for the last two blocks. Thus, up to an $\eps'$ error in trace distance, we can switch to a hybrid where the condition holds with no error. In what follows, we perform our analysis conditioned on $H = (T_1,Y^1_j, Y^2_j)$. 


Suppose the OA $\AdvSI$ gets side information $\rho_j$ from $X_j$. Note that given $H$ and $\rho_j$, $X_j$ has at least $k - 2\sqrt{k} = k - o(k)$ bits of min-entropy, and is independent of $Y_{-j} = (Y^1_{-j}, Y^2_{-j})$, which is a two block source with at least $1/2+\delta/4$ entropy rate per block. By OA-security of $\OAExt$, we have 
$|\rho_{Z_jY_{-j}H\Adv} - U_{m} \ot \rho_{Y_{-j} H \Adv} | \leq \eps_3$. 
Also note that given $Y_{-j}, H$, $Z_j$ is independent of $X_{-j}$ and $T$. Thus, 
$$\trdist{\rho_{Z_jX_{-j}T\Adv}-\uniform{m} \ot \rho_{X_{-j}T\Adv}} \leq \eps_3. $$

For the final case that the OA $\AdvSI$ gets side information $\rho_{j'}$ from $X_{j'}$ for some $j' \neq j$, by the same argument and OA-security of $\OAExt$, we have 
$\rho_{Z_j Y_{-j} H} -  U_{m} \ot \rho_{Y_{-j} H} | \leq \eps_3$. 
Again note that given $Y_{-j}, H$, $Z_j$ is independent of $X_{-j}$, $T$, and $\rho_{j'}$. Thus, 
$$\trdist{\rho_{Z_jX_{-j}T\Adv}-\uniform{m} \ot \rho_{X_{-j}T\Adv}} \leq \eps_3. $$
\end{proof}



\begin{figure}
\begin{protocol*}{Protocol $\PriExt$: Extract OA-secure Private Uniform Randomness.}
\begin{description}
\item Protocol Input: Private weak sources $x_i \in \zo^n$ of players $i$ in sets $B$ and $C$. Public two-block source $y = (y^1,y^2) \in \zo^{|B|\sqrt{k}+|B|\sqrt{k}}$. 
\item Protocol Output: A private string $z_i \in \zo^m$ for each player $i \in B\cup C$.
\item Sub-Routines and Parameters:
\begin{step}
\item Let $\OAExt(X,Y)$ be a $Y$-strong OA-secure two-block+general source extractor from Theorem \ref{thm:2-block-general-ext} with output length $m = k-o(k)$ and error $\eps_3 \leq 2^{-\Omega(k^{\Omega(1)})}$. 
\end{step}
\item The protocol has no interaction. Each player $i\in B \cup C$ generates a private output $z_i$.
\begin{step}
\item For every $i \in C$, player $i$ computes $z_i = \OAExt(x_i,y)$ and output $z_i$.
\item For every $j \in B$, let $y_{-j} = (y^1_{-j}, y^2_{-j})$ be the two-block string $y$ with the $j$-th component removed. Player $j$ computes $z_j = \OAExt(x_j,y_{-j})$ and output $z_j$.
\end{step}
\end{description}
\end{protocol*}
\caption{Step 2 of our GE-IR secure network extractor protocol.}
\label{fig:step2-net-ext}
\end{figure}

\subsection{Our Network Extractor for the Quantum Rushing Case} \label{sec:QR}

In this section, we discuss how to deal with quantum rushing (QR) adversaries and present our GE-QR secure network extractor. Recall that it means the protocol adversary $\AdvNet$ is allowed to operate on the quantum side information collected by $\AdvSI$ to produce rushing messages for faulty players. This is clearly more general, and it turns out that this setting is very different from the IR adversary setting, and much more challenging to handle, as explained as follows.

We first note that whether OA and GE security are equivalent is no longer clear in the QR setting, and even if it's true, it seems unlikely to be proven by existing techniques. Recall that in the proof of the equivalence in the IR setting, we crucially rely on the fact that the side information can be collected \emph{after} the protocol execution. This is no longer true in the QR setting, since the side information is used by $\AdvNet$ during the protocol execution and the operations are not commute in general. 

Secondly, even getting OA-QR security seems already challenging. To see the issues, for example, consider our network extractor in Figure~\ref{fig:step1-net-ext} and~\ref{fig:step2-net-ext}. There, a public high min-entropy source $Y$ is generated in $\PubExt$ protocol, which is used by each honest player $i$ in the second step to extract private randomness from his source $X_i$ using a $Y$-strong OA-secure randomness extractor $\OAExt$. Now, suppose that $Y$ depends on some rushing information, which in turn can be correlated with the side information $\rho_i$ of $X_i$ collected by $\AdvSI$. As such, it can create correlation between $Y$ and $X_i$ and the extractor $\OAExt$ cannot be guaranteed to work. Indeed, by corrupting different set of players, $\AdvNet$ can create such correlation for every bit of $Y$.

Thus, a natural approach is to avoid such correlation. Note that if a message $y$ depends only on honest players, then it is not subject to rushing attack. For example, consider a simple solution that we group players into $s = p/d$ groups of size $d$, and apply a quantum-secure multi-source extractor $\QMExt$ to extract private randomness for each group. Since there are only $t$ faulty players, at least $s-t$ groups contains only honest players. It can be shown that if we use GE-secure multi-source extractor, then the outputs of honest groups are GE-QR secure. However, note that to compute $\QMExt(x_1,\dots,x_d)$, $d-1$ players need to send their inputs to the remaining player, so these $s\cdot (d-1) = (1-1/d)\cdot p$ players cannot hope to obtain private randomness. 

One can do better by letting $s < p/d$ groups publish uniform seeds extracted by $\QMExt$, and let the remaining players choose one seed to extract their private randomness (distributed evenly since $t$ out of $s$ groups can be faulty). It can be shown that if the seeded extractor in use is OA-secure, then the output is GE-QR secure when both the group and the player are honest. By setting $s = \Theta(\sqrt{pt})$, we can ensure that at least $p - O(\sqrt{pt})$ players obtain  uniform private output. However, we still lose $O(\sqrt{pt})$ players, which is much worse than losing a small $O(t)$ players in the IR setting when $t = o(p)$. Furthermore, for players using seeds from faulty groups, they may generate far from uniform output without knowing their failure, which can be devastated for cryptographic applications. It is not clear to us if we can get around these issues if we only rely on non-rushing messages.

\paragraph{Our Approach.} Our key idea here is a simulation-based security lifting technique (from IR to QR) that allows us to handle a limited amount of quantum rushing correlation, which we already elaborate on in Section~\ref{sec:CR_QR}. 
However, as we explained before, this technique alone fails to resolve the quantum rushing issue. 

 To illustrate the idea, let us consider the following construction. Let us again have $s$ groups publish $s$ uniform seeds $y_1,\dots, y_s$ extracted from $\QMExt$, and concatenate short slices from each seed to obtain a public source $y$ (as in $\PubExt$ in Fig.~\ref{fig:step1-net-ext}), which is used to extract randomness for the remaining player $i$ from their private $x_i$ using a OA-secure two-source extractor $\QTExt$.

Let $y = (y_{\Good}, y_{\Bad})$ where $y_{\Good}$ (resp., $y_{\Bad}$) are the components from honest (resp., faulty) groups. Since $y_{\Good}$ is from honest group, it is uniform and independent of $x_i$, which also implies $y$ has good amount of min-entropy. However, $y_{\Bad}$ is subject to quantum rushing and can depend on both $y_{\Good}$ and the side information $\rho$ that depends on $x_i$, and thus, $x_i$ and $y$ are not independent. Nevertheless, such quantum rushing correlation is limited to the $y_{\Bad}$ part, which can be a small fraction of $y$ if $s$ is sufficiently larger than $t$. Also, note that if only independent rushing is allowed (i.e., $y_{\Bad}$ can only depend on $y_{\Good}$, but not the side information), then $y$ remains independent of $x_i$ and thus the extraction works as long as $\QTExt$ is $y$-strong and OA-secure.

Our idea now is to break the quantum rush correlation by the simulation idea in Theorem~\ref{thm:CR_QR}, which guesses the value of $y_{\Bad}$ and only looks at the situation when the guess value matches the real value. 
As a result, it occurs a $2^{|y_{\Bad}|}$ factor loss in the error, however, reduces any correlation generated by quantum rushing to a correlation generated only by independent rushing. 
One still needs to prove the IR security of the protocol, which is a simpler task than directly proving the QR security.  The caveat is, however, that one needs to be able to afford 
the $2^{|y_{\Bad}|}$ blow-up in the error parameter.


In the above construction, we have errors from both $\QMExt$ and $\QTExt$ extractors, where $\QMExt$ has large error $1/\poly(n)$, which we cannot afford. Fortunately, note that $\QMExt$ is used to generate $y_{\Good}$ from honest groups, which is not subject to rushing. Thus, we can switch to a hybrid where $y_{\Good}$ is actually uniform, and avoid paying the $2^{|y_{\Bad}|}$ blow-up for the $\QMExt$ error. On the other hand, we have two-source extractors $\QTExt$ with exponentially small error in the smaller entropy of the two sources. If $k$ is sufficiently large (compared to $t$), then we can set $s$ to be a sufficiently large $O(t)$ so that $y_{\Bad}$ is a sufficiently small fraction of $y$ and the blow-up is affordable. This leads to a GE-QR secure network extractor that lose only $O(t)$ honest players and ensure private uniform randomness for every players with outputs, resolving the issues from the above naive approach.

On the other hand, for the $k < t$ case, we cannot afford the blow-up since $|y_{\Bad}|$ is at least $t$ but the extractor error is at least $2^{-k}$. For clarity of exposition, we defer discussion about how to handle $k<t$ case in later sections. In what follows, we formalize the above construction to give a GE-QR secure network extractor for the case where $k$ is sufficiently larger than $t$.    

\paragraph{Our GE-QR Secure Network Extractor for Sufficiently Large $k$} We present a formal description of the above protocol in Fig.~\ref{fig:GE-QR-net-ext}. Note that in the actual protocol, we only require marginal security from the multi-source extractors. We use the construction to prove Theorem~\ref{thm:QR-GEA-net-ext}.


\begin{figure}
\begin{protocol*}{Protocol $\NetExt$: GE-QR secure network extractor.}
\begin{description}
\item Protocol Input: A private weak sources $x_i$ for each $i\in P$.
\item Protocol Output: A private output string $z_i$ for each $i\in P$.
\item Sub-Routines and Parameters:
\begin{step}
\item Let $\IExt$ be a $(d,n,k,m,\eps_1)$ independent source extractor with some constant $d$ and $\eps_1 \leq n^{-2c}$ from Theorem~\ref{thm:Li_IExt}.
\item Let $\QTExt_{\Raz}(X,Y)$ be the $Y$-strong quantum-secure two-source extractor from Theorem~\ref{thm:Raz-GE-secure} for for sources with min-entropy at least $0.9k$ and error $\eps_2 \leq 2^{-\alpha k}$ and output length $m_2 = \Omega(k)$. 
\end{step}
\item Round $1$.
\begin{step}
\item Let $s = t/2\alpha$ (where $\alpha$ is the constant in the exponent of the error of $\QTExt$). For each $i \in [s]$, let $A_i = \{(i-1)\cdot d + 1,\dots, i\cdot d\}$. Let $B = P \backslash (A_1 \cup \dots \cup A_s)$.
\item All players $i$ in $A_1,\dots, A_s$ publish their input $x_i$ and output $z_i = \bot$. For each $i\in [s]$, let $y_i$ be the first $k/s$ bits of $\IExt(x_{(i-1)d+1},\dots, x_{i\cdot d})$. Let $y = (y_1,\dots, y_s) \in \zo^k$.
\item For each $i \in B$, player $i$ computes $z_i = \QTExt_{\Raz}(x_i,y)$ and outputs $z_i$. The remaining players $i\notin B$ output $z_i = \bot$.
\end{step}
\end{description}
\end{protocol*}
\caption{Our GE-QR secure network extractor.}
\label{fig:GE-QR-net-ext}
\end{figure}

\begin{proof}{\bf (of Theorem~\ref{thm:QR-GEA-net-ext}; sketch)} We first note that the protocol has the same structure as our GE-IR secure network extractor constructed in Section \ref{subsec:CR-net-ext}, where a public high min-entropy source is published, and used to extract private randomness for the remaining players. Therefore, an analogous analysis proves that $\NetExt$ in Fig~\ref{fig:GE-QR-net-ext} is OA-IR secure with error $\eps' = s\eps_1 + \eps_2$ for players in set $B$. It follows by Theorem~\ref{thm:OA-GE-eq-for-net-ext} that $\NetExt$ is GE-IR secure with error $\eps'$ for players in set $B$. We next demonstrate how to apply Theorem~\ref{thm:CR_QR} to show that $\NetExt$ is GE-QR secure with error $2^{kt/s} \cdot \eps'$. Note that since $\eps_1 = 1/\poly(n)$, $2^{kt/s} \cdot \eps' > 1$ so the conclusion is not useful. Nevertheless, we discuss how to modify the proof to avoid the loss of $2^{kt/s} \cdot \eps_1$ afterword.

To apply Theorem~\ref{thm:CR_QR}, we need to argue that the premise of the theorem holds for some $\rho_{X'Y'\Adv'}$ system with rushing part $Y'_R$ and output $Z'$, described in Theorem~\ref{thm:CR_QR}. Let us consider a honest player $j\in B$. Let $A = \bigcup_{i} A_i$. We set $X' = X_j$, $Z'=Z_j$ $Y' = X_A$, and let $Y'_R$ be the components of $Y$ in the protocol that are subject to rushing. Note that while the components depends on the set $\Faulty$ of faulty players, but the length $|Y'_R|$ is always bounded by $kt/s$, since $t$ faulty players can only control up to $t$ groups. Finally, let $\Adv'$ be the remaining quantum system. Note that the GE-IR security of player $j$ with error $\eps'$ implies the premise of Theorem~\ref{thm:CR_QR} with error $\eps'$. Therefore, the conclusion of Theorem~\ref{thm:CR_QR} implies that player $j$ is GE-QR secure with error $2^{kt/s} \cdot \eps'$.

As mentioned, $2^{kt/s} \cdot s\eps_1 > 1$ so the conclusion is not useful. Note, however, the $s\eps_1$ error comes from application of $\IExt$, and we only need to pay the error for the honest groups. To avoid paying $2^{kt/s} \cdot s\eps_1$, we can first switch to a hybrid input distribution $X'$ such that the application of $\IExt$ to the honest groups produce perfectly uniform output. Then, it can be shown by similar steps as before that a honest player $j \in B$ is GE-IR secure with error $\eps_2$. We can then apply Theorem~\ref{thm:CR_QR} as before to show that player $j$ is GE-QR secure with error $2^{kt/s} \cdot \eps_2 \leq 2^{\alpha k/2}$. Finally, we can switch back to the real experiment, and conclude that player $j$ is GE-QR secure with error $2^{\alpha k/2} + s\eps_1$.

We defer a full proof to the full version of this paper.
\end{proof}

\paragraph{Sketch of handling $k<t$.} When $k < t$, the above approach fails because we could have $t$ faulty players in $B$, which makes $|Y_{\Bad}|>t$ while the error of the extractor is always at most $2^{-k}$. To deal with this, we have to reduce the size of $Y_{\Bad}$. In other words, we need to somehow be able to select a small subset from $B$ that roughly contains the same fraction of honest players. One natural way to do this is to sample a random subset of $B$. However, this is problematic because we need private uniform random bits to sample, which we do not have (in fact, this is our goal). Fortunately, we can use other combinatorial tools to do this step. 

Specifically, here we will use an extractor graph. An $[N, M, K, D, \eps]$ extractor graph is a bipartite graph with left vertex set $[N]$, right vertex set $[M]$, left degree $D$. It has the property that  for any subset $T \subset [M]$ with $|T|=\alpha M$, all but $K$ vertices in $[N]$ have roughly $\alpha$ fraction of neighbors in $T$ (with a deviation of at most $\eps$).  Non-constructively, $\forall N>K>0, \eps>0$ such graphs exist with $D=O(\log(N/K)/\eps^2)$ and $M=\Omega(KD\eps^2)$.\footnote{We also have explicit constructions, such as \cite{GuruswamiUV09}.} To apply an extractor graph here, we can identify the set $B$ with $[M]$ and identify the set $C$ of remaining players with $[N]$. We will then have each player in $C$ choose its neighbors in $B$ as a set $S$, and use $Y_S$ as the random string to apply $ \QTExt_{\Raz}$. This will ensure that most of $Y_S$ will roughly have the same fraction of entropy rate as $Y$. Note here we can choose $\eps$ to be a small enough constant and choose $K=o(t)$. Thus the degree $D=O(\log N)=O(\log p)$. By our assumption that $k> C \log p$ for some big enough constant $C>1$, this will ensure that $k>D$ and thus we can afford to use $Y_S$ in $ \QTExt_{\Raz}$ for quantum rushing. Note that in this way we only lose $o(t)$ honest players in $C$. However, one slight drawback is that the honest players do not know if they have obtained private uniform random bits in the end, as they do not know if they are the $K$ unlucky players given by the extractor graph.

\bibliographystyle{}

\bibliography{}

\end{document}